\theoremstyle{plain}
  \newtheorem{theorem}{Theorem}
    \newtheorem{assumption}{Assumption}
  \newtheorem{proposition}{Proposition}[section]
  \newtheorem{lemma}{Lemma}[section]
\theoremstyle{definition}
  \newtheorem{definition}{Definition}[section]
\theoremstyle{remark}
  \newtheorem{example}[theorem]{Example}
\numberwithin{equation}{section}
\DeclareMathOperator{\Tr}{Tr}
 \DeclareMathOperator{\supp}{Supp}
\renewcommand{\Re}{\mathrm{Re}\, }
\renewcommand{\Im}{\mathrm{Im}\,}
\newcommand\otimesal{\mathop{\hbox{\raise 1.6 ex
  \hbox{$\scriptscriptstyle\mathrm{al}$}
\kern -0.92 em \hbox{$\otimes$}}}}
\newcommand\oplusal{\mathop{\hbox{\raise 1.6 ex
  \hbox{$\scriptscriptstyle\mathrm{al}$}
\kern -0.92 em \hbox{$\oplus$}}}}
\newcommand\Gammal{\hbox{\raise 1.7 ex
\hbox{$\scriptscriptstyle\mathrm{al}$}\kern -0.50 em $\Gamma$}}
\renewcommand\i{\mathrm{i}}
\let\al=\alpha \let\be=\beta  \let\ep=\epsilon
\let\ve=\varepsilon  \let\ga=\gamma 
\let\ka=\kappa \let\la=\lambda \let\om=\omega 
\let\si=\sigma
 \let\Ga=\Gamma \let\La=\Lambda \let\Om=\Omega
  \let\Si=\Sigma
\newcommand{\caA}{{\mathcal A}}
\newcommand{\caB}{{\mathcal B}}
\newcommand{\caC}{{\mathcal C}}
\newcommand{\caE}{{\mathcal E}}
\newcommand{\caF}{{\mathcal F}}
\newcommand{\caG}{{\mathcal G}}
\newcommand{\caI}{{\mathcal I}}
\newcommand{\caJ}{{\mathcal J}}
\newcommand{\caM}{{\mathcal M}}
\newcommand{\caN}{{\mathcal N}}
\newcommand{\caO}{{\mathcal O}}
\newcommand{\caP}{{\mathcal P}}
\newcommand{\caQ}{{\mathcal Q}}
\newcommand{\caS}{{\mathcal S}}
\newcommand{\caT}{{\mathcal T}}
\newcommand{\caU}{{\mathcal U}}
\newcommand{\caV}{{\mathcal V}}
\newcommand{\scrA}{{\mathscr A}}
\newcommand{\scrB}{{\mathscr B}}
\newcommand{\scrE}{{\mathscr E}}
\newcommand{\scrG}{{\mathscr G}}
\newcommand{\scrH}{{\mathscr H}}
\newcommand{\scrR}{{\mathscr R}}
\newcommand{\scrS}{{\mathscr S}}
\newcommand{\scrT}{{\mathscr T}}
\newcommand{\bbA}{{\mathbb A}}
\newcommand{\bbC}{{\mathbb C}}
\newcommand{\bbE}{{\mathbb E}}
\newcommand{\bbH}{{\mathbb H}}
\newcommand{\bbN}{{\mathbb N}}
\newcommand{\bbP}{{\mathbb P}}
\newcommand{\bbR}{{\mathbb R}}
\newcommand{\bbT}{{\mathbb T}}
\newcommand{\bbX}{{\mathbb X}}
\newcommand{\bbZ}{{\mathbb Z}}
\newcommand{\opunit}{\text{1}\kern-0.22em\text{l}}
\newcommand{\frb}{{\mathfrak b}}
\newcommand{\frh}{{\mathfrak h}}
\newcommand{\frp}{{\mathfrak p}}
\newcommand{\frt}{{\mathfrak t}}
\newcommand{\frA}{{\mathfrak A}}
\newcommand{\frB}{{\mathfrak B}}
\newcommand{\frC}{{\mathfrak C}}
\newcommand{\bsI}{{\boldsymbol I}}
\newcommand{\bsR}{{\boldsymbol R}}
\newcommand{\bsS}{{\boldsymbol S}}
\newcommand{\e}{{\mathrm e}}
\renewcommand{\d}{{\mathrm d}}
\newcommand{\sys}{{\mathrm S}}
\newcommand{\res}{{\mathrm E}}
\renewcommand{\sp}{\sigma}
\newcommand{\Dom}{\mathrm{Dom}}
\newcommand{\beq}{ \begin{equation} }
\newcommand{\eeq}{ \end{equation} }
\newcommand{\bet}{ \begin{theorem} }
\newcommand{\eet}{ \end{theorem} }
\newcommand{\baq}{\begin{eqnarray}}
\newcommand{\eaq}{\end{eqnarray}}
\renewcommand{\supp}{\mathrm{Supp}}
\newcommand{\tor}{ {\bbT^d}  }
\newcommand{\norm}{ \|}
\newcommand{\str}{ |}
\newcommand{\spin}{ \mathrm{spin}}
 \newcounter{smallarabics}
\newenvironment{arabicenumerate}
{\begin{list}{{\normalfont\textrm{\arabic{smallarabics})}}}
  {\usecounter{smallarabics}\setlength{\itemindent}{0cm}
  \setlength{\leftmargin}{5ex}\setlength{\labelwidth}{4ex}
  \setlength{\topsep}{0.75\parsep}\setlength{\partopsep}{0ex}
   \setlength{\itemsep}{0ex}}}
{\end{list}}
\newcounter{smallroman}
\newcommand{\ben}{\begin{arabicenumerate}}
\newcommand{\een}{\end{arabicenumerate}}
\newcommand{\initialresfinite}{\rho_\res^{\mathrm{ref}}}
\newcommand{\initialsysfinite}{\rho_\sys^{\mathrm{ref}}}
\newcommand{\initialfinite}{\rho_{\sys\res}^{\mathrm{ref}}}
\newcommand{\initialgibbsfinite}{\rho_{\sys\res}^{\be}}
\newcommand{\initialnu}{\nu^{\mathrm{ref}} }
\newcommand{\initialnugibbs}{\nu^{\be}}
\newcommand{\initialnun}{\nu^{\mathrm{ref}}_n}
\newcommand{\initialnugibbsn}{\nu^{\be}_n}
\newcommand{\reff}{\mathrm{ref}}
\newcommand{\links}{L}
\newcommand{\rechts}{R}
\newcommand{\adjoint}{\mathrm{ad}}
\newcommand{\ad}{\adjoint}
\newcommand{\Adjoint}{\mathrm{Ad}}
\newcommand{\kin}{\mathrm{kin}}
\newcommand{\lone}{\mathbbm{1}}
\newcommand{\uv}{\underline{v}}
\newcommand{\uu}{\underline{u}}
\newcommand{\uw}{\underline{w}}
\newcommand{\ut}{\underline{t}}
\newcommand{\dist}{\mathrm{dist}}
\newcommand{\weird}{}
\newcommand{\poly}{\frB}
\newcommand{\largegap}{\frb}
\newcommand{\bana}{\frp}
\newcommand{\largegapn}{\largegap}
\newcommand{\normw}{\norm^{}_{\ga}}
\newcommand{{\banone}}{\scrG} 
\newcommand{{\gamzero}}{\ga_0} 
\newcommand{{\tengam}}{10\gamzero} 
\newcommand{{\fifteengam}}{15\gamzero} 
\newcommand{{\twentygam}}{20\gamzero} 
\newcommand{{\normba}}{\norm^{}_{\banone}} 
\newcommand{{\normdia}}{\norm^{}_{\diamond}} 
\newcommand{\lin}{\mathrm{lin}} 
\newcommand{\nlin}{\mathrm{nlin}}
\newcommand{\offdx}{v}
\newcommand{\offsetx}{\eta}
\newcommand{\ini}{\mathrm{I}}
\newcommand{\indicator}{1}
\newcommand{\tr}{\mathrm{tr}}
\newcommand{\lamb}{\mathrm{Lamb}}
\newcommand{\hf}{{_1\over^2}}
\newcommand{\smallspace}{{\scrS}}
\begin{document}
\begin{center}
\large{ \bf{Diffusion for a quantum particle coupled to phonons in $d\geq 3$.} } \\
\vspace{15pt} \normalsize

{\bf   W.  De Roeck\footnote{
email: {\tt
 w.deroeck@thphys.uni-heidelberg.de}}  }\\
\vspace{10pt} 
{\it   Institut f\"ur Theoretische Physik  \\ Universit\"at Heidelberg \\
Philosophenweg 16,  \\
D69120 Heidelberg,  Germany 
} \\

\vspace{15pt}

{\bf   A. Kupiainen\footnote{
email: {\tt    antti.kupiainen@helsinki.fi  }}  }\\
\vspace{10pt} 
{\it   Department of Mathematics \\
 University of Helsinki \\ 
P.O. Box 68, FIN-00014,  Finland 
} \\

\end{center}

\vspace{20pt} \footnotesize \noindent {\bf Abstract: }  We prove diffusion for a quantum particle coupled to a field of bosons (phonons or photons).  The importance of this result lies in the fact that  our model is fully Hamiltonian and randomness enters only via the initial (thermal) state of the bosons. This model is closely related to the one considered in \cite{deroeckfrohlichhighdimension} but  various restrictive assumptions of the latter have been eliminated.  In particular, depending on the dispersion relation of the bosons,  the present result holds in dimension $d \geq 3$ and no severe infrared conditions on the coupling are necessary.

\section{Introduction}

The rigorous derivation of long-time diffusion from first principles of mechanics, be it quantum or classical, remains an inspiring challenge in mathematical physics.  To our best knowledge, there are up to this date very few results of this type, see  Section \ref{sec: related work} for a brief review. Recently, in \cite{deroeckfrohlichhighdimension}, a model was introduced  which  is quite tractable and for which diffusion was proven in dimension $d \geq 4$.    It is a quantum system described by a Hamiltonian of the type
\beq
H= H_\sys+ H_{\res} +  \la H_{I}, \qquad  \la \in \bbR   \label{eq: first def ham}
\eeq
where $H_{\sys}$ is the Hamiltonian of a free particle moving on the lattice and it consists of two parts $H_\sys= H_\kin+H_{\spin}$ describing the translational (kinetic) degrees of freedom and a spin degree of freedom, respectively.
The Hamiltonian $H_\res$ describes a free field of bosons (the environment), and  $H_{I}$ effectuates the coupling between both. The system is started with the environment in a thermal state at inverse temperature $\be$ or in a nonequilibrium state (then we have two phonon fields, at different inverse temperatures $\be_1 \neq \be_2$). Such models are a paradigm of open quantum systems.   The form of the Hamiltonian will be given in Section \ref{sec: model}, but let us already list the properties that allow us to handle this model: 
\begin{itemize}
\item The mass of the particle, or, since we are on a lattice, rather the inverse hopping strength, is chosen large.  In \eqref{eq: first def ham} this is accomplished by choosing $H_\kin$ in $H_\sys$
small. 
 This allows a better control of a diagrammatic expansion in real space, since the particle needs a long time to explore a large volume on the lattice.
\item  Even though the mass is large, the 'mixing rate' that the spin and momentum degrees of freedom  of the particle experience due to the interaction with the phonons is not small.   This is possible because of the inclusion of the spin-degree of freedom. 
\item By choosing the interaction Hamiltonian sufficiently smooth (in the momentum of the phonons) and the dimension $d$ sufficiently large, we ensure that the free space-time correlation functions of the boson field decay at an integrable rate in time. We need them to decay at least as $\caO(t^{-(1+ \al)})$ for large times $t$ with $\al>1/4$.  To engineer this in $d \geq 3$, it suffices to choose the dispersion relation of the bosons to be quadratic in the momentum for small momenta, and to cut off the interaction for large momenta.
\item  By choosing the coupling constant $\la$ small, we have a well controlled Markovian approximation (Lindblad equation) that describes the particle for times of $\caO(\la^{-2})$. This Markovian approximation serves as a first approximation to the true behavior and we set up an expansion to control the deviations from it.
\end{itemize} 

Under these assumptions we prove that the reduced dynamics of the particle is diffusive.
Our proof is based on a renormalization group (RG) method that was developed in \cite{bricmontkupiainenrwre, bricmontkupiainenexponentialdecay, ajankideroeckkupiainen} to prove diffusion
for random walk in a random environment (RWRE). In the present context the random environment is 
provided by the phonon field. Unlike in the case of RWRE, in the case at hand the particle influences the environment
and the reduced dynamics is non-Markovian. However, the Markovian approximation mentioned above
provides a starting point for the analysis where a Markovian dynamics is perturbed by a small non-Markovian noise. In units of the weak coupling time scale $\caO(\la^{-2})$ our model
can then be viewed as a (quantum) random walk in a (quantum)  random environment.
The RG method consists of an iterative scheme to show that on successive larger temporal
and spatial scales the random environment becomes smaller and smaller and 
the dynamics tends to a renormalized Markovian "fixed point". We show that the renormalized noise
vanishes in this limit by showing that its (quantum) correlation functions tend to zero. Here
we use a formalism developed earlier by us \cite{deroeckkupiainen} for the confined case, i.e. the proof
that the state of a confined quantum system interacting with a similar field as here tends to the equilibrium state. 

The difference of the model considered in this paper and the one treated in
 \cite{deroeckfrohlichhighdimension} is that in the latter case an additional condition was imposed on the free boson correlation function that restricts the model to dimensions  $d \geq 4$ and to a rather special class of analytic  particle-phonon interaction terms. In the context of these models where the particle 
 mass is chosen to scale as $\caO(\la^{-2})$ it still remains a challenge to treat more generic
 phonon or photon reservoirs where the temporal correlations decay as $\caO(t^{-1})$ (which is the case in $d=3$ if the dispersion relation is linear for small momenta). To deal
 with these cases with our method one needs a more careful RG analysis. A much more
 difficult and interesting problem is to relax the large mass assumption. In this case the
 control of the corrections to the Markovian approximation seems still beyond current
 techniques.

\vskip 3mm

\noindent {\bf Acknowledgements.} We thank for European Research Council and 
Academy of Finland for financial support.

\section{The model}\label{sec: model}
 We consider a finite, discrete hypercube $\La= \La_L  = \bbZ^d / L \bbZ^d$ with $L \in 2\bbN$. Whenever necessary, we identify  $\La_L$ with  the subset of $   \bbZ^d$ given by $\{-L/2+1,\ldots, L/2\}^d$.  We will take the thermodynamic limit at the end by letting $L \to\infty$.  Since most concepts in the present section depend on the volume $\La$, we often do not indicate it explicitly. 

\subsection{Dynamics}

\subsubsection{Particle}

The Hilbert space of the particle is 
\beq
\scrH_\sys = \smallspace  \otimes l^2(\La) 
\eeq
where $\smallspace$ is the finite dimensional space of internal degrees of freedom.
The Hamiltonian is 
\beq
H_{\sys} =  H_{\spin}\otimes  \lone +  \lone \otimes H_\kin, \qquad   H_\kin = -\frac{1}{ \la^{-2}{m_{\mathrm{p}}}} \Delta
\eeq
where $\Delta$  is the discrete Laplacian on $l^2(\La)$ with periodic boundary conditions, defined by the kernel $\Delta(x,y)=\delta_{|x -y|_1,1}-2d\delta_{x,y}$  where
  $\str x-y \str_1 = \sum_{i=1}^d \str x_i -y_i \str$ with $\str \cdot \str$ the distance on the discrete torus $\bbZ/L\bbZ$. 
The parameter  $\la^{-2} {m_{\mathrm{p}}}$ is the mass of the particle, where the factor $\la^{-2} $ is put in to stress that we choose the mass very big by making $\la$ small. 
For simplicity, we assume that the Hamiltonian $H_{\spin}$ is nondegenerate such that we can label its eigenvectors by their eigenvalues, which we denote by $e \in \sp(H_{\spin})$.  

\subsubsection{Phonon field}
A single phonon is described by the one-particle Hilbert space $l^{2}(\La)$. It will however be convenient to consider this space in the Fourier-representation,  $l^2(\La^*)$ where $\La^* =    \tfrac{2\pi}{L} (\bbZ^d/ L \bbZ^d)$  is the dual lattice. We consider $\La^*$ as a subset of $\bbT^d$-the $d$-dimensional torus identified with $[-\pi,\pi]^d$. 
 The Hilbert space of the phonon field is
 \beq
 \scrH_\res =    \Ga(l^2(\La^*))
 \eeq
where $\Ga(\frh)$ is the  symmetric (bosonic) Fock space built on the one-particle Hilbert space $\frh$.

The Hamiltonian of the phonon field is given by
\beq
H_\res =  \sum_{q \in \La^*}   \omega(q) a^*_{ q} a_{ q} 
\eeq
where $a^*_{q}/a_{q} $ are the  creation/annihilation operators satisfying the canonical commutation relations (CCR)
\beq
[a_{q}, a^*_{q'}] =  \delta_{q,q'}.
\eeq
$ \omega(q)\geq 0 $ is the frequency of the phonon with momentum $q$ and
we take $\omega$ a smooth  function defined on $ \bbT^d$.
We impose later further conditions on $\omega$.

We will also consider a non-equilibrium setup, in which case we consider two different phonon fields, distinguished by the label $j=1,2$.  In this case
 the Hilbert space  is 
 $
 \scrH_\res =    \Ga(l^2(\La^*)) \otimes   \Ga(l^2(\La^*)) 
 $
and the Hamiltonian 
\beq
H_\res =  \sum_{q \in \La^*}   \left( \omega_1(q) a^*_{q,1} a_{ q,1} +   \omega_2(q) a^*_{q,2} a_{q, 2} \right)
\eeq
where $a^*_{q,1}/ a_{ q,1} $ act on the first tensor in $\scrH_\res$ and $a^*_{q,2}/ a_{ q,2} $ on the second.

We assume the reader is familiar with these basic notions of second quantization and we refer to \cite{derezinski1} for more detailed accounts. 
\subsubsection{Full Hamiltonian} \label{sec: full ham}

The interaction between particle and phonon field is chosen linear in the creation/annihilation operators. It is of the form
\beq
H_{I} = (2\pi/L)^{d/2}   \sum_{q \in \La^*, j=1,2}     \left(   W \otimes \e^{\i q X} \otimes   \phi_j(q)  a_{q,j}  \right)   + \text{h.c.}
\eeq
where `h.c.' stands for `hermitian conjugate',  $W$  is a Hermitian matrix acting on $\smallspace$ (the space of the internal degree of freedom), $X=(X_j), j=1,\ldots, d$ is the vector of multiplication operators with the variable $x \in \bbZ^d$ on $l^2(\bbZ^d)$, $q X$ is shorthand for $\sum_{j=1}^d q_j X_j$,
and $\phi_j (q)$ are "structure factors" that describe the coupling to the  phonons.  We will  take   $\phi_j (q)$ the values of smooth functions $\phi_j : \bbT^d\to \bbC$
at $q\in \La^*$. We will   impose some
further conditions on $\phi_j $ later.

The fact  that the particle couples to both phonon fields via the same $W$-matrix, is by no means important, and we make it for notational simplicity. 

The total Hamiltonian is then 
\beq
H= H_\sys + H_\res +  \la H_{I}
\eeq
acting on
$ \scrH =\scrH_\sys  \otimes \scrH_\res$. 
If $\om_j(q) > 0$ for any $q \in \La^{*}$,  then $H_{I}$ is an infinitesimal perturbation (in the sense of Banach operator theory) of $H_\res$. By a standard application of the Kato-Rellich theorem, $ H$ is self-adjoint on the domain of $H_\res$ (we need not worry about $H_\sys$ since it is  bounded). 
The condition $\om_j(q) > 0$ is also necessary to have a well-defined finite-volume Gibbs state (see e.g.\ \eqref{eq: microscopic expression correlation function}). However, once we take the thermodynamic limit, the only remaining regularity assumption will be Assumption \ref{ass: decay micro alpha}. More concretely, this means that if we wish to consider a smooth function $\om_j$ that vanishes in some points, then we should modify it in these points such that it is discontinuous and strictly positive.  The modification will be invisible in the thermodynamic limit).

\subsection{States}\label{sec: initial state}

States of the combined system and environment are given by density operators $\rho_{\sys\res}\in\scrB_1(\scrH_\sys\otimes\scrH_\res)$ where $\scrB_1$ denotes trace class operators.
The initial states that we will consider are  of the following form
\beq
\rho_{\sys\res}=\rho_\sys \otimes \initialresfinite
\eeq
with
\beq
 \initialresfinite = \frac{1}{\caN}\e^{-\be_1 H_{\res_1}} \otimes \e^{-\be_2  H_{\res_2}}, \qquad  \caN =\Tr \left( \e^{-\be_1 H_{\res_1}} \otimes \e^{-\be_2  H_{\res_2}} \right)
\eeq
and $\rho_\sys \in \scrB_1(\scrH_\sys)$ is chosen to have support concentrated around the origin in the following sense. Elements 
of $\scrH_\sys$ can be represented by functions $\psi(x)$  with $x \in \La \subset \bbZ^d$ and taking values in
the spin space 
$ \smallspace $. 
In this basis $\rho_\sys$ is given by a kernel  (matrix) $\rho_\sys(x',x)$ taking values in  $\scrB(\smallspace )$.
 Then we require
\beq
\rho_\sys(x',x)= 0, \qquad \text{for} \,\, |x|,|x'|>R 
\label{initialS}
\eeq
for some $R <\infty$. 

Since the environment Hamiltonians are quadratic in the creation/annihilation operators, the density matrix $\initialresfinite$ is a ``Gaussian state'', sometimes also referred to as a `quasifree state''. Moreover, since these density matrices are functions of the environment Hamiltonian, the initial environment  state $\initialresfinite$  is obviously invariant under the free environment evolution:
\beq
  \e^{-\i t H_\res} \initialresfinite  \e^{\i t H_\res} = \initialresfinite 
\eeq
In fact, these two properties, Gaussianity and invariance under the free dynamics, are what we will really use in our analysis, and the specific choice that we made for $\initialresfinite$ is not important, except for one of our results which will additionally require $\be_1=\be_2$ and where we exploit the fact that the system is in (close to) thermodynamic equilibrium.
The dynamics of the density matrix of the entire system is given by 
\beq
\rho_{{\sys\res},t}= \e^{-\i t H} \left(  \rho_{\sys} \otimes \initialresfinite \right) \e^{\i t H}=\e^{-\i t L}( \rho_{\sys} \otimes \initialresfinite )
\label{fulldyn}
\eeq
where we denote $L:={\ad}(H)$.

The reduced dynamics of the particle is defined by taking a partial trace $ \Tr_\res$ over the 
environment Hilbert space $\scrH_\res$:
\beq
\rho_{\sys ,t} = \Tr_\res  \rho_{{\sys\res} ,t}= \Tr_\res( e^{-\i  tL}( \rho_{\sys} \otimes \initialresfinite )
) :=Z_t\rho_{\sys}
\label{Sdyn}
\eeq
All our results will concern the reduced density matrix $\rho_{\sys, t}$, which means that we only consider particle observables, but it is straightforward to extend the formalism so as to handle observables of the boson field, as well as more general initial states.
\subsection{Thermodynamic limit}\label{sec: thermodynamic limit in intro}

Since we are ultimately interested in long-time properties, we have to perform the thermodynamic limit to eliminate Poincar{\'e} recurrences.  We take care of this below, but first we introduce the free boson correlation function which will play an important role in our our main assumptions.  First, we define the so-called Segal field operators
\beq\label{Segal}
\Phi(x,t) :=    (2\pi/L)^{d/2}   \sum_{q \in \La^*,\, j =\{1,2\}}   \left(\e^{\i (q x + \om_j(q)t)} \phi_j(q) a^*_{q,j}  + \e^{-\i (q x + \om_j(q)t)} \overline{\phi_j(q)} a_{q,j} \right)
\eeq
The correlation function is then defined as
\baq
\zeta(x,t)& := &  \Tr \left[ \initialresfinite   \Phi(x,t) \Phi(0,0) \right]   \nonumber \\[1mm]
& = &  (2\pi/L)^{d}
\sum_{q \in \La^*, j =\{1,2\}}   \str \phi_j(q)\str^2  \left( \frac{1}{\e^{\be \om_j(q)}-1} \e^{\i( x q+ t \om_j(q))}    +   \frac{1}{1-\e^{-\be \om_j(q)}} \e^{-\i ( x q+t \om_j(q)) }  \right)  \label{eq: microscopic expression correlation function}
\eaq
where the second equality is again a standard exercise in second quantization, and one recognizes the Bose-Einstein distribution $1/(\e^{\be \om_j}-1)$.
Let us denote by $\zeta_j(x,t)$ the corresponding expressions where $j$ in the last sum is fixed, such that $\sum_{j=1,2}\zeta_j(x,t)=\zeta(x,t)$. It is clear that $\zeta_j$ is the correlation function due to reservoir $j$.
  The correlation function $\zeta(x,t)$  will naturally come up in the evaluation of the perturbation series for the dynamics. \\

To describe the thermodynamic limit, let us indicate the dependence in (\ref{eq: microscopic expression correlation function}) on the volume $\La$ explicitly by a superscript $\zeta^{\La}(x,t), \zeta^{\La}_j(x,t)$.  We write $\lim_{\La \nearrow \bbZ^d}$ as a shorthand for the limit $\lim_{L \to \infty}$, since $\La=\La_L$.
We assume that  the finite-volume free correlation functions converge
\beq
\lim_{\La \nearrow \bbZ^d}\zeta_j^{\La}(x,t)  =  \zeta(x,t) \label{eq: assumption free correlation functions thermo}
\eeq
for any $x$ and uniformly on compacts in $t \in \bbH_{\be_j} = \{ z \in \bbC, 0\leq \Im z \leq \be_j \}$, and that the limiting correlation functions $ \zeta_j(x,t)$ are bounded and continuous.  
This will hold for the basic examples of optical and acoustic phonons in
 Section \ref{sec: results} (in the latter case we need to exclude $q=0$ 
 in the sum in eq. (\ref{eq: microscopic expression correlation function}), cfr.\ the discussion in Section \ref{sec: full ham}).
 The reason that it is natural to consider times in the strip $\bbH_{\be_j}$ has to do with the KMS-relation, which is the mathematical translation of the detailed balance property for reservoirs in thermal equilibrium. We will not need any of this theory here, but we refer the interested reader to \cite{bratellirobinson}.
 
 Note also that the spaces $\scrH_\sys$ for $\La$ finite are naturally embedded into  $\scrH_\sys$ for $\La = \bbZ^d$  ($L=\infty$) by our identification of $\La,\La^*$ with subsets of $\bbZ^d, \tor$. In particular, a density matrix $\rho^{\La}_\sys$ satisfying \eqref{initialS}, is embedded into $\scrB_1(\scrH_\sys)$ for $L > R$.
More generally, we have

\begin{lemma} \label{lem: thermo limit}
If  the initial state is chosen as described above (including in particular  the convergence \eqref{eq: assumption free correlation functions thermo} and the boundedness of $\zeta(x,t)$) then the limit
\beq
\rho_{\sys,t}   := \lim_{\La \nearrow \bbZ^d} \rho^{\La}_{\sys,t}  
\eeq
 exists in $\scrB_1(\scrH_\sys)$ with $\scrH_\sys$ defined with $\La= \bbZ^d$.
\end{lemma}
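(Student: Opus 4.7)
The plan is to obtain $\rho^{\La}_{\sys,t}$ as the sum of a Dyson series in $\la$ around the free dynamics generated by $H_0 = H_\sys + H_\res$, and to show that this series converges uniformly in $\La$ with each term converging as $\La\nearrow\bbZ^d$.

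\textbf{Step 1 (expansion and Wick).} Passing to the interaction picture, I would expand the unitaries $e^{\mp\i tH}$ in powers of $\la$. Since $H_I$ is linear in the creation/annihilation operators and $\initialresfinite$ is Gaussian and invariant under $e^{-\i tH_\res}$, partial tracing each term over $\scrH_\res$ reduces by Wick's theorem to a finite sum over pairings. Odd total orders vanish by gauge invariance, giving schematically
\[
\rho^{\La}_{\sys,t}\;=\;\sum_{n\geq 0}\la^{2n}\sum_{\pi\in\Pi_{2n}}\int_{\Delta_{2n}(t)}\caK^{\La}_{n,\pi}(s_1,\ldots,s_{2n})\,\rho_\sys\,ds_1\cdots ds_{2n},
\]
where $\caK^{\La}_{n,\pi}$ is an operator on $\scrB_1(\scrH^{\La}_\sys)$ built from the free-particle propagators $e^{\pm\i sH_\sys^{\La}}$, copies of $W$ on $\smallspace$, spatial sums $\sum_{x_\bullet\in\La}$ at each vertex (reflecting the site structure of $H_I$ in the position basis), and the product of pair kernels $\zeta^{\La}_{j(e)}(x_i-x_j,s_i-s_j)$ along the edges of $\pi$, with time arguments placed in the strips $\bbH_{\be_j}$ as dictated by the KMS structure.

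\textbf{Step 2 (uniform bound).} From the pointwise convergence $\zeta^{\La}_j\to\zeta_j$ plus the Riemann-sum form \eqref{eq: microscopic expression correlation function} and the integrability of $|\phi_j(q)|^2/(e^{\be\om_j(q)}-1)$ on $\bbT^d$, I would extract a uniform bound $\sup_{\La,x,|t|\le T}|\zeta^{\La}_j(x,t)|\le M(T)$; the smoothness of $\phi_j$ further supplies uniform $\ell^1(\La)$-decay of $\zeta^{\La}_j(\cdot,t)$ in $x$. Combining this with unitarity of the free particle evolution on $l^2(\La)$, the localization of $\rho_\sys$ to the ball of radius $R$, and boundedness of $W$, a Cauchy--Schwarz estimate at each vertex reduces the spatial sums to $O(1)$ factors per vertex. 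The $(2n)!/(n!\,2^n)$ pairings against the simplex volume $t^{2n}/(2n)!$ then give a bound $(C|\la|^2 t)^n/n!$ on the trace-norm of the $n$-th Dyson term with $C$ independent of $\La$, so the series converges absolutely in $\scrB_1(\scrH_\sys^{\La})$ uniformly in $\La$ for every finite $t$.

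\textbf{Step 3 (termwise convergence and conclusion).} For each fixed $n$ and $\pi$, the $\La$-dependence of $\caK^{\La}_{n,\pi}$ enters (a) via the matrix elements of $e^{-\i sH_\sys^{\La}}$, which for $L$ larger than the ballistic cone $R+Cs$ agree with those of the corresponding $\bbZ^d$-propagator by locality of the nearest-neighbour Laplacian defining $H_\kin$, and (b) via the pair kernels $\zeta^{\La}_j\to\zeta_j$ pointwise. Dominated convergence on the iterated spatial sums, with envelope supplied by Step 2 and integrable support provided by the finite support of $\rho_\sys$ and the $\ell^1$-decay of $\zeta^{\La}_j$, yields $\caK^{\La}_{n,\pi}\rho_\sys\to\caK^{\bbZ^d}_{n,\pi}\rho_\sys$ in $\scrB_1(\scrH_\sys)$ with $\scrH_\sys$ defined at $\La=\bbZ^d$. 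Summing the finite pairing sum and then applying dominated convergence in $n$ through Step 2 produces the claimed trace-norm convergence.

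The main obstacle is the uniform-in-$\La$ control of the spatial sums at each Dyson vertex: $H_I$ has a contribution from every $x\in\La$, and a naive bound would carry a factor $|\La|^n$ at order $n$. The cure is that after Wick contraction each vertex is tied to a partner by a kernel $\zeta^{\La}(x_i-x_j,\cdot)$ of uniformly summable mass, while the intermediate free-particle amplitudes sum to unity under Cauchy--Schwarz; getting this book-keeping cleanly is the bulk of the real work. An equivalent and notationally lighter alternative is to conduct the entire expansion in the momentum representation, where spatial sums $(2\pi/L)^d\sum_{q\in\La^*}$ become Riemann sums converging to $\bbT^d$-integrals, and the limit $\La\nearrow\bbZ^d$ reduces to direct application of the assumed convergence of $\zeta^{\La}_j$ together with standard dominated convergence.
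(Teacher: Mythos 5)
Your proposal is correct in outline and is essentially the route the paper itself takes: the lemma is proved through the Dyson/Wick pair expansion of Section \ref{sec: expansions}, a term-by-term bound that is uniform in $\La$ (Lemma \ref{lem: bound on h}, eq.\ \eqref{eq: archetypical bound pairings}), and then Vitali/dominated convergence using the pointwise convergence of the finite-volume correlation functions $\zeta^{\La}_j$. Where your bookkeeping differs: the paper's uniform control of the spatial sums uses no spatial decay of $\zeta$ at all. After Wick contraction the pair operator $K_{u,v}$ of \eqref{kuv expression} involves only the multiplication operators $\lone_x^a\otimes\lone_y^b$, which are diagonal in the position arguments of the density matrix, so the sup bound $\sup_x|\zeta^{\La}(x,t)|\leq C$ already gives $\|K_{u,v}\|_\ga\leq C\la^2$ (eq.\ \eqref{KUVbound}), and the interleaved free propagators are absorbed by the propagation bound \eqref{propagation bounds} in the exponentially weighted kernel norm. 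Your alternative (unitarity/Cauchy--Schwarz at the vertices plus uniform $\ell^1$-in-$x$ decay of $\zeta^{\La}_j$) can be made to work for the smooth form factors of the model, but it imports an input that the lemma's stated hypotheses (boundedness and convergence of $\zeta$) do not require and that itself needs a uniform-in-$\La$ proof. Two statements in your write-up should also be corrected, though neither is fatal: (i) the finite-volume and infinite-volume propagators do \emph{not} agree exactly outside a ballistic cone --- there is no strict finite propagation speed for $e^{\i s\Delta}$ on the lattice; one needs standard Combes--Thomas/exponential-tail estimates, which are enough for your dominated-convergence step; (ii) the pairing count against the simplex volume gives $(C\la^2t^2)^n/(2^n n!)$, not $(C|\la|^2t)^n/n!$ --- the linear-in-$t$ exponent requires the time-integrability of $\zeta$, which holds only at $\La=\bbZ^d$ and not uniformly in finite volume (compare the two cases in \eqref{eq: archetypical bound pairings}); for fixed $t$ the quadratic bound is still summable and uniform in $\La$, so the conclusion stands.
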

Note that this implies that $\rho_{\sys,t}$ is a density matrix, i.e.\ $\rho_{\sys,t}\geq 0$ and $\Tr \rho_{\sys,t}=1$.

\subsection{Results} \label{sec: results}

In this section, all quantities refer to infinite volume ($\La=\bbZ^d$), unless we explicitly state the opposite, but this happens only in Section \ref{sec: thermalization}.

We need an assumption that will guarantee sufficiently fast decay of correlations of the freely evolving phonon field.

 \renewcommand{\theassumption}{\Alph{assumption}}

\begin{assumption} [Decay($\alpha$)] \label{ass: decay micro alpha}
There is an $\al >0$
such that, for $j=1,2$
\beq
 \int_{\bbR^+} \d t  \,  (1+\str t \str)^{ \al}   \sup_{x \in \bbZ^d, {0\leq u \leq \be_j}}     \str  \zeta_j(x,t +\i u) \str <\infty  
\eeq
With no loss, we suppose $\al < 1$.
\end{assumption}
The supremum  over $u$ on the RHS is irrelevant for most parts of the argument; it only plays a role for our equilibrium results i.e.\ with $\be=1=\be_2$. In fact, for natural form factors $\phi$, the decay property for $0< u \leq \be$ can often be obtained from the decay for $u=0$.   In what follows, we will always refer to the total correlation function $\zeta=\sum_j\zeta_j$. 

The next assumption eliminates a drift by requiring that the model be reflection-symmetric and rotation-symmetric (as far as the lattice allows). 
Let $O_{\bbZ^d}$ be the subgroup of the orthogonal group $O(d)$ that fixes the lattice, i.e.\ $
O \in O_{\bbZ^d}$ iff.\  $u \in \bbZ^d \Rightarrow Ou \in \bbZ^d$ and $\str O u \str = \str  u \str$. 

\begin{assumption} [Symmetries] \label{ass: symmetries}
The correlation function $\zeta$ is $O_{\bbZ^d}$-invariant in the sense that 
\beq
\zeta(Ox,t)=  \zeta(x,t), \qquad \text{for any}\,\, O \in O_{\bbZ^d}.
\eeq
\end{assumption}
\noindent 
Of course, this assumption is satisfied by choosing the parameters of the Hamiltonians $H_{\res}, H_I$ and the initial state $\rho_\res^{\reff}$ symmetric.
The  $O_{\bbZ^d}$-invariance of the particle dynamics was already assured by our explicit choice of $H_{\kin}$. 

The next assumption assures that the particle is sufficiently well-coupled to the phonon field. 
To state it, we need  some definitions.   

First, we assume that the spectrum of $H_{\spin}$ is non-degenerate. Then we can choose 
 a  basis $\psi_e \in \smallspace $ of normalized eigenvectors of $H_\spin$, labelled by $e\in{\sp}(H_{\spin})$,
 and we define $W_{e,e'} := \langle \psi_{e}, W \psi_{e'} \rangle_{\smallspace }$ (in general, we write $\langle \cdot, \cdot\rangle_{\scrE}$ for the scalar product in a Hilbert space $\scrE$).

Secondly, let us introduce  the set of Bohr frequencies corresponding to the spin Hamiltonian $H_{\spin}$,
\beq\label{bohr frequencies}
\caE :=  {\sp}(  {\adjoint}(H_{\spin})   )=\{ \varepsilon= e-e' ;  e,e'  \in {\sp}(H_{\spin}) \}
\eeq
We assume the spectrum of ${\adjoint}(H_{\spin})$  is also non-degenerate except for the eigenvalue $0$ which is $\dim \smallspace$-fold degenerate. That is,  for any $\ve \in \caE, \ve \neq 0$, there is a unique ordered pair $(e,e'), e,e' \in \sp(H_{\spin})$ such that $e'-e = \ve$. 
Next, let
\beq
\zeta_\omega(x)=  \int_{\bbR} \d t \, e^{-\i\omega t}\zeta(x,t)
\label{ftxi}
\eeq
be the Fourier transform of the correlation function in time, which is well-defined by Assumption \ref{ass: decay micro alpha}. Without further comment, we will always choose the definition of the Fourier transform in such a way as to minimize the number of factors $2\pi$ in our formulas. We require that 
\beq
\lim_{|x|\to \infty}\zeta_\ve(x)=0 
\label{zetadecay}
\eeq
for all $0\neq\ve\in\caE$. 
We consider its  distributional Fourier transform $\hat \zeta_\omega$, defined by  $ \zeta_\om(x)=\int \d q \, \e^{-\i qx } \hat \zeta_\omega(q)$. From (the infinite volume limit of)  (\ref{eq: microscopic expression correlation function}),  it is formally given by
\beq
\hat\zeta_\omega(q)= (-1)^{\mathrm{sgn}(\omega)}  2 \pi \sum_{j=1,2}      \str \phi_j(q) \str^2
  \frac{1}{\e^{\be_j \omega}-1}
 \delta(\omega_j(q)-|\omega|) .
\label{xihat}
\eeq
We assume that for all  $0\neq \ve\in \caE$ 
 the distribution $\hat\zeta_\ve$ defines a finite positive Borel measure
on $\tor$ which we denote by $\hat\zeta_\ve(\d q)$. It is readily seen that
this assumption holds e.g.\ if we suppose  that $ \nabla \om_j$ is bounded away from $0$ in  a neighborhood of   the set  $\caM_{\ve,j}=\{q \in \bbT^d \ |\ \omega_j(q)=|\ve|\}$ for $0\neq \ve\in \caE$. For $\omega=0$, we
assume $\hat\zeta_\omega=0$. These conditions hold 
in the examples below. 

 We will now define a Markov jump process with state space $\caF:=  \sp (H_\spin) \times \tor$.
The rates $j(e', \d k' ; e,k)$ of the jump process (jumps are from unprimed to primed variables) are translation invariant in $k$-space: 
 $j(e', \d k' ; e,k)=j(e', \d (k'-k) ; e,0)$ and
 given by
 \beq
 j(e', \d k' ; e,0)
=     \str W_{e,e'}\str^2   \hat\zeta_{e'-e}(\d k') 
\label{rates}
\eeq
The main assumption then is that this Markov process acts irreducibly on absolutely continuous densities: The transition map $P_t$ acts on $L^1(\caF)=L^1(\caF, \mu)$, where $\mu$ is the product of Lesbegue measure on $\tor$ and the counting measure $\sp (H_\spin)$ and we demand that, for any $t>0$, the only Borel sets $E$ for which the implication
\beq
\supp f \subset E \quad \Rightarrow \quad \supp P_tf \subset E, \qquad  f \in L^1(\caF) 
\eeq 
holds, are, up to null sets, $\emptyset$ or $\caF$. \newline

 We summarize:

\begin{assumption} [Fermi Golden Rule] \label{ass: fermi golden rule} Assume
\begin{itemize}\item[i)] The spectra of $H_\spin$ and  ${\mathrm{ ad}}(H_\spin)$  (apart from the eigenvalue $0$)
are non-degenerate 
\item[ii)]  $\zeta_0\equiv 0$ and for all $0\neq\ve\in \sp(\mathrm {ad}(H_{\spin}))$,  $\zeta_\ve(x)$
decays at infinity and
 $\hat\zeta_\ve$ defines a finite positive measure. 
\item[iii)]  The Markov process on $\caF= \sp(H_{\spin}) \times \tor$ with jump rates (\ref{rates}) is irreducible in the above sense.
\end{itemize}
\end{assumption}

Before proceeding to the results, let us give some examples of models for which all our assumptions are satisfied.   
\begin{example}[optical phonons]
Let 
\beq\label{eq: dispersion phonon}
\omega(q)=(m^2_{\mathrm{ph}}+ \sum_{i=1}^d \sin^2 \frac{q_i}{2} )^{\frac{1}{2}}, \qquad \textrm{with ${m_{\mathrm{ph}}}\neq 0$}. 
\eeq This is a typical  dispersion relation of an optical phonon branch.  Note that  $\str \mathrm{det}(\mathrm{Hessian}(\om)) \str$ is bounded away from $0$ in a neighbourhood of $0$.  Let the form factor is $\phi(\cdot)$  be a smooth function on $\bbR^d$ with compact support in this neighborhood of $0$. Then a standard argument relying on stationary phase estimates yields
\beq
\sup_x \str \zeta(x,t) \str \leq  C (1+ \str t \str)^{-d/2}
\eeq
 Hence Assumption \ref{ass: decay micro alpha} is satisfied with $\al = (d/2)-1-\delta$, for any $\delta>0$. 
To ensure that Assumption \ref{ass: fermi golden rule} is satisfied, we choose $\scrS= \bbC^2$ with  $W= \sigma_x$ and $H_{\spin}=  \ve_{0} \sigma_z$ where we use the traditional notation  $\scriptsize{\si_x= 
\left(\begin{array}{cc} 0 & 1 \\ 1 & 0 \end{array} \right),   \sigma_z = \left(\begin{array}{cc} 1 & 0 \\ 0 & -1 \end{array} \right)\,
}$. Then a sufficient condition for  Assumption \ref{ass: fermi golden rule} is that  $\caM_{\ve_0} \cap  \supp\phi$ has positive measure on $\caM_{\ve_0}$. With these choices, our equilibrium result   ($\be_1=\be_2$) holds for $d\geq 3$,  and the non-equilibrium result ($\be_1 \neq \be_2$) holds for  $d \geq 4$. 
\end{example}
\begin{example}[acoustical phonons]
If one considers acoustical phonons, for example with the above dispersion relation \eqref{eq: dispersion phonon} with ${m_{\mathrm{ph}}}=0$,  then, for smooth $\phi(q)$,  $\sup_x \str \zeta(x,t) \str \leq  C (1+ \str t \str)^{-(d-1)/2}$. Hence, the equilibrium result holds for $d\geq 4$ and the nonequilibrium one for $d \geq 5$.
From the point of view of our techniques, the important difference with the above example lies not therein that $\inf \om =0$, but in the fact that $\om(q)$ is linear in $\str q \str $ for small $q$.  
\end{example}

\subsubsection{Diffusion}\label{Diffusion}

Our most important result concerns diffusion of the particle. To state this somehow concisely, we note that the density matrix $\rho_{\sys, t}$ determines a   probability measure $\bbP_t(x)$ on $ \bbZ^d$:
 $$
 \bbP_t(x)= \Tr_{ \smallspace} [\rho_{\sys, t} (x,x)]=\sum_{e \in {\sp}(H_{\spin}) } \rho_{\sys, t} (x,e;x,e) $$
 where $ \Tr_{ \smallspace}$ is the partial trace on $\scrB(\smallspace)$ (below $\Tr$ is the full trace on $\scrH_\sys$), and we wrote the matrix $\rho_{\sys, t} (x',x)\in\caB({\smallspace})$ in the basis indexed by
 $\sigma(H_\spin)$ as $\rho_{\sys, t} (x',e';x,e)$. Physically speaking,
 $\bbP_t(x)$ is the probability
 to find the particle at time $t$ on the lattice site  $x \in \bbZ^d$. Equivalently, for an observable $F(X)$
\beq
 \sum_{x}\bbP_t(x) F(x)  =  \Tr   [ \rho_{\sys, t}  F(X)],   \qquad    F \in l^{\infty}(\bbZ^d)
\eeq
Hence, the function $\ga \mapsto  \Tr   [ \rho_{\sys, t}  \e^{\i \ga X}]$, figuring in the theorems below, is the characteristic function of the probability measure $\bbP_t$. 

 \setcounter{theorem}{0}
\bet[Equilibrium]   \label{thm: equilibrium}    
Assume that Assumptions \ref{ass: symmetries}, \ref{ass: fermi golden rule} and Assumption \ref{ass: decay micro alpha} with $\al >  1/4$, hold, and moreover, that  $\be_1=\be_2$.    Then, there is a $\la_0>0$ such that, for $0 <\str \la \str<\la_0$,  the characteristic function of  $\frac{X}{\sqrt{t}}$
converges to a Gaussian: for all  $ k \in \bbR^d  $
\beq
 \lim_{t\to\infty}  \Tr [\e^{\i k \frac{X}{\sqrt{t}}  }  \rho_{\sys, t} ]   =  \e^{-\str k\str^2 D^{\star}},  \label{eq: convergence to gaussian}
\eeq
for some strictly positive diffusion constant $D^{\star}$.  Moreover, also moments of  $\frac{X}{\sqrt{t}}$  converge to moments of the Gaussian. 
That is, for any  multi-index $I$
\beq
 \lim_{t\to\infty}\Tr [(\frac{X}{\sqrt{t}})^I \rho_{\sys, t} ] =     
  (-i \partial_k)^I \e^{-\str k\str^2 D^{\star}}\Big\str_{k=0} .
\eeq
\eet

\bet[Non-Equilibrium]    \label{thm: nonequilibrium}
Assume that Assumptions \ref{ass: symmetries}, \ref{ass: fermi golden rule} and Assumption \ref{ass: decay micro alpha} hold with $\al >  1/2$, but possibly  $\be_1 \neq  \be_2$. 
Then, there is a $\la_0>0$ such that, for $0 <\str \la \str<\la_0$, the conclusions of Theorem \ref{thm: equilibrium} hold
\eet

\subsubsection{Thermalization and decoherence} \label{sec: thermalization}

We describe some further results that could be of interest.  The following result states that some observables tend to an asymptotic value as $t \to \infty$.   
Let $\rho^{\be}_{\sys\res}$ be the (finite volume) Gibbs state at inverse temperature $\be$ corresponding to the interacting Hamiltonian $H$, i.e.\
\beq
\rho^\be_{\sys\res} =  \frac{1}{\Tr(\e^{-\be H})}\e^{-\be H}
\eeq
Take $\La=\bbZ^d$ now and and consider a `particle observable'  $A \in \scrB(\scrH_\sys)$ that is translation-invariant, that is, its kernel  satisfies $A(x',e';x,e)= A(x'+y,e';x+y,e) $ for any $y \in \bbZ^d$ (recall the notation introduced in Section \ref{Diffusion}).   Whenever necessary, we define a corresponding  finite-volume ($\La=\bbZ^d/L \bbZ^d$) observable by  restriction, i.e.\ $A^{\La} :=\lone_{\La} A\lone_{\La}$ with $\lone_{\La}= \lone_{\smallspace} \otimes \lone_{l^2(\La)} $, using the identification of $\La$  with a subset of $\bbZ^d$.  To avoid technicalities, we moreover demand
\beq
\str  A(x',e';x,e) \str \leq C \e^{-c \str x -x'\str }
\eeq
The algebra of all  translation-invariant $A \in \scrB(\scrH_\sys)$ satisfying such a condition (constants $0<c, C <\infty$ can depend on $A$) is denoted by $\frA$. 
 
Define the equilibrium expectation value of $A \in \frA$ as 
 \beq \label{def: gibbs expectation of observable}
 \langle A \rangle_{\be}  :=   \lim_{\La \nearrow \bbZ^d} \Tr  [ \rho^\be_{\sys\res}  (A^{\La} \otimes \lone_\res) ]
 \eeq
Note that $ \rho^\be_{\sys\res} $ depends on $\La$, too. The existence of the limit follows easily by the expansions in Section \ref{sec: bounds on boundary correlation functions}.

\bet \label{thm: convergence to gibbs}
Let  $A \in \frA$.
Under the conditions of Theorem \ref{thm: equilibrium} (including the restriction on $\str \la \str$), we have
\beq
\lim_{t \to \infty}   \Tr [\rho_{\sys,t}  A  ]  =     \langle A \rangle_{\be}, \qquad  \textrm{where} \, \be:=\be_1=\be_2  \label{eq: convergence to stat state}
\eeq 
\eet  
If we drop the condition that $\be_1= \be_2$, then the asymptotic state of the particle is given by a NESS (non-equilibrium steady state):
\bet \label{thm: convergence to ness}
Let  again $A \in \frA$.
Under the conditions of Theorem \ref{thm: nonequilibrium}, we have
\beq
\lim_{t \to\infty}   \Tr [\rho_{\sys,t}  A  ]  =     \langle A \rangle_{\mathrm{ness}} , 
\eeq
for some linear functional $A \to \langle A \rangle_{\mathrm{ness}}  $ on $\frA$, satisfying  $\str \langle A \rangle_{\mathrm{ness}} \str \leq \norm A \norm_{\scrB(\scrH_\sys)}$,   $ \langle \lone \rangle_{\mathrm{ness}} =1$ and  $ \langle A \rangle_{\mathrm{ness}}  \geq 0 $ for $A\geq 0$.
\eet  

The easiest way to give more details on the NESS consists in using the fact that it is a small perturbation (in $\la$) of the NESS that one obtains in the Markovian approximation to our model, and this will be described below. Let us however quote one important property of both functionals $\langle \cdot \rangle_{\textrm{ness}} $ and $\langle \cdot \rangle_{\be}$, namely decoherence. 

Choose the observables $A_{y}$ with kernel $A_y(x',e',x,e)= \delta_{x'-x,y}a(e,e')$, then there is a $\ga_0>0$
\beq \label{result: decoherence}
\str \langle  A_{y} \rangle_{\mathrm{ness}} \str  \leq C  \e^{- \hf \ga_0 |y|}
\eeq
for some constant $C$, independent from $y$.  The same statement holds for $ \langle A_y \rangle_{\be}$ as well, but this does not require our analysis since it is a property of the interacting Gibbs state introduced above. 

\subsubsection{The Markov approximation} \label{sec: markov approximation}
As mentioned already above, we can describe our results qualitatively by referring to the Markov approximation to our model. Strictly speaking this Markov approximation retains the quantum nature of the problem, but as far as the the long-time properties of the system are concerned, we can describe what is happening with the help of the 'classical' Markov process that was already introduced before Assumption \ref{ass: fermi golden rule} by specifying the rates $j(\cdot,\cdot)$ on the state space $\caF={\sp}(H_{\spin})\times \tor$. One should think of the elements in $\caF$ as 'good quantum  numbers' for the Hamiltonian $H_\sys$, with $k \in \tor$ momentum and $e \in {\sp}(H_{\spin})$  spin.
Since we assumed it to be irreducible, this Markov process has a unique invariant state given by an absolutely continuous positive measure. We call the associated density $\mu_{Q}(k,e)$. 

If $\be_1=\be_2$, then  we simply have  
\beq
\mu_Q(k,e) = (1/2\pi )^d  \e^{-\be e} (\sum_{e'} \e^{-\be e'}   )^{-1}  
\eeq
The fact that this Gibbs state is uniform in $k$ is due to the fact that the kinetic energy was chosen so small that on the time scale $\la^{-2}$ for which the Markov approximation is valid, the kinetic degrees of freedom do not couple directly to the phonons. \\
For $\be_1 \neq \be_2$, we do not have an explicit expression for the invariant density $\mu_Q(k,e)$, but it is connected to the NESS discussed in the previous sections as follows.   For  $A \in \frA$, we define
\beq
\langle A \rangle_{\textrm{ness}, Q} : =    \frac{1}{(2 \pi)^d}  \mathop{\int}\limits_{\tor} \d k \sum_e    \mu_Q(k,e)    \sum_{x}  \e^{\i k x} A(0,e, x,e)  \label{eq: explicit construction ness}
\eeq
Then we have
\beq
\langle A \rangle_{\textrm{ness}}  - \langle A \rangle_{\textrm{ness}, Q}  =  o (\str\la\str^0), \qquad  \la \to 0
\eeq
In fact, the same statement holds true when $\be_1=\be_2$, but in that case it  follows simply by explicitly comparing the ($\sys$-part of) the full interacting Gibbs state with $\mu_{Q}(k,e)$, i.e.\ no analysis of the dynamics is necessary.  Furthermore, we note that by the invariance property $j(e,'\d k'; e,k)=j(e,'\d (k'-k); e,0) $, we can deduce that  $\mu_Q(k,e)=   (1/2\pi)^{d} \mu_Q(e)$ (independent of $k$).   
For more details on how the NESS might look like in a concrete example, we refer the reader to \cite{deroeckspehner} where we construct a model that describes a ratchet. Starting from this example, it is straightforward to prove that the NESS is not an equilibrium state.

From the Markov approximation, we can also infer an approximation for the diffusion constant. Assume that the Markov process on $\caF$ describes a particle that jumps in its $(k,e)$ coordinates  and, in between jumps,  propagates freely in space with velocity 
\beq\label{group velocity}
v_i = 2m_{\mathrm{p}}^{-1}\sin k_i, \qquad  k \in \tor 
 \eeq 
 Note that, this is the gradient of  $ k\mapsto m_{\mathrm{p}}^{-1} \sum_
{i=1}^d (2- 2\cos k_i)$, which is the dispersion relation of $H_{\kin}$, up the factor $\la^2$. The disappearance of the factor $\la^2$ is due to the fact that the Markov process describes the dynamics of times of order $\la^{-2}$. 
The diffusion constant  $ D_Q$ of such a  particle is given by the velocity-velocity correlation function
 \beq
 D_Q  \delta_{i,j} =   \frac{1}{2} \int_{\bbR} \d t \langle v_i(t) v_j(0) \rangle 
 \eeq
 where the expectation $\langle \cdot \rangle$ is computed with respect to the stationary Markov process.  Then, 
the diffusion constant $D^{\star}$ in Theorems \ref{thm: equilibrium}, \ref{thm: nonequilibrium} has the asymptotics $D^{\star}=\la^2D_Q + o(\la^2)$ as $\la \to 0$.

\subsection{Related work}\label{sec: related work}
\subsubsection{Classical mechanics} \label{sec: results classical}

Diffusion has been established for the two-dimensional finite horizon billiard in \cite{bunimovichsinai}.   In that setup, a point particle travels in a periodic, planar array of fixed hard-core scatterers. The \emph{finite-horizon condition} refers to the fact that the particle cannot move further than a fixed distance without hitting an obstacle.  

In \cite{knaufergodic}, the hard-core scatterers are replaced  by a planar lattice of attractive Coulombic potentials, i.e., the potential is $V(x) =- \sum_{j \in \bbZ^2 }   \frac{1}{\str x-j \str}  $.
In that case, the motion of the particle can be mapped to the free motion on a manifold with strictly negative curvature, and one can again prove  diffusion. 

Recently, a  different approach was taken in \cite{bricmontkupiainendiffusioncoupledmaps}:  Interpreted freely, the model in \cite{bricmontkupiainendiffusioncoupledmaps} consists of a  $d=3$ lattice of confined particles that interact locally with chaotic maps such that the energy of the particles is preserved but their momenta are randomized. Neighboring particles can exchange energy via collisions and one proves diffusive behavior of the energy profile.

\subsubsection{Quantum mechanics for extended systems} \label{sec: qmextended}

 The earliest result for extended quantum systems that we are aware of, \cite{ovchinnikoverikhman},  treats a quantum  particle interacting with  a time-dependent random potential that has no memory (the time-correlation function is $\delta(t)$). Recently, this was generalized in \cite{kangschenker} to the case of time-dependent random potentials where the time-dependence is given by a Markov process with a gap (hence, the free time-correlation function of the environment is exponentially decaying).  In \cite{deroeckfrohlichpizzo},  a quantum particle interacting with independent heat reservoirs at each lattice site was treated. This model also has an exponentially decaying  free reservoir time-correlation function and as such, it  is very similar to   \cite{kangschenker}.  Notice also that, in spirit, the model with independent heat baths is comparable to the model of \cite{bricmontkupiainendiffusioncoupledmaps}, but, in practice, it is easier since quantum mechanics is linear.

The most serious shortcoming of these results (except for \cite{deroeckfrohlichhighdimension}, already discussed in the introduction) is the fact that the assumption of exponential decay of the correlation function in time is unrealistic.  In the model of the present paper, the space-time correlation function,  $\zeta(x,t)$, is the correlation function of freely-evolving excitations in the reservoir, created by interaction with the particle.
 Since momentum is conserved locally, these excitations cannot decay exponentially in time $t$, uniformly in $x$.  
 
 In the Anderson model, the analogue of the correlation function does not decay at all, since the potentials are fixed in time. Indeed, the Anderson model is different from our particle-environment model: diffusion is only expected to occur for small values of the coupling strength, whereas the particle gets trapped (Anderson localization) at large coupling. 

Finally, we mention two recent and exciting developments: 1) in \cite{disertorispencerzirnbauer}, the existence of a delocalized phase in three dimensions is proven for a  supersymmetric model, and 2) In \cite{erdosknowlesband}, delocalization of eigenvectors is proven for a class of random band matrices. Both models can be thought of as toy versions of the Anderson model. 

\subsubsection{Quantum mechanics for confined systems}

The theory of confined quantum systems, i.e., multi-level atoms,  in contact with quasi-free thermal reservoirs has been intensively studied in the last decade, e.g.\ by \cite{bachfrohlichreturn, jaksicpillet2, derezinskijaksicreturn, merklicommutators, deroeckkupiainen}. 
In this setup, one proves approach to equilibrium for the multi-level atom.  Although at first sight, this problem is different from ours (there is no analogue of diffusion), the techniques are quite similar and we were mainly inspired by these results.   However, an important difference is that, due to its confinement,  the multi-level atom experiences a free reservoir correlation function with better decay properties than that of our model (if one assume enough infrared regularity) and the 'Markov approximation' to the model has a gap, whereas in the extended system it is diffusive. 

\subsubsection{Scaling limits}

Up to now, most of the rigorous results on diffusion starting from deterministic dynamics are formulated in a \emph{scaling limit}. This means that one does not fix one dynamical system and study its behavior in the long-time limit, but, rather,  one compares a family of dynamical systems at different times, as a certain parameter goes to $0$. 
 The precise definition of the scaling limit differs from model to model, but, in general,  one scales time, space and the coupling strength (and possibly also the initial state) such that the Markovian approximation to the dynamics becomes exact.   In the model of the present paper, one can do this by considering the dynamics for times $t =\caO(\la^{-2})$ and then taking $\la \to 0$. The result is a Markovian approximation which was already referred to in Section \ref{sec: markov approximation}.  Of course, the point of  the present paper is that we go beyond the Markovian approximation and we describe the dynamics for infinite times at fixed $\la$.
 There are quite some results on scaling limits in the literature, for example \cite{szasztothscalinglimit, erdosyauboltzmann, erdossalmhoferyaunonrecollision, lukkarinenspohnnonlinear, komorowskiryzhikdiffusion, durrgoldsteinlebowitz},  and we do not attempt an overview.

\subsection{Strategy of proof} \label{sec: strategy of proof}
We will now give a road map for the proof of Theorems \ref{thm: equilibrium}  and \ref{thm: nonequilibrium}. 
It is based on a careful analysis of the long time properties of the reduced particle dynamics
given by the operator 
$Z_t$ defined in eq.
(\ref{Sdyn}) and extended to infinite volume in Section \ref{sec: infinite volume setup}. We view $Z_t$ as a bounded map $Z_t: \scrB_1(\scrH_\sys) \to \scrB_1(\scrH_\sys)  $
i.e. 
$$Z_t\in   \scrB(\scrB_1(\scrH_\sys)).
$$
The main ingredient will be a proof that in a suitable norm the rescaled large time limit 
\beq
\lim_{t\to\infty}\bsS_{\sqrt{t}}Z_t  
\label{scalinglimit}
 \eeq
exists\footnote{This statement is of course only true in infinite volume}.  Here $\bsS_{\sqrt{t}}$ is an operator implementing the scaling of the particle position occurring in
  Theorem \ref{thm: equilibrium} in the space $ \scrB(\scrB(\scrH_\sys))$ explained in detail in Section \ref{sec: kernels and rescaling of space}.
  This large time limit is controlled in two steps. 

\subsubsection{Random walk in random environment}
  The first step consists of studying the dynamics on the time scale ${\mathcal O}(1/\la^2)$. This scale   
  is large enough so that the dissipative effects that we want to exhibit are clearly visible, and small enough such that a simple Duhamel expansion can be  controlled.  Thus, let us fix $t_0= \la^{-2}\frt_{0}$ with $\frt_0$ of $\caO(1)$.  
 The  evolution of the system plus environment up to this time is given by
\beq
 {\cal U} :=  \e^{-\i t_0  L} .
 \label{calU}
\eeq
  Denote the reduced time evolution on this scale by
  \beq
T:=Z_{ t_0  }.
\label{Tdef}
 \eeq
  In Section \ref{sec: weak coupling limit} we show that
 for $\la$ small enough, 
\beq  \label{eq: first closeness t and qrw}
T  = \e^{ t_0 (-\i L_\sys+ \la^2  M) } +\caO(\str \la\str^{2 \al})
\eeq
 in an appropriate norm, cfr.\ Proposition \ref{prop: weak coupling}. Here, $L_\sys=\adjoint(H_\sys)$ and $M$ is the generator
 of a ``quantum Markov process" that is very closely related to the Markov process discussed in Section \ref{sec: results}.
 
 We will now compare the full evolution (\ref{calU}) to the one where the particle dynamics
 is given by  the reduced evolution $T$ and the environment evolves freely:
 \beq
 {\cal U} =      T \otimes \e^{-\i t_0 L_\res}+B 
 \label{Udeco}
\eeq
where $ L_\res=\ad(H_\res)$.
This defines the ``excitation operator $B$''  acting on $\scrB_1(\scrH_S\otimes\scrH_E)$.
This operator is analyzed in Section \ref{sec: estimates on the first scale excitations}. There
we show that $B$ is small and "weakly correlated" (Propositions \ref{prop: integrability}
and \ref{prop: integrability with boundary}). To explain what we mean by this, consider
the full evolution  for times $t$ longer than $t_0$. Taking
$t=Nt_0$ we have
\beq
 \e^{-\i  N t_0 L} =     {\cal U}^N=(T \otimes \e^{-\i t_0 L_\res}+B )^N.
\eeq
We rewrite this
 \beq
 {\cal U}^N=e^{-\i  N t_0 L_\res}(T+B({N}))(T+B({N-1}))
\dots (T+B({1}))
\eeq
where  we use the shorthand $T$ for $T\otimes \lone$ and $e^{-\i   t_0 L_\res}$
for $\lone \otimes e^{-\i   t_0 L_\res}$ and define
\beq  \label{eq: b interaction picture}
B(\tau)    :=    \e^{\i \tau   t_0  L_\res}  B    \e^{-\i (\tau-1) t_0 L_\res}.
\eeq
 Using cyclicity of the trace and the fact $ \e^{-\i t_0 L_\res} \initialresfinite=\initialresfinite$, i.e.\ the invariance of the  environment state under the uncoupled dynamics we get
\baq
Z_{ Nt_0  }\rho_S
 = \Tr_\res \left[   (T+B(N))            \ldots (T+B(2)) (T+B(1)) ( \rho_{\sys} \otimes \initialresfinite)  \right]  
\label{basic}
\eaq

If $B(\tau), \tau=1, \ldots, N$ were set to zero in (\ref{basic}), the reduced evolution would be
given by $  Z_{ Nt_0  }= T^N $ i.e.\ by a discrete-time semigroup acting in the system space.  This is our  `quantum random walk'. Using \eqref{eq: first closeness t and qrw}, it is easy to show that it is diffusive; this is done in Section \ref{sec: estimates on the first scale reduced evolution}. More precisely, we show that the process generated by $M$ is diffusive and the diffusivity of $T^N$ follows then by simple perturbation theory. 
The presence of $B(\tau)$ produces time dependence and dependence on the
environment variables that are traced over in the end. We wish to think about
the latter as time dependent {\it noise} and the $\Tr_\res$ as an expectation
over the noise. Thus, given $D\in\scrB(\scrB_1(\scrH_S\otimes\scrH_E))$ we
define $\bbE(D)\in \scrB(\scrB_1(\scrH_S))$ by
 \beq
\bbE(D)\rho_S:= \Tr_\res \left[ D( \rho_{\sys} \otimes \initialresfinite)  \right]  .
\label{Edef}
 \eeq
Using this notation we have
\baq
Z_{ Nt_0  }=\bbE ({\cal U}^N)
\label{basic1}
\eaq
together with
 \beq
T=\bbE ({\cal U})
\label{TandE}
 \eeq
and 
\beq
\bbE ( {B}(\tau))=0
\label{EBzero}
 \eeq
 for all $\tau$ which follows again from the invariance of $\initialresfinite$.
 In Section \ref{sec: correlation functions} we generalize this expectation
 to a larger algebra needed to analyze (\ref{basic}). 
  Given a set  $A=\{\tau_1,\tau_2,\dots, \tau_m\}\subset \{1,2,\dots,N\}$ with the convention that
$\tau_i<\tau_{i+1}$ 
we define the time-ordered {\it correlation function} 
\beq
G_A :=    \bbE \left( B({\tau_m})\odot  B({\tau_{m-1}})\odot\dots\odot B({\tau_1}) \right) 
\in[ \scrB(\scrB(\scrH_\sys))]^{{\otimes}^m}
\label{eq: first definition of correlation function}
\eeq
Here $\odot$ denotes a product that is tensor product in the system space and
operator product in the environment
space, see Section \ref{sec: correlation functions}. Like in classical probability we introduce
in Section \ref{sec: Connected correlation functions}
 {\it connected correlation functions} or cumulants $G_A ^c$ in our non-commutative setup.
Propositions \ref{prop: integrability}
and \ref{prop: integrability with boundary} show that these cumulants are small and
inherit decay in time from the environment correlation function.

Thus we want to think about $T+B(\tau)$ as transition probability kernels in the system space
which are random due to the environment dependence
and we want to prove average or ``annealed" diffusion for the associated process, a quantum
version of random walk in random environment.

\subsubsection{Renormalization group} The second step in the control of the large time asymptotics of $Z_t$  is to
control the large $N$ asymptotics of (\ref{basic1}) composed with the scaling.
This is achieved using  Renormalization Group (RG) method which
 consists of studying  (\ref{basic})  in an inductive way. Pick an integer $\ell$ and define the RG map
\beq
\bsR (\caU) :=  \bsS_\ell (\caU^{\ell^2}), 
\label{RGmap} 
  \eeq
  where the scaling $ \bsS_\ell=\bsS_\ell\otimes \lone$ acts on $ \scrB(\scrB_1(\scrH_\sys))$ only (see Section
  \ref{sec: kernels and rescaling of space}).
Iterating it $n$ times, and using $(\bsS_{\ell})^n=\bsS_{\ell^n}$ we get
\beq
\caU_n := \bsR^n(\caU)  =     \bsS_{\ell^n}( \caU^{\ell^{2n}}).
\eeq
Define now
\beq
T_n:=\bbE (\caU_n).
\label{Tndefi}
\eeq
By (\ref{basic1}) $T_n$ gives the (rescaled) reduced dynamics at time $\ell^{2n}t_0$:
\beq\label{tnvsZ}
T_n= \bsS_{\ell^n}Z_{\ell^{2n}t_0}.
\eeq
Thus the existence  of the limit (\ref{scalinglimit}) is closely related to proving (in a sense to be made precise) the existence of
\beq
\lim_{n\to\infty}T_n=T^{\ast}
\label{scalinglimit1}
 \eeq
This is the content of Proposition 
\ref{prop: overview t behavior}. 

$T_n$ is analyzed inductively in $n$. We define the renormalized noise $B_n$ 
in analogy with (\ref {Udeco}) by
\beq
\caU_n=T_n\otimes e^{-\i  \ell^{2n}t_0 L_\res}+B_n.
   \label{Undeco}
\eeq
$B_n$ is studied through its cumulants $G_{n,A}^c$ defined as for $B$. The RG map
(\ref{RGmap}) leads to a recursion
\beq
T_{n+1}={\caF}(T_n, G^c_{n,\bullet})
   \label{Trecursion}
\eeq
where the map ${\caF}$ is given explicitly in eq.\ \eqref{eq: Z from connected correlation functions}  and another
one for the correlation functions
\beq
G^c_{n+1, A}={\caF}_A(T_n, G^c_{n,\bullet})
   \label{Grecursion}
\eeq
with  ${\caF}_A$ given  in eq. (\ref{eq: from n to npluseen}). The recursion (\ref{Trecursion})
is studied in Section \ref{sec: flow of t} and (\ref{Grecursion}) in Sections \ref{sec: linear rg flow}
and \ref{sec: nonlinear rg flow}. For this analysis a choice of suitable norm for $G_{n,A}^c$  is
crucial: Norms which respect the structure of the iteration (\ref{Grecursion}) are constructed
in Section \ref{sec: norms}.

The convergence (\ref{scalinglimit1}) is a consequence  of the fact that
 that the map (\ref{Grecursion})  contracts the norms of the correlation
functions $G^c_{n,\bullet}$ and preserve their temporal decay  i.e. the noise is ``irrelevant". 
For the contraction one needs to study carefully the linearization of the map ${\caF}_A$
(Section  \ref{sec: linear rg flow}). A crucial input to this analysis is the existence of
two symmetries in the problem: unitarity and reversibility (Sections \ref{sec: ward identity from unitarity}
and \ref{sec: ward identity from equilibrium}) that provide the
contractive factors.  However, the symmetry from reversibility is only present in the equilibrium case, i.e.\ $\be_1=\be_2$, and this is the reason why our results are stronger in that case.

\section{Renormalization group formalism}  \label{sec: general}

In this section, we define the correlation functions $G_{A}$ and derive the recursion
relations for their renormalized versions $G_{n,A}$ as well as for
the $T_n$. We also discuss the symmetries preserved by the RG map.
Throughout this section and the next one, all expressions are in finite volume.  In particular, all vector spaces that we introduce are  finite-dimensional (with one exception: $\scrR^{\otimes^\bbN}$) and all sums that appear are finite. The infinite volume limit and the definitions of the tensor product spaces in that limit
will be discussed in Section \ref{sec: infinite volume setup}.

\subsection{Correlation functions of excitations} \label{sec: correlation functions}

We will now define the correlation function (\ref{eq: first definition of correlation function})
and explain how the reduced dynamics (\ref{basic}) can be expressed in terms of it.
We abbreviate
\beq
\scrR_\sys =  \scrB(\scrB_1(\scrH_S)), \qquad    \scrR_\res =  \scrB(\scrB_1(\scrH_\res))
\eeq
such that $U, B(\tau)$ are elements of $\scrR_\sys  \otimes  \scrR_\res$ and $T$ is an element of $\scrR_\sys$. 
Define, for $D,D' \in \scrR_\sys  \otimes  \scrR_\res$ the object
$$D\odot  D'\in \scrR_\sys  \otimes \scrR_\sys  \otimes \scrR_\res $$
 as an operator product in $\res$-part and tensor
 product in $\sys$-part. Concretely, let first $D=D_\sys \otimes D_\res$ and $D'=D'_\sys \otimes D'_\res$.
 Then 
$$D\odot D':= D_\sys  \otimes D'_\sys  \otimes D_\res D'_\res.$$
Extend then by linearity to the whole space $\scrR_\sys  \otimes  \scrR_\res$ .
Iterating this construction we define for $D_i \in \scrR_\sys  \otimes  \scrR_\res  $, $i=1,\dots,m$
$$ D_m\odot \dots\odot D_2\odot D_1 \in (\scrR_\sys)^{ \otimes^m}\otimes \scrR_\res .$$
We define the `expectation' 
$$\bbE:(\scrR_\sys)^{ \otimes^m}\otimes \scrR_\res
\rightarrow(\scrR_\sys)^{ \otimes^m}$$ on simple tensors $D_\sys \otimes D_\res,  D_\sys \in (\scrR_\sys)^{ \otimes^m}, D_\res \in \scrR_\res$ by
$$
\bbE (D_\sys \otimes D_\res):= D_\sys {\Tr}_{\res} ( D_\res \initialresfinite)).
$$
and then we extend by linearity. 
Hence we have explained the definition  (\ref{eq: first definition of correlation function}) of $G_A$.

Note that by (\ref{EBzero}) $G_{\{ \tau \}}  =0$, and 
 $G_A = G_{A+\tau}$ since $ \e^{-\i t_0 L_\res} \initialresfinite=\initialresfinite$. 
It will be convenient to label the $\scrR_\sys$'s and to drop the subscript $\sys$ (since we will rarely need $\scrR_\res$), writing simply $\scrR$ for $\scrR_\sys$.   Let us denote by $\scrR^{\otimes^{\bbN}}$ the linear space spanned by simple tensors $   \ldots \otimes V_2\otimes V_1$ where all but a finite number of $V_j$ are equal to the identity $\lone$.  For finite subsets $A \subset \bbN$, we then define $\scrR_A$ as the finite-dimensional subspace of $\scrR^{\otimes^\bbN}$ spanned by  simple tensors $   \ldots \otimes V_2\otimes V_1$ with $V_j=\lone, j \notin A$ and we write in particular $\scrR_{\tau}=\scrR_{\{\tau\}}$.   Let $A=\{\tau_1, \tau_2, \ldots, \tau_m\}$ with $ \tau_1 < \tau_2 <\ldots < \tau_m$. 
Obviously,  $\scrR_A$ is naturally isomorphic to $\scrR^{\otimes^m}$ by identifying the right-most tensor factor to $\scrR_{\tau_1}$, the next one to $\scrR_{\tau_2}$, etc\ldots We denote this isomorphism from $\scrR^{\otimes^m}$ to $\scrR_A$ by $\bsI_A$ and we will from now on write $G_A$ to denote $\bsI_A[G_A] \in \scrR_A$ since $G_A$ acting on the `unlabeled' space $\scrR^{\otimes^m}$ will not be used.

Consider a collection $\caA$ of finite disjoint subsets of $\bbN$, then each of the spaces $\scrR_{A \in \caA}$ is a subspace of $\scrR_{\supp \caA}$ where $\supp \caA =\cup_{A \in \caA} A$. Given a  collection of operators $K_A \in \scrR_A$, we have $\prod_{A} K_A  \in \scrR_{\supp \caA}$. 
However, we prefer to denote  such products  by
\beq
\mathop{\otimes}\limits_{A \in \caA}   K_A      \in \scrR_{\supp \caA},
\eeq
i.e.\ we keep the tensor product explicit in the notation. 

To express   (\ref{basic}) in terms of the correlation functions
(\ref{eq: first definition of correlation function}) we need one more operation; the contraction $\caT$ of neighboring spaces. Let $I =\{\tau_1,\ldots, \tau_m\}$ be a discrete interval, i.e.\  $\tau_i=k+i$ for some $k$, and consider a family $V_\tau \in \scrR, \tau \in I$, then we define 
\beq
\caT (\bsI_{\tau_{m}}[V_{m}]\ \otimes \bsI_{\tau_{m-1}}[V_{m-1}]\otimes\dots \bsI_{\tau_1}[V_{\tau_1}]):=V_m V_{m-1}\dots V_1  
\label{contraT}
\eeq
and then extend $\caT$ linearly to an operator $\scrR_I \to \scrR$. In formulas like the one above, we will often abbreviate $\bsI_{\tau}[ V_\tau ]$ by $ V_\tau$, which is an abuse of notation that does not cause confusion since we keep the tensor product explicit on the LHS, as indicated above.

With these definitions eq. (\ref{basic}) can be written as
\baq
Z_{N t_0}  =  \caT\left( \bbE \left[   (T+B(N)) \odot           \ldots (T+B(2)) \odot    (T+B(1))\right]   \right) 
\label{basic0}
\eaq
and expanding the product we end up with desired formula
\beq
Z_{N t_0}=\sum_{A\subset \{1,\dots,N\}}\caT [ \mathop{\otimes}\limits_{\tau \in A^c} T({\tau}) \otimes G_A  ] , \qquad  A^c =  \{1,\dots,N\} \setminus A   \label{eq: first expansion density matrix}
\eeq
where the operators $T(\tau)$ are shorthand for copies of $\bsI_{\tau}[T] \in \scrR_\tau$, as announced above, and hence, for each operator appearing in the product, we have specified the space $\scrR_A$ or $\scrR_\tau$ in which it acts (indeed, recall that $G_A$ was defined to act in $\scrR_A$).  In contrast, 
the order in which we write the factors inside the $\caT[\cdot]$  does not have any meaning.  For a completely explicit  expression of $\caT$ (and of $\caT_{A'}$, which will be introduced later), we refer to Section \ref{sec: kernel representation}.

\subsection{Connected correlation functions}\label{sec: Connected correlation functions}

The `connected correlation functions" or 'cumulants', denoted by $G^{c}_{A}$, 
are defined to be operators on $\scrR_{A}$ satisfying
\beq
G_{A'} =   
   \sum_{\scriptsize{\left.\begin{array}{c}   \textrm{partitions  $\caA $  of  $A'$}    \end{array} \right.  }}  \left( \mathop{\otimes}\limits_{A \in \caA}  G^{c}_{A}  \right)   \label{def: cumulants}
\eeq
The tensor product in this formula is well-defined since $ \scrR_{A'} = \mathop{\otimes}\limits_{A \in \caA} \scrR_{A}$ whenever $\caA$ is a partition of $A'$.
Note that this definition of connected correlation functions reduces to the usual probabilistic definition when all operators that appear are numbers and the tensor product can be replaced by multiplication. 
Just as in the probabilistic case, the relations \eqref{def: cumulants} for all sets $A'$ fix the operators $G^{c}_{A}$ uniquely since the formula \eqref{def: cumulants} can be inverted.  This is done inductively in $|A|$, i.e.\
\begin{align}
G^c_{\{\tau\}} = G_{\{\tau\}}, \qquad    G^c_{\{\tau_1,\tau_2\}} =  G_{\{\tau_1,\tau_2\}}- G^c_{\tau_2}\otimes G^c_{\{\tau_1\}}\\[2mm]
G^c_{\{\tau_1,\tau_2,\tau_3\}} =   G_{\{\tau_1,\tau_2,\tau_3\}} -   \sum_{j =1,2,3}  G^c_{\{\tau_j\}} \otimes  G^c_{\{\tau_1,\tau_2,\tau_3\} \setminus \{\tau_j\}} - \mathop{\otimes}\limits_{j=1,2,3}  G^c_{\{\tau_j\}}.
\end{align}
Note however that there are many simplifications because $G_{\{\tau\}}=G^c_{\{\tau\}}=0$ by definition.

Inserting (\ref{def: cumulants}) into (\ref{eq: first expansion density matrix}) we get
\baq
Z_{N t_0}= T^N+
  \sum_{\caA  \in \frB(I)}      \caT   \left[ \left( \mathop{\otimes}\limits_{\tau \in I  \setminus \supp \caA} T(\tau) \right)   \otimes  \left(\mathop{\otimes}\limits_{A \in \caA} G^{c}_{A} \right)    \right]   \label{eq: Z from connected correlation functions}   
\eaq
where $I=\{1,2,\dots, N\}$,  $\frB(I)$ is the set of non-empty collections $\caA$ of disjoint subsets  $A$ of $I$, and $\supp \caA= \cup_{A \in \caA} A $.  
Formula \eqref{eq: Z from connected correlation functions}   follows from \eqref{eq: first expansion density matrix} by substituting \eqref{def: cumulants}
 since, obviously, any disjoint collection $\caA$ is a partition of $\supp \caA$. The term $T^{N}$ in \eqref{eq: Z from connected correlation functions}  originates from $A=\emptyset$ in \eqref{eq: first expansion density matrix}.

\subsection{Kernels and rescaling of space} \label{sec: kernels and rescaling of space}

To define the scaling operator introduced in (\ref{scalinglimit}) we need to write our
operators in a suitable basis. 
First, let us fix a basis  for the space $\scrB_1(\scrH_\sys)$ . Since we assumed $H_{\spin}$ to be non-degenerate, 
we may label a basis of $\scrS$ by eigenvalues $e \in {\sp}(H_{\spin})$. Hence $\psi\in \scrH_\sys
=  \scrS \otimes  l^2(\La_L)$ may be identified with a function $\psi(x,e)$. 
Next, in this basis  $\rho\in\scrB_1(\scrH_\sys)$ becomes a kernel (matrix)   
$$\rho=\rho(x_\links, e_\links; x_\rechts, e_\rechts  ),
$$
 where $x_\links, x_\rechts \in \La_L$ and $e_\links, e_\rechts\in{\sp}(H_{\spin})$ ($\rechts/\links$ for ``right/left").
 The scaling operator will act on a particular combination of $x_\links$ and $x_\rechts $. Therefore
 we need to introduce new coordinates. Note that the construction below depends on the finite volume $\La$ (by the parameter $L$) in a trivial way, and we will not indicate this dependence explicitly.

Let, for a vector $a \in \frac{1}{2}\bbZ^d$,  $\lfloor a \rfloor $ stand for the vector with components $
  \lfloor a \rfloor_i =  \lfloor a_i \rfloor $ (largest integer smaller than $a_i$).   Then we define
\beq
x :=  \lfloor  \frac{x_\links+ x_\rechts}{2} \rfloor, \qquad   \offdx :=   \lfloor  \frac{x_\links- x_\rechts}{2} \rfloor
\eeq
and 
\beq
\offsetx:=  x_\links   -  (x  + \offdx)    \in \{0,1\}^d
\eeq
such that we have 
\beq
x_\links = x +  \offdx +  \offsetx  \qquad   x_\rechts =  x -  \offdx
\eeq
The variables $e_\links, e_\rechts, \offsetx$ will play a minor role in our analysis. Together with $\offdx$, we gather them in one 
 symbol $s$ which hence takes values in the set 
 $$\caS:= {\sp}(H_{\spin}) \times {\sp}(H_{\spin}) \times \{0,1 \}^d \times \bbZ^d/L\bbZ^d.
 $$ 
 The new variables $(x,s)$ thus run through the set
 \beq
\bbA_0 := \bbZ^d/L\bbZ^d \times \caS,
\eeq
and we may identify $\rho \in \scrB(\scrH_\sys)$ (or, since the space is finite-dimensional, any $\scrB_p(\scrH_\sys)$) with a function on $\bbA_0$, i.e.\  $$
 \rho=\rho(x,s)
 .$$  
 Similarly
 $K\in\scrR$ is given by a function on $\bbA_0\times \bbA_0$ i.e. a kernel $K(x',s';x,s)$ which acts
 on $\scrB_1(\scrH_\sys)$  by 
\beq
(K\rho)(x',s')= \sum_{(x,s)\in\bbA_0} K(x',s';x,s) \rho(x,s) 
\label{Krho}
\eeq
 We define a distinguished subset  $\caS_{0}\subset \caS$
  \beq
 \caS_{0} := \{  s = (e_\links, e_\rechts, \offsetx, \offdx)  \in \caS \, \str  e_\links=e_\rechts,  \offdx=0, \offsetx=0  \}
 \eeq
 Note that $\caS_{0}$ is independent of $L$, $\str\caS_{0} \str=\dim \smallspace $, and that 
 \beq\label{trace vs xs sum}
 \Tr \rho =  \sum_{x, s  }  \rho(x, s)   \indicator_{\caS_{0}} (s)
 \eeq

The scaling in the renormalization step affects only the variable $x$. At first, $x$ takes values in $\bbZ^d/L\bbZ^d$, but later on it lives on finer lattices.    
This motivates the definitions
\beq
\bbX_n :=   \ell^{-n} ( \bbZ^d/L \bbZ^d), \qquad  \bbA_n := \bbX_n  \times \caS 
  \eeq
To reconstruct the original coordinates $x_\links, x_\rechts$ from the $(x,  s)$ at scale $n$, one finds
\beq
x_\links =\ell^{n} x +  \offdx +  \offsetx  \qquad   x_\rechts =  \ell^{n}  x -  \offdx.
\eeq
The scaling map transforms functions on $\bbA_n$ into functions on $\bbA_{n+1}$, i.e.\ we define 
\beq
S_{\ell} : l^{\infty}(\bbA_n) \to  l^{\infty}(\bbA_{n+1})  
\eeq
with
$$
S_{\ell}\rho (x, s)  =    \ell^d    \rho ( \ell x, s) .
$$
Since $\scrR$ is the space of kernels on $\bbA_0 \times \bbA_0$, 
let us denote by 
 $\scrR_n$  the space of kernels $K$ on $\bbA_n\times \bbA_n$ and the action
$$
\bsS_{\ell}:\scrR_n\to\scrR_{n+1}
$$
by $\bsS_{\ell} K =   S_{\ell} K S^{-1}_{\ell}$ i.e.
\beq
(\bsS_{\ell} K)(x',s'; x, s) = \ell^d K (\ell x',s'; \ell x, s).
\label{sell}
\eeq
When summing over a variable $x \in \bbX_n$, it is natural to take account of the cell volume i.e.\ consider
Riemann sums. We do this by defining
\baq
\int_{ \bbX_n}  \d x f(x)  :=      \sum_{x \in \bbX_n}       \ell^{-nd}  f(x)  
 \label{def: sum over lattice}
\eaq
Also, we use the shorthand $z=(x,s)$ and the notation
\baq
 \int_{\bbA_n} \d z f(z) =   \int_{\bbA_n}  \d x \d s \  f(x,s) := \sum_{(x,  s) \in \bbA_n}        \ell^{-nd}f(x,s)  \label{def: sum over lattice1}.
\eaq
These conventions are motivated by the fact that they preserve normalization. Thus we have
\beq
 \int_{ \bbX_{n+1}}   \d x\ ( S_{\ell}f) ( x )     = \int_{ \bbX_{n}}   \d x\  f ( x ).  
\eeq
and it is natural to define a trace on functions on $\bbA_n$ by
\beq
\Tr f :=    \int_{\bbA_n} \d z   \, f(z)  \indicator_{\caS_{0}} (s) = \sum_{s \in \caS_0, x \in \bbX}    \ell^{-nd}  f(x,s)
\eeq
This trace is invariant under scaling;  
\beq \label{eq: trace under scaling}
\Tr\rho=\Tr S_{\ell}\rho.    
\eeq
In particular, if $\rho \in \scrB_1(\scrH_\sys)$, hence $\rho$ a function on $\bbA_0$, then $S_{\ell^n}\rho$ is a function on $\bbA_n$ and we have $\Tr S_{\ell^n}\rho= \Tr \rho $ so that the '$\Tr$' defined here is indeed connected to the original trace of operators.
Similarly the sum in (\ref{Krho}) becomes an integral and     the product of two  kernels 
 $K_1,K_2 \in\scrR_n$ becomes 
\beq
\left(K_2K_1\right)(z',z)= \int_{\bbA_n} \d z'' K_2(z',z'') K_1(z'',z) 
\eeq
We will also need a discrete delta-function 
\beq \label{def: normalised deltafunction}
 \delta(z',z)  =   \ell^{-nd} \delta_{x',x}  \delta_{s',s}, \qquad z=(x,s), z'=(x',s')
\eeq
which satisfies
\beq \nonumber
 \int_{\bbA_n} \d z'   \delta(z',z)  =1.
\eeq
In the rest of the paper, we will usually not indicate the integration space in expressions like $\int \d z, \int \d x, \int \d x$ since it can be deduced from the name of the integration variable (and the context).

\subsection{RG recursion for $T$} \label{sec: reduced dynamics}

Let us now study the renormalized reduced dynamics $T_n$ defined in
(\ref{Tndefi}). $T_n$ belongs to  the rescaled space $\scrR_n$. We copy the setup of Section \ref{sec: correlation functions}, defining spaces $\scrR_{n,A}$ as products of $\scrR_{n,\tau}$, copies of $\scrR_n$. 
We also define, analogously to \eqref{eq: b interaction picture},    
\beq
B_n(\tau):=\e^{\i \tau\ell^{2n}t_0 L_\res}   B_n \e^{- \i (\tau-1) \ell^{2n}t_0 L_\res}
\label{BNtau}
\eeq
and the 
 time-ordered {correlation function} 
\beq
G_{n,A} :=    \bbE \left( B_n({\tau_m})\odot  B_n({\tau_{m-1}})\odot\dots\odot B_n({\tau_1}) \right)   \in   (\scrR_{n})^{ \otimes^m},
\eeq
for $A =\{\tau_1, \ldots, \tau_m  \}$ with $\tau_1 < \tau_2 < \ldots <\tau_m$.  With these notations, the concepts of Section \ref{sec: correlation functions} correspond to $n=0$, i.e.\ $T$ and $B(\tau)$ introduced there will now be referred to as $T_0, B_0(\tau)$.
As in Section \ref{sec: correlation functions}, we embed $G_{n,A} $ into $\scrR_{n,A}$ and we write $T_{n}(\tau)$ to denote a copy of $T_n$ embedded in $\scrR_{n,\tau}$. 
The recursion relation for $T_n$ is obtained from the analogue of eq. (\ref{basic1}):
\baq
T_{n+1}  = \bsS_\ell  \caT\left[ \bbE \left(   (T_n+B_n(\ell^2)) \odot           \ldots (T_n+B_n(2)) \odot    (T_n+B_n(1))\right)   \right]
\label{basic2}
\eaq
which leads as in  (\ref{eq: first expansion density matrix}) to
\beq   \label{eq: Z from correlation functions}
 T_{n+1}=\sum_{A\subset \{1,\dots,\ell^2\} } \bsS_\ell \caT [ \mathop{\otimes}\limits_{\tau \in A^c} T_{n}(\tau) \otimes  G_{n,A}], \qquad A^c= \{1,\dots,\ell^2\} \setminus A.
\eeq
To get the analogue of (\ref{eq: Z from connected correlation functions}   )
 define first  $I_{\tau'}$, the set of times at scale $n$ associated to a time ${\tau'}$ at scale $n+1$, by
\beq
I_{{\tau'}} := \{ \ell^2 ({\tau'} -1)+1,   \ell^2 ({\tau'}-1)+2, \ldots,  \ell^2 {\tau'}  \}
\eeq
Then
\baq
T_{n+1}
&=&  \bsS_{\ell}[T_n^{\ell^2}] +    \sum_{\caA  \in \frB(I_{\tau'})}       \bsS_{\ell}\caT   \left[ \left( \mathop{\otimes}\limits_{\tau \in I_{\tau'}  \setminus \supp \caA} T_n(\tau) \right)   \otimes  \left(\mathop{\otimes}\limits_{A \in \caA} G^{c}_{n,A} \right)    \right]   \label{eq: T from connected correlation functions}   
\eaq
where $\frB(I_{\tau'}) $ is the set of non-empty collections $\caA$ of disjoint subsets  $A$ of $I_{\tau'}$, and $\supp \caA= \cup_{A \in \caA} A $.  Note that the $\tau'$ on the RHS is arbitrary because of time-translation invariance, cfr.\  the discussion in Section \ref{sec: correlation functions}, and because the contraction $\caT$ has been defined such that it produces operators in $\scrR$. We could have put simply $\tau'=1$ here, but we will need it anyhow in the next section.

\subsection{Recursion relations for correlation functions}  \label{sec: recursion relations for correlation functions}

We derive  the recursion for  $G^{c}_{n+1,A'}$, the cumulants at scale $n+1$, in terms of those at scale $n$. 
The set of  times at scale $n$ that contribute to $G^{c}_{n+1,A',}$ is
\beq 
I_{A'} := \cup_{\tau' \in A'}  I_{\tau'}
\eeq
We will need an extension of the contraction operator $\caT$. Given $\tau'\in\mathbb{N}$, let 
$$
 \caT_{\tau'}:\scrR_{n,I_{\tau'}}
\to\scrR_{n,{\tau'}} $$
 be the contraction as defined in Section \ref{sec: correlation functions} with the interval $I=I_{\tau'}$, but followed by the imbedding of $\scrR_{n} $ into $\scrR_{n,{\tau'}}$. Given a finite set $A'\subset
\mathbb{N}$, we set 
$$
\caT_{A'}:=\mathop{\otimes}\limits_{\tau'\in A'}\caT_{\tau'}: \scrR_{n,I_{A'}}\to\scrR_{n,A'}
$$
In words, we contract within
the time intervals $I_{\tau'}$. 
Below in Section \ref{sec: kernel representation} we illustrate the action of $\caT_{A'}$ by defining it explicitly with kernels.
Again, let $\poly(I_{A'})$ be the set of non-empty collections $\caA$ of disjoint subsets  $A$ of $I_{A'}$. 
Let us first write $G_{n+1,A'}$ in terms of objects at scale $n$:
\baq \label{eq: unconnected corr induction}
G_{n+1,A'}   =   \sum_{ \caA \in \poly(I_{A'}): \forall \tau' \in A': \supp \caA \cap I_{\tau'} \neq \emptyset }  \, \, \,     \bsS_{\ell}\caT_{A'} \left[  \mathop{\otimes}\limits_{A \in \caA}  G^{c}_{n,A}    \mathop{\otimes}\limits_{\tau \notin \supp \caA}  T_{n}(\tau)  \right]   \label{eq: from n to npluseen nonconnected}
\eaq
Here, the rescaling $\bsS_{\ell}$ acts  on all copies of $\scrR_{n}$, that is, we write $\bsS_{\ell}$ instead of $\bsS_{\ell} \otimes \ldots \otimes \bsS_{\ell}$. 

Note that any collection $\caA\in\poly(I_{A'})$ that satisfies $\forall \tau' \in A': \supp \caA   \cap I_{\tau'} \neq \emptyset$   induces a graph $\caG_{A'}(\caA)$ on the vertex set $A'$ by the prescription that  we connect $\tau_1', \tau_2' \in A'$  by an edge whenever there is an $A \in \caA$ such that 
\beq
A \cap I_{\tau'_1}  \neq \emptyset, \quad \textrm{and} \qquad   A \cap I_{\tau'_2}  \neq \emptyset,
\eeq
This leads to the basic relation between scale $n$ and $n+1$
\baq
G^{c}_{n+1,A'}   =   \sum_{ \caA \in \poly(I_{A'}): \caG_{A'}(\caA)\,  \text{connected} }  \, \, \,     \bsS_{\ell}\caT_{A'} \left[  \mathop{\otimes}\limits_{A \in \caA}  G^{c}_{n,A}    \mathop{\otimes}\limits_{\tau \notin \supp \caA}  T_{n}(\tau)  \right]   \label{eq: from n to npluseen}
\eaq
To prove this relation, we merely need to check that the recursion relation \eqref{def: cumulants}  holds.   This is done by writing each term on the RHS of  \eqref{eq: unconnected corr induction} as a tensor product over connected components $A_1',\ldots, A_m'$ of the graph $\caG_{A'}(\caA)$ and realizing that the factors of the tensor product are precisely given by  \eqref{eq: from n to npluseen} with $A'$ replaced  by $A_1',\ldots, A_m'$. 
For a detailed description of this proof in the probabilistic (i.e.\ commutative) case, we refer to \cite{bricmontkupiainenexponentialdecay}. 
 In Figure \ref{fig: cumulants}, we pick a specific $A'$ and we draw three collections that $\caA$ that contribute to the sum in \eqref{eq: from n to npluseen}. 

\begin{figure}[h!] 
\vspace{0.5cm}
\begin{center}
\def\svgwidth{\columnwidth}
\import{picturesrwdiff/}{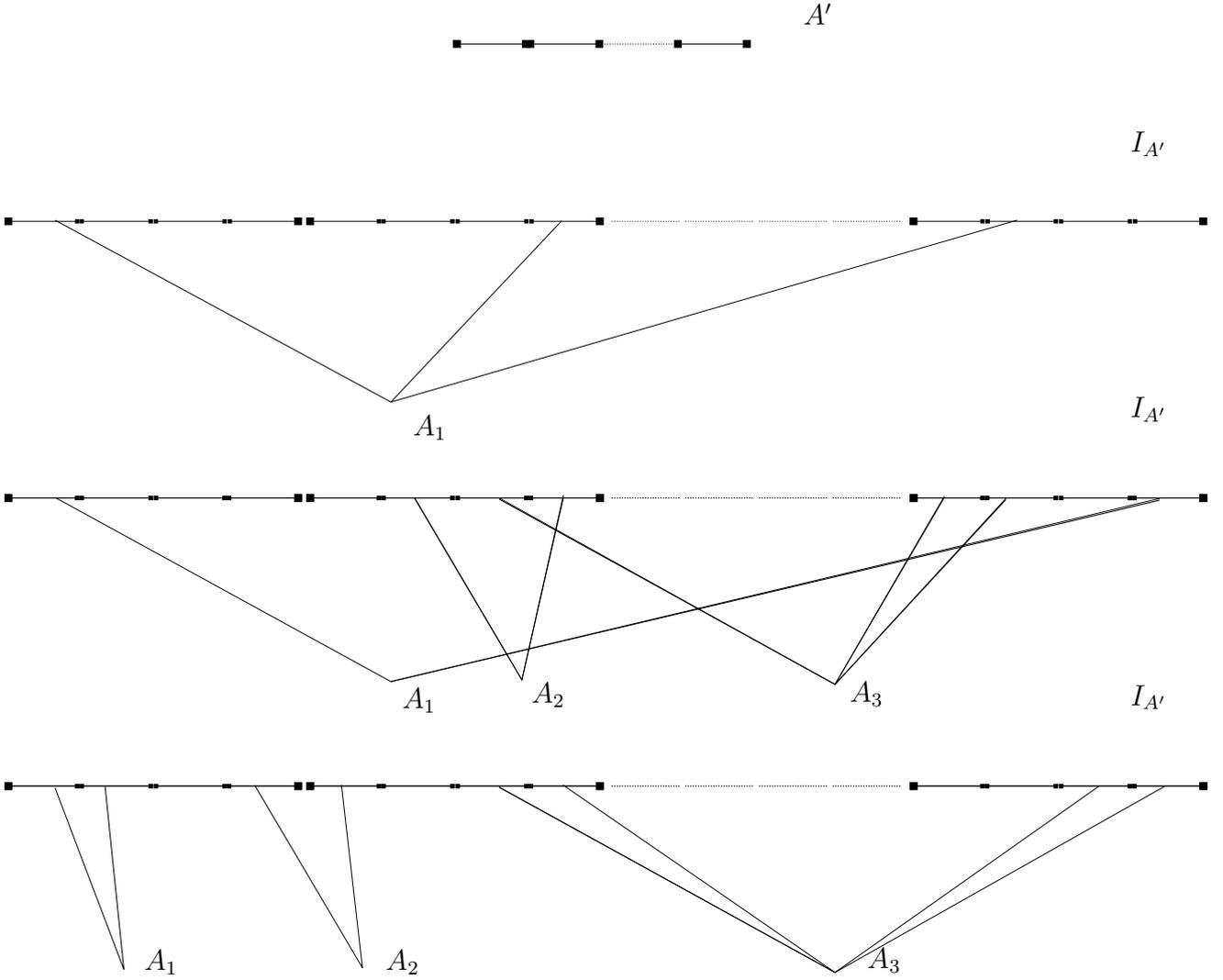}
\caption{ \footnotesize{We consider the set $A'=\{1,2,4\}$ (upper part of the figure) at the scale $n$.  At scale $n+1$ this set is represented by the set $I_{A'}$. In the figure, we  took $\ell^2=4$, so that $I_{A'}=[1,12] \cup [13,16]$.  We show three different collections $\caA$ that contribute to $G_{n,A}^c$. From top to bottom, we first have $\caA= \{A_1\}$ with $A_1=\{1,8,14\}$, then  $\caA=\{A_1,A_2,A_3\}$ with  $A_1=\{1,16\}, A_2=\{6,8\}, A_3=\{7,13,14\}$    and    $\caA=\{A_1,A_2,A_3\}$  with    $A_1=\{1,2\}, A_2=\{4,5\}, A_3=\{7,8,15,16\}$.  
}   \label{fig: cumulants} }
\end{center}
\end{figure}

\subsubsection{Kernel representation and the map $\caT_{A'}$} \label{sec: kernel representation}

In Section \ref{sec: kernels and rescaling of space}, we introduced coordinates $x \in \bbX_n ,s \in \caS, z \in \bbA_n$, such that operators in $\scrR_n$ can be identified with kernels on $\bbA_n \times \bbA_n$. In a similar way, we can identify operators in $\scrR_{n,A}$ with kernels on  $\bbA_{n,A} \times \bbA_{n,A}$, where $\bbA_{n,A}$ is the cartesian product $\times_{\tau \in A}\bbA_{n,\tau}$ and  $\bbA_{n,\tau}$ are copies of $\bbA_n$.  Throughout the paper, we denote by  $z_A,z'_A$ elements of $\bbA_{n,A}$, and, hence, for any $K_A \in \scrR_{n,A} $, we have a kernel $K_A (z'_A, z_A)$.   Let us now illustrate the action of the contraction operator $K_{A'}$ on kernels. 

Consider  a partition $\caA $ of $I_{A'}$ (hence in particular $\caA \in \poly(I_{A'})$) and consider a family of kernels $K_A(z'_A, z_A)$  with $A \in \caA$. 
Let us call a set of two consecutive times $\{\tau, \tau+1 \} \in I_{A'} $ a \emph{bulk bond} whenever  $\{\tau, \tau+1 \} \subset I_{\tau'}$ for some $\tau'$. Then we have
\baq 
\left(\caT_{A'} \left[ \mathop{\otimes}\limits_{A \in \caA}  K_A       \right] \right) (\tilde z'_{A'}, \tilde z_{A'})& =& \int \d z'_{ I_{A'}  }\d z^{}_{ I_{A'}  }
 \prod_{\{\tau, \tau+1\}  \, \textrm{bulk bonds}  } \delta (z_{\tau+1},z'_\tau)  \prod_{A \in \caA}  K_{A} (z'_{A}, z_{A}) \nonumber  \\[2mm] 
 &&   \qquad \qquad \prod_{\tau' \in A'}  \delta( \tilde z'_{\tau'} ,z'_{\max I_{\tau'}}  )
 \delta(\tilde z_{\tau'} ,   z_{\min I_{\tau'}}   )   
  \label{eq: from n to npluseen coordinates}
\eaq
where the discrete $\delta$-function was defined in \eqref{def: normalised deltafunction}.  
Note that this formula is slightly different in spirit from the formulas in Section \ref{sec: recursion relations for correlation functions}, because there $\caA$ was a collection of disjoint sets, not necessarily a partition.  In each term of \eqref{eq: from n to npluseen}, we can define a partition $\caA'$  by adding to $\caA$ the sets $\{\tau\}, \tau \in I_{A'} \setminus \supp \caA$ and setting $K_{\{\tau\}}=T_n(\tau)$. Then, each term of \eqref{eq: from n to npluseen} is recast in the form \eqref{eq: from n to npluseen coordinates}.

 To avoid any possible misunderstanding, let us give a concrete example. We take $A'=\{1,2\}$ and $\ell^2=4$, such that $I_{A'}=\{1,2,\ldots,8\}$. For the partition $\caA$, we take $\caA=\{A_1, A_2\}$  with $A_1=\{1,2,5,6\}, A_2=\{3,4,7,8\}$. Then 
\baq 
\left(\caT_{A'} \left[  K_{A_2}\mathop{\otimes}  K_{A_1}       \right] \right) ( \tilde z'_1,\tilde z'_2 ; \tilde z_1,\tilde z_2  )& =& \int \d z_{2,3,4,6,7,8} \,    K_{A_1} ( z_2, z_3, z_6,z_7 ;   \tilde z_1, z_2, \tilde z_2, z_6 )  \nonumber  \\[2mm] 
 &&   \qquad \qquad   K_{A_2}( z_4, \tilde z'_1,  z_8, \tilde z'_2 ; z_3, z_4, z_7, z_8 )      \label{eq: from n to npluseen coordinates example}
\eaq
Note that we write $K_A( z'_1, z'_2 \ldots  ; z_1,z_2, \ldots )$ where $z_1$ are the coordinates of the smallest time, then $z_2$, etc.., and similarly for $z_1', z_2'$ and whenever possible, we have used $z_\tau$ to denote the unprimed coodinate corresponding to time $\tau$ (we could not do it for the primed coordinates, because they are sometimes identified with an unprimed coordinate of the previous time.) 

\section{Unitarity and Reversibility}\label{Unitarity and Reversibility}

The correlation functions $G^c_{n,A}$ satisfy identities that are crucial for the proof
of our results. The first of these  ``Ward identities" (called the 'unitarity' identity below) is  due to invariance of the trace $\Tr_{}(\cdot)$ on $\scrB_1(\scrH)$ under unitary maps
and time-independence on the reference state of the environment w.r.t.\ the uncoupled dynamics. The second one (the 'Gibbsian' identity) is due to the
invariance of the equilibrium Gibbs state under the dynamics. To discuss the latter
we need to elaborate on the relationship between free and interacting Gibbs states, see Section \ref{sec: equilibrium states}.
The Ward identities will produce some additional smallness in our expansion.  The 'Unitarity' identity eliminates certain terms exactly, the 'Gibbsian' identity does this only approximatively, as long as the temperature is not infinite.

\subsection{Ward identity from unitarity} \label{sec: ward identity from unitarity}

We start by  examining the consequences of the conservation of probability, i.e.\ unitarity of  $\e^{-\i t H}$. 
Let  $\rho_{\sys\res}\in\scrB_1(\scrH)$  and abbreviate $\rho=S_{\ell^{n-1}} \rho_{\sys\res}$. Then
\beq
   \Tr B_n (\tau)    \rho =0.
\eeq
where the total trace $\Tr$ could also be written as $\Tr = \Tr \Tr_\res$ where the $\Tr$ on the RHS acts on functions on $\bbA_n$ as defined in \ref{sec: kernels and rescaling of space}.
Indeed, from (\ref{Undeco}), (\ref{RGmap}) and (\ref {BNtau}) we infer 
$$ B_n (\tau)\rho =S_{\ell^{n}}(U_n\rho U_n^{-1})-(T_n\otimes \lone_{\res})\rho$$
where $U_n\in\scrB(\scrH)$ is unitary.
Thus, by using \eqref{eq: trace under scaling},
\baq
   \Tr B_n (\tau) \rho &= &      \Tr  ( U_n  \rho U_n^{-1})  -  \Tr ((T_n\otimes \lone_{\res})   \rho)   \nonumber \\
   &  = &   \Tr \rho -   \Tr  \rho    =0
\eaq
By picking appropriate $\rho_{\sys\res}$ this identity leads to the following
one for  kernels:
\beq  \label{eq: unitarity ward in terms of kernels}
\int_{ \bbA_n} \d z'_\tau      \indicator_{\caS_{0}} (s'_\tau)  G^{c}_{n,A}(z'_A, z_A)       =0, \qquad    \tau=\max A, z'_{\tau}=(x'_{\tau},s'_{\tau}).
\eeq
We rewrite this relation in the form in which it will be used.  Let $P \in \scrB(l^{\infty}(\caS))$ be a one-dimensional projector with kernel 
\beq 
P(s',s)  =        \mu(s')  \indicator_{\caS_{0}} (s),
\eeq
where  $ \mu(\cdot)$ is some function that satisfies (to ensure that $P$ is indeed a projector) the normalization condition: 
\beq
 \sum_{s' \in \caS_{0}}  \mu(s') =1.
\eeq
In the rest of the paper, rank-one operators like $P$ will be written as $P= \str \mu \rangle \langle 1_{\caS_0} \str $.
Then the Ward identity can also be expressed simply as 
\beq
 \mathop{\int}\limits \d x'_\tau   ({P} \otimes 1  \ldots  \otimes 1)   G^{c}_{n,A}(x'_A, x_A )=0, \qquad      \tau = \max A
  \eeq
  where  $P$ acts on (the $s$-coordinates of) $\scrR_{\tau = \max A}$.

\subsection{Equilibrium setup: the case $\be_1=\be_2$}  \label{sec: construction of zero unitary}
In this section, we assume that the reference state of the two environments is in equilibrium at temperature $\be = \be_1= \be_2$. 

\subsubsection{Equilibrium states}  \label{sec: equilibrium states}
We already introduced the state $\initialresfinite$ on $\scrB(\scrH_\res)$. We now introduce a corresponding finite-volume reference  state on $\scrB(\scrH_\sys)$. It is given by 
\beq
\initialsysfinite = \left(\Tr[  \e^{-\be  H^{}_\sys }  ] \right)^{-1}    \e^{-\be  H^{}_\sys }  
\eeq
and we define 
\beq
\initialfinite = \initialsysfinite \otimes \initialresfinite
\eeq
We will also need the Gibbs state of the coupled system. We introduce 
\beq
\initialgibbsfinite = \left( \Tr [\e^{-\be  H } ] \right)^{-1}     \e^{-\be  H  }, \qquad     \rho^{\be}_{\sys} :=   \Tr_\res  \initialgibbsfinite
\eeq
Hence we adopt the convention of using the superscript ``$\mathrm{ref}$'' for states that are Gibbs with respect to the uncoupled Hamiltonian, and the superscript $\be$ for interacting Gibbs states and restrictions of those. In particular, note that $\rho^{\be}_{\sys}$ is not a Gibbs state but  merely a restriction of a Gibbs state. 

 To ease the discussion of the thermodynamic limit, we also introduce 
 \beq
 \initialnu :=     \str \La \str \initialsysfinite, \qquad 
 \initialnugibbs:=    \str \La \str    \rho^{\be}_{\sys}\label{newstates}
 \eeq
As a consequence of translation invariance, the kernels of  $ \initialnugibbs,  \initialnu $are constant in the variable $x \in \bbX_0$, hence they reduce to functions of $s \in  \caS$; 
\beq\label{newstates1}
 \initialnugibbs(x,s)= \mu^\be(s), \qquad    \initialnu (x,s) = \mu^{\reff}(s)
\eeq
Moreover, the chosen normalization ensures that 
\beq
\sum_{s \in \caS_{0}} \mu^\be(s) =  \sum_{s \in \caS_{0}}  \mu^{\reff}(s)=1
\eeq
Next, we introduce the Radon-Nikodym derivative as an unbounded operator on $\scrH$ 
\beq
D_{\mathrm{rd}} :=  \e^{ \frac{1}{2} \Delta F(\beta)}   \e^{-\frac{\be}{2} H}   \e^{ \frac{\be}{2} ( H_\sys + H_\res)} ,    
\eeq
where $\Delta F(\beta) $ is a number defined by
\beq
\e^{\Delta F(\beta)}   =    \frac{ \Tr \left[  \e^{ -\be( H_\sys + H_\res)}  \right]  }{\Tr \left[   \e^{ -\be H}  \right] } 
\eeq
One can check that the 'free energy difference'  $\Delta F(\beta)$ has a limit as $\La \nearrow \bbZ^d$ (it is not proportional to $\str \La\str$, because the number of  particles does not grow with the volume).
 $D_{\mathrm{rd}}$ is an unbounded operator, even in finite volume, but it is obviously well-defined by the functional calculus.
The reason why we call  $D_{\mathrm{rd}}$ a `Radon-Nikodym derivative' is that, formally, 
\beq
 \initialgibbsfinite =    D_{\mathrm{rd}}  \initialfinite D^*_{\mathrm{rd}}
\eeq

\subsubsection{Correlation functions involving  $\tau=0$} \label{sec: correlations involving zero}
Let us define  the operator $\caU_0(0)$  by
\beq
\caU_0(0) \psi  :=   D_{\mathrm{rd}} \psi  D_{\mathrm{rd}}^* 
\eeq
for $\psi$ in an appropriate subset of $\scrB_2(\scrH)$, 
such that formally
\beq
 \initialgibbsfinite =    \caU_0(0) \initialfinite.
\eeq
It is not hard to see that  $\caU_0(0)$ is densely defined on $\scrB_1(\scrH)$. This is however not
our point. Rather we want to treat the operator $\caU_0(0)$ en par with the operator $\caU_0$ introduced previously, hence we write
\beq
\caU_0(0)= T_0(0)\otimes \lone_\res + B_0(0)
\eeq
where
\beq
T_0(0)   :=    \bbE \  \caU_0(0)
\eeq
We can now extend formally the definition \eqref{eq: first definition of correlation function} such that the set $A$ can include $\tau=0$. We define an additional copy, $\scrR_{n,0}$, of $\scrR_n$ such that again $G_{n,A}$ is embedded into $\scrR_{n,A}$.
When defining the renormalization transformation of $\caU_0(0)$, we omit the iteration and keep the rescaling only, such that we have
\beq\label{0operators}
\caU_n(0) :=  \bsS_{\ell^{n}} \caU_0(0), \qquad   B_n(0) :=  \bsS_{\ell^{n}} B_0(0), \qquad
T_{n}(0)  :=  \bsS_{\ell^n} T_0({0}) \\
\eeq
With these definitions, we can also formally define the correlation functions $G_{n,A} \in \scrR_{n,A}$ at higher scales $n\geq1$. 

\begin{lemma}
The operators  $T_{n}(0)$ and $G_{n,A},G^c_{n,A}$ with $A \ni 0$, are bounded.   The relation  \eqref{eq: from n to npluseen}   remains valid for $A \ni 0$ if one defines $I_0 :=\{0\}$.

\end{lemma}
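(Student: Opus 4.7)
\medskip
\noindent\emph{Proof plan.} The source of unboundedness in $\caU_0(0)\psi = D_{\mathrm{rd}}\psi D_{\mathrm{rd}}^\ast$ is the single factor $e^{\beta H_\res/2}$ hidden in each $D_{\mathrm{rd}}$. My plan is to show that whenever an $\bbE$ is taken, this unbounded factor always sits next to $\initialresfinite = Z_\res^{-1} e^{-\beta H_\res}$, so the algebraic identity
\begin{equation*}
e^{\beta H_\res/2}\,\initialresfinite\, e^{\beta H_\res/2} \;=\; Z_\res^{-1}\lone_\res
\end{equation*}
trades the unbounded piece for $Z_\res^{-1}\lone_\res$. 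All constructions in Section~\ref{sec: correlations involving zero} take place (finite-volume, as declared) in spaces where $e^{-\beta H}$ is trace class and $H_\sys$ is bounded, so after the cancellation everything will be bounded.

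First I would prove boundedness of $T_0(0)$. Substituting the definition of $D_{\mathrm{rd}}$ and applying the cancellation above,
\begin{equation*}
T_0(0)\rho_\sys \;=\; Z_\res^{-1}e^{\Delta F(\beta)}\,\Tr_\res\!\left[e^{-\beta H/2}\!\left(e^{\beta H_\sys/2}\rho_\sys\, e^{\beta H_\sys/2}\otimes \lone_\res\right)e^{-\beta H/2}\right].
\end{equation*}
Since $H_\sys$ is bounded, $e^{\pm\beta H_\sys/2}\in \scrB(\scrH_\sys)$; since $e^{-\beta H}$ is trace class, $e^{-\beta H/2}$ is Hilbert--Schmidt on $\scrH$, so the partial trace defines a bounded map on $\scrB_1(\scrH_\sys)$. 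Next, for $G_{0,A}$ with $A=\{0,\tau_2,\dots,\tau_m\}$, I would expand each factor $B_0(\tau_j)=\caU_0(\tau_j)-T_0(\tau_j)\otimes\lone_\res$ and inspect the $2^m$ terms. For $\tau\geq 1$, $\caU_0(\tau)$ is conjugation by a unitary on $\scrH$, hence bounded on $\scrB_1(\scrH)$, and $T_0(\tau)\in\scrR$ by the analysis of earlier sections. For the $\tau=0$ slot (which is the rightmost in the $\odot$-product by our ordering convention), the $T_0(0)\otimes\lone_\res$ branch is bounded by the previous step; the $\caU_0(0)$ branch first encounters $\rho_\sys\otimes\initialresfinite$ and by the same cancellation becomes $Z_\res^{-1}e^{\Delta F}e^{-\beta H/2}(e^{\beta H_\sys/2}\rho_\sys e^{\beta H_\sys/2}\otimes\lone_\res)e^{-\beta H/2}$, a trace-class operator on $\scrH$. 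Subsequent left-multiplications by the bounded $B_0(\tau_j)$'s and the final $\Tr_\res$ preserve trace-class boundedness, giving $G_{0,A}\in\scrR_A$; boundedness of $G^c_{0,A}$ then follows from the finite linear relation~\eqref{def: cumulants}.

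To pass to $n\geq 1$ I would invoke \eqref{0operators}: $T_n(0),B_n(0)$ are obtained from $T_0(0),B_0(0)$ by the bounded isomorphism $\bsS_{\ell^n}$ between the relevant kernel spaces, so they inherit boundedness. For the recursion \eqref{eq: from n to npluseen} with $I_0:=\{0\}$, the key observation is that the derivation in Section~\ref{sec: recursion relations for correlation functions} is purely algebraic: it uses only the definition of $\odot$, the contraction operators $\caT_{A'}$, the combinatorics of partitions and the definition of cumulants. None of these steps distinguishes whether $\caU_n(\tau)$ arises from unitary conjugation or from the Radon--Nikodym map. Because $I_0$ is a singleton, $\caT_0$ is the identity embedding $\scrR_{n,0}\hookrightarrow\scrR_{n,0}$, so there is no bulk bond inside $I_0$; the sum over $\caA\in\poly(I_{A'})$ with connected graph $\caG_{A'}(\caA)$ forces every $A\in\caA$ touching $I_0$ to contain $0$, exactly as needed. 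Each summand on the right-hand side is bounded by the previous steps, which yields the identity and closes the lemma. I expect the main obstacle to be the bookkeeping of the second step: verifying that in every term of the expansion at most one unbounded $D_{\mathrm{rd}}^{(\ast)}$ pair appears and that it is always the one adjacent to $\initialresfinite$ -- this is guaranteed by the right-to-left ordering of $\odot$ and by the fact that only the $\tau=0$ slot contains an unbounded factor, but it has to be checked carefully for all $2^m$ expansion terms simultaneously.
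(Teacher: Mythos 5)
Your proposal is correct in substance, but it takes a genuinely different route from the paper. The paper does not argue boundedness abstractly: it defers it to the expansions of Section \ref{sec: bounds on boundary correlation functions}, where $D_{\mathrm{rd}}$ is expanded in an imaginary-time Duhamel series, merged with the real-time Dyson series, and controlled via Wick's theorem and the decay of $\zeta(x,t)$ for $t$ in the strip $\bbH_{\be}$ (the $\breve h$-bounds); the validity of the recursion for $A\ni 0$ is simply declared formally obvious, much as you argue. You instead use the algebraic cancellation $\e^{\be H_\res/2}\,\initialresfinite\,\e^{\be H_\res/2}=\caN^{-1}\lone_\res$, so that the single unbounded slot ($\tau=0$, rightmost, hence adjacent to $\initialresfinite$ on both sides of the two-sided multiplication) is tamed, and the remaining factors are handled by $\e^{-\be H/2}$ being Hilbert--Schmidt in finite volume. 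Since the lemma lives in the finite-volume part of the paper, this soft argument proves exactly what is asserted, and your observation that $I_0=\{0\}$ makes $\caT_0$ trivial so the combinatorial derivation of \eqref{eq: from n to npluseen} goes through unchanged is the same point the paper makes. What your route does not buy is quantitative, volume-uniform control: your constants involve $\Tr\,\e^{-\be H}$ and $\caN$, which diverge as $\La\nearrow\bbZ^d$, so they cannot replace the weighted-norm bounds of Proposition \ref{prop: integrability with boundary} that the paper needs for the thermodynamic limit (Section \ref{sec: infinite volume setup}) and for the scale-$0$ induction hypothesis — the paper's expansion proof delivers the lemma and those bounds in one stroke. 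Two details worth tightening in your write-up: boundedness of $B_0(\tau)$, $\tau\geq 1$, on $\scrB_1(\scrH)$ uses that $T_0\otimes\lone_\res$ is bounded there, i.e.\ complete boundedness of $T_0$ (automatic since $T_0$ is completely positive and trace-preserving, but it should be said); and for $G_{n,A}$ with $n\geq 1$, $A\ni 0$, either note that the factors $B_n(\tau)$, $\tau\geq 1$, are again bounded superoperators so the one-unbounded-slot argument applies at every scale, or, as you do, obtain the higher scales inductively from the recursion once its validity is established.
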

The boundedness will be easily proven with help of expansions in Section \ref{sec: bounds on boundary correlation functions}. The validity of the recursion relation is formally obvious.
For later use, we also define the bounded map $\breve Z_t :  \scrB_1(\scrH_\sys) \to \scrB_1(\scrH_\sys) $
\beq
\breve Z_t \rho_\sys : = \Tr_\res [  \e^{-\i t L}    (D_{\mathrm{rd}} \rho_{\sys} \otimes \initialresfinite D_{\mathrm{rd}}^*)  ] =   \Tr_\res [  \e^{-\i t L}  \caU_0(0) (\rho_{\sys} \otimes \initialresfinite)  ] 
\eeq

\subsection{Ward identity from equilibrium}\label{sec: ward identity from equilibrium}

To state this Ward identity, 
it is convenient to introduce an additional piece of notation. Given $K \in {\scrR}\otimes{\scrR}$
and $\psi\in {\scrB_1(\scrH_\sys)}$, let $K\psi \in \scrR \otimes\scrB_1(\scrH_\sys)$ be given by
\beq   \label{eq: convention acting on only one function}
( V_2 \otimes V_1)\psi:=V_2\otimes( V_1 \psi)\eeq
on elementary tensors  $K = V_2 \otimes V_1$ and then extending by linearity.  This notation is only used in the lemma below and its proof. 

Let us also define the scaled versions of the states (\ref{newstates})
\beq
\initialnun :=  S_{\ell^{n}}  \initialnu, \qquad    \initialnugibbsn :=   S_{\ell^{n}}   \initialnugibbs 
\eeq
Then we have
the Ward identity:
 \begin{lemma} \label{lem: correlation equilibrium} 
\baq
\bbE(  B_n(  \tau)    \odot   B_n (1))  \initialnugibbsn & =& 
-\bbE(    B_n( \tau )  \odot   B_n(1)B_n(0)   )  \initialnun -  \,
\bbE(     B_n( \tau )  \odot  (T_n \otimes \lone_{\res}) B_n(0)   ) \initialnun
  \\[2mm]
  &+ & \bbE( B_n( \tau-1 )  \odot  B_n( 0)  )   \initialnun                   
\eaq

\end{lemma}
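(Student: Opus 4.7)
The plan is to reduce the identity to two ingredients: the invariance of the interacting Gibbs state $\initialgibbsfinite$ under the full dynamics, and the invariance of $\initialresfinite$ under the free reservoir evolution $e^{-itL_\res}$. The rest is algebraic manipulation based on the decompositions $\caU_n(\tau) = T_n \otimes \lone_\res + B_n(\tau)$ for $\tau \geq 1$ and $\caU_n(0) = T_n(0) \otimes \lone_\res + B_n(0)$, together with the one-point Ward identity $\bbE(B_n(\tau)) = 0$, which holds for all $\tau \geq 0$ (at $\tau = 0$ by the very definition $T_n(0) = \bsS_{\ell^n}\bbE\,\caU_0(0)$).

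First I would use $\initialnugibbsn = T_n(0)\,\initialnun$, which follows by applying $\bbE$ to $\caU_0(0)\initialfinite = \initialgibbsfinite$, rescaling, and invoking $\bbE(B_n(0))=0$. By the convention \eqref{eq: convention acting on only one function}, the LHS then equals $\bbE\bigl(B_n(\tau) \odot B_n(1)\,(T_n(0) \otimes \lone_\res)\bigr)\initialnun$. Substituting $T_n(0) \otimes \lone_\res = \caU_n(0) - B_n(0)$ immediately produces the RHS term $-\bbE(B_n(\tau) \odot B_n(1)B_n(0))\initialnun$. Further substituting $B_n(1) = \caU_n(1) - T_n \otimes \lone_\res$ and expanding $\caU_n(0) = T_n(0) \otimes \lone_\res + B_n(0)$, every piece containing a bare factor $T_n \otimes \lone_\res$ with no further reservoir content vanishes by $\bbE(B_n(\tau)) = 0$, and one surviving $B_n(0)$-piece yields the second RHS term $-\bbE(B_n(\tau) \odot (T_n \otimes \lone_\res)B_n(0))\initialnun$.

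The substantive remaining step is to show that the only leftover contribution, $\bbE(B_n(\tau) \odot \caU_n(1)\caU_n(0))\initialnun$, equals the third RHS term $\bbE(B_n(\tau-1) \odot B_n(0))\initialnun$. For this I would use that $\caU_n(0)(\initialnun \otimes \initialresfinite)$ is (a volume factor times) the rescaled interacting Gibbs state, which is preserved by $\caU_n$. Combined with $\caU_n(1) = e^{i\ell^{2n}t_0 L_\res}\caU_n$ this gives
\begin{equation*}
\caU_n(1)\caU_n(0)(\initialnun \otimes \initialresfinite) \;=\; e^{i\ell^{2n}t_0 L_\res}\,\caU_n(0)(\initialnun \otimes \initialresfinite).
\end{equation*}
I would then commute the phase through the reservoir part of $B_n(\tau)$ using the direct consequence of the interaction-picture formula \eqref{BNtau},
\begin{equation*}
(B_n(\tau))_\res \cdot e^{i\ell^{2n}t_0 L_\res} \;=\; e^{i\ell^{2n}t_0 L_\res}\cdot (B_n(\tau-1))_\res,
\end{equation*}
together with the observation that $B_n(\tau)$ and $B_n(\tau-1)$ share the same system part, and finally absorb the surviving phase via $\Tr_\res[e^{itL_\res} X] = \Tr_\res X$. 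A last expansion $\caU_n(0) = T_n(0) \otimes \lone_\res + B_n(0)$ kills the $T_n(0)$-piece by $\bbE(B_n(\tau-1))=0$ and delivers exactly $\bbE(B_n(\tau-1) \odot B_n(0))\initialnun$.

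The only delicate point will be this last commutation step, because $B_n(\tau)$ does not split as a single system--reservoir tensor product. The cleanest bookkeeping is to write $\bbE(X \odot Y)\psi = \sum_\alpha X^\alpha_\sys \otimes \Tr_\res[X^\alpha_\res\,Y(\psi \otimes \initialresfinite)]$, valid for any decomposition $X = \sum_\alpha X^\alpha_\sys \otimes X^\alpha_\res$; the two identities above then act term by term on the summands, reducing the entire proof to Gibbs invariance, cyclicity of $\Tr_\res$, and the vanishing $\bbE(B_n(\cdot))=0$.
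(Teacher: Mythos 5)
Your proof is correct and rests on exactly the same ingredients as the paper's own argument: invariance of the interacting Gibbs state under $\caU_n$, invariance of $\initialresfinite$ under the free reservoir evolution (which produces the shift from $B_n(\tau)$ to $B_n(\tau-1)$), the decompositions $\caU_n=T_n\otimes\lone_\res+B_n(\cdot)$ and $\caU_n(0)=T_n(0)\otimes\lone_\res+B_n(0)$, and the vanishing $\bbE(B_n(\cdot))=0$. The only difference is organizational: the paper first derives the vector identity \eqref{eq: terms to vanish} with the help of the projector $\caP$ and then applies $\bbE(B_n(\tau)\odot\,\cdot\,)$, whereas you substitute $\initialnugibbsn=T_n(0)\initialnun$ and expand directly inside the expectation, an inessential reorganization of the same computation.
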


\begin{proof}
Up to trivial rescaling, the proof is the same for all $n$, we will therefore for simplicity set $n=0$ and omit it everywhere.
We also abuse notation by writing $T_n, T_n(0)$ for $T_n \otimes \lone_\res, T_n(0) \otimes \lone_\res$.
We introduce the projector $\caP$ that acts on $\scrB_1(\scrH)$ by 
\beq
\caP \psi =     ( \Tr_\res \psi)  \otimes \initialresfinite,  \qquad  \psi  \in  \scrB_1(\scrH)
\eeq
 We start from
 \baq
 B(1 )\caP \,  \initialgibbsfinite  
&=&      -     B(1 ) (1 -\caP )  \initialgibbsfinite -    T  \initialgibbsfinite   
 + \e^{\i t_0 L_\res}   \initialgibbsfinite \eaq
where we used $B(1)=\e^{\i t_0 L_\res} \caU-T$ and $
\initialgibbsfinite = \caU\initialgibbsfinite $.   Next, we substitute
$$
\initialgibbsfinite =   \caU(0)\initialfinite=\caU(0)(\rho^{\reff}_\sys  \otimes \initialresfinite)=(T(0) + B(0))
(\rho^{\reff}_\sys  \otimes \initialresfinite)
$$ 
and use $\caP \initialgibbsfinite =  \rho^{\be}_\sys \otimes \initialresfinite$   to obtain
 \baq
 B( 1  )    (\rho^{\be}_\sys  \otimes  \initialresfinite) 
&=& (  -   B(1)(1 -\caP)B(0)    -        TB(0)+\e^{\i t_0 L_\res}B(0))(
\rho^{\reff}_\sys  \otimes \initialresfinite)\nonumber\\[2mm]
&+&(\e^{\i t_0 L_\res}T(0)
  -    TT(0)  )(
\rho^{\reff}_\sys  \otimes \initialresfinite)
             \label{eq: terms to vanish}  
\eaq
This relation has the form $\sum_j D_j (\psi_{\sys,j} \otimes \initialresfinite)=0$, for some $\psi_{\sys,j} \in \scrB_1(\scrH_\sys)$ and $D_j $  operators on $\scrB_1(\scrH)$, hence it implies that
\beq
\sum_j \bbE(D' \odot D_j)   \psi_{\sys,j}=0, \qquad \textrm{for any} \,\,   D'     \label{eq: abstract ward}
\eeq
where we used the convention \eqref{eq: convention acting on only one function}.  We choose $D' = B(\tau)$ and we spell out \eqref{eq: abstract ward}, obtaining 
\baq
\bbE ( B( \tau) \odot B( 1)   ) \,   \rho^{\be}_\sys 
&=&  -  \bbE (  B( \tau) \odot B(1) B(0) )  \rho^{\reff}_\sys   \nonumber  \\[3mm]
&\,\,  &  - \bbE (  B( \tau) \odot T B(0)  )   \rho^{\reff}_\sys +  \bbE ( B( \tau-1) \odot B(0) )  \rho^{\reff}_\sys   \label{eq: abstract ward concrete}                                
\eaq
where the two first terms in \eqref{eq: terms to vanish} containing no $B(\cdot)$-operator have disappeared because $\bbE(B(\{ \tau\}))=0$, and  we used $\bbE ( B(\tau) \e^{\i t_0 L_\res}\odot  \ldots )=  \bbE (B(\tau-1)\odot  \ldots)$.  
The lemma follows upon multiplying with $\str \La \str$ and reinstating the subscript $n$.

\end{proof}

\subsubsection{Ward identity in terms of correlation functions $G^{c}_{A,n}$}

As it stands, Lemma \ref{lem: correlation equilibrium}  is not written in terms of the correlation functions $G^c_{A}$. We can however easily rewrite it in that way. Since the Ward identity compares operators in different copies of $\scrR_n$, it is unnatural to index the $z,z'$-coordinates by $\tau \in A$ and hence we use arbitrary indices, with the convention that the $z,z'$-coordinates corresponding to earlier times stand to the left of the later ones. We again drop the index $n$.

The Ward identity reads
\beq  \label{eq: ward identity with L}
\int \d z_a   \, G^{c}_{\{1, \tau\}}  ( z_a',{ z_b}' ;     z_a,  {z}_b) \nu^\be(z_a)  =      \int \d z_a L_{ \tau}  (  z_a', { z_b}';     z_a, {z_b}) \nu^{\reff}(z_a)   
\eeq
with
\baq
L_{ \tau}  (z_a', z_b' ; z_a,  z_b)   & := &       - \int \d z \,   G^{c}_{\{0,1, \tau\}}( z, z_a', z_b';  z_a, z, z_b  ) \nonumber    \\[2mm]
& -&      \int \d z  \,  T(z_a' ,z  ) G^{c}_{\{0, \tau\}}  ( z, z_b' ;  z_a,  z_b)  +          G^{c}_{\{0, \tau-1 \}}  ( z_a',  z_b';  z_a,  z_b) \label{Ldefi}
\eaq
The verification of this relation from Lemma \ref{lem: correlation equilibrium} is by inspection, using the fact that $G_{A}=G^{c}_{A}$ whenever $\str A \str \leq 3$, which in turn follows from the vanishing of  $G_{A}$ whenever $\str A\str=1 $.

\vskip 2mm

\noindent {\bf Remark.} The reader should think of eq.\ (\ref{eq: ward identity with L}) as analogous
to the unitarity Ward identity, eq.\ (\ref{eq: unitarity ward in terms of kernels}). Namely, the
RHS turns out to run down to zero fast under the RG (i.e.\ as $n \to \infty$) and this lets us get extra contraction
for the RG flow of the two-point function.

 \section{The infinite volume setup}\label{sec: infinite volume setup}
 Up to this point, our whole treatment was restricted to finite volume $\bbZ^d/L\bbZ^d$. This allowed us to neglect all sorts of analytical questions since all the operators pertaining to the $\sys$-part  were actually finite matrices (this is not true for operators on $\scrH_\res$, though).  In this section, we indicate which quantities remain meaningful in the thermodynamic limit and we give their precise definition. 
 Note that the lattices $\bbX_n, \bbA_n$ remain meaningful with $L = \infty$, as was already indicated in Section  \ref{sec: kernels and rescaling of space}.
 However, the spaces $\scrR_{n,A}$, defined as tensor products in finite volume, are no longer uniquely defined and we will choose a norm to specify them in  Section \ref{sec: norms}.   
  \subsection{Thermodynamic limits}

 Let us now indicate explicitly the volume dependence in the evolution operators $ T^{\La}_{n},  T^{\La}_{n}  (0)$, the correlation functions $  G^{\La}_{n,A},  G^{c,\La}_{n,A}$ and the 'unnormalized states' $\nu^{\be, \La}_n, \nu^{\mathrm{ref}, \La}_n $. Those objects were previously simply denoted by the same symbols without $\La$ but now we reserve this notation for infinite-volume quantities. 
 The thermodynamic limit is then defined by point-wise convergence of kernels (well-defined since  $\bbA^{\La}_n$ is naturally a subset of $\bbA_n$) as $\La \nearrow \bbZ^d$ (recall that this is shorthand for $L \to \infty$ since $\La=\La_L$).
  Recall the assumption \eqref{eq: assumption free correlation functions thermo}, it leads to
  \begin{lemma} 
The kernels of the operators
 $ T^{\La}_{n},  T^{\La}_{n}  (0)$, the correlation functions $  G^{\La}_{n,A},  G^{c,\La}_{n,A}$ and the 'unnormalized states' $\nu^{\be, \La}_n, \nu^{\mathrm{ref}, \La}_n $ converge pointwise as $\La \nearrow \bbZ^d$. 
 The limiting kernels satisfy the recursion relations   \eqref{eq: T from connected correlation functions},    \eqref{eq: from n to npluseen}  and the Ward identities \eqref{eq: unitarity ward in terms of kernels} and  \eqref{eq: ward identity with L}.  In particular, the sums on $\bbA_n$ implicit in the RHS of these recursion relations are absolutely convergent. 
 \end{lemma}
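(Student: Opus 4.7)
The plan is to proceed in two stages: first establish pointwise convergence of all kernels at scale $n=0$, then bootstrap to $n\geq 1$ via the recursion relations, checking at each stage that the sums defining the infinite-volume kernels are absolutely convergent so that the finite-volume identities survive the limit.

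At scale $n=0$, I would write $T_0^\La = \bbE(\caU^\La)$ by expanding $\caU^\La = e^{-\i t_0 L^\La}$ in a Dyson (Duhamel) series in the coupling $\la H_I$ around the free evolution $e^{-\i t_0(L_\sys+L_\res)}$. Each term in the expansion is a nested time integral over $[0,t_0]$ of a polynomial in creation/annihilation operators sandwiched against $\initialresfinite$. Because $\initialresfinite$ is quasifree and invariant under the free boson dynamics, Wick's theorem reduces every such expectation to a finite sum of products of the two-point function $\zeta^\La(x,t)$. The pointwise kernel of $T_0^\La$, and analogously of $B_0^\La(\tau)$ and of $G_{0,A}^\La$ (which equals $\bbE$ applied to an ordered $\odot$-product of the $B_0(\tau)$'s, hence again a Wick-polynomial in $\zeta^\La$), is therefore a finite polynomial combination of values and time integrals of $\zeta^\La$. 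By hypothesis \eqref{eq: assumption free correlation functions thermo} the $\zeta^\La_j(x,t)$ converge uniformly on compacts to bounded continuous $\zeta_j(x,t)$, and Assumption \ref{ass: decay micro alpha} provides a time-integrable dominant uniform in $\La$. Dominated convergence then yields pointwise convergence of all $n=0$ kernels and absolute convergence of the defining integrals. The connected kernels $G_{0,A}^{c,\La}$ are finite algebraic combinations of the $G_{0,A'}^\La$ via inversion of \eqref{def: cumulants}, so they converge as well.

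For the ``boundary'' objects $T_0(0), B_0(0)$ and the corresponding $G^{c,\La}_{0,A}$ with $0\in A$, the same idea applies after expanding $\caU_0(0)$ by Duhamel along the imaginary strip: write $D_{\mathrm{rd}}=e^{\frac12 \Delta F(\be)} e^{-\be H/2}e^{\be(H_\sys+H_\res)/2}$ and expand $e^{-\be H/2}$ in $\la H_I$, and similarly on the right. Wick contractions now produce $\zeta^\La(x,t+\i u)$ with $u\in[0,\be_j]$, and Assumption \ref{ass: decay micro alpha} provides the required $u$-uniform integrable bound. Convergence of the free-energy shift $\Delta F^\La(\be)$ follows from the standard linked-cluster expansion (this is the content used in Section \ref{sec: bounds on boundary correlation functions}). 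The states $\nu^{\reff,\La}_n$ are in fact $\La$-independent in their kernel representation since translation invariance reduces them to the function $\mu^{\reff}(s)$ on $\caS$, while $\nu^{\be,\La}_n$ converges by the same linked-cluster expansion that gives $\Delta F$.

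For $n\geq 1$, I would proceed by induction using the recursions \eqref{eq: T from connected correlation functions} and \eqref{eq: from n to npluseen}. Each of these is a finite sum over collections $\caA\in\poly(I_{A'})$ (since $|I_{A'}|=\ell^2|A'|$), of contractions $\bsS_\ell\caT_{A'}$ that, written in the kernel form \eqref{eq: from n to npluseen coordinates}, involve integrations of scale-$n$ kernels against $\delta$-functions on the internal bulk bonds and over internal spatial variables $x\in\bbX_n$. In finite volume these sums are obviously finite. To pass to the limit one needs a uniform (in $\La$) $\ell^1$-type bound on the spatial decay of $T_n^\La(z',z)$ and $G_{n,A}^{c,\La}(z'_A,z_A)$; such a bound is furnished by the norms introduced in Section \ref{sec: norms} and controlled by the RG analysis of Sections \ref{sec: linear rg flow}--\ref{sec: nonlinear rg flow}. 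Granted these bounds, dominated convergence applied term by term in the finite partition sum gives both pointwise convergence of the scale-$(n+1)$ kernels and absolute convergence of the infinite-volume contractions. The recursion relations then persist because both sides converge pointwise to the corresponding infinite-volume expressions.

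Finally, the Ward identities \eqref{eq: unitarity ward in terms of kernels} and \eqref{eq: ward identity with L} hold in finite volume as algebraic consequences of unitarity of $e^{-\i t H^\La}$ and of the KMS structure of $\initialgibbsfinite$, established in Sections \ref{sec: ward identity from unitarity}--\ref{sec: ward identity from equilibrium}. Each such identity involves a single integration over an $x'_\tau$ or $z_a$ coordinate; using the same $\ell^1$-type bounds on the spatial arguments of $G_{n,A}^{c,\La}$ (and of $L_\tau^\La$ in \eqref{Ldefi}, which is built from scale-$n$ objects), dominated convergence lets us interchange the sum with the limit $\La\nearrow\bbZ^d$, so the identities survive. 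The main obstacle in this program is not the algebra but precisely the a priori spatial decay needed to justify absolute convergence at $n\geq 1$; for $n=0$ it reduces to integrability of $\zeta$ via Assumption \ref{ass: decay micro alpha}, but for $n\geq 1$ it is genuinely coupled to the RG estimates that follow later in the paper.
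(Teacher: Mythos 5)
Your scale-$0$ analysis (Dyson expansion, Wick reduction to $\zeta^\La$, dominated convergence via Assumption \ref{ass: decay micro alpha} and \eqref{eq: assumption free correlation functions thermo}, imaginary-time Duhamel expansion of $D_{\mathrm{rd}}$ for the $\tau=0$ objects) is exactly the mechanism the paper relies on: its proof of the lemma is a one-line appeal to the bounds of Lemma \ref{lem: bound on h} and Section \ref{sec: bounds on boundary correlation functions}.

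The gap is in your treatment of $n\geq 1$. You propose an induction through the recursions \eqref{eq: T from connected correlation functions}, \eqref{eq: from n to npluseen} and invoke the norms ``controlled by the RG analysis of Sections \ref{sec: linear rg flow}--\ref{sec: nonlinear rg flow}'' to supply the uniform-in-$\La$ $\ell^1$ bounds. This is both circular and unavailable: the induction hypotheses (Propositions \ref{ass:  properties of T} and \ref{prop: overview b behavior}) are formulated and proved for the \emph{infinite-volume} kernels whose existence is precisely what this lemma must establish, and their finite-volume analogues do not hold uniformly in $\La$ anyway, since $\zeta^{\La}$ has no integrable time decay (the finite-volume Dyson bound is only $\e^{C\la^2t^2}$, cf.\ Lemma \ref{lem: bound on h}), so $\ep_n$-type smallness is simply not at your disposal before the limit. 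The step is also unnecessary: by \eqref{RGmap}, $T_n^\La=\bsS_{\ell^n}Z^\La_{\ell^{2n}t_0}$ and the $G^{\La}_{n,A}$, $G^{c,\La}_{n,A}$ are built from $\caU^{\ell^{2n}}$, so every scale-$n$ object is itself a (trivially rescaled) Dyson series at the finite time $\ell^{2n}t_0$, expressed through \eqref{eq: correlation function as support of graphs}--\eqref{eq: correlation function as connected graphs} with domains of length $\ell^{2n}t_0$. Lemma \ref{lem: bound on h} then gives, for each fixed $n$, absolute convergence of that series in the exponentially weighted norm $\norm\cdot\norm_{20\ga_0}$, uniformly in $\La$; this yields pointwise convergence term by term (Vitali) and at the same time the exponential spatial decay that justifies the absolute convergence of the contraction sums in \eqref{eq: from n to npluseen coordinates} and of the single integrations in the Ward identities \eqref{eq: unitarity ward in terms of kernels}, \eqref{eq: ward identity with L}. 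No induction on $n$ and no RG input is needed; replacing your inductive step by this direct observation repairs the argument.
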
  
This lemma follows directly from the bounds of Lemma \ref{lem: bound on h} and (for $\tau=0$)  those of Section \ref{sec: bounds on boundary correlation functions}.

\subsection{Norm on kernels} \label{sec: norms}
 \subsubsection{Tensor product completions} \label{sec: tensor completion}
We have introduced the space $\scrR_{n}$ (on scale $n$) for operators acting on (rescaled) particle density matrices. As long as the volume $\La$ is finite, this space is finite-dimensional and hence isomorphic to $ \scrB(\scrB_1(\scrH_\sys))$.  Similarly, the tensor product spaces 
$\scrR_{A,n}$ are finite dimensional and thus all norms are equivalent. 
However, as $\La \nearrow \bbZ^d$, care is needed. Even if we choose a norm for $\scrR_{n}$ 
there are several ways to complete the algebraic tensor products. Our choice is dictated by
the recursion relation (\ref{eq: from n to npluseen}) which we want to be continuous in the norm.
The norms are of course meaningful for $\La$ finite as well, and in the definition below we will not distinguish different values of $\La$.  

Recall that an element of $\scrR_{n,A}$ is represented by the kernel $K_A= K_A (z'_A, z_A)$  with $A=\{ \tau_1, \ldots, \tau_m \},\tau_i < \tau_{i+1}$ and $z_A=(z_{\tau_1},\dots , z_{\tau_m}) \in \bbA_{n,A}$. 
Recall also that $z_\tau=(x_\tau,s_\tau)$. Since the scale subscript $n$ does not play any role as far as the definition of the norm is concerned, we suppress it in what follows and we will simply write $\scrR, \scrR_A, etc..$. 

 We need to distinguish in the norm the $x$ and
the $s$ variables and so will view $\scrR$  as the tensor product
$\scrR^x\otimes \scrR^s$ where $K\in\scrR^x$ is represented by the
kernel $K(x',x)$ and $K\in\scrR^s$  by $K(s',s)$. 
 Introduce the
following $L^1-L^\infty$ norm in the spaces $\scrR^x$ and $\scrR^{s}$
\beq
\|K\|:= \max \left(\sup_{y'}\int \d y(|K(y',y)|,  \,  \sup_{y}\int \d y' |K(y',y)|)  \right)
\label{eq: llnorm}
\eeq
where $y$ stands either for $x$  if $K \in \scrR^x$ and  for $s$ if $K \in \scrR^s$. We use and recall the conventions introduced in Section \ref{sec: kernels and rescaling of space}: $\int \d x$ denotes the sum over $\bbX_n$ together with a scaling factor and $\int \d s$ is the sum over $\caS$ (without any scaling factor), hence $\int \d y$ can be either of these two 'integrals'. This distinguished use of the symbol $y$ is only in the present section, i.e.\  Section \ref{sec: tensor completion}.
The space $\scrR_{A}$ can be viewed as the tensor product
$   \scrR^x_{ \tau_m} \otimes  \scrR^s_{ \tau_m} \otimes    \ldots \otimes  \scrR^x_{ \tau_1} \otimes  \scrR^s_{ \tau_1}$. It is however simpler to consider more generally kernels  $K\in\otimes_{i=1}^m\scrR^{\sigma_i}$ where ${\sigma_i}\in\{x,s\}$, hence not necessarily with the same number of $x$- and $s$-coordinates (Note that we drop the $\tau$-labels since what follows is  independent of the labelling of tensors).    We now define the norm  $\|K\|$ on such kernels,  inductively in the degree $m$. 

Write $K=K(\underline{y}', \underline{ y})$ with $\underline{ y}= (y_1,\dots,y_m)$ and $y_i$ equals $x_i$ or $s_i$ depending on $\sigma_i$, and analogously for $\underline{y}'$.
For $1\leq i\leq m$, fix ${y_<}=(y_1,\dots,y_{i-1})$ and ${y_>}=(y_{i+1},\dots,y_{m})$
and 
let $K^{(i)}  \in\scrR^{\sigma_i}$ be the restriction of $K$ to $\scrR^{\sigma_i}$ with ${y_<}$
and ${y_>}$ kept fixed. Then $K_{(i)}:=\|K^{(i)} \|$ is a kernel of degree $m-1$.  We define the norm inductively in the degree of kernels by setting
$$
\|K\|:=\max_i\|K_{(i)} \|.
$$
The virtue of this norm is that it behaves well under tensor products
and contractions (the operator $\caT$ introduced in Section \ref{sec: correlation functions}), the basic building blocks of the the recursion relation (\ref{eq: from n to npluseen}).
We define the contraction $\iota:\scrR^{\sigma}\otimes \scrR^{\sigma}\to \scrR^{\sigma}$ by 
\beq
(\iota K)(y',y)=\int \d\tilde y \, K(  \tilde y, y';  y, \tilde y)
\label{eq: iotadef}
\eeq
We extend this map to $\otimes_{i=1}^m\scrR^{\sigma_i}$  in the natural way:
for $i < j$ s.t. $\sigma_i=\sigma_j$ let $\iota_{ij}$ act in the $i$:th and the $j$:th
factors  $\scrR^{\sigma_j}\otimes \scrR^{\sigma_i}$.    Finally, we write $\str K \str$ for the operator with kernel $\str K \str (y',y) :=\str K (y',y) \str$.  

Points 1) and 2) of the following lemma  are used throughout, point 3) is necessary only  in the proof of Lemma \ref{lem: bound on e}.

\begin{lemma} \label{lem: property of norm}
Let $K$ and $L$ be kernels as above. Then
\ben
\item 
\beq
\|K\otimes L\|\leq \|K\|\|L\|
\label{eq: norm of product}
\eeq
\item Let  $\si_i=\si_j$ such that $\iota_{ij}$ is defined, then 
\beq
\|\iota_{ij}K\|\leq \|K\|
\label{eq: norm of contraction}
\eeq
\item Let $K$ be of degree $m$ with $\si_1=\si_2=\ldots=\si_m$, then, for any  $i \in \{1, \ldots,m \}$,
\beq
\sup_{y,y'} \str \iota_{12}\iota_{23} \ldots \iota_{m-1, m} K \str   \leq  \norm  \sup_{y_i, y'_i} \str K \str  \norm 
\eeq
where  on the RHS the norm is applied to a kernel of degree $m-1$. 
\een
\end{lemma}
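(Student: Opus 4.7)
The three statements are norm estimates that follow by induction on the degree of the kernel together with Fubini / Schur-test type manipulations. Throughout the sketch I use the inductive definition $\|K\| = \max_i \|K_{(i)}\|$, where $K_{(i)}$ is the degree-$(m-1)$ kernel obtained by taking the degree-1 norm in the $i$-th coordinate of $K$, and the fact that $\||K|\| = \|K\|$.

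For (1) I proceed by induction on the total degree $m+n$ of $K\otimes L$, the base case being one of the two factors scalar, which is immediate. For the inductive step, any coordinate of $K\otimes L$ belongs either to $K$ or to $L$. Peeling a $K$-coordinate $i$: in the degree-1 $L^1$--$L^\infty$ norm the factor $L$ enters as a constant and hence factors out, giving $(K\otimes L)_{(i)} = K_{(i)}\otimes |L|$. By the induction hypothesis this is bounded by $\|K_{(i)}\|\cdot\|L\| \le \|K\|\cdot\|L\|$; the case of an $L$-coordinate is symmetric. Taking the maximum over peeling coordinates in the definition of $\|K\otimes L\|$ yields (1).

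For (2) the key observation is that peeling any coordinate $k\notin\{i,j\}$ commutes with $\iota_{ij}$, which is Fubini combined with the definition of the degree-1 norm (a $\sup$ in $k$ or an $\int dy_k$ may be exchanged with the contraction integral $\int d\tilde y$). Iterating this reduces (2) to the degree-2 case: for $K(a',b';a,b)$ with $(\iota K)(c',c)=\int dy\, K(y,c';c,y)$, show $\|\iota K\| \le \|K\|$. For the direction $\sup_{c'}\int dc\,|\iota K(c',c)|$, pull absolute values inside and use Fubini to rewrite this as $\sup_{b'}\int dy\int da\,|K(y,b';a,y)|$; the inner $a$-integral, for each fixed $b',y$, is bounded by $K_{(a)}(b',y)$ (the kernel obtained by peeling the $a$-coordinate, since $K_{(a)}(b',y) \ge \sup_{a'}\int da\,|K(a',b';a,y)|$ evaluated at $a'=y$ and $b=y$). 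Then $\sup_{b'}\int dy\, K_{(a)}(b',y) \le \|K_{(a)}\| \le \|K\|$. For the symmetric direction $\sup_c\int dc'$, peel $b$ instead. The maximum of the two bounds is $\|\iota K\|\le \|K\|$.

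For (3) I begin with the explicit formula
\begin{equation*}
(\iota_{12}\cdots\iota_{m-1,m} K)(y',y) = \int dy_2\cdots dy_m\, K(y',y_m,\ldots,y_2;\,y_m,\ldots,y_2,y),
\end{equation*}
obtained by iterating the definition of $\iota$. Take absolute values inside and the supremum over $(y,y')$. The strategy is to break the chain at position $i$: upper bound $|K|$ by $\tilde K_i := \sup_{y'_i,y_i}|K|$, a degree-$(m-1)$ kernel. After this replacement, the constraints on the surviving integration variables split into two independent chains joined to the external $(y',y)$, and each integration $\int dy_j$ can be bounded by either a row $L^1$ or column $L^1$ of $\tilde K_i$ in position $j$. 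Matching this sequence of choices to an admissible peeling ordering in the inductive definition of $\|\tilde K_i\|$ gives $\sup_{y,y'}|\iota_{12}\cdots\iota_{m-1,m} K| \le \|\tilde K_i\|$, which is the claim.

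The main obstacle is (3): while the individual Schur estimates per integration variable are elementary, one has to verify that the specific sequence of $L^1$/$L^\infty$ choices forced by the chain contraction is among those permitted by the definition of $\|\tilde K_i\|$; this is a combinatorial bookkeeping rather than a conceptual difficulty. Points (1) and (2) are routine.
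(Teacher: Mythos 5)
Your proposal is correct and follows essentially the same route as the paper: (1) by induction using $(K\otimes L)_{(i)}=K_{(i)}\otimes|L|$ or $|K|\otimes L_{(i)}$, (2) by reducing to degree two and inserting a supremum over the primed contracted variable before invoking the peeled norm, and (3) by bounding $|K|$ pointwise by $\sup_{y_i,y_i'}|K|$ so that the contraction chain breaks at slot $i$ and the two resulting chains can be peeled from their free ends toward the external variables. The only cosmetic difference is that the paper reduces (3) to the cases $m=2$ and $m=3$, $i=2$ and computes those explicitly, while you run the chain-breaking and peeling argument for general $m$ and $i$ directly; the underlying mechanism (and the level of detail left to "bookkeeping") is the same.
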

\begin{proof} For the first claim, we proceed by induction in the degree of $K\otimes L$.
Obviously $(K\otimes L)_{(i)}=K_{(j)} \otimes \str L\str$ if $i=\textrm{degree}(L)+j$, and $(K\otimes L)_{(i)} = \str K\str \otimes  L_{(i)}$ if $i \leq \textrm{degree}(L) $.  Thus
$$
\|K\otimes L\|\leq \max_{ij} (\|K_{(i)} \otimes  L  \|, \|  K  \otimes  L_{(j) } \|)\leq 
 \max_{ij} (\|K_{(i)} \|\| L\|, \|K\|\|  L_{(j)} \|)=\|K\|\|L\|
$$
where we used induction in the second step. 

For the second claim, the inductive definition of our norm means it suffices to
check the claim for $K$ of degree two. Thus consider the first term in
(\ref{eq: llnorm}) for $\iota K$:
$$
\sup_{y'}\int \d y \left\str\int  \d\tilde y \, K(\tilde y, y'; y, \tilde y )) \right\str\leq
\sup_{y'}\int \d\tilde y\int \d y \, |  K(\tilde y, y' ;y, \tilde y) |\leq 
\sup_{y'}\int \d\tilde y\sup_{y''}\int \d y \,  |K(y'', y' ;y, \tilde y)|
$$
and the expression on the right is bounded by  $\norm K_{(1)} \norm $.  By analogous reasoning, the second term in (\ref{eq: llnorm}) is bounded by $\norm K_{(2)} \norm $ and hence we have indeed $
\|\iota K\|\leq \norm K \norm $.

To get the third claim, it is sufficient to check it for $m=2$ and for $m=3$ with $i=2$, since all other cases can be reduced to one of these.  
We check explicitly the case $m=3$ with $i=2$.  The LHS is estimated as
\baq
\sup_{y,y'} \int \d \tilde y \int  \d \breve y  \str K( \tilde y,\breve y,y'; y,\tilde y,\breve y )) \str & \leq &   \sup_{y,y'}   \int \d y''' \sup_{\tilde y} \int  \d \breve y \sup_{y''}  \str K(y''', y'',y'; y,\tilde y,\breve y )) \str  \nonumber \\[1mm]
& \leq &  \sup_{y}   \int \d y'''  \sup_{y'}\int  \d \breve y \sup_{\tilde y, y''}  \str K(y''', y'',y'; y,\tilde y,\breve y )) \str    \nonumber 
\eaq
and one now verifies easily that the result is indeed smaller than the RHS.   The $m=2$ case follows by similar, though simpler reasoning. 
\end{proof}

  \subsubsection{Definition of the norm  $\norm \cdot \norm_\ga$} \label{sec: definition of norm}
  
We will modify the norms introduced above to encode information about
the decay properties of the kernels. Thus we will define a norm $
 \norm K_A \norm_{ \ga} $ that depends on two parameters: $\ga_0$ (giving the decay in the $\offdx-\offdx'$ coordinate, and $\ga$ given the decay in the $x -x'$  coordinate.  
Note that the parameter $\ga_0$ is not included in the notation for the norm, since we will never need to consider a different one.  The parameter $\ga$ will equal a multiple (of order $1$) of $\ga_0$. 
Given a kernel $K_A \in \scrR_A$, let
    \beq \label{def: Kgamma}
 K_A^\gamma (z_A', z_A) :=        K_A(z_A', z_A)   \e^{\ga \sum_{\tau \in A }\str x'_\tau- x_\tau \str} \e^{\ga_0 \sum_{\tau \in A }\str \offdx'_\tau- \offdx_\tau \str}
\eeq
where we  wrote $z_{\tau}=(x_\tau,s_\tau)$ and $s_\tau=(v_\tau,\eta_\tau, (e_\links)_\tau, (e_\rechts)_\tau )$. Then we set
   \beq \label{def: gammanorm}
 \norm K_A \norm_{ \ga} :=\| K_A^\gamma\| .
\eeq
Note that
  \beq \label{gammatensor}
( K_A\otimes K_B)^\gamma= K_A^\gamma\otimes K_B^\gamma.
\eeq
Recall the contraction $\iota_{i,j}$ introduced above, with each of the indices $i,j$ corresponding to a pair $(\tau,\si)$ with $\tau \in A$ and $\si=x,s$. We define now (contractions on the RHS are as in the previous section)
\beq
\iota_{\tau,\tau'} :=  \iota_{(\tau,x),(\tau',x)} \iota_{(\tau,s),(\tau',s)} = \iota_{(\tau,s),(\tau',s)}  \iota_{(\tau,x),(\tau',x)}
\eeq

Starting from (\ref{eq: iotadef}) and using the triangle inequality $|x'-x|\leq |x'-\tilde x|+|\tilde x-x|$ and similarly for $v,v',\tilde v$, 
we get for point-wise {\it nonnegative} kernels $K_A$
  \beq \label{gammaiota}
(\iota_{\tau \tau'} K_A)^\gamma\leq \iota_{\tau \tau'} K_A^\gamma.
\eeq
These observations lead to
\begin{lemma} \label{lem: main property of norms}
\ben
\item Let $\caA$ be a collection of disjoint sets $A \subset \bbN_0$ such that $I_{A'} = \supp \caA$ (see Section \ref{sec: recursion relations for correlation functions}) and let for all $A \in \caA$ an operator $K_A \in \scrR_A$ be given, then
\beq
\left\norm \caT_{A'} \left[ \mathop{\otimes}\limits_{A \in \caA}  K_A       \right]     \right\norm_\ga   \leq   \prod_{A \in \caA}  \norm K_A \norm_\ga
 \label{Tbound}
\eeq
\item Let $K_A \in \scrR_A  $. Then
 \beq
\norm \bsS_\ell [K_A]  \norm_\ga  =     \norm K_A  \norm_{\ga/\ell},    \label{eq: scaling property of norms}
\eeq
and 
$$\norm K_A \norm_\ga \leq \norm K_A \norm_{\ga'}$$
 for $\ga < \ga'$.
 \een
\end{lemma}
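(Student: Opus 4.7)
The strategy is to reduce both claims to the basic properties of the $L^1$--$L^\infty$ norm $\|\cdot\|$ established in Lemma \ref{lem: property of norm}, using the weighted kernel $K^\gamma$ as the bridge. The only new analytic ingredient is the triangle inequality for the exponential weight factors, which is exactly what ensures that contractions remain bounded after insertion of the weights.

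For part (1), I would first rewrite $\caT_{A'}[\mathop{\otimes}_{A\in\caA} K_A]$ purely in terms of tensor products and the elementary contractions $\iota_{\tau,\tau+1}=\iota_{(\tau,x),(\tau+1,x)}\iota_{(\tau,s),(\tau+1,s)}$ along the bulk bonds of $I_{A'}$, following the kernel representation \eqref{eq: from n to npluseen coordinates} (the boundary delta functions just relabel the surviving variables into $\tilde z_{A'},\tilde z'_{A'}$ and do not enter the norm estimate). By \eqref{gammatensor} and the submultiplicativity of the base norm \eqref{eq: norm of product},
\[
\Bigl\|\Bigl(\mathop{\otimes}\limits_{A\in\caA}K_A\Bigr)^{\!\gamma}\Bigr\|=\Bigl\|\mathop{\otimes}\limits_{A\in\caA}K_A^\gamma\Bigr\|\leq \prod_{A\in\caA}\|K_A^\gamma\|=\prod_{A\in\caA}\norm K_A\norm_\gamma.
\]
Next I would show that each elementary contraction is norm-non-increasing on weighted kernels. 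For any kernel $H$, the bound $(\iota_{\tau,\tau+1} H)^\gamma\leq \iota_{\tau,\tau+1}|H|^\gamma$ (interpreted pointwise) follows from the triangle inequality $|x'_\tau-x_{\tau+1}|\le|x'_\tau-\tilde x|+|\tilde x-x_{\tau+1}|$ applied to the contracted variable (and analogously for the $v$-coordinate inside $s$); this is exactly the reasoning of \eqref{gammaiota} extended to signed kernels via the absolute value. Combining with \eqref{eq: norm of contraction}, $\|(\iota_{\tau,\tau+1}H)^\gamma\|\le\|\iota_{\tau,\tau+1}|H|^\gamma\|\le\||H|^\gamma\|=\|H^\gamma\|$, so iterating through all bulk bonds produces no further factors and \eqref{Tbound} follows.

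For part (2), I would compute directly from \eqref{sell} and the definition \eqref{def: Kgamma}: since the rescaling acts only on $x$ and leaves $s$ (hence the $v$-coordinate) untouched,
\[
(\bsS_\ell K)^\gamma(x',s';x,s)=\ell^d K(\ell x',s';\ell x,s)\,\e^{\gamma|x'-x|}\e^{\gamma_0|v'-v|}.
\]
Substituting $\tilde x=\ell x,\tilde x'=\ell x'$ in the integrals defining $\|\cdot\|$ (recalling the convention \eqref{def: sum over lattice} that $\int_{\bbX_{n+1}}\d x= \ell^{-(n+1)d}\sum$ while $\int_{\bbX_n}\d\tilde x=\ell^{-nd}\sum$, so the scaling factor $\ell^d$ is absorbed exactly) converts this into the integral defining $\norm K\norm_{\gamma/\ell}$, yielding \eqref{eq: scaling property of norms} with equality. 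The inductive definition of $\|\cdot\|$ in the degree propagates the identity to arbitrary $A$. Finally, the monotonicity $\norm K_A\norm_\gamma\le\norm K_A\norm_{\gamma'}$ for $\gamma<\gamma'$ is immediate from the pointwise inequality $|K_A^\gamma|\le|K_A^{\gamma'}|$ and the monotonicity of $\|\cdot\|$ under pointwise domination of $|\cdot|$.

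The only step that requires genuine care, rather than bookkeeping, is the passage from $(\iota H)^\gamma\leq \iota|H|^\gamma$ for nonnegative kernels to the signed case; but this is handled uniformly by replacing $H$ with $|H|$ before weighting, which is permitted since the base norm depends only on $|\cdot|$. All other steps are routine manipulations with tensor products, change of variables, and the triangle inequality.
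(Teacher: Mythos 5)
Your proposal is correct and follows essentially the same route as the paper: replace the kernels by their absolute values, use the factorization $(K\otimes L)^\gamma=K^\gamma\otimes L^\gamma$ together with the submultiplicativity and contraction bounds of Lemma \ref{lem: property of norm} plus the triangle-inequality estimate \eqref{gammaiota}, and for part (2) the change of variables absorbing $\ell^d$ via the Riemann-sum convention so that only $\gamma\to\gamma/\ell$ changes. The extra detail you supply (bond-by-bond iteration of the contraction and the explicit signed-to-nonnegative reduction) is just a spelled-out version of the paper's argument, not a different method.
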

\begin{proof}  For 1) we bound  (\ref{Tbound}) from above by
replacing $K_A$ by $|K_A|$ and then use  (\ref{gammatensor}) and  (\ref {gammaiota}). Then, since  $\mathcal T$ is a product of contractions $\iota_{i,j}$, Lemma \ref{lem: property of norm} 2) applies and we get the claim.
For the first claim in 2) observe that the norm (\ref{eq: llnorm}) is invariant under
$ \bsS_\ell $ and the effect in $\gamma$ is the stated one. The second claim is obvious.
\end{proof}
 \vskip 2mm
\noindent{\bf Remark}  From now on, the space $\scrR$ is meant to be equipped with the norm $\|\cdot\|_\gamma$.
The exponential factors in our norm (provided $\ga >0$) imply in
particular that kernels $K\in \scrR$ map the space
 $$ \scrB_p(\scrH_\sys) = \left\{ V \in \scrB(\scrH_\sys) \big\str \, \Tr (V V^*)^{p/2}  < \infty  \right\} $$
  into itself, for any $1 \leq p < \infty$.

\subsubsection{Definition of the norm  $\norm \cdot \norm_{\banone}$} \label{sec: reduced kernels}

When studying phenomena like diffusion that are intimately connected with the translation invariance of the system, we will mainly need to consider the coordinates $x, x'$ (see Section \ref{sec: translation invariance} on translation invariance). 
Therefore, we often want to consider $K \in \scrR= \scrR^x \otimes\scrR^s$ as a kernel in $x, x'$ but taking values in the 'internal space' $\scrR^s$ which for short
will be called  $\banone$.  $\banone$ is completed with the norm 
\beq
\norm F \norm_{\banone}  := \max{\left(
\sup_{s'}\sum_s |F(s',s)| \e^{\gamma_0|v-v'|},  \sup_{s}\sum_{s'}|F(s',s)|)\e^{\gamma_0|v-v'|} \right)}
\eeq
 Obviously, the norm $\norm \cdot \norm_{\banone}$ relates well to $\norm \cdot \norm_\ga$ on $\scrR$. Given  $K\in\scrR$,
\beq
 \max{\left( \sup_{x'}\int dx  \norm  K(x',x) \norm_\banone  \e^{\ga \str  x-  x' \str },    \sup_{x}\int dx' \norm  K(x',x) \norm_\banone
 \e^{\ga \str  x-  x' \str }  \right)}
  \leq   \norm K \norm_{\ga}  .
  \label{eq: banone bounded by weird}
\eeq
Indeed, looking back at Section \ref{sec: tensor completion} and taking $\sigma_1=x$ and $\sigma_2=s$, the LHS
equals $\|K^\gamma_{(2)} \|$ whereas the RHS equals $\max_{i=1,2}\|K^\gamma_{(i)} \|$.

\subsection{Symmetries} \label{sec: symmetries}  
\subsubsection{Translation invariance} \label{sec: translation invariance}

We implement spatial translations on $\scrH_\sys= \smallspace \otimes l^2(\bbZ^d)$ by the operators $V_u \in \scrB(\scrH_\sys), u \in \bbZ^d$;
\beq
(V_u \psi)(y) : =\psi(y+u) \in \smallspace, \qquad  \psi \in \scrH_\sys
\eeq
Write  in general 
$\Adjoint(S)O=  S O S^{-1}$ for $O,S \in \scrB(\scrH_\sys)$, $S$ invertible. 
Then $\Adjoint (V_u)$ implements translations on density matrices. In $\bbA_0$-coordinates, it acts as  
\beq \Adjoint (V_u)\rho(x,s)= \rho(x+u,s)   \label{eq: translations on densities}  \eeq
We will also denote by $\Adjoint (V_u)$ the operator on $\bbA_n$ defined by \eqref{eq: translations on densities}, now with $x,u \in \bbX_n$. 
It is then natural to call  $K \in \scrR= \scrR_{n}$ translation invariant if it commutes with translations, i.e. 
\beq
 V_{u} K  V_{-u} = K, \qquad   \textrm{for}\,  \, u \in \bbX_n
\eeq
 For such a translation invariant operator $K$, we will often abbreviate the reduced kernel 
 \beq
 K(x'-x) =   K(x',x) \qquad  (K(x',x) \in \scrG)
  \eeq
  and we have 
\begin{lemma} \label{lem: banone equal to weird}
 Let $K \in \scrR_n$ be translation-invariant, then
\beq    \label{eq: banone equal to weird} 
  \norm K \norm_{\ga}  =   \int_{  \bbX_n} \d x   \norm K (x)  \norm_{\banone}   \e^{\ga \str  x \str } 
\eeq
\end{lemma}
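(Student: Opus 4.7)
The plan is to unwrap the definition of $\norm\cdot\norm_\ga$ through the inductive $L^1$-$L^\infty$ norm and exploit translation invariance at each step. Write $K\in\scrR=\scrR^x\otimes\scrR^s$ as a degree-two kernel with $\sigma_1=x$ and $\sigma_2=s$, so that by the inductive definition of Section \ref{sec: tensor completion}
$$
\norm K\norm_\ga=\|K^\ga\|=\max\bigl(\|K^\ga_{(1)}\|,\|K^\ga_{(2)}\|\bigr),
$$
where $K^\ga(x',s';x,s)=K(x',s';x,s)e^{\ga|x'-x|}e^{\ga_0|v'-v|}$. I will compute the two candidates $\|K^\ga_{(i)}\|$ separately, show that $\|K^\ga_{(2)}\|$ already equals the claimed right-hand side, and that $\|K^\ga_{(1)}\|$ is dominated by $\|K^\ga_{(2)}\|$.

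First I would compute $K^\ga_{(2)}$, the inner norm taken over the $x$-variable with the $s$-coordinates fixed. Translation invariance makes the integrand $|K(x'-x,s',s)|\,e^{\ga|x'-x|}e^{\ga_0|v'-v|}$ depend on $x,x'$ only through $u=x'-x$, so the two suprema in the $L^1$-$L^\infty$ norm coincide and change of variables gives
$$
K^\ga_{(2)}(x',x)=e^{\ga|u|}\norm K(u)\norm_{\banone},\qquad u=x'-x,
$$
using the definition of the $\banone$-norm. This kernel is again translation invariant in $(x',x)$, so the outer $L^1$-$L^\infty$ norm collapses once more to $\int\d u\,e^{\ga|u|}\norm K(u)\norm_\banone$, which is precisely the right-hand side of \eqref{eq: banone equal to weird}. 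In particular the direction $\norm K\norm_\ga\ge\int\d x\,\norm K(x)\norm_\banone e^{\ga|x|}$ is immediate.

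For the opposite inequality, I would bound $\|K^\ga_{(1)}\|$ by $\|K^\ga_{(2)}\|$. After Fubini, each of the two terms inside $\max$ that define $\|K^\ga_{(1)}\|$ has the form $\sup_{s_0}\int\d u\,e^{\ga|u|}H(u,s_0)$ with $H\ge 0$; pulling the supremum under the integral and then taking the larger of the two terms under a single $\max$ yields $\int\d u\,e^{\ga|u|}\norm K(u)\norm_\banone=\|K^\ga_{(2)}\|$. Hence $\|K^\ga\|=\|K^\ga_{(2)}\|$, which is the desired identity.

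There is no real obstacle here; the only point to be careful about is keeping track of which supremum acts on which variable when interchanging $\sup$ with $\int$, and recognising that translation invariance is exactly what forces the two $L^1$-$L^\infty$ suprema to coincide so that no inequality is lost in the $x$-block. The whole argument is essentially an application of Fubini plus the trivial bound $\sup\int\le\int\sup$ in the pointwise-nonnegative setting.
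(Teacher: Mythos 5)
Your argument is correct and essentially the same as the paper's: the paper likewise identifies the right-hand side with $\|K^{\gamma}_{(2)}\|$ (via \eqref{eq: banone bounded by weird} and translation invariance) and gets the reverse inequality by bounding $\|K^{\gamma}_{(1)}\|$ by $\|K^{\gamma}_{(2)}\|$, dropping the $x$-supremum by translation invariance and moving the $s$-supremum inside the $x$-integral, exactly as you do. The only blemish is a labeling slip, not a gap: with your own convention $\sigma_1=x$, $\sigma_2=s$, the object you compute (and correctly write as a kernel in $(x',x)$ equal to $e^{\ga|x'-x|}\norm K(x'-x)\norm_{\banone}$) is the inner norm over the $s$-variables, not ``over the $x$-variable'' as your prose says, and the remark about the two suprema coinciding by translation invariance belongs to the outer $x$-norm (and to the inner $x$-norm in $K^{\gamma}_{(1)}$), which you in any case handle correctly afterwards.
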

\begin{proof} The inequality $ \geq $  follows from \eqref{eq: banone bounded by weird}. 
 Following the discussion after \eqref{eq: banone bounded by weird}, in order to get $\leq$, we need to
bound $\|K^\gamma_1\|$ by $\|K^\gamma_2\|$. This follows since the supremum over the $x$ or $x'$ coordinate can be dropped because of translation invariance and hence we get an upper bound by moving the supremum over the $s$ or $s'$ coordinate to the right.   \end{proof}

\subsubsection{Fourier Transform} \label{sec: fourier transform}

The translation invariance suggests to Fourier transform the kernel in the variable $x \in \bbX_n$.
The dual space to the lattice $\bbX_n$ is the torus 
\beq
\bbT_n :=    (\ell^{n} \bbT)^d
\eeq
We define for $p \in \bbT_n$, 
\beq
\hat\rho(p,s) :=  \int_{\bbX_n} \d x \,   \e^{\i p x  }  \rho(x,s), \qquad  \text{for}\, \,\rho \in \ell^1(\bbA_n),
\eeq
and, for a translation-invariant $K \in \scrR_n$, 
\beq
  \hat{  K} (p)  :=     \int_{\bbX_n} \d x \,   \e^{\i p x  }   K(x), \qquad  (  \hat{  K} (p)  \in\banone).
\eeq
where the sum on the RHS is absolutely convergent if $\norm K \norm_\ga $ is finite for some $\ga\geq 0$.
It follows that, if we have two translation invariant kernels $K_1, K_2$ taking values in $\banone$, then 
\beq  
 \widehat{  K_1  K_2}(p) =    \hat{   K} _1 (p)  \hat{  K} _2(p) 
\eeq
where the product on the LHS is in $\scrR_n$ and on the RHS in $\banone$. The following
standard consequence of exponential decay will be used throughout. If  $K$ is a translation-invariant kernel then
\beq
  \sup_{\Im p \leq \ga}  \norm \hat K (p) \norm_{\banone}      \leq  \norm K \norm_{\ga}  
\eeq
%

\subsubsection{Symmetries of the lattice}  \label{sec: symmetries of the lattice}

Let us investigate the action of a lattice symmetry $O$ on our operators.  We consider a density matrix $\rho$, and we abbreviate the transformed density matrix as   $\rho_O \equiv \Adjoint(V_O) \rho $,  such that  
\beq
\rho_O (x_\links, x_\rechts)  = \rho( Ox_\links,  Ox_\rechts)
\eeq
To write this transformation in our new coordinates $x, s$,
we let $x_{O}, \offdx_{O}, \offsetx_{ O}$ be the coordinates corresponding to $Ox_\links,  Ox_\rechts$ (we suppress the coordinates $e_\links, e_\rechts$ since they are untouched by $O$). 
Define the vector $\hat e$ by
\beq
\hat e_j = \left\{\begin{array}{lll}
 0 &  \textrm{if} &(O \offsetx)_j \in \{0,1\} \\[2mm]
1  & \textrm{if} & (O \offsetx)_j =-1
\end{array}\right.
\eeq
Then, one checks that
\begin{align}
x_{O} = O x -\hat e, \qquad   \offdx_{O} = O \offdx - \hat e, \qquad  \eta_{ O} = O \offsetx + 2 \hat e
\end{align}
Hence  $\rho_O (x, \offdx, \offsetx)= \rho (Ox -\hat e , O\offdx -\hat e , O\offsetx +2\hat e )$. 
We compute the  fourier transform of $\rho$ in the variable $x$.
\beq
\hat \rho_O(p, \offdx, \offsetx) =     \e^{\i p \hat e}\hat{\rho}(O^{-1} p,  O\offdx  -\hat e , O\offsetx +2\hat e )  =  \left(I_{p,O} \hat{\rho}(O^{-1} p, \cdot, \cdot) \right)(\offdx, \offsetx)
\eeq
where $I_{p,O}$ is an invertible transformation acting on the degrees of freedom $\offdx, \offsetx$.

\section{Convergence to a fixed point} \label{sec: convergence to fixed point}

We now state the induction hypotheses for the RG flow of the reduced
dynamics $T_n$ (Proposition \ref{ass:  properties of T}), and  the correlation functions $G^{c}_{n,A}$ (Proposition \ref{prop: overview b behavior}).  Their validity for all $n$ implies our main results, as we show below in Section \ref{sec: proof of main theorems}.  The inductive proof of these induction hypotheses is postponed to Sections  \ref{sec: flow of t}-\ref{sec: estimates on the first scale reduced evolution}  and we provide a brief guide to the proof in Section \ref{sec: plan of proof}.

\subsection{Induction hypotheses}\label{sec: induction hypotheses}

Let us first discuss constants and small parameters of our theory.   The spin Hamiltonians  $H_\spin$, spin-phonon interaction $W$,  particle mass $m_{p}$, as well as
the environment correlation function $\zeta$ and temperatures $\beta,\beta_i$ are considered fixed and
constants depending only on them are denoted by $C$ (large constants) and $c$ (small constants).  The adjustable parameter in the
problem is the coupling $\la$ and in the proof also the RG scaling factor $\ell$ and $\frt_0$ entering the initial time scale $t_0= \la^{-2}\frt_0$.

We define the exponents
\beq
\tilde \al :=\left\{ \begin{array}{cc} (2 \al-1)/4  &  1/2 <  \al < 1  \\[1mm]    (4 \al-1)/8  &    1/4 <  \al \leq 1/2   \end{array}\right., \qquad  \tilde \alpha_{\ini} :=\left\{ \begin{array}{cc} \text{not defined}  &  1/2 <  \al < 1  \\[1mm]    \al/4  &    1/4 <  \al \leq 1/2   \end{array}\right.   \label{def: tilde alphas}
\eeq
with
$\al$ the correlation decay exponent in
Assumption \ref{ass: decay micro alpha}.  The exponent $\alpha_{\ini} $ ($\ini$ for 'initial') appears in the treatment of correlation functions $G^c_A$ with $A \ni 0$ (`boundary' correlation functions; those were defined only in the case $\be_1=\be_2$) and we define this exponent only for $\al \leq 1/2$. 
Then, we introduce the 'running coupling constants'
\beq  \label{eq: values running epsilons}
\ep_n   :=    \ell^{-n \tilde \al} \ep_0, \qquad  \ep_0 :=   C\str\la\str^{\al} \eeq
\beq  \label{eq: values running epsilons initial}
    \ep_{\ini,n}   =    \ell^{-n  \tilde \alpha_{\ini}} \ep_{\ini,0}, \qquad  \ep_{\ini,0} =   C\str\la\str^{2-2\al}
\eeq

In the Sections \ref{sec: flow of t},  \ref{sec: linear rg flow} and \ref{sec: nonlinear rg flow} the $\la$ appears only through  $\ep_0, \ep_{\ini,0}$, hence the latter can be considered  the fundamental small parameters. 

Our basic convention is that 
 $\ell$ is chosen large enough compared to 
the fixed parameters and we  will for example freely assume that $\ell^{-1} C<1$.  The coupling constant $|\la|$ (or  $\ep_0, \ep_{\ini,0}$) is then chosen small compared to $\ell$, such that we can freely assume that $\str\la\str C(\ell) <1$ for quantities $C(\ell)$ depending on $\ell$ but not on $\la$.
 Finally, unless otherwise stated all constants are uniform in $n$, the RG
iteration.


\subsubsection{Induction hypothesis for $T_n$}

The first induction hypothesis concerns the reduced evolution $T_n$ and $T_n(\tau=0)$ of \eqref{0operators}.
The hypothesis involves 
parameters $\bana_0$,  
 $\gamma_0$ and $D_0$ that will be fixed in Section   \ref{sec: estimates on the first scale reduced evolution}; they result from  the weak coupling analysis.

\begin{proposition}[Induction hypothesis for $T_n$] \label{ass:  properties of T}
 There exist $\tau_0$, $\ell_0<\infty$ and $\la(\ell)>0$ such that for $\ell>\ell_0$ and $0 < |\la|\leq\la(\ell)$ the following holds,
uniformly in $\frt_0\in[\tau_0,2\tau_0]$ and in $n$.

\ben
\item \textbf{Analyticity}. The operators $\hat T_n(p)$ and $\hat T_n(0,p)$ ('$0$' refers to $\tau=0$) are analytic in a strip of width $\gamzero$
with
\beq  \label{eq: analytic bounds t in prop}
\sup_{\str\Im p \str \leq \gamzero }  \norm \hat T_n(p) \norm^{}_{\banone}  \leq C, \qquad  \sup_{\str\Im p \str \leq \gamzero } \norm \hat T_n(0,p) \norm^{}_{\banone}      \leq C  
\eeq
Moreover,   $\hat T_n(0,p)= \hat T_{n-1}(0,p/\ell)$ for $n \geq 1$.
\item  \textbf{Diffusion}.  Let  $\str \Re p \str < \bana_n$ given below and $\str\Im p \str \leq \gamzero$. The operators $\hat T_n(p)$ have a simple eigenvalue\footnote{For example, consider $\hat T_n(p)$  as an operator on $l^{\infty}(\caS)$, see the discussion in Appendix \ref{app: spectral perturbation theory}} $\e^{f_n(p)}$,i.e.\ 
\beq
  \hat T_n(p)  R_n(p)  =  R_n(p)  \hat T_n(p)  =    \e^{f_n(p)}  R_n(p)
\eeq
where  the one-dimensional spectral projector $R_n(p)$ is bounded as
\beq
\norm R_n(p)  \norm_{\scrG} \leq C  \label{eq: hyp bound on r}.
\eeq
and the complementary part as
\beq
\norm (1-R_n(p))  \hat T_n(p)  \norm^{}_{\banone}
\leq\hf\ell^{-\frac{\tilde \al}{8} n}=:\largegapn_n ,
 \label{eq: smallgap}  
\eeq  
There is a diffusion constant $D_n>0$ such that the analytic function
$f_n(p)$ satisfies
\baq
&&   \str   f_n(p) + D_n p^2  \str <   \ell^{-\frac{\tilde \al}{4} n}  \str p\str^3 
 \label{eq: fclosetoparabola}
\eaq
Moreover,  for $n\geq 1$, 
\beq
\norm  R_{n}(p)-  R_{n-1}(p/\ell)  \normba  \leq   \sqrt{\ep_0}  \ell^{-\frac{\tilde \al}{4}(n-1)} , \qquad    \left\str D_{n}  - D_{n-1} \right\str  \leq   \sqrt{\ep_0}  \ell^{- \frac{\tilde \al}{4}(n-1) }
\label{eq: resulting steptostep}
\eeq 
The constant $\bana_n$ is given by
 \beq\label{banandef}
e^{- \frac{1}{2}  D_n \bana_n^2}=\ell^{-\frac{\tilde \al}{4}n}e^{- \frac{1}{2}  D_0 \bana_0^2}.
\eeq
and
 $\bana_0\leq D_0/2$. The spectral projection $R_n(0)$ is given by
 \beq  \label{eq: projector r explicit}
R_n(0) =  \str \mu_{T_n} \rangle \langle 1_{\caS_{0}} |
\eeq
(notation as in Section \ref{sec: ward identity from unitarity}) where $\mu_{T_n}$ is a function on $\caS$ whose restriction to $\caS_0$ defines a probability measure: $ \sum_{s \in \caS_{0}}  \mu_{T_n} (s)=1$ and
$  \mu_{T_n} (s) \geq 0, s \in \caS_0$. Moreover in the equilibrium case $\beta_1=\beta_2=\beta$
 \beq  \label{eq: projector r explicit bound}
\norm R_n(0)- |\mu^\beta\rangle\langle 1_{\caS_0}|\normba\leq C\sqrt{\ep_0}
\eeq
with $\mu^\be$ as in Section \ref{sec: equilibrium states}.
\item \textbf{Gap}. Let $\str \Re p \str \geq \bana_n$ and $\str\Im p \str \leq \gamzero$.  Then
\beq
 \norm \hat T_n(p)  \norm^{}_{\banone}    \leq
  \largegapn_n 
  \label{eq: largegap} 
\eeq

\item  In position space we have 
\beq
\norm T_n (x)  \norm_{\banone}    \leq C \e^{-\tengam \str x \str }
\label{tpos}
\eeq
\een
\end{proposition}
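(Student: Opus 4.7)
The proof runs by induction on $n$, in parallel with the companion induction for the cumulants $G^c_{n,A}$ (Proposition \ref{prop: overview b behavior}) which supplies $\|G^c_{n,A}\|_\gamma \lesssim \ep_n$. The base case $n=0$ is the weak coupling analysis of Section \ref{sec: estimates on the first scale reduced evolution}: estimate \eqref{eq: first closeness t and qrw} places $T_0$ within $O(|\la|^{2\al})$ of $\e^{t_0(-\i L_\sys+\la^2 M)}$ in the $\norm\cdot\norm_\ga$ norm. Under Assumption \ref{ass: fermi golden rule}, $\hat M(p=0)$ is an irreducible Markov generator with a simple Perron eigenvalue at $0$ and a spectral gap; second-order perturbation in $p$ produces the diffusive eigenvalue $f_0(p)=-D_0 p^2+O(p^3)$, with $D_0>0$ the Green--Kubo expression of Section \ref{sec: markov approximation}. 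The unitarity Ward identity \eqref{eq: unitarity ward in terms of kernels} forces the Perron projector into the form \eqref{eq: projector r explicit} with $\mu_{T_0}$ a genuine probability density; the equilibrium bound \eqref{eq: projector r explicit bound} is read off from the Gibbsian analysis of the boundary operator $T_0(0)$ (Section \ref{sec: bounds on boundary correlation functions}).

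For the inductive step I split the RG recursion \eqref{eq: T from connected correlation functions} into a leading term and a remainder,
\beq
T_{n+1}=\bsS_\ell\bigl[T_n^{\ell^2}\bigr]+E_n,
\eeq
where $E_n$ collects all contributions containing at least one cumulant factor. By Lemma \ref{lem: main property of norms} and the two induction hypotheses at step $n$, $\|\hat E_n(p)\|_{\banone}\leq C(\ell)\,\ep_n$ uniformly for $|\Im p|\leq\gamzero$; since $\ell$ is fixed before $\la$ is sent to zero and $\ep_{n+1}=\ell^{-\tilde\al}\ep_n$, this is a genuine small perturbation at every scale. The dominant term in Fourier representation is $\hat T_n(p/\ell)^{\ell^2}$: for $|p|\leq\bana_{n+1}$ the inductive diffusive decomposition at momentum $p/\ell$ yields eigenvalue $\e^{\ell^2 f_n(p/\ell)}$ which by \eqref{eq: fclosetoparabola} equals $-D_n p^2+O(\ell^{-1-\tilde\al n/4})|p|^3$, and complement bounded by $(\largegapn_n)^{\ell^2}$, extremely small once $\ell$ is large; for $|p|\geq\bana_{n+1}$, either \eqref{eq: largegap} applies directly at momentum $p/\ell$, or the diffusive eigenvalue itself is bounded by $(\largegapn_n)^2$ thanks to the defining equation \eqref{banandef} of $\bana_n$.

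Adding $E_n$ and applying rank-one spectral perturbation theory (Appendix \ref{app: spectral perturbation theory}), the gap of order $1-(\largegapn_n)^{\ell^2}$ in the dominant term easily absorbs $\|E_n\|=O(\ep_n)$, and the eigenvalue, projector and complement are each perturbed by $O(\ep_n)$. This yields the step-to-step estimates \eqref{eq: resulting steptostep} (summable in $n$, so $\lim D_n=D^\star$ exists and is strictly positive, and similarly $R_n(0)$ converges) together with the renormalized gap bound $\largegapn_{n+1}$. The projector formula \eqref{eq: projector r explicit} is preserved under the RG since unitarity forces $\langle 1_{\caS_0}|\hat T_{n+1}(0)=\langle 1_{\caS_0}|$ and complete positivity forces $\mu_{T_{n+1}}\geq 0$; in the equilibrium case, the Gibbsian Ward identity of Section \ref{sec: ward identity from equilibrium} fed into the boundary cumulants gives \eqref{eq: projector r explicit bound}. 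Exponential decay \eqref{tpos} is the Paley--Wiener corollary of analyticity in the strip $|\Im p|\leq\gamzero$, with the fixed weight $\tengam$ left over after inverting the Fourier transform against the stronger bound in a wider strip.

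The main obstacle is the matching of scales at the crossover $|p|\approx\bana_n$, where the diffusive eigenvalue and the rest of the spectrum come closest in size. The choices $\largegapn_n=\tfrac12\ell^{-\tilde\al n/8}$, the cubic error $\ell^{-\tilde\al n/4}$ in \eqref{eq: fclosetoparabola}, and $\bana_n$ from \eqref{banandef} are calibrated so that at the boundary $|p|=\bana_n$ the diffusive eigenvalue still dominates the complementary spectrum by a factor $\ell^{-\tilde\al n/8}$, opening exactly the window rank-one perturbation theory needs. A secondary subtlety is that $E_n$ contains $O(\ell^2)$ combinatorial factors from partitioning $I_1$, so one must fix $\ell$ first and only then take $\la$ small; this is the standing convention of the paper and is what makes the hierarchy $\ep_n\ll\largegapn_n\ll 1$ genuine throughout the iteration.
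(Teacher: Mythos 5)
Your overall architecture is the same as the paper's: base case via Davies' weak-coupling limit plus a Perron--Frobenius analysis of the Lindblad generator, inductive step via the splitting $T_{n+1}=\bsS_\ell[T_n^{\ell^2}]+E_n$ with $\norm E_n\norm\leq C\ep_n$, regions $|\Re p|\lessgtr\bana_{n+1}$, and the rank-one perturbation lemma of Appendix \ref{app: spectral perturbation theory}. However, two of your steps are not correct as stated. The spectral bookkeeping in the inductive step is off: the relevant gap is not ``of order $1-\largegapn_n^{\ell^2}$'', and the projector and $f_{n+1}$ are not perturbed by $O(\ep_n)$. On $|\Re p|\leq\bana_{n+1}$ the isolated eigenvalue $a_0=\e^{\ell^2 f_n(p/\ell)}$ can be as small as $c\,\ell^{-\frac{3}{4}\tilde\al(n+1)}$ (cf.\ \eqref{a0bound}), and the perturbation errors for $R_{n+1}(p)$ and for $f_{n+1}$ carry a factor $1/|a_0|$; this is exactly why \eqref{eq: resulting steptostep} is stated with $\sqrt{\ep_0}\,\ell^{-\tilde\al(n-1)/4}$ rather than with $\ep_n$, and why one must compare the exponents $\tilde\al n$ and $\tfrac34\tilde\al(n+1)$. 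Relatedly, your crossover claim is backwards: at $|\Re p|\approx\bana_n$ one has $|\e^{f_n(p)}|\sim\ell^{-\tilde\al n/4}$, which lies \emph{below} the complementary bound $\largegapn_n\sim\ell^{-\tilde\al n/8}$. What rescues the perturbation theory there is not domination of the complement of $\hat T_n$ by its diffusive eigenvalue, but the fact that the complementary part of the $\ell^2$-nd power is bounded by $\largegapn_n^{\ell^2}$, astronomically smaller than $|a_0|$, together with $\ep_n\ll|a_0|$; that is where the calibration of $\largegapn_n$, \eqref{eq: fclosetoparabola} and \eqref{banandef} actually enters.

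Second, item 4 does not follow ``by Paley--Wiener'' from item 1. The induction only gives analyticity of $\hat T_{n+1}(p)$ in the strip of width $\gamzero$, which can at best yield decay at rate $\gamzero$, not $\tengam$; and inverting the Fourier transform against the uniform bound costs the volume of $\bbT_{n+1}$, which grows like $\ell^{(n+1)d}$ and ruins uniformity in $n$. In the paper the rate $\tengam$ is produced by the spatial rescaling: the contour for $\hat T_n(p)^{\ell^2}$ is shifted only by $20\gamzero/\ell\leq\gamzero$ and the factor $\ell$ gained through $\bsS_\ell$ boosts the decay rate, while uniformity in $n$ requires the spectral decomposition (a Gaussian $p$-integral of width $\ell^{-1}$ plus the tiny complement), see Lemma \ref{lem: powers of t} and \eqref{precise bound for Tm}--\eqref{precise bound for Tm1}. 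Moreover, for the remainder you only record $\norm\hat E_n(p)\norm_\banone\leq C\ep_n$; propagating \eqref{tpos} also needs the pointwise bound $\sup_x\e^{\tengam|x|}\norm E_n(x)\norm_\banone\leq C\ep_n$, which is the harder half of Lemma \ref{lem: bound on e} (the splitting into $E_I$, $E_{II}$ and point 3 of Lemma \ref{lem: property of norm}) and is absent from your argument. The remaining ingredients (base case, the two momentum regions, trace preservation giving \eqref{eq: projector r explicit}) do match the paper.
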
 

\noindent For later use we note that (\ref{eq: fclosetoparabola}) together with $ \bana_0\leq D_0/2$ implies
\beq\label{upper and lower bound for ef}
e^{- \frac{3}{2}  D_n (\Re p)^2-C(\Im p)^2}\leq |e^{f_n(p)}|\leq
e^{- \frac{1}{2}  D_n (\Re p)^2+C(\Im p)^2}
 .
\eeq


\subsubsection{Induction hypothesis for $G^c_{n,A}$}
We move to the correlation functions  $G^{c}_{n,A}$. First, we define a distance-like function on sets $A$ of times  as follows: assume that $A=\{\tau_1, \ldots, \tau_m \}$ with $\tau_1 < \ldots < \tau_m$, then 
\beq
\dist(A)= \dist(\tau_1, \ldots, \tau_m) :=  \prod_{j=1}^{m-1}   (1+ \str \tau_{j+1}- \tau_{j} \str)
\eeq
and we will always use  $\dist(A)^{\al} = (\dist(A))^{\al} $ to quantify the decay in time of the operators $G^{c}_{n,A}$, with $\al$  as in Assumption \ref{ass: decay micro alpha}. 
Since  $G^{c}_{n,A}$ are translation invariant in time if $0\notin A$ it suffices to consider two cases: $\min A=0$ and $\min A=1$.

\begin{proposition}[Induction hypothesis for $G^c_A$]\label{prop: overview b behavior} Let $\frt_0,\ell,\la$ be as in Proposition \ref{ass:  properties of T} and recall $\ep_n$ defined in  eq.\ \eqref{def: tilde alphas}. 
\ben
\item Let $1> \al > 1/2$. Then
\beq  \label{eq: induction assumption bulk}
   \sum_{A \subset \bbN: \str A \str=k , \min A
   =1}  \dist(A)^{\al}    \norm  G^{c}_{n,A} \norm_{\tengam}   \leq    \ep_n^{k},   \qquad k\geq 2
\eeq
\item If  the environment is in equilibrium, i.e., if $\be_1=\be_2$, then \eqref{eq: induction assumption bulk} holds  for $1/4< \al \leq 1/2$. In that case the correlation functions with $A \ni 0$  satisfy
\beq  \label{eq: induction assumption initial}
   \sum_{A \subset \bbN: \str A \str=k , \min A
   =0} 
     \dist(A)^{\al}    \norm  G^{c}_{n,A} \norm_{\tengam}   \leq    \ep^{}_{\ini,n} \ep_n^{k},      \qquad k\geq 2
\eeq
     \een
\end{proposition}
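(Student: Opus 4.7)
Both bounds are established by induction on the RG scale $n$. The base case $n=0$ is furnished by the weak-coupling analysis announced in Section \ref{sec: strategy of proof} (Propositions \ref{prop: integrability} and \ref{prop: integrability with boundary}), which show that the microscopic cumulants inherit the time decay exponent $\al$ from Assumption \ref{ass: decay micro alpha} and are bounded by $\ep_0 = C|\la|^\al$ per bulk site (together with an extra factor $\ep_{\ini,0}$ at a boundary site $\tau=0$). For the inductive step $n \mapsto n+1$, I would substitute the RG recursion \eqref{eq: from n to npluseen}, take $\norm\cdot\norm_{\tengam}$, and use Lemma \ref{lem: main property of norms} together with the scaling identity \eqref{eq: scaling property of norms}. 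A single contribution indexed by $\caA\in\poly(I_{A'})$ with connected graph $\caG_{A'}(\caA)$ is then bounded by
\[
\prod_{A\in\caA}\norm G^{c}_{n,A}\norm_{\tengam/\ell}\cdot\prod_{\tau\notin\supp\caA}\norm T_n(\tau)\norm_{\tengam/\ell},
\]
and since $\tengam/\ell<\tengam$, the cumulant factors are controlled by the induction hypothesis while the $T_n$-factors are $O(1)$ by Proposition \ref{ass:  properties of T}(4).

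The required contraction $\ep_{n+1}/\ep_n=\ell^{-\tilde\al}$ is extracted differently according to $|\caA|$. For nonlinear configurations ($|\caA|\geq 2$), connectedness of $\caG_{A'}(\caA)$ forces at least $|A'|-1$ of the sub-cumulants in $\caA$ to straddle distinct intervals $I_{\tau'}$; summing their positions, the weight $\dist(A)^\al$ from the inductive estimate wins against the combinatorial volume $\ell^2$ available per interval, producing a gain $\ell^{-2\al+1}$ per bridged bond that is negative for $\al>1/2$ and, combined with the freedom in $\caA$, delivers the target $\ep_{n+1}^{|A'|}$ in the bulk regime. Boundary correlations $A\ni 0$ are treated by the same scheme applied to the analogous recursion involving $T_n(0)=\bsS_{\ell^n}T_0(0)$; because $T_n(0)$ is only rescaled and not iterated, the resulting decay rate is the strictly slower one $\ep_{\ini,n}=\ell^{-n\tilde\al_{\ini}}\ep_{\ini,0}$.

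\textbf{Main obstacle.} The genuine difficulty lies with the linear configuration $|\caA|=1$, where a single cumulant is flanked by a chain of $T_n$'s of combined length up to $\ell^2$, so that no time-decay gain is directly available. Here one must exploit the spectral decomposition of Proposition \ref{ass:  properties of T}: writing $T_n=R_n(0)+(1-R_n(0))T_n$, the gap estimate $\largegapn_n$ makes any long $T_n$-string that contains a factor $(1-R_n(0))$ negligible, while strings of pure $R_n(0)$ attached at the latest side of the cumulant are killed by the unitarity Ward identity \eqref{eq: unitarity ward in terms of kernels}. For the two-point function $G^{c}_{n+1,\{1,\tau\}}$ a factor $R_n(0)$ may still appear \emph{between} the two cumulant sites, and unitarity alone no longer suffices; one must then invoke the equilibrium Ward identity \eqref{eq: ward identity with L} of Lemma \ref{lem: correlation equilibrium}, which converts this residual contribution into boundary cumulants of order $\ep_{\ini,n}$ plus a shifted two-point function at scale $n$ to which the induction already applies. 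This Ward-identity mechanism, available only when $\be_1=\be_2$, is precisely why the equilibrium regime reaches down to $\al>1/4$ while the non-equilibrium one requires $\al>1/2$.
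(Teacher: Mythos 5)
Your overall architecture (induction in the scale $n$, base case from the Dyson expansion as in Propositions \ref{prop: integrability} and \ref{prop: integrability with boundary}, splitting of the recursion \eqref{eq: from n to npluseen} into a linear and a nonlinear part, Ward identities for the linear two-point term, equilibrium explaining $\al>1/4$) is the paper's, but the step where the whole proof lives is asserted rather than proved, and the mechanism you give for it is wrong as stated. You write $T_n=R_n(0)+(1-R_n(0))T_n$ and claim the gap makes any string containing $(1-R_n(0))$ negligible while the unitarity identity ``kills'' the pure $R_n(0)$ strings. Neither half is correct: the smallness \eqref{eq: smallgap} concerns $(1-R_n(p))\hat T_n(p)$ in the same fiber $p$, and $(1-R_n(0))\hat T_n(p)$ is $O(1)$ for $p\neq 0$; and the Ward identity \eqref{eq: unitarity ward in terms of kernels} does not annihilate the leading diffusive part of $T_n^m$ --- it only licenses subtracting the $x$-independent kernel $T_n^m(x',x_0)R_n(0)$, after which one must still prove that the remaining difference is $O(m^{-1/2}\log m)$ (Lemma \ref{lem: contraction from unitarity}, a heat-kernel/analyticity estimate on $R_n(p)-R_n(0)$ and $\e^{mf_n(p)}$). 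That gain, worth only $\ell^{-1}$ when $m\sim\ell^2$, must then be balanced against the entropy $\ell^{2}$ from the location of $\min A$ and combined with the $\ell^{-2\al}$ change-of-variables gains from every second macroscopic time difference; this bookkeeping is exactly what fixes $\tilde\al$ and $\tilde\al_{\ini}$ in \eqref{def: tilde alphas}, and you never perform it, so you have not shown the linear output is $\leq\ep_{n+1}^{k}$ rather than $C\ell^{\,c}\,\ep_n^{k}$. For the equilibrium two-point term the problematic chain is the one \emph{preceding} the earliest cumulant time (right multiplication by $T^{m_-}$, Lemma \ref{lem: contraction from reversibility}), not a factor ``between the two cumulant sites''; the reversibility identity trades it for boundary cumulants plus the error $\norm R_n(0)-P^{\be}\normba$, and the separate boundary hypothesis \eqref{eq: induction assumption initial} carries an \emph{extra} smallness factor $\ep_{\ini,n}$ (not a ``strictly slower'' rate) precisely so that these error terms are subleading.

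The nonlinear step is also misattributed. You locate its gain in a factor $\ell^{1-2\al}$ per bridged bond, ``negative for $\al>1/2$''; but the nonlinear bound cannot hinge on $\al>1/2$, since it must hold in the equilibrium regime $1/4<\al\leq 1/2$ as well. In the paper the gain is the extra power(s) of $\ep_n$ forced by $\sum_{A\in\caA}\str A\str>\str A'\str$, which beats the per-set entropy $\ell^{2+2\al}$, and the sum over connected collections $\caA$ is then controlled by a Kotecky--Preiss cluster expansion (Lemma \ref{lem: all a that fit into s} together with Proposition \ref{app: prop: cluster expansion}); your appeal to ``the freedom in $\caA$'' leaves this combinatorial sum, and hence the induction step itself, uncontrolled.
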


\subsubsection{Plan of proof of induction hypotheses} \label{sec: plan of proof}

Given the induction hypotheses Proposition \ref{ass:  properties of T} and Proposition \ref{prop: overview b behavior} up to scale $n$, we prove  Proposition \ref{ass:  properties of T} on scale $n+1$ in Section \ref{sec: flow of t}. This result is stated explicitly in Section \ref{sec: analysis of tprime}. 

The proof of Proposition \ref{prop: overview b behavior} on scale $n+1$ is spread over Sections \ref{sec: linear rg flow} and \ref{sec: nonlinear rg flow}. In the former section, we treat the linear part of the recursion relation and in the latter the nonlinear part (notions not yet defined).  The nonlinear part is straightforward and one does not need to distinguish between the  cases $\al >1/2$ and $\al\leq 1/2$. In particular, the equilibrium condition $\be_1=\be_2$ plays no role here. 
The linear part is slightly tricky; to treat the case  $\al\leq 1/2$, we have to exploit the equilibrium condition and consider correlation functions with $A \ni 0$. 
The linear part also dictates the choice of the exponents $\tilde \al, \tilde \al_{\ini}$ in eqs.\ (\ref{eq: values running epsilons},\ref{eq: values running epsilons initial}), respectively.

Then, we need to establish the induction hypotheses on scale $n=0$. In Section \ref{sec: estimates on the first scale excitations}, we establish Proposition \ref{prop: overview b behavior} and in Section \ref{sec: estimates on the first scale reduced evolution}, we establish Proposition \ref{ass:  properties of T}. In those sections, we also outline how to choose the constants $\gamma_0, \bana_0$ and $D_0$. 

\subsection{Proof of the main theorems} \label{sec: proof of main theorems}

We assume the induction hypotheses Proposition \ref{ass:  properties of T} and Proposition \ref{prop: overview b behavior} for all $n$. Then we have

\begin{proposition}\label{prop: overview t behavior}
 Let $\frt_0,\ell,\la$ be as in Proposition \ref{ass:  properties of T}.  There is a projector $P^{\star} \in \scrG$ of the form  $\str \mu^{\star}  \rangle  \langle 1_{\caS_{0}}  \str$, and a diffusion constant $\tilde D^{\star} >0$ such that for
$p$ in the strip $\str\Im p\str < \ga_0$
\beq\label{main bound for T}
 \left\norm \hat T_n(p)   -   \e^{-\tilde D^{\star}p^2 }   P^{\star}\right\normba   \leq   C  \ell^{-cn} 
\eeq 
for some exponent $c>0$.
   In the case $\be_1 = \be_2$, we have   $\mu^{\star}= \mu^{\be}$, the projected Gibbs state   (\ref{newstates1}).
\end{proposition}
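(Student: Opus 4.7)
The plan is to combine the spectral decomposition of $\hat T_n(p)$ from Proposition \ref{ass:  properties of T}(2) with the step-to-step estimates \eqref{eq: resulting steptostep} to identify the limit and extract a geometric rate. First I set $\tilde D^{\star}:=\lim_n D_n$ and $P^{\star}:=\lim_n R_n(0)$. Both exist at geometric rate $O(\ell^{-cn})$: the step-to-step increments $|D_n-D_{n-1}|$ and $\|R_n(0)-R_{n-1}(0)\|_{\banone}$ are bounded by $\sqrt{\ep_0}\,\ell^{-\tilde\al(n-1)/4}$, which is summable, so the sequences are Cauchy in their respective spaces. Positivity $\tilde D^{\star}>0$ is inherited from the uniform initial-scale lower bound on $D_0$ established in Section \ref{sec: estimates on the first scale reduced evolution}. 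The rank-one structure \eqref{eq: projector r explicit} of $R_n(0)$ passes to the limit, giving $P^{\star}=|\mu^{\star}\rangle\langle 1_{\caS_{0}}|$ with $\mu^{\star}$ a probability density on $\caS_{0}$; in the equilibrium case $\be_1=\be_2$, the estimate \eqref{eq: projector r explicit bound} combined with the iteration pins down $\mu^{\star}=\mu^{\be}$.

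Next I split the strip $|\Im p|<\gamzero$ according to the diffusive scale $\bana_n$. In the \emph{outer} region $|\Re p|\geq \bana_n$, the gap bound \eqref{eq: largegap} gives $\|\hat T_n(p)\|_{\banone}\leq\largegapn_n=O(\ell^{-cn})$, and the defining relation \eqref{banandef} together with $D_n\to\tilde D^{\star}$ yields $\tilde D^{\star}\bana_n^2\geq c\,n\log\ell$, whence $|e^{-\tilde D^{\star}p^2}|\leq C\ell^{-cn}$ on that region; so both pieces of the target difference are individually $O(\ell^{-cn})$. In the \emph{inner} region $|\Re p|<\bana_n$ I apply the spectral decomposition to write
\beq
\hat T_n(p)-e^{-\tilde D^{\star}p^2}P^{\star}=(1-R_n(p))\hat T_n(p)+e^{f_n(p)}\bigl[R_n(p)-P^{\star}\bigr]+\bigl[e^{f_n(p)}-e^{-\tilde D^{\star}p^2}\bigr]P^{\star}.
\eeq
The first summand is $O(\ell^{-cn})$ by \eqref{eq: smallgap}. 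For the third, \eqref{eq: fclosetoparabola} combined with the Cauchy estimate $|D_n-\tilde D^{\star}|=O(\ell^{-cn})$ gives $|f_n(p)+\tilde D^{\star}p^2|\leq C\ell^{-cn}(|p|^2+|p|^3)$, and the elementary inequality $|e^a-e^b|\leq(|e^a|+|e^b|)|a-b|$ together with the Gaussian bounds \eqref{upper and lower bound for ef} absorbs the polynomial factors into the exponential damping, producing $O(\ell^{-cn})$.

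The delicate term is the middle one, $e^{f_n(p)}[R_n(p)-P^{\star}]$. I would bound $\|R_n(p)-P^{\star}\|_{\banone}\leq\|R_n(p)-R_n(0)\|_{\banone}+O(\ell^{-cn})$ and estimate $\|R_n(p)-R_n(0)\|_{\banone}$ via the iterative structure. Writing the step-to-step relation as $R_n(p)=R_{n-1}(p/\ell)+E_n(p)$ with $\sup_p\|E_n(p)\|_{\banone}=O(\ell^{-cn})$ and differentiating $k$ times at $p=0$ gives the scalar recursion $R_n^{(k)}(0)=\ell^{-k}R_{n-1}^{(k)}(0)+E_n^{(k)}(0)$, where Cauchy estimates on the strip yield $\|E_n^{(k)}(0)\|\leq C_k\ell^{-cn}$. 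For $k\geq 1$ iteration in $n$ forces $\|R_n^{(k)}(0)\|_{\banone}=O(\ell^{-c'n})$, so summing the Taylor series around $p=0$ produces $\|R_n(p)-R_n(0)\|_{\banone}=O(\ell^{-cn})$ uniformly on any fixed compact in the analyticity strip.

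The main obstacle will be extending this bound uniformly over the \emph{growing} inner region $|\Re p|<\bana_n$, since the Taylor expansion around $p=0$ converges only in a disk of fixed radius $\gamzero$ while $\bana_n\sim\sqrt{n\log\ell}$ diverges. I expect to resolve this by partitioning the inner region: on a fixed neighborhood of $0$ the Taylor argument gives the bound directly, while on the shell $\gamzero/2\leq|\Re p|<\bana_n$ the Gaussian factor $|e^{f_n(p)}|\leq Ce^{-cD_n(\Re p)^2}$ tames the trivial bound $\|R_n(p)-P^{\star}\|_{\banone}\leq C$, either pointwise (where $(\Re p)^2$ is comparable to $\bana_n^2$ so that $e^{-c(\Re p)^2}\leq\ell^{-cn}$) or, in the transitional region, by trading a Gaussian factor against one extra power of the geometric factor $\ell^{-cn}$ coming from further iterations of the step-to-step recursion.
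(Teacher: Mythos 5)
Your overall skeleton matches the paper's: define $\tilde D^{\star}=\lim_n D_n$, $P^{\star}=\lim_n R_n(0)$ via the summable increments \eqref{eq: resulting steptostep}, treat $\str\Re p\str\geq\bana_n$ with \eqref{eq: largegap}, \eqref{banandef}, \eqref{upper and lower bound for ef}, and in the inner region split off $(1-R_n(p))\hat T_n(p)$ and the eigenvalue difference exactly as the paper does. The gap is in your treatment of the middle term $e^{f_n(p)}[R_n(p)-P^{\star}]$ on the \emph{growing} region $\str\Re p\str\leq\bana_n\sim\sqrt{n\log\ell}$. Your Taylor/derivative-recursion argument only controls $\|R_n(p)-R_n(0)\|_{\banone}$ on a ball of fixed radius (the Cauchy estimates for $E_n^{(k)}(0)$ are taken at a fixed radius, so the coefficient bounds $C_k\sim k!\,\tilde r^{-k}\ell^{-cn}$ make the series useless beyond $|p|\lesssim\tilde r$), and the Gaussian damping $|e^{f_n(p)}|\leq Ce^{-cD_n(\Re p)^2}$ yields $\ell^{-cn}$ only when $(\Re p)^2\gtrsim n\log\ell$, i.e.\ when $|\Re p|$ is comparable to $\bana_n$. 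In the wide intermediate zone $\gamzero\lesssim|\Re p|\ll\sqrt{n\log\ell}$ (say $|\Re p|\sim n^{1/4}$) your bound degrades to $e^{-c\sqrt n}$, which is \emph{not} $O(\ell^{-cn})$, and the sentence about ``trading a Gaussian factor against one extra power of $\ell^{-cn}$ from further iterations'' is not an argument. The missing idea is the paper's multiscale telescoping: choose $m=cn$ with $\sup_{|\Re p|\leq\bana_n}|\ell^{-m/2}p|\leq\gamzero$, write $R_n(p)-R_n(0)$ as the sum of the $m$ increments $R_{n-j-1}(p\ell^{-(j+1)})-R_{n-j}(p\ell^{-j})$ (each $O(\ep_{(1-c)n}^{1/4})$ by \eqref{eq: resulting steptostep}), of $R_n(0)-R_{n-m}(0)$, and of $R_{n-m}(p\ell^{-m})-R_{n-m}(0)$, the last being $O(\gamzero^{-1}\ell^{-m/2})$ because after $m$ scalings the momentum is exponentially small and $R_{n-m}$ is analytic and bounded on the fixed ball by \eqref{eq: hyp bound on r}. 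This covers the whole inner region uniformly and is what delivers the geometric rate $\ell^{-cn}$.

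A second, smaller gap is the equilibrium identification $\mu^{\star}=\mu^{\be}$. The bound \eqref{eq: projector r explicit bound} is of fixed size $C\sqrt{\ep_0}$, uniform in $n$ but not decaying, so ``combined with the iteration'' it only places $P^{\star}$ within $C\sqrt{\ep_0}$ of $|\mu^{\be}\rangle\langle 1_{\caS_0}|$; it cannot pin down equality. The paper instead uses the exact invariance of the Gibbs state, \eqref{eq: invariance of gibbs under breve}, together with the splitting \eqref{eq: asymptotic splitting breve z} of $\bsS_{\ell^n}[\breve Z_{\ell^{2n}t_0}]$ into $T_nT_n(0)$ plus a boundary correlation $G^c_{\{0,1\},n}$ whose norm decays in $n$ by Proposition \ref{prop: overview b behavior}; evaluating at $p=0$ and applying to $\mu^{\reff}$ gives $\mu^{\be}=P^{\star}\mu^{\be}+o(1)$, hence $P^{\star}=P^{\be}$. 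You need this (or an equivalent exact-invariance argument); the perturbative closeness alone does not suffice.
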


%

\begin{proof}
 Let first
$ \str\Re p\str  \geq \bana_{n}$. Then the claim follows from the  definition \eqref{banandef} and the bounds (\ref{eq: largegap}) and \eqref{upper and lower bound for ef}.

\noindent For $ \str\Re p\str  \leq \bana_{n}$ 
 note first that by  \eqref{eq: resulting steptostep} both $R_n(0)$ and $D_n$ are convergent sequences. Calling their limits $P^{\star}, \tilde D^{\star}$ respectively we get 
\beq\label{RminusP}
\norm R_n(0) - P^{\star} \normba, \str D_n -\tilde D^{\star} \str  \leq  \ep_n^{1/4}.
\eeq  
Let $m=cn$ where $0<c<1$ is such that $\sup_{\str\Re p\str  \leq \bana_{n}  }|\ell^{-m/2}p|\leq \gamma_0$. Such  an $n$-indepdent $c$ exists since  $ \bana_{n}$ grows not faster than $ (C n\log \ell)^{1/2}$ with $n$.
Write
\baq
\norm R_n(0) - R_n(p) \normba  &\leq &    \sum_{j =0}^{m-1}\norm R_{n-(j+1)}(p \ell^{-(j+1)}) - R_{n-j}(p \ell^{-j}) \normba\nonumber \\
& + &  \norm R_n(0) - R_{n-m}(0) \normba  +    \norm R_{n-m}(p \ell^{-m})- R_{n-m}(0) \normba
\eaq 
The first  and second term on the RHS is bounded by $ C\ep_{n-m}^{1/4}= C \ep_{(1-c)n}^{1/4}$ using  \eqref{eq: resulting steptostep}. The third term is bounded by $C(\ga_0)^{-1} \ell^{-m/2} $ by  analyticity of $p \to R_{n-m}(p)$ in the ball
of radius $\ga_0$ at origin, i.e.\ by \eqref{eq: hyp bound on r}. Hence 
\beq\label{RminusR}
\norm R_n(0) - R_n(p) \normba\leq C\ell^{-cn}.
\eeq
Next,  bound 
\baq
\str e^{f_n(p)}- \e^{-\tilde D^{\star} p^2} \str & \leq &  \str e^{f_n(p)}- \e^{- D_n p^2} \str  +  \str \e^{-D_n p^2}- \e^{-\tilde D^{\star} p^2} \str   \nonumber \\[2mm]
 & \leq &            C( \str f_n(p)+ D_n p^2 \str + \str D_n - \tilde D^{\star} \str) \leq C\ell^{-cn}
\label{fminusD}
\eaq
by  \eqref{eq: fclosetoparabola} and \eqref{RminusP}, and the slow growth of $\bana_n$.

Combining (\ref{RminusP}), (\ref{RminusR}) and (\ref{fminusD}) with  (\ref{eq: smallgap}) yields the bound (\ref{main bound for T}).
It remains to argue that $P^{\star}=P^{\be}$ in the case $\be_1=\be_2=\be$.  Take $\La$ finite and recall that   $ \initialgibbsfinite=  \caU_{n=0}(\tau=0)    \initialfinite   $ and, by the invariance of the Gibbs state
\beq  \label{eq: invariance of gibbs under breve}
    \rho_{\sys}^\be=  \Tr_\res [\e^{-\i t L}\initialgibbsfinite ] = \breve Z_t    \initialsysfinite, \qquad t  \geq 0   
\eeq
with  $\breve Z_t$ defined in Section  \ref{sec: correlations involving zero}.
On the other hand, 
\beq  \label{eq: asymptotic splitting breve z}
 \bsS_{\ell^n}[\breve Z_{ \ell^{2n} t_0}]   = \bbE (\caU_{n}  \caU_n(0)) =   T_n T_n(0) +  \caT[G^c_{\{0,1 \},n}  ]
\eeq
where $(0)$ refers to $\tau=0$. On both sides, we can 
take the thermodynamic limit. Moreover, we know that  $\norm G^c_{\{0,1 \},n} \norm_{\tengam}  \leq \e^{-cn} $ by Proposition \ref{prop: overview b behavior}.    Let us then Fourier transform \eqref{eq: asymptotic splitting breve z} and apply the $p=0$-component to $\mu^{\reff}$, this yields
\beq
\mu^{\be} =  \hat T_n(0) \hat T_n(0,0) \mu^{\reff}  +\caO(\e^{-cn})
\eeq
where the LHS follows from \eqref{eq: invariance of gibbs under breve} and we wrote $\hat T_n(0,0)=\hat T_n(p=0,\tau=0), \hat T_n(0)=\hat  T_n(p=0)$. 
In fact, by \eqref{eq: invariance of gibbs under breve} for $t=0$, we have also $\hat T_n(0,0) \mu^{\reff}=\mu^{\be}$ and by Proposition \ref{prop: overview t behavior}, we have $\hat T_n(0)=P^{\star}+\caO(\e^{-cn})$.  Hence
\beq
\mu^{\be} = P^{\star} \mu^{\be}
\eeq
and this of course implies $P^{\star}=P^{\be}$. 

\end{proof}

\subsubsection{Proof of results in Section \ref{sec: results} along a subsequence of times}\label{sec: proof of diffusion}

We argue that 
Proposition  \ref{prop: overview t behavior} implies the results of Section \ref{sec: results} along
the sequence of times  $t_n :=  \ell^{2n}t_0$.  
The resulting diffusion
constant is given by
\beq
D^{\star} =   t_0^{-1} \tilde D^{\star} =  \frt_0^{-1} \la^2 \tilde D^{\star}
\eeq
Let us first write the claims of 
Theorems \ref{thm: equilibrium} and \ref{thm: nonequilibrium}  in terms of the RG. Express the time $t$ in units of the kinetic time scale,  $t= s t_0$,
$t_0= \la^{-2}\frt_0$.  Then we have
\beq\label{diffalarg0}
  \Tr [\e^{\i p\frac{X}{\sqrt{t}}} \rho_{\sys, t} ]  =    \Tr[ \e^{\i p\frac{X}{\sqrt{t_0}}}{\bsS }_{\sqrt{s}}[Z_{s t_0}] S_{\sqrt{s}}\rho_{\sys,0} ]\eeq
For $t=t_n =  \ell^{2n}t_0$,   ${\bsS }_{\sqrt{s}}[Z_{s t_0}]= T_n$ and by the Fourier transform, we get
\beq\label{diffalarg}
  \Tr [ \e^{\i p\frac{X}{\sqrt{t_{n}}} } \rho_{\sys, t_n}  ] =      
 \tr [ \hat T_n(\frac{p}{\sqrt{t_0}})\hat\rho_{\sys,0} (\frac{p}{ \ell^{n}\sqrt{t_0}}) ]
\eeq
where  the 'trace' $\tr$ is defined by  $\tr[ \psi] = \sum_{s \in \caS_0} \psi(s)$.  Both $ \hat T_n$ (by Proposition \ref{ass:  properties of T})
and $\hat\rho_{\sys,0}$ (by the finite-range condition \eqref{initialS})  are analytic in $p$
and uniformly bounded in the strip $|\Im p|<\ga_0$. Hence, by Proposition \ref{prop: overview t behavior}, 
\eqref{diffalarg} converges 
 as $n\to\infty$ to 
$$ e^{-D^\star p^2}\tr[ P^\star\hat\rho_{\sys,0} (0)] = e^{-D^\star p^2}.
$$
By the Vitali theorem, 
the derivatives converge as well and  Theorem \ref{thm: equilibrium} and \ref{thm: nonequilibrium} follow (along a subsequence).

For Theorems \ref{thm: convergence to gibbs} and \ref{thm: convergence to ness} 
we need to consider a translation invariant observable $A$ instead of $\e^{\i p\frac{X}{\sqrt{t}} }$ in 
\eqref{diffalarg}. Its kernel $A(x,s)=A(s)$ is constant in $x$, hence we have $ \Tr [A \rho_{\sys, \ell^{2n}}] =  \tr [A \hat\rho_{\sys, \ell^{2n}}(0)]$, and, by
Proposition \ref{prop: overview t behavior}, 
\beq \label{eq: conv ness}  
\lim_{n\to\infty}  \Tr [A \rho_{\sys, \ell^{2n}} ]  =  \tr [ A P^\star \hat\rho_{\sys,0} (0)]= \tr[A \mu^\star] 
\eeq
In the case where $\mu^{\star}=\mu^{\be}$, the last line of course  equals $ \langle A \rangle_{\be}$.
The decoherence result \eqref{result: decoherence} is a simple consequence of the fact that $\norm P^{\star}\normba <C$, hence $\sum_{s}\mu^{\star}(s) \e^{\ga_0 \str v \str} <C $ with $v$ as in Section \ref{sec: kernels and rescaling of space}.

\subsubsection{Proof of results in Section \ref{sec: results} for general times}\label{sec: proof of diffusion general times}

Let us now consider general times in \eqref{diffalarg0}. 
We will generalize a bit our RG iteration. We can run the RG iteration
i.e.\ Propositions \ref{prop: overview b behavior} and \ref{prop: overview t behavior} as well with scaling factor
$2\ell$ (by possibly reducing $\la$) and we can also at each iteration step choose arbitrarily
between  the two factors. 
Let $\sigma\in \{0,1\}^\bbN$ label the possible choices: if $\sigma_n=0$
at the $n$:th step apply scaling $\ell$,  if $\sigma_n=1$
 apply scaling $2\ell$. Denote the resulting sequence of operators $T_{n,\sigma}$. They satisfy
 Proposition  \ref{prop: overview t behavior} uniformly in $\sigma$ and hence we obtain projectors $P^\star_\si$ and diffusion constants $D^\star_\si$.  
\begin{lemma}
$P^\star_\sigma, D^\star_\si$ are independent of $\sigma$.
\end{lemma}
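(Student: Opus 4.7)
The central observation, proved by induction in $n$ using $\bsS_k\bsS_{k'}=\bsS_{kk'}$, $\bbE\circ\bsS_k=\bsS_k\circ\bbE$ (since $\bsS_k$ acts only on the system factor), and $\bsS_k(K^{k^2})=(\bsS_k K)^{k^2}$, is the factorization
\[
T_{n,\sigma} \;=\; \bsS_{L_n(\sigma)}\,Z_{L_n(\sigma)^2 t_0},\qquad L_n(\sigma):=\prod_{j=1}^n \ell_j(\sigma),
\]
where $\ell_j(\sigma)\in\{\ell,2\ell\}$ is the scaling chosen at the $j$-th RG step. In Fourier variables,
\[
\hat T_{n,\sigma}(p)\;=\;\hat Z_{L_n(\sigma)^2 t_0}\bigl(p/L_n(\sigma)\bigr)\;=:\;g(L_n(\sigma),p)
\]
depends on $\sigma$ only through the cumulative scale $L_n(\sigma)$. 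In particular, whenever $L_n(\sigma)=L_n(\sigma')$ at some common step $n$, the operators $\hat T_{n,\sigma}(p)$ and $\hat T_{n,\sigma'}(p)$ coincide identically.

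The plan is then to show that $g(L,p)$ has a \emph{unique} limit as $L\to\infty$ through the multiplicative semigroup $S:=\{\ell^a(2\ell)^b:a,b\geq 0\}$. Combined with the uniformity of Proposition \ref{prop: overview t behavior} in $\sigma$, namely $\normba{\hat T_{n,\sigma}(p)-\Lambda_\sigma(p)}\leq C\ell^{-cn}$ with $C$ independent of $\sigma$, this at once yields $\Lambda_\sigma=\Lambda_{\sigma'}$ whenever $L_n(\sigma)=L_n(\sigma')$ for some $n$ (both limits are pinned to within $2C\ell^{-cn}$ of $g(L_n(\sigma),p)$, and $n\to\infty$). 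Writing $\Lambda_\sigma(p)=e^{-\tilde D^\star_\sigma p^2}P^\star_\sigma$, the equality of the limits for all $\sigma,\sigma'$ gives $P^\star_\sigma=P^\star$ and $\tilde D^\star_\sigma=\tilde D^\star$ independent of $\sigma$.

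For the uniqueness of the limit along $S$, I would use two ingredients. (i)~The Gaussian-times-projector form $e^{-\tilde D p^2}P$ with $P^2=P$ is automatically a fixed point of either single-step RG map, since $(e^{-\tilde D(p/k)^2}P)^{k^2}=e^{-\tilde D p^2}P$ for any $k\in\{\ell,2\ell\}$; together with the vanishing of the noise corrections from Proposition \ref{prop: overview b behavior}, this shows that $\Lambda_\sigma$ is unchanged by modifying any finite number of entries in the tail of $\sigma$. (ii)~To compare sequences with different heads, I would exploit the \emph{additional} uniformity of Proposition \ref{prop: overview t behavior} in $\frt_0\in[\tau_0,2\tau_0]$. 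Using the rescaling identity
\[
g(rL,p)\;=\;\hat Z_{r^2 L^2 t_0}(p/(rL))\;=\;g^{(r^2 t_0)}(L,p/r),
\]
any $L$-value in $S$ can be related to another by absorbing the mismatch $r$ into a shifted starting time $t_0'=r^2 t_0$; when $t_0'$ falls outside the admissible window, one brings it back by one or two additional RG steps exploiting the available pair of scalings $\{\ell,2\ell\}$, whose squares differ by the factor $4$. Together (i) and (ii) force $\Lambda_\sigma$ to coincide for all $\sigma$.

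The main obstacle is the interpolation in step (ii): for $\ell$ large, the values $\ell^a(2\ell)^b$ do not fill $\mathbb{R}_+$ densely on a logarithmic scale, and a single shift $t_0\mapsto r^2 t_0$ generally leaves the admissible interval $[\tau_0\la^{-2},2\tau_0\la^{-2}]$. The required bookkeeping of how rescalings and $\frt_0$-shifts interleave across several RG steps is where the proof demands real care; once it is done, the rest follows essentially from the factorization of $T_{n,\sigma}$ together with the uniform convergence statement of Proposition \ref{prop: overview t behavior}.
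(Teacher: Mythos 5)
Your starting point is the same as the paper's: $T_{n,\sigma}$ depends on $\sigma$ only through the cumulative scale, equivalently through the time $t_{n,\sigma}$ of \eqref{tnsigma}, and whenever two sequences have exactly equal scales at a common step $N$, the uniform bound of Proposition \ref{prop: overview t behavior} pins both limits to the same operator up to $C\ell^{-cN}$. Two problems remain in how you close the argument. A small one first: coincidence of $L_n(\sigma)=L_n(\sigma')$ ``for some $n$'' gives only $\norm P^\star_\sigma-P^\star_{\sigma'}\normba\leq C\ell^{-cn}$ at that one $n$, not equality; your parenthetical ``$n\to\infty$'' presupposes coincidence at arbitrarily large $n$, which for generic pairs (e.g.\ $\sigma\equiv 0$, $\sigma'\equiv 1$) never happens even once. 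Second, and decisively, your step (ii) --- absorbing the scale mismatch into a shift of $\frt_0$ and compensating with extra RG steps --- is precisely the part you leave open, and, as you yourself note, it does not go through as stated: the shift $\frt_0\mapsto r^2\frt_0$ generically exits $[\tau_0,2\tau_0]$ and the set $\{\ell^a(2\ell)^b\}$ is not dense on a log scale, so the required ``bookkeeping'' is never supplied. Your step (i) is also not justified: that $e^{-\tilde D p^2}P$ is a fixed point of each one-step map does not show that modifying finitely many entries of $\sigma$ leaves $\Lambda_\sigma$ unchanged --- such a modification multiplies all later cumulative scales by a fixed power of $2$, and the equality of the limits along the two scale sequences is exactly the statement in question.

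The paper closes the argument with a padding device that needs neither $\frt_0$-shifts nor any density. Given $\sigma$, $\sigma'$ and any $n$, set $N=n+\str m(n,\sigma)-m(n,\sigma')\str$ and extend, on the indices $n<i\leq N$, the sequence with the smaller count $m(n,\cdot)$ by ones and the other by zeros; call the results $\tilde\sigma,\tilde\sigma'$. Then $\tilde\sigma$ agrees with $\sigma$ up to $n$, so $T_{n,\tilde\sigma}=T_{n,\sigma}$ and Proposition \ref{prop: overview t behavior} gives $\norm P^\star_{\tilde\sigma}-P^\star_{\sigma}\normba\leq 2C\ell^{-cn}$ (likewise for the primed pair), while $t_{N,\tilde\sigma}=t_{N,\tilde\sigma'}$, hence $T_{N,\tilde\sigma}=T_{N,\tilde\sigma'}$ and $\norm P^\star_{\tilde\sigma}-P^\star_{\tilde\sigma'}\normba\leq 2C\ell^{-cN}\leq 2C\ell^{-cn}$. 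The triangle inequality then yields $\norm P^\star_{\sigma}-P^\star_{\sigma'}\normba\leq C'\ell^{-cn}$ for every $n$, and the same for $\str D^\star_\sigma-D^\star_{\sigma'}\str$; letting $n\to\infty$ proves the lemma. The uniformity in $\frt_0$ and the irrationality of $\log 2/\log\ell$, which you try to invoke here, are only needed afterwards, for the density of the set $U$ of admissible $\frt_0$ and the passage to general times.
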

\begin{proof} 
 Denote the
 corresponding times by $t_{n,\sigma}$:
 \beq\label{tnsigma}
 t_{n,\sigma}= \la^{-2}\ell^{2n}4^{m}\frt_0
 \eeq
where $m=m(n,\sigma):=\#\{i  \leq   n :  \sigma_i= 1 \}$. Since $T_{n,\sigma}={\bsS }_{ (t_{n,\sigma}/t_0)^{\hf}} [Z_{ t_{n,\sigma}}]$ we have $ T_{N,\sigma}= T_{N,\sigma'}$ whenever $ t_{N,\sigma}= t_{N,\sigma'}$. In such case,
since both sequences satisfy the bounds of Proposition \ref{prop: overview t behavior} we conclude
\beq\label{diffof ts}
\norm P^\star_\si- P^\star_{\sigma'} \normba, \str D^\star_\si- D^\star_{\sigma'} \str \leq C\ell^{-c  N}.
\eeq
Now, given $\sigma$, $\sigma'$ and $n$ there exists $N\geq n$ and $\tilde\sigma$, $\tilde\sigma'$ s.t.
$\tilde\sigma_i=\sigma_i$ for all $i\leq n$, similarly for the primes, and $t_{N,\tilde\sigma}=t_{N,\tilde\sigma'}$. Indeed,
just take $N=n+|m(n,\sigma')-m(n,\sigma)|$ and  $\tilde\sigma_i'=0$, $\tilde\sigma_i=1$ for $i>n$ in case 
$m(n,\sigma')-m(n,\sigma)>0$ and vice versa in the opposite case. 

Hence, we 
conclude  that \eqref{diffof ts} holds for all  $\sigma$, $\sigma'$ and $N$ and the claim follows.
\end{proof}
 Since the claims of the  Propositions \ref{prop: overview b behavior} and \ref{prop: overview t behavior}
 are uniform in the initial time $\frt_0\in[\tau_0,2\tau_0]$ as well we have $\sigma$-independent limits   $P^\star_{\sigma,\frt_0}\equiv P^\star_{\frt_0}, D^\star_{\si,\frt_0} \equiv D^\star_{\frt_0} $. Let us also denote the  $\frt_0$-dependence of the times \eqref{tnsigma}
 explicitly as $ t_{n,\sigma,\frt_0}$. Let $U\subset[\tau_0,2\tau_0]$ be the set of $\frt_0$ s.t. there 
 exist $\sigma$, $\sigma'$ and $n,n'$ such that  $ t_{n,\sigma,\frt_0}= t_{n',\sigma',\tau_0}$. 
By similar reasoning as in the proof above, one sees that for $\frt_0 \in U$, one can in fact  find arbitrarily large $n,n'$ such that this equality holds. It follows that
 on the set $U$, the limits $P^\star_{\frt_0}, D^\star_{\frt_0}$ are constant. 
 
\begin{lemma} If $\log 2/\log\ell$ is irrational, then
 $U$ is dense in $[\tau_0,2\tau_0]$.
\end{lemma}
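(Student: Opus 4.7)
The plan is to characterize $U$ explicitly as the intersection of $[\tau_0,2\tau_0]$ with the orbit of $\tau_0$ under a two-generator multiplicative subgroup of $\bbR^+$, and then invoke the classical density theorem of Kronecker for two-generator subgroups of $\bbR$.

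First, using the formula \eqref{tnsigma} for $t_{n,\sigma,\frt_0}$, the condition $t_{n,\sigma,\frt_0}=t_{n',\sigma',\tau_0}$ simplifies (the prefactor $\la^{-2}$ cancels) to
\beq
\frt_0/\tau_0 \,=\, \ell^{2(n'-n)}\,4^{\,m(n',\sigma')-m(n,\sigma)}.
\eeq
Next I would observe that every pair of integers $(a,b)\in\bbZ^2$ is realizable as $(n'-n,\,m(n',\sigma')-m(n,\sigma))$. Indeed, given $(a,b)$, pick $N$ large enough that $N\geq\max(|a|,|b|,|a+b|)$, set $n=N$, $n'=N+a$, $m=\max(0,-b)$ and $m'=m+b$; the constraints $0\leq m\leq n$ and $0\leq m'\leq n'$ are then satisfied, and corresponding sequences $\sigma,\sigma'$ exist because $m(n,\sigma)$ can be assigned any value in $\{0,1,\dots,n\}$. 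Consequently
\beq
U \,=\, \{\,\frt_0\in[\tau_0,2\tau_0]\,:\, \frt_0/\tau_0\in G\,\},\qquad G:=\{\ell^{2a}\,4^{b}\,:\,a,b\in\bbZ\},
\eeq
where $G$ is the multiplicative subgroup of $\bbR^+$ generated by $\ell^2$ and $4$.

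Taking logarithms reduces the claim to showing that the additive subgroup
\beq
H \,:=\, 2\log\ell\cdot\bbZ \,+\, 2\log 2\cdot\bbZ \,\subset\,\bbR
\eeq
is dense in $\bbR$. By the classical Kronecker theorem (equivalently, the structure theorem for subgroups of $(\bbR,+)$), a two-generator subgroup $\alpha\bbZ+\beta\bbZ$ with $\alpha,\beta\neq 0$ is either cyclic or dense, and is dense precisely when $\alpha/\beta\notin\bbQ$. Here $\alpha/\beta=\log\ell/\log 2$, whose irrationality is equivalent to the hypothesis $\log 2/\log\ell\notin\bbQ$. Hence $H$ is dense in $\bbR$, so $G$ is dense in $\bbR^+$, and therefore $\{\frt_0/\tau_0:\frt_0\in U\}$ is dense in $[1,2]$, proving that $U$ is dense in $[\tau_0,2\tau_0]$.

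There is no real obstacle here beyond carefully checking the realizability of arbitrary $(a,b)\in\bbZ^2$ (which simply requires taking $n,n'$ large, with no further constraint since the RG bounds are uniform in $n$); the density step itself is immediate from Kronecker.
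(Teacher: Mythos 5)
Your proof is correct and follows essentially the same route as the paper: both reduce the claim to the density in $\bbR$ of the additive group $2\log\ell\,\bbZ+2\log 2\,\bbZ$, the paper phrasing this as density of orbits of the irrational rotation $x\mapsto x+\log 2/\log\ell \bmod 1$ (which only needs differences $m'-m\geq 0$ together with a suitable negative $n'-n$), whereas you first show every pair $(a,b)\in\bbZ^2$ is realizable as $(n'-n,\,m'-m)$ and then invoke Kronecker's theorem for two-generator subgroups. One small slip in your realizability step: with $N=\max(|a|,|b|,|a+b|)$ the constraint $m'\leq n'$ can fail, e.g.\ $a=-5$, $b=5$ gives $n'=0$, $m'=5$; taking instead $N\geq |a|+|b|$ (so that $n'=N+a\geq |b|\geq m'$ and $n=N\geq|b|\geq m$) repairs this, and the rest of your argument is unaffected.
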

\begin{proof}
Consider $\log t_{n,\sigma,\frt_0}=-2\log \la+\log\frt_0+2n\log\ell+2m\log 2$. If
$\log 2/\log\ell$ is irrational, then the map $x\to (x+\log 2/\log\ell) \mod 1$ has  dense  orbits
and the claim follows easily from this.
\end{proof}
 By a similar  density of
orbits argument, it is easy to see that 
there exists $T<\infty$ such that
the set of times $ t_{n,\sigma,\frt_0}$ with $\frt_0\in U$  is dense
in $[T,\infty)$. Along any sequence of times from this dense set, the limit  \eqref{eq: convergence to gaussian}  exists and is independent of the chosen sequence. 
  By the strong continuity  of the $Z_t$ (which will be easily derived in Lemma \ref{lem: bound on h}),
 the function  \eqref{eq: convergence to gaussian} 
is continuous in $t$ and hence the limit is independent of $\ell, \frt_0$. This completes the proof of Theorems \ref{thm: equilibrium} and \ref{thm: nonequilibrium}.  Theorem 
\ref{thm: convergence to gibbs} follows analogously by considering sequences of times in \eqref{eq: convergence to stat state}.

\section{Flow of $T$}\label{sec: flow of t}

As announced in Section \ref{sec: plan of proof}, we prove the  induction step for $T$. 
The following convention applies to  Sections \ref{sec: flow of t}, \ref{sec: linear rg flow} and \ref{sec: nonlinear rg flow}:  We always assume that the induction hypotheses Propositions \ref{ass:  properties of T} and \ref{prop: overview b behavior} are satisfied on scale $n$ (we then prove them on scale $n+1$), and we do not repeat this assumption in the statements of all lemmata and propositions.  Throughout this Section, as well as in Sections \ref{sec: linear rg flow} and \ref{sec: nonlinear rg flow}, we use the conventions on $\ell, \ep_0, \ep_{\ini,0}$ that were explained at the beginning of Section \ref{sec: induction hypotheses}.

\subsection{Powers of $T_n$}

We show how to bound powers of $T_n$ and we abbreviate $T=T_n, R=R_n, \ldots$ whenever no confusion is possible.  This bound will be an important ingredient of both  induction steps for $T_n$ and   $G^c_{n,A}$.  
\begin{lemma}\label{lem: powers of t}
$T=T_n$ satisfies  
\baq
&  &  \int \d x  \,  \e^{10 \gamzero \str x \str }   \norm ({\bsS_{\ell}}T^m)(x)    \norm^{}_{\banone}      \leq C, \qquad    1 \leq m \leq \ell^2  \label{eq: lone bound for repeated t}  \\[1mm] 
&&   \sup_{x }   \e^{20 \gamzero \str x \str }    \norm( {\bsS_{\ell}}T^{\ell^2})(x)    \norm^{}_{\banone}   \leq C
 \label{eq: linfinity bound for repeated t} \\[1mm] 
&&\sup_{x}  \e^{\frac{20\ga_0}{\sqrt{m}}  \str x\str}    \norm T^m(x)\norm_\banone  \leq C m^{-d/2} \label{eq: linfinity bound for m repeated t}, \qquad    20 \leq  \sqrt{m} \leq \ell 
\eaq
\end{lemma}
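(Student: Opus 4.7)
The three estimates all follow from Fourier analysis on $\bbT_n$, combined with the spectral structure of $\hat T_n(p)$ provided by Proposition \ref{ass:  properties of T}. Since $R_n(p)$ is the spectral projector for the eigenvalue $e^{f_n(p)}$ of $\hat T_n(p)$, it commutes with $\hat T_n(p)$, and hence for $|\Re p|<\bana_n$ one has the exact power-splitting
$$\hat T_n^m(p) \;=\; e^{m f_n(p)}\,R_n(p) \;+\; \bigl((1-R_n(p))\hat T_n(p)\bigr)^m,$$
where the first summand is controlled via $|e^{m f_n(p+i\xi)}|\leq e^{-\frac12 m D_n(\Re p)^2 + C m|\xi|^2}$ from \eqref{upper and lower bound for ef} together with the uniform bound \eqref{eq: hyp bound on r} on $R_n$, and the second is bounded in $\|\cdot\|_{\banone}$ by $\largegapn_n^m$; for $|\Re p|\geq\bana_n$ the gap hypothesis \eqref{eq: largegap} directly gives $\|\hat T_n^m(p)\|_{\banone}\leq \largegapn_n^m$. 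The second ingredient is the contour shift, valid because $\hat T_n(p)$ is analytic in the strip $|\Im p|\leq\gamma_0$: for any $\xi\in\bbR^d$ with $|\xi|_\infty\leq\gamma_0$,
$$\|T_n^m(y)\|_{\banone}\,e^{-\xi\cdot y}\;\leq\;\int_{\bbT_n}\|\hat T_n^m(p+i\xi)\|_{\banone}\,\frac{dp}{(2\pi)^d}.$$

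For the $L^\infty$ bounds \eqref{eq: linfinity bound for repeated t} and \eqref{eq: linfinity bound for m repeated t} I pick $\xi_i=\gamma'\,\mathrm{sgn}(y_i)$ with $\gamma'$ tuned so that $m(\gamma')^2=O(1)$: namely $\gamma'=20\gamma_0/\ell$ for $m=\ell^2$, and $\gamma'=20\gamma_0/\sqrt m$ whenever $20\leq\sqrt m\leq\ell$ (both choices satisfy $\gamma'\leq\gamma_0$). The Gaussian integral $\int_{\bbR^d} e^{-m D_n|q|^2/2}dq\leq Cm^{-d/2}$ supplies the required decay, while the gap-region contribution $\largegapn_n^m\cdot\mathrm{vol}(\bbT_n)$ is harmless because once $m$ exceeds a fixed threshold of order $d/\tilde\alpha$ the factor $\largegapn_n^m\sim\ell^{-nm\tilde\alpha/8}$ beats the torus growth $\ell^{nd}$. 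Finally, the scaling identity $(\bsS_\ell T_n^{\ell^2})(x)=\ell^d T_n^{\ell^2}(\ell x)$ converts the factor $m^{-d/2}=\ell^{-d}$ at $m=\ell^2$ into the $O(1)$ constant claimed in \eqref{eq: linfinity bound for repeated t}.

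The $L^1$-type bound \eqref{eq: lone bound for repeated t}, equivalent after rescaling to $\|T_n^m\|_{10\gamma_0/\ell}\leq C$, is proved by splitting on $m$. For $m\geq 400$ I apply the already-established bound \eqref{eq: linfinity bound for m repeated t} with weight $\gamma''=20\gamma_0/\sqrt m$ and then integrate $e^{10\gamma_0|y|/\ell}$ against the resulting $e^{-\gamma''|y|_1}$; since $\sqrt m\leq\ell$, the surplus decay rate $\gamma''-10\gamma_0/\ell\geq 10\gamma_0/\sqrt m$ produces a lattice sum of size $(\sqrt m/\gamma_0)^d$ that exactly cancels the $m^{-d/2}$ prefactor, yielding an $m$-uniform constant. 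For $m<400$ I bypass Fourier and use the submultiplicativity $\|T_n^m\|_{10\gamma_0/\ell}\leq \|T_n\|_{10\gamma_0/\ell}^m$ from Lemma \ref{lem: main property of norms}(1), together with the pointwise decay \eqref{tpos}, which after direct integration gives $\|T_n\|_{10\gamma_0/\ell}\leq C/\gamma_0^d$ (with $C$ depending on $\gamma_0$, $d$ only); iterating at most $400$ times preserves boundedness.

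The main technical subtlety is keeping all constants uniform in $n$ despite the factor $\mathrm{vol}(\bbT_n)\sim\ell^{nd}$ that enters any naive gap-region integral. This is precisely why the contour-shift weight $\gamma'$ must be allowed to depend on $m$, scaling as $1/\sqrt m$ so that the Gaussian exponent $Cm(\gamma')^2$ stays bounded for large $m$; and it is why the induction hypothesis supplies the rapidly decaying gap constant $\largegapn_n=\tfrac12\ell^{-n\tilde\alpha/8}$, whose $m$-th power absorbs polynomial volume growth in $\ell^n$ once $m$ is past a fixed threshold.
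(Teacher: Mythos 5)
Your route is the paper's route, essentially step for step: contour-shifted Fourier representation, the splitting $\hat T^m(p)=e^{mf_n(p)}R_n(p)+\bigl((1-R_n(p))\hat T_n(p)\bigr)^m$ on $|\Re p|\leq\bana_n$ controlled by \eqref{upper and lower bound for ef} and \eqref{eq: hyp bound on r}, the gap bound \eqref{eq: largegap} off that region, the scaling identity for \eqref{eq: linfinity bound for repeated t}, and then \eqref{eq: lone bound for repeated t} obtained from the weighted sup bound by spending the surplus exponential decay; the small-$m$ fallback via submultiplicativity and \eqref{tpos} is also the paper's device.

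The one point where your write-up does not cover the stated range is the intermediate window of $m$. As you yourself observe, the gap-region term $\largegapn_n^{m}\,\mathrm{Vol}(\bbT_n)\sim(1/2)^m\ell^{-nm\tilde\al/8}\,\ell^{nd}$ is only bounded uniformly in $n$ once $m$ exceeds a threshold of order $d/\tilde\al$; since $\tilde\al$ may be small (it is at most $(2\al-1)/4<1/4$), this threshold generally exceeds $400$, so for $400\leq m\lesssim d/\tilde\al$ your Fourier argument does not yet yield \eqref{eq: linfinity bound for m repeated t} — and your proof of \eqref{eq: lone bound for repeated t} for $m\geq400$ quotes exactly that estimate, while your explicit fallback is restricted to $m<400$. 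The repair is already in your toolkit and is what the paper does: for all $m$ below the (fixed, $n$-, $\ell$-, $\la$-independent) threshold, iterate the position-space bound \eqref{tpos} (equivalently, submultiplicativity plus pointwise decay) to get $\sup_x e^{\ga_0|x|}\norm T^m(x)\norm_{\scrG}\leq C^m\leq C'$; since $20\ga_0/\sqrt m\leq\ga_0$ there and $m^{-d/2}$ is bounded below by a constant on a bounded range of $m$, this gives \eqref{eq: linfinity bound for m repeated t} on the missing window, and then your $L^1$ argument goes through; the case $m=\ell^2$ in \eqref{eq: linfinity bound for repeated t} is unaffected because $\ell^2$ lies above the threshold. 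A cosmetic remark: shifting the contour by $\xi_i=\ga'\,\mathrm{sgn}(y_i)$ makes $|\Im p|$ of size $\ga'\sqrt d$ in the Euclidean norm, so either shift radially by $\ga' e_y$ as in the paper or adjust the constants accordingly.
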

\begin{proof}
 Let us first suppose $  \max(20, \frac{16d}{\tilde \al}) \leq  \sqrt{m} \leq \ell $. By analyticity
\beq  \label{eq: power of t to calculate}
 \e^{\frac{\twentygam}{\sqrt{m}}  |x|} T^{m}(x) =  (2 \pi)^{-d} \mathop{ \int}\limits_{\bbT_n} \d p   \, \hat T (p
 + \i \frac{ \twentygam }{\sqrt{m}}e_x ) ^{m}  \e^{\i p x} 
\eeq
where $e_x=x/|x|$.
To evaluate this integral let $\chi_n(p)$ be the indicator function for  $\str \Re p \str  \leq \bana_n $. On its support
(and with $\str\Im p\str \leq\ga_0$) we have  
$\hat T(p)^m= e^{mf(p)}R(p)+ ((1-R(p))\hat T (p))^m$ and using  \eqref{eq: smallgap} and \eqref{upper and lower bound for ef},
\beq\label{smallpbound}
 \mathop{\int}\limits_{ \bbT_n}\d p \chi_n(p)
  \norm  \hat T (p + \i \frac{ \twentygam }{\sqrt{m}}e_x) ^{m} \norm_{\banone}     \leq C \e^{ C\ga^2_0  }  \mathop{\int}\limits_{ \bbR^d} \d p \,  \e^{- \frac{1}{2} D_n m p^2}  +     \largegapn_n^m \bana_n^d
  \leq  C (m^{-d/2}  + \ell^{-c mn} ) 
\eeq
with $c>0$ since $\bana_n$ grows no faster than $ C(n \log \ell)^{\hf}$.

To perform the integral with $1-\chi_n$, we use (\ref{eq: largegap}) and $m >20$ (first inequality), and  then $m >  \frac{16d}{\tilde \al}$ (last inequality), to obtain
\beq \label{largepbound}
\mathop{ \int}\limits_{\bbT_n} \d p  \, (1-\chi_n(p))\norm \hat T^m (p + \i \frac{ \twentygam }{\sqrt{m}}e_x)  \normba   \leq   \largegapn_n^{m}  \mathrm{Vol}(\bbT_{n})  =    (2\pi)^d (\hf)^m \ell^{-\frac{\tilde \al}{8} m n} \ell^{d n}  \leq  \largegapn_n ^{m/2}
 \eeq
 where $\mathrm{Vol}(\bbT_{n})= (2\pi)^d \ell^{dn}$ is the volume of the $d$-dimensional torus $\bbT_n= \ell^n\bbT^d$.  
Combining the bounds \eqref{smallpbound} and \eqref{largepbound},  
we get a bound  $O(m^{-d/2})$ for  \eqref{eq: power of t to calculate} which implies (\ref{eq: linfinity bound for m repeated t}) for  $\max(20, \frac{16d}{\tilde \al}) \leq  \sqrt{m}$.  Taking $m=\ell^2$ and using that 
$
(\bsS_{\ell}T^{m})(x) = \ell^d T^{m}(\ell x)
$, we get  \eqref{eq: linfinity bound for repeated t}.

To get \eqref{eq: lone bound for repeated t}, we first note that, from (\ref{eq: linfinity bound for m repeated t})
 we get immediately
\beq
  \int\d x \,  \e^{\frac{10 \ga_0}{\sqrt{m}}  | x |} \norm T^{m}(x)\norm_\banone  \leq C
\eeq
By scaling and using   $\frac{10  \ell \ga_0}{\sqrt{m}}  \geq {10   \ga_0} $, 
 \eqref{eq: lone bound for repeated t} follows
 for  $\sqrt{m} \geq  \max(20, \frac{16d}{\tilde \al})$. For $20 \leq \sqrt{m} \leq  \frac{16d}{\tilde \al} $  we get  $\norm T^m (x)  \norm_{\banone}   \e^{ \gamzero \str x \str }  \leq C$ from eq.\ \eqref{tpos}. This settles \eqref{eq: linfinity bound for m repeated t} and, upon scaling, also
\eqref{eq: lone bound for repeated t}.
\end{proof}
For later purposes we register the following consequence of (\ref{smallpbound}) and (\ref{largepbound}):
\beq\label{precise bound for Tm}
 \norm( {\bsS_{\ell}}T_n^{\ell^2})(x) -\tilde T_{n+1}(x) \norm^{}_{\banone}  
 \leq C\ell^{-cn}e^{-{20\ga_0}|x|}
  \eeq
where  ($\tilde p:=p+20\i\ga_0e_x$)
\beq\label{precise bound for Tm1}
\tilde T_{n+1}(x)=   (1/2 \pi)^{d}  \e^{-{20\ga_0}|x|} \mathop{ \int}\limits_{\bbT_{n+1}} \d p   \, R_n (\tilde p/\ell) e^{\ell^2f_n(\frac{\tilde p}{\ell})}  \e^{\i p x} \chi_n(p/\ell) .
  \eeq

\subsection{Contribution to $T_{n+1}$ from $G^c_{A}$}  \label{sec: contribution to tnplusone}

We recall  the expression \eqref{eq: T from connected correlation functions}  for $T_{n+1}$ in terms of quantities at scale $n$;
\beq
T_{n+1}=  \bsS_{\ell}  T_n^{\ell^2}  + \sum_{ \caA \in \poly(I_{\tau'})}  \, \, \,    \caT_{\tau'}  \bsS_{\ell} \left[  \mathop{\otimes}\limits_{A \in \caA}  G^c_{n,A}  \mathop{\otimes}\limits_{\tau \notin \supp \caA}  T_n ({\tau})  \right]    \label{eq: t recursion repetition}
\eeq
and below we define $E_n$ as the second term  on the RHS.

We now derive the necessary bounds on $E$. This bound is the only place in the induction step $T \to T'$ in which we need information on the cumulants $G^c_A$. 
\begin{lemma}\label{lem: bound on e}
Let 
\beq
E_n:=  T_{n+1}- \bsS_{\ell}  [T_n^{\ell^2}].  
\eeq
Then
\beq\label{xbound for E}
\sup_x  \e^{10 \ga_0 \str x\str}\norm E_n(x) \norm_{\scrG}       
\leq  C \ep_n
\eeq
and hence in particular $\norm E_n \norm_{\ga_0}  \leq C \ep_n$.
\end{lemma}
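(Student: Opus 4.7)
The plan is to expand $E_n$ via the recursion \eqref{eq: t recursion repetition}, writing
\[
E_n = \sum_{\caA \in \poly(I_{\tau'})} \bsS_\ell \,\caT_{\tau'}\!\Big[\mathop{\otimes}\limits_{A \in \caA} G^c_{n,A}\,\mathop{\otimes}\limits_{\tau \in I_{\tau'} \setminus \supp \caA}\! T_n(\tau)\Big],
\]
and then bound each term. For a given $\caA$, the $T_n$'s in the complement group into products $T_n^{m_j}$ over the maximal intervals of consecutive missing times, with $\sum_j m_j + \sum_{A \in \caA}|A|=\ell^2$ and each $m_j \leq \ell^2$. Since $G^c_{\{\tau\}}=0$, every $A \in \caA$ has $|A|\geq 2$, so even a single cumulant contributes the small factor $\ep_n^2$.

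For the norm bound, I would use $\|\bsS_\ell K\|_{10\ga_0} = \|K\|_{10\ga_0/\ell} \leq \|K\|_{10\ga_0}$ together with Lemma~\ref{lem: main property of norms}(1) to factorize across $\caT_{\tau'}$:
\[
\|\bsS_\ell\caT_{\tau'}[\cdots]\|_{10\ga_0} \leq \prod_{A \in \caA}\|G^c_{n,A}\|_{10\ga_0}\,\prod_j \|T_n^{m_j}\|_{10\ga_0/\ell}.
\]
The gap factors are uniformly bounded by Lemma~\ref{lem: powers of t}(1) (since $m_j \leq \ell^2$, after the natural scaling identification). Proposition~\ref{prop: overview b behavior}, combined with translation invariance in time to absorb the position of $\min A$ within $I_{\tau'}$, gives $\sum_{A \subset I_{\tau'},\,|A|=k}\|G^c_{n,A}\|_{10\ga_0} \leq C\ell^2\ep_n^k$ for each $k \geq 2$. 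Summing over the cardinalities $|A_i|=k_i$ and then over the number of cumulants $p \geq 1$ produces a geometric series in $\ell^2\ep_n^2$. With $|\la|$ small enough so that $\ell^2\ep_n \leq 1$, the single-cumulant contribution dominates, giving $\|E_n\|_{10\ga_0} \leq C\ep_n$, and monotonicity in $\ga$ yields the stated bound on $\|E_n\|_{\ga_0}$.

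For the pointwise bound $\sup_x e^{10\ga_0|x|}\|E_n(x)\|_\banone \leq C\ep_n$, I would mimic the Fourier argument from the proof of Lemma~\ref{lem: powers of t}. By translation invariance and analyticity of $\hat E_n(p)$ in a strip, a contour shift gives
\[
e^{10\ga_0|x|}E_n(x) = (2\pi)^{-d}\int_{\bbT_{n+1}} dp\,\hat E_n(p + i\,10\ga_0\, e_x)\,e^{ipx}.
\]
The required strip bound on $\|\hat E_n\|_\banone$ comes from the same factorization, now in momentum space: for each $\caA$-term, the eigenvalue part $e^{\ell^2 f_n(p/\ell)}R_n(p/\ell)$ of each $T_n^{m_j}$-block provides the Gaussian decay in $\Re p$ needed for integrability (cf.\ \eqref{upper and lower bound for ef}), the complementary spectral part is geometrically small in $n$ by \eqref{eq: smallgap} (as in \eqref{smallpbound}--\eqref{largepbound}), while the cumulant produces the overall prefactor $\ep_n$.

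The main obstacle is this pointwise step. The bare decay \eqref{tpos} of $T_n$ is saturated at the rate $10\ga_0$, so the Fourier contour shift cannot be effected using the $T_n(\tau)$'s separately; one has to exploit the stronger ``diffusive'' decay of the combined blocks $T_n^{m_j}$ for $m_j \geq 1$, as in the improved weights of Lemma~\ref{lem: powers of t}(2)--(3). An alternative route, which may be cleaner, is to apply point~3 of Lemma~\ref{lem: property of norm} directly to the iterated contraction $\caT_{\tau'}$: extracting a pointwise sup at a well-chosen intermediate time pins the external coordinate, while the remaining factors are still controlled in the $\|\cdot\|_{10\ga_0/\ell}$ norm via the same product estimate used for the norm bound.
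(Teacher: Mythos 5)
Your decomposition of $E_n$ and your bound on $\norm E_n\norm_{10\ga_0}$ (hence on $\norm E_n\norm_{\ga_0}$) coincide with the paper's argument: factorize through Lemma \ref{lem: main property of norms}, bound the gap blocks $T_n^{m_j}$ by Lemma \ref{lem: powers of t}, sum the cumulants with Proposition \ref{prop: overview b behavior}, and let the single-cumulant contribution $C\ell^2\ep_n^2\le C\ep_n$ dominate. The gap lies in the pointwise step, which you correctly single out as the main obstacle but do not close. Your primary (Fourier) route fails as stated: after the contour shift you integrate over $\bbT_{n+1}$, whose volume is $(2\pi)^d\ell^{(n+1)d}$, so you need decay in $\Re p$ of the integrand. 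The cumulant factors provide analyticity in the strip but no decay in $\Re p$, and the complementary spectral part of a block $\hat T_n^{m}(p/\ell)$ is only of size $(\ell^{-\tilde\al n/8})^{m}$ by \eqref{eq: smallgap}--\eqref{eq: largegap}; for short blocks (e.g.\ $m=1$, which occur whenever $\caA$ leaves only isolated free times) this does not beat $\ell^{(n+1)d}$ --- this is exactly why Lemma \ref{lem: powers of t} restricts to $\sqrt m\geq\max(20,16d/\tilde\al)$ in \eqref{largepbound} and treats short powers separately via \eqref{tpos}. Worse, the terms with $\supp\caA=I_{\tau'}$ contain no $T$-block at all, so there is no source of $\Re p$-decay whatsoever; a repair is possible, but only through a case split of the kind described next.

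Your alternative route --- pinning one slot via point 3) of Lemma \ref{lem: property of norm} --- is indeed the paper's mechanism, but only for the terms containing at least one free $T(\tau)$: one pins precisely that slot, using $\sup_x \e^{\ga_0\str x\str}\norm T_n(x)\normba\le C$ from \eqref{tpos}, and controls the remaining factors in the $\norm\cdot\norm_{10\ga_0/\ell}$ norm as in the first part. Two things are missing in your sketch: (a) the pinned slot must be a free $T$-slot, since the induction hypothesis gives no sup-in-$x$ control on a cumulant slot, so "a well-chosen intermediate time" cannot lie in $\supp\caA$; and (b) the terms with $\supp\caA=I_{\tau'}$ admit no such pinning at all. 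The paper handles the latter separately: convert the sup-norm to the $\ga$-norm at the price of the lattice-volume factor $\ell^{dn}$ (which grows with $n$) and beat it by the super-exponential smallness coming from $\ell^2$ cumulant slots, $(C\ell^2\ep_n)^{\ell^2}\le C\ep_0\,\ell^{-cn\ell^2\tilde\al}\le\ep_n$. Without an argument of this type your proof of the pointwise bound is incomplete; with it, your second route becomes essentially the paper's proof.
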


\begin{proof}
We (again) abbreviate $T=T_n$, $E=E_n$ and $G_A^c= G^c_{n+1,A}$. 
We take $\ell > 10$ and  apply Lemma \ref{lem: main property of norms} to eq.\ \eqref{eq: t recursion repetition}:
\beq
\norm E \norm_{\tengam}= \left\norm \caT_{\tau'}\bsS_{\ell} \left[  \mathop{\otimes}\limits_{A \in \caA} G^{c}_{A}  \mathop{\otimes}\limits_{\tau \notin \supp \caA}  T({\tau})  \right] \right\norm_{\tengam}  \leq  \prod_{A \in \caA}  \norm G^{c}_{A} \norm_{\ga_0}    \prod_{ J  }    \norm T^{\str J \str} \norm_{\tengam/\ell}
\eeq
where the product  $\prod_{ J  }$ runs over all discrete intervals $J$ in the sets $I_{\tau'} \setminus \supp\caA$. That is; we decompose the set $I_{\tau'} \setminus \supp\caA$ into a union of discrete intervals  $J$ (i.e.\ sets of consecutive numbers) such that no two of those intervals are consecutive.
  By invoking Lemma \ref{lem: powers of t}, we bound 
\beq
 \norm T^{\str J \str} \norm_{\tengam/\ell}  \leq  C,     \qquad  \text{since} \,\,  \str J \str \leq \ell^2
\eeq
The number of discrete intervals is at most $1+ \str\supp \caA \str < 2\str\supp \caA \str  $ and hence we can bound  
\beq
\norm E \norm_{\tengam} \leq  \sum_{ \caA \in \poly(I_{\tau'})} C^{\str \supp\caA \str}  \prod_{A \in \caA}   \norm G^c_{A}  \norm_{\gamzero}  \leq  C  \frac{\ell^2\ep_n^2 }{1- C \ell^2\ep_n^2 }   \label{eq: simple nonlinear bound}
\eeq
where the second inequality follows from Proposition \ref{prop: overview b behavior} and the fact that $\sum_{A \in \caA}\str A \str$  is at least $2$.

To get a bound on $\sup_x  \norm E(x) \normba  \e^{\ga \str x \str}$, we must proceed differently; we split the contributions to $E=E_{I}+ E_{II}$ in \eqref{eq: t recursion repetition} into those where there is at least one $T$, namely when $\supp \caA \neq I_{\tau'}$, and those for which there are no $T$'s, i.e.,  $\supp \caA = I_{\tau'}$.  For the second term, $E_{II}$, we use that 
\baq
\norm E_{ II} \norm_{\tengam}   &\leq &  \sum_{ \caA \in \poly(I_{\tau'}),\supp \caA =I_{\tau'}}   C^{\str \supp\caA \str}   \prod_{A \in \caA}  \norm G^c_{A} \norm_{\gamzero}  \leq     C^{\ell^2}   \ell^{2\ell^2} \ep_n^{\ell^2}
\eaq
where  $\ell^{2\ell^2}$ originates as a crude bound for the number of partitions of $\ell^2$ elements (bounding the sum over $\caA \in \poly(I_{\tau'})$). 
Further, remark that  $E_{II}, E_{I}$ are translation invariant and use  the a priori bound on the supremum norm given by 
\beq
\sup_{x } \norm E_{II}(x) \normba  \e^{\ga \str x \str} \leq   \ell^{dn}  \int_{\bbX_n} \d x \norm E_{II}(x) \normba  \e^{ \ga \str x \str} =  \ell^{dn} \norm E_{II}\norm_{\ga}
\eeq
This bound originates from the definition of $\int_{\bbX_n} \d x $  in \eqref{def: sum over lattice}.    Using now the bound on   $\norm E_{II} \norm_{\tengam}$ derived above, we get 
\beq
\sup_{x}  \norm E_{II}(x)  \normba  \e^{\tengam \str x \str} \leq     \ell^{ dn} (C \ell^2\ep_n)^{\ell^2}   \leq  C \ep_0 \ell^{ -cn \ell^2\tilde \al }  \leq \ep_n
\eeq
where the second and third inequality follows from $\ep_n= \ep_0 \ell^{-\tilde \al n}$ (by Proposition \ref{prop: overview b behavior}). 

We now deal with $E_I$.  For any $\caA$, choose a $\tau$ such that $\tau \notin \supp \caA$ (by the definition of $E_I$, there is at least one) and let the product $\prod_{J}$ run over the discrete intervals $J$ in $I_{\tau'} \setminus \supp \caA \setminus \tau$. Then
\baq
\sup_x \norm E_I(x) \normba  \e^{\tengam \str x \str }& \leq &
  \sum_{ \caA \in \poly(I_{\tau'}),\supp \caA  \neq I_{\tau'}}  \prod_{J} \norm T^{\str J \str} \norm_{\tengam/\ell}    \prod_{A \in \caA} \norm G^{c}_{A} \norm_{\ga_0}    \left(  \sup_{x} \norm  T(x) \normba  \e^{\ga_0 \str x \str}  \right)        \\
  &  \leq & C \ep_n    \sup_{x}  \norm  T(x) \normba  \e^{\ga_0 \str x\str}   \leq    C \ep_n 
\eaq
The second inequality is obtained by $\prod_{J} \norm T^{\str J \str} \norm_{\tengam/\ell}  \leq C^{\str\supp \caA\str}$ and summing $\caA$ as in \eqref{eq: simple nonlinear bound}. The third inequality is 4) of Proposition \ref{ass:  properties of T}.
To get the first inequality, first apply point 2) of Lemma \ref{lem: property of norm} in all $s$-coordinates and then apply  point 3) of the same Lemma in the $x$-coordinates, choosing $\tau$ for the index `$i$'.
\end{proof}

\subsection{Analysis of $ T_{n+1}$} \label{sec: analysis of tprime}

We complete the  induction step for $T$ by establishing

\begin{lemma}\label{lem: flow of parameters for t}
Proposition \ref{ass:  properties of T} holds on scale $n+1$. 

\end{lemma}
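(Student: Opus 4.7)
The plan is to exploit the decomposition $T_{n+1} = \bsS_{\ell} T_n^{\ell^2} + E_n$ together with Lemma \ref{lem: bound on e}. Fourier-transforming and using the scaling $(\bsS_{\ell}K)^{\wedge}(p) = \hat K(p/\ell)$, I obtain
\beq
\hat T_{n+1}(p) = \hat T_n(p/\ell)^{\ell^2} + \hat E_n(p),
\eeq
and the exponential bound on $E_n(x)$ gives analyticity of $\hat E_n$ in the strip $|\Im p|\leq 10\ga_0$ with $\|\hat E_n(p)\|_{\banone}\leq C\ep_n$. Combined with the induction hypothesis on $\hat T_n$ in $|\Im p|\leq\ga_0$, this immediately yields the analyticity claim \eqref{eq: analytic bounds t in prop}. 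For the $\tau=0$ operator, the definition \eqref{0operators} makes $\hat T_{n+1}(0,p)=\hat T_n(0,p/\ell)$ tautological, so the corresponding bounds pass through trivially.

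Next I split $p$ into a diffusive and a gap region. In the gap region $|\Re p|\geq \bana_{n+1}$ I consider two subcases: if $|\Re p|>\ell\bana_n$ then $|\Re(p/\ell)|>\bana_n$ and \eqref{eq: largegap} at scale $n$ gives $\|\hat T_n(p/\ell)^{\ell^2}\|_{\banone}\leq \smallgap_n^{\ell^2}\ll\smallgap_{n+1}$; otherwise $|\Re(p/\ell)|\leq\bana_n$ so I apply the spectral decomposition $\hat T_n(p/\ell)^{\ell^2}=e^{\ell^2 f_n(p/\ell)}R_n(p/\ell) + ((1-R_n)\hat T_n)^{\ell^2}(p/\ell)$, and use \eqref{upper and lower bound for ef} plus the definition \eqref{banandef} to get $|e^{\ell^2 f_n(p/\ell)}|\leq C\ell^{-\tilde\al(n+1)/4}$. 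Since $\ep_n+\smallgap_n^{\ell^2}\ll\smallgap_{n+1}$, adding $\hat E_n$ preserves the bound \eqref{eq: largegap} at scale $n+1$.

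In the diffusion region $|\Re p|<\bana_{n+1}$ I run standard spectral perturbation theory (the paper's appendix on this) starting from the rank-one unperturbed operator $e^{\ell^2 f_n(p/\ell)}R_n(p/\ell)$, whose isolated eigenvalue is separated from $0$ by $|e^{\ell^2 f_n(p/\ell)}|\geq c\,\ell^{-3\tilde\al(n+1)/4}$, and perturbing by $V(p):=((1-R_n)\hat T_n)^{\ell^2}(p/\ell)+\hat E_n(p)$ of norm $\leq C\ep_n=C\ep_0\ell^{-\tilde\al n}$. For $\ep_0$ small enough the perturbation is much smaller than the gap, yielding a simple eigenvalue $e^{f_{n+1}(p)}$ with projector $R_{n+1}(p)$ and estimates
\beq
|e^{f_{n+1}(p)}-e^{\ell^2 f_n(p/\ell)}|\leq C\ep_n,\qquad \|R_{n+1}(p)-R_n(p/\ell)\|_{\banone}\leq C\ep_n/|e^{\ell^2 f_n(p/\ell)}|.
\eeq
Taylor-expanding $f_{n+1}$ at $p=0$ and setting $D_{n+1}:=-\tfrac{1}{2}\partial_p^2 f_{n+1}(0)$, the step-to-step bounds \eqref{eq: resulting steptostep} and the cubic remainder \eqref{eq: fclosetoparabola} follow from Cauchy estimates on the strip of analyticity combined with the explicit quadratic control at scale $n$. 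The identity $f_n(0)=0$ is inherited from the unitarity Ward identity $\langle 1_{\caS_0}|\hat T_n(0)=\langle 1_{\caS_0}|$, which forces $R_{n+1}(0)=|\mu_{T_{n+1}}\rangle\langle 1_{\caS_0}|$ with $\sum_{s\in\caS_0}\mu_{T_{n+1}}(s)=1$; positivity of the restriction of $\mu_{T_{n+1}}$ to $\caS_0$ comes from Perron--Frobenius applied to the stochastic kernel that $T_n$ induces on diagonal densities. In the equilibrium case, \eqref{eq: projector r explicit bound} is obtained by iterating the comparison from scale $0$, where the weak-coupling analysis gives the base estimate.

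The position-space decay \eqref{tpos} is the easy part: combine the $L^\infty$ bound \eqref{eq: linfinity bound for repeated t} on $\bsS_\ell T_n^{\ell^2}$ from Lemma \ref{lem: powers of t} with $\sup_x e^{10\gamma_0|x|}\|E_n(x)\|_{\banone}\leq C\ep_n$ from Lemma \ref{lem: bound on e}. The main technical obstacle is the diffusive region bookkeeping: the unperturbed spectral gap is only polynomially small in $\ell^{-n}$ (order $\ell^{-3\tilde\al(n+1)/4}$), while the perturbation is $\ep_n=\ep_0\ell^{-\tilde\al n}$, so ensuring the contraction hypothesis of perturbation theory is verified uniformly in $n$ requires choosing $\ep_0$ small relative to $\ell$-dependent constants, and the exponent $\tilde\al/4$ appearing in \eqref{eq: resulting steptostep} is exactly what survives after tracking these losses.
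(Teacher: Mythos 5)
Your route is the paper's route: the split $T_{n+1}=\bsS_\ell T_n^{\ell^2}+E_n$, the two momentum regions separated at $\bana_{n+1}$, spectral perturbation theory around the isolated eigenvalue $e^{\ell^2 f_n(p/\ell)}$ with the lower bound $c\,\ell^{-3\tilde\al(n+1)/4}$, and Cauchy estimates for \eqref{eq: resulting steptostep}; folding $((1-R_n)\hat T_n)^{\ell^2}$ into the perturbation rather than into $A_0$ is an immaterial variation since that term is $\caO(\largegap_n^{\ell^2})\ll\ep_n$. However, there is one genuine gap: you never argue that the \emph{linear} term of $f_{n+1}$ at $p=0$ vanishes. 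Writing $f_{n+1}(p)=\ell^2 f_n(p/\ell)+g(p)$ with $\sup_{|\Im p|\leq 2\ga_0}|g|\leq\sqrt{\ep_0}\,\ell^{-\tilde\al n/4}$ as in \eqref{eq: bound on jp}, a Cauchy estimate controls $\nabla g(0)$ only by $C\ga_0^{-1}\sqrt{\ep_0}\,\ell^{-\tilde\al n/4}$, which is small but nonzero; a nonzero drift term of that size is \emph{not} absorbed by the cubic remainder in \eqref{eq: fclosetoparabola} near $p=0$, so the induction step for \eqref{eq: fclosetoparabola} collapses without an exact cancellation. The paper supplies it from Assumption \ref{ass: symmetries}: the $O_{\bbZ^d}$-covariance of $T_{n+1}$ (Section \ref{sec: symmetries of the lattice}) gives $f_{n+1}(p)=f_{n+1}(O^{-1}p)$, hence no odd Taylor coefficients; together with $f_{n+1}(0)=0$ from trace preservation (which you do invoke), this is exactly what makes the Cauchy-estimate bookkeeping close. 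You need to add this symmetry step, or some substitute that forces $\nabla f_{n+1}(0)=0$.

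Two smaller points. First, item 1 of Proposition \ref{ass:  properties of T} is not "immediate": the naive bound $\norm\hat T_n(p/\ell)^{\ell^2}\normba\leq C^{\ell^2}$ does not reproduce an $n$-independent constant, so one has to bound $\hat T_{n+1}(p)$ through the newly established spectral data (or, as in the paper, through $\norm\hat T_{n+1}(p)-\hat T_n(p)\normba\leq C\ell^{-cn}$ summed over scales); the same summation is what makes the constants in \eqref{eq: hyp bound on r} and \eqref{eq: analytic bounds t in prop} uniform in $n$. Second, your justification of $\mu_{T_{n+1}}\geq 0$ on $\caS_0$ via ``Perron--Frobenius for the stochastic kernel induced on diagonal densities'' is not right as stated, since $T_{n+1}$ does not preserve diagonal densities; nonnegativity of the diagonal of $R_{n+1}(0)\rho$ should instead be traced back to (complete) positivity and trace preservation of the underlying reduced dynamics, e.g.\ by writing $R_{n+1}(0)$ as a limit of powers $\hat T_{n+1}(0)^m$ applied to a density matrix. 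You also leave the complementary bound \eqref{eq: smallgap} at scale $n+1$ unstated, though it follows from the estimates you already have. Your shortcut for item 4 (combining \eqref{eq: linfinity bound for repeated t} with the $L^\infty$ bound on $E_n$) is acceptable provided you note that the constant produced by Lemma \ref{lem: powers of t} for $m=\ell^2$ depends only on the (uniform) constants of items 1--3 and not on the item-4 constant itself, which is the uniformity issue the paper resolves by comparing with $\tilde T_{n+1}$ in \eqref{precise bound for Tm}.
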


\begin{proof}   Whenever confusion is excluded, we  abbreviate $T=T_n, R=R_n$.  From
Lemma  \ref{lem: bound on e} we get
\beq
 \hat T_{n+1}(p)= \hat T_n(p/{\ell})^{\ell^2}+\hat E_n(p), \qquad \text{with}\,\,   \sup_{\Im p \leq 2 \ga_0} \norm \hat E_n(p) \normba   \leq  C {\ep}_n.
    \label{eq: Tprimeand E}
\eeq
Let first $ |\Re p| \geq   \bana_{n+1}$ (recall the definition of $\bana_{n}$ in \eqref{banandef}).  To study the first term in (\ref{eq: Tprimeand E}) with  Proposition  \ref{ass:  properties of T}
we need to consider two cases:
\vskip 1mm
 \noindent 1. $\str\Re p\str \geq \ell \ \bana_n$. By (\ref{eq: largegap})
\baq
 \norm    \hat T(p/{\ell})^{\ell^2}   \normba  \leq   \largegapn_n^{\ell^2} < \frac{1}{2}\largegap_{n+1}.
\eaq
Now use  $C\epsilon_n=C\epsilon_0\ell^{-\tilde\al n}\leq \frac{1}{2}\largegap_{n+1}$ by taking $\epsilon_0\leq c(\ell)$ to get
(\ref{eq: largegap}) for $n+1$.
\vskip 1mm
\noindent 2. $  \bana_{n+1}   \leq | \Re p| \leq \ell \ \bana_n$.
By Proposition \ref{ass:  properties of T} 2) we have in this region
\beq\label{Tpell}
  \hat T(p/{\ell})^{\ell^2} =
R(p/{\ell})\e^{\ell^2 f_n( \frac{p}{\ell})}   + (1- R(p/{\ell}))  \hat T(p/{\ell})^{\ell^2}
\eeq
To control the first term on the RHS, we use  \eqref{eq: hyp bound on r}, \eqref{banandef} and \eqref{upper and lower bound for ef} to get
\beq
\norm R(p/{\ell}) \e^{\ell^2 f_n(p/\ell)  }  \normba  \leq    \e^{- \hf  D_n \bana_{n+1}^2} \e^{  C \ga_0^2 }\leq C\ell^{-
 \frac{1}{4}\tilde\al(n+1)}
\leq \frac{1}{3}\largegap_{n+1} \label{eq: deriving upper bound on f2}.
\eeq
The second term on the RHS of (\ref{Tpell}) is bounded by \eqref{eq: smallgap},
\beq
\norm  (1- R(p/{\ell}))  \hat T(p/{\ell})^{\ell^2} \normba  \leq
\largegapn_n^{\ell^2}\leq
 \frac{1}{3}\largegap_{n+1}
 \label{eq: deriving upper bound on 1-RT}
\eeq
Again, we may assume 
$C\epsilon_n\leq \frac{1}{3}\largegap_{n+1}$  and so (\ref{eq: largegap}) holds for $n+1$. 

\vskip 2mm

\noindent Let then  $ |\Re p| \leq   \bana_{n+1}$.
  We apply spectral perturbation theory, see Lemma \ref{lem: perturbation banach}.
Referring to the notation in that lemma, we  have
$A= \hat T_{n+1}(p)$ and $A_0 = \hat T_{n}(p/{\ell})^{\ell^2} $, $a_0=  \e^{\ell^2 f_n(p/\ell)  } $, $P_0=R_n(p/\ell)$
and $A_1= E_n(p)$. 

Let us first consider $p$ in a wider strip $|\Re p|\leq\bana_{n+1}$ and $\str\Im p\str \leq 2\gamma_0$. In this region
by    \eqref{upper and lower bound for ef},
\beq\label{a0bound}
|a_0| =
 |\e^{\ell^2 f_n(p/\ell)  }|\geq \e^{- \frac{3}{2}D_n\bana_{n+1}^2-C\ga^2_0} >c\ell^{-\frac{3\tilde\al}{4} (n+1)}
 \eeq
whereas from (\ref{eq: smallgap})
\beq
\norm A_0 - a_0 P_0\normba   \leq  \largegapn_n^{\ell^2}=(\hf)^{\ell^2} \ell^{-\ell^2\frac{\tilde\al}{8} n}.
 \eeq
This is no bigger than $|a_0|/2$ say and so \eqref{eq: isolated and bound}, the first condition of the  Lemma \ref{lem: perturbation banach}, holds. 
Because of $\|A_1\|\leq C\ep_n$  and  \eqref{eq: hyp bound on r},   the condition (\ref{eq: arbirary bound spectral})
holds if
$$
C\ep_n\leq |a_0| 
$$
which is true for small $\ep_0$. Lemma \ref{lem: perturbation banach} then implies that the isolated eigenvalue persists and 
\baq
\left\str \e^{f_{n+1}(p)} - \e^{\ell^2 f_n(\frac{p}{\ell})}  \right\str  &\leq&  C  {\ep}_n, \label{eq: error f next scale} \\[2mm]
\norm  R_{n+1}(p) -  R_n(p/\ell)  \normba &\leq &   C\ep_n(|a_0|-\norm A_0 - a_0 P_0\normba-C\epsilon_n)^{-1}
  \leq C \ep_0  \ell^{-n \tilde \al}\ell^{\frac{3}{4}\tilde\al (n+1)} \label{eq: error r next scale}
\eaq
so the first claim in (\ref{eq: resulting steptostep}) follows. Furthermore, 
\beq
\norm  (1- R_{n+1}(p))  \hat T_{n+1}(p)\normba  \leq 
C\norm  R_{n+1}(p) -  R_n(p/\ell)  \normba +
C\largegapn_n^{\ell^2} +C\ep_n\leq \largegapn_{n+1}.
\eeq
It remains to iterate  \eqref{eq: fclosetoparabola}.  
The analyticity of $f_{n+1}$ follows immediately.  To derive the symmetry properties of $f_{n+1}$, consider a lattice symmetry $O$. We start from the symmetry property (cfr.\ Section \ref{sec: symmetries})
\beq T_{n+1} =     \Adjoint(V^{-1}_O))  T_{n+1} \Adjoint(V_O)
\eeq
 By the result in Section \ref{sec: symmetries of the lattice}, if follows then that
 \beq
  \hat T_{n+1}(O^{-1} p)  =        I^{-1}_{p,O}       \hat T_{n+1}( p) I^{}_{p,O}
 \eeq
 which implies that $f_{n+1}(p)=f_{n+1}(O^{-1} p) $, and hence the Taylor expansion of $f_{n+1}(p)$ around $p=0$ contains no odd powers.

  Furthermore, by  \eqref{a0bound} and \eqref{eq: error f next scale}, we have $f_{n+1}(p) =   \ell^2 f_n({p}/{\ell}) +   g(p)$
  with
  \beq
  \sup_{|\Im p|\leq  2 \ga_0} \str g(p) \str   \leq     C  \ep_n  \ell^{ \frac{3}{4} \tilde \al (n+1) } \leq \sqrt{\ep_0}   \ell^{- \frac{1}{4} \tilde \al n  }    \label{eq: bound on jp}
\eeq
  The diffusion constant $ D_{n+1}$ is defined as the quadratic term in the Taylor expansion: $ D_{n+1} \delta_{i,j}= \hf \nabla_{i}\nabla_{j} f_{n+1}(p=0) $.  By Cauchy,
\beq
\str D_{n+1} -D_n  \str  \leq   2 \ga_0^{-2} \sup_{\str p \str \leq 2\ga_0} \str g(p) \str
\eeq
and the second claim of  (\ref{eq: resulting steptostep}) follows by the bound \eqref{eq: bound on jp}.
Furthermore, for $|\Im p|\leq \ga_0$ we have
\baq
\left\str    f_{n+1}(p) + D_{n+1}p^2  \right\str   &\leq&      |    \ell^2 f_n({p}/{\ell}) + D_n p^2  | + | g(p)-\hf \sum_{i,j} p_ip_j ({\nabla_i\nabla_j} g)(0) | \\[2mm]
 &\leq&    (\ell^{-1} \ell^{-\frac{\tilde\al}{4}n}       + 6 \ga_0^{-3}\sup_{|\Im p' |\leq  2 \ga_0}  \str g(p') \str ) \left\str  p  \right\str^3 
\eaq
which proves \eqref{eq: fclosetoparabola} at $n+1$ because of \eqref{eq: bound on jp}.

Eq. \eqref{eq: projector r explicit} follows since 
 $T_n$ necessarily preserves the trace of density matrices and 
$\Tr \rho = \sum_{s \in \caS_{0}} \hat\rho(p=0,s)$. The bound
\eqref{eq: projector r explicit bound} iterates due to \eqref{eq: error r next scale}
and will be established for $n=0$ in Section \ref{sec: t at n equal 0}.

\vskip 2mm

\noindent We complete the proof of items 2) and 3) of Proposition \ref{ass:  properties of T} by noting that the constants $C$ in the bounds (\ref{eq: hyp bound on r}, \ref{eq: projector r explicit bound}) may be chosen uniform in $n$ by invoking \eqref{eq: resulting steptostep} for all $n' \leq n$.
 To get item 1) of Proposition \ref{ass:  properties of T}, note similarly that the proofs of 2) and 3) give for
all $p$ with $|\Im p|\leq\ga_0$
$$
\norm \hat T_{n+1}(p)-\hat T_n(p) \normba \leq C\ell^{-cn}
$$
with $c>0$. Since this holds with $n$ replaced by $n'$ for all $n' \leq n$, $\hat T_{n+1}(p)$ can be bounded by an $n$-independent constant.  The bound on $\hat T_{n+1}(0,p)$ propagates trivially due to $\hat T_{n+1}(0,p)=\hat T_{n}(0,p/\ell)$.
\vskip 2mm

\noindent  It then remains to prove item 4). For this  we use the $L^{\infty}$-bound on $E_n(x)$ (Lemma \ref{lem: bound on e}),  and the bound
\eqref{precise bound for Tm}. 
Together they show 
$$
 \e^{\tengam \str x \str }\norm T_{n+1} (x)-\tilde T_{n+1}(x)  \norm_{\banone}    \leq C\ell^{-cn}.
$$
Consider now  $\tilde T_{n+1}(x)$ and $\tilde p$ defined in \eqref{precise bound for Tm1}. From \eqref{eq: fclosetoparabola} and \eqref{eq: bound on jp}
we get $|\Re\ell^2f_n(\tilde p/\ell)+D_np^2|\leq C$ on the support of $\chi_n$. Hence 
$$
 \e^{\tengam \str x \str }\norm \tilde T_{n+1}(x)  \norm_{\banone}    \leq  CD_n^{-d/2}\sup_{|\Re p|\leq \bana_n, |\Im p|\leq \ga_0} \norm R_n(p) \norm_{\banone} .
$$
These estimates show 4) holds uniformly in $n$.

This finishes the proof of the induction step $n\to n+1$ for Proposition \ref{ass:  properties of T} on the condition
of the bounds of Proposition \ref{prop: overview b behavior} at level $n$.
\end{proof}

\section{Flow of $G^c_A$: Linear part} \label{sec: linear rg flow}

We will now analyze the recursion relation for the correlation functions $G^c_{A}$ given in 
\eqref{eq: from n to npluseen}. 
In the present section and in Section \ref{sec: nonlinear rg flow}, we drop the subscript $n$ on operators, writing e.g.\ $G^{c}_{A} =G^{c}_{n,A}, T=T_n $ and we denote operators on scale $n+1$ by a prime, e.g.\   $G^{c'}_{A}=G^{c}_{n+1,A}$

We  split the recursion relation into a  \emph{linear} and a \emph{nonlinear} part (although these adjectives do not fit literally).
Consider the terms in the sum in \eqref{eq: from n to npluseen} with $ \caA$ a collection that consists of just one set, namely $ \caA=\{A\}$. 
We use the notation $A \to A'$ to indicate that $A$ contributes to $G^{c'}_{A'}$ in this sense.  In other words,
 \beq
 A \to A' \qquad  \Leftrightarrow    \qquad  A \subset I_{A'} \,\, \text{and}\,\, 
 \forall \tau' \in A': \left(A \cap I_{\tau'} \right) \neq \emptyset  . \label{eq: def contributing set}
 \eeq
The \emph{linear} RG flow is the contribution of such $ \caA$, with the additional restriction that $\str A' \str= \str A \str$: 
\beq        \label{eq: def linear rg flow}
G^{c'}_{A',\lin} : =  \, \, \,   \sum_{A: A \to A', \str A' \str= \str A \str } \, \,    \bsS_{\ell}  \caT_{A'} \left[   G^{c}_{A} \otimes  \mathop{\otimes}\limits_{\tau \in I_{A'} \setminus A} T({\tau})  \right] 
\eeq
The aim of this section is to state good bounds on  $\norm G^{c'}_{A',\lin} \norm_{10\ga_0}$, such that the induction hypothesis Proposition \ref{prop: overview b behavior} can be carried from scale to scale if  $G^{c}_{A}$ is replaced by $G^{c}_{A,\lin}$. 
We call the  remaining terms in the sum \eqref{eq: from n to npluseen} nonlinear and we treat them in Section \ref{sec: nonlinear rg flow}.    By inspecting Proposition \ref{prop: overview b behavior}, it is easy to understand the qualitative difference between the linear and nonlinear contributions. Recall that in Proposition \ref{prop: overview b behavior} we claim a  bound proportional to $\ep_n^{|A|}$. The nonlinear contributions to  $G^{c'}_{A'}$ carry at least
 $\str A' \str+1$ powers of  $\ep_n$ and this gives us some leeway. However, the linear contribution has just $\str A' \str$ factors of $\ep_n$ and this forces us to do a careful analysis to be able to get a bound
 with $\ep_{n+1}^{\str A'\str}$.  There are two  mechanisms at work that help us here:
\begin{itemize} \item \it{Rescaling of time}: \normalfont   Since the microscopic times described by the macrotimes $\tau$ on scale $n+1$ are $\ell^2$ larger than those on scale $n$, the blow-up factor $\dist(A)^{\al}$ for $A \to A'$ is roughly speaking $\ell^{2 \al (\str A' \str-1)}$ times larger than $\dist(A')^{\al}$. 
\item    \it{Ward Identities}: \normalfont  The Ward identities from unitarity and (in the case $\be_1=\be_2$) reversibility allow to improve our estimate on each term in the sum on the RHS of \eqref{eq: def linear rg flow}.   The gain from each of the Ward identities is roughly $\ell^{-1}$. (this will be explained later) 
\end{itemize}
The main force working against us  is
\begin{itemize} \item \it{Entropy factors}: \normalfont 
 For each $A'$, there are  $\ell^{2 \str A'\str}$ sets $A$ such that $A \to A', \str A\str=  \str A'\str$. However, due to the summability over $A$ with $\min A$ fixed in Proposition \ref{prop: overview b behavior}, only one factor $\ell^2$ (corresponding to $\min A$), shows up in our estimates. 
\end{itemize}

The gain from the  Ward Identities is explained in Section \ref{sec: smoothing}.  Then, in Sections \ref{sec: preliminaries linear rg} and \ref{sec: proof of prop linear rg noneq}, we prove Propositions \ref{prop: linear rg noneq}, \ref{prop: linear rg initial}, \ref{prop: linear rg eq}, which we present now. 

Proposition \ref{prop: linear rg noneq}  provides a bound on $G^{c'}_{A',\lin} $ which is valid regardless of whether $\initialresfinite$ is an equilibrium state or not.  For this reason we call this bound 'non-equilibrium'.  
 In what follows, we denote by $\lfloor x \rfloor$ the largest integer smaller than $x$.

\begin{proposition} [Non-equilibrium linear RG]\label{prop: linear rg noneq} 
\baq  
&& \sum_{A' \subset \bbN: \str A' \str =k,   \min A'=1 } \dist(A')^{\al}\norm G^{c'}_{A',\lin}\norm_{\tengam}  \leq   C^{k} \ep_n^{k}   \ell^{1-4\al}     \ell^{-2\al \lfloor \frac{k -3}{2}\rfloor}(\log^C\ell), \qquad  k >2     \label{eq: bound on lin more then two}  \\[2mm] 
 &&  \sum_{A' \subset \bbN : \str A' \str = 2 ,\min A'=1} \dist(A')^{\al}\norm G^{c'}_{A',\lin}\norm_{\tengam}  \leq    C   \ell^{1-2\alpha}   \ep_n^{2}  (\log^C\ell)   \label{eq: bound on lin two}
\eaq
\end{proposition}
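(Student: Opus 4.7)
My plan is to bound each term
\[
B_A := \bsS_\ell\,\caT_{A'}\!\Bigl[G^c_{n,A}\otimes \bigotimes_{\tau\in I_{A'}\setminus A} T_n(\tau)\Bigr],\qquad A\to A',\;|A|=|A'|,
\]
in the sum defining $G^{c'}_{A',\mathrm{lin}}$ and then rearrange the double sum over $A'$ and $A$. By Lemma~\ref{lem: main property of norms}, the rescaling is absorbed through $\|\bsS_\ell K\|_{10\ga_0}=\|K\|_{10\ga_0/\ell}$, and the $T_n$-factors flanking each $G^c_{n,A}$-slot inside an interval $I_{\tau'_j}$ are controlled uniformly by Lemma~\ref{lem: powers of t}. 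At the slot $\tau_k=\max A$ I invoke the unitarity Ward identity of Section~\ref{sec: ward identity from unitarity} together with the spectral decomposition of Proposition~\ref{ass:  properties of T}: writing $\hat T_n(p)^{m_k^+}=R_n(p)e^{m_k^+ f_n(p)}+\bigl(1-R_n(p)\bigr)\hat T_n(p)^{m_k^+}$ and using $R_n(0)=|\mu_{T_n}\rangle\langle 1_{\caS_0}|$ together with $f_n(0)=0$, the leading contribution at $p_{\tau_k}=0$ is killed by the Ward relation $R_n(0)\hat G^c_{n,A}(p_{\tau_k}=0,\ldots)=0$. What remains is either (a) a Taylor remainder $R_n(p)e^{m_k^+f_n(p)}-R_n(0)$ that is $O(|p_{\tau_k}|)$ near zero by \eqref{eq: fclosetoparabola} and \eqref{upper and lower bound for ef}, or (b) a gap term controlled by $\largegap_n^{m_k^+}$; since $m_k^+\simeq\ell^2$ the Gaussian damping $e^{-m_k^+D_n(\Re p)^2/2}$ concentrates $p$ at scale $\ell^{-1}$, so the $O(|p|)$ improvement becomes a prefactor of $\ell^{-1}$. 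Combined with the $\al$-decay of $G^c_{n,A}$ in the time gap $\tau_k-\tau_{k-1}$, this implements the smoothing gain of Section~\ref{sec: smoothing}, numerically $\ell^{-(1+2\al)}$.

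Next I exchange the order of summation. Since $A\mapsto A'$ is a bijection on $\{A:A\to A',\,|A|=|A'|\}$, fixing $\min A'=1$ is the same as $\min A\in I_1=\{1,\dots,\ell^2\}$; time-translation invariance reduces $\sum_{A:\min A\in I_1}$ to $\ell^2\sum_{A:\min A=1}$, which the induction hypothesis of Proposition~\ref{prop: overview b behavior}, applied with the weight $\dist(A)^\al$, bounds by $\ell^2\ep_n^k$. To pass from $\dist(A)^\al$ to the target $\dist(A')^\al$ I use the subadditive comparison
\[
(1+\tau'_{j+1}-\tau'_j)^\al \leq C\ell^{-2\al}\bigl(\ell^{2\al}+(1+\tau_{i_{j+1}}-\tau_{i_j})^\al\bigr),
\]
and expand the $(k-1)$-fold product on $j$ into $2^{k-1}$ subsums indexed by $S\subset\{1,\dots,k-1\}$: for $j\in S$ the remaining factor is absorbed by the induction weight, while for $j\notin S$ the gap is effectively forced to be small, so the intervening power of $T_n$ sits deep in the diffusive regime $|\Re p|\lesssim\bana_n$, where the same spectral mechanism (now at an intermediate slot, exploiting local cancellations produced by the structure of $G^c_{n,A}$ under the spectral projector of $T_n$) yields an additional gain of order $\ell^{-2\al}$. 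A combinatorial pairing of these short-gap positions with adjacent slots delivers the extra $\ell^{-2\al\lfloor(k-3)/2\rfloor}$ factor. Tallying the naive $\ell^2$, the max-$A$ gain $\ell^{-(1+2\al)}$, the rescaling saving $\ell^{-2\al(k-1)}$, and the intermediate pair-gains yields \eqref{eq: bound on lin more then two}; the $k=2$ bound \eqref{eq: bound on lin two} follows from the same mechanism with only the max-$A$ gain available.

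The main obstacle is extracting the intermediate-slot gains uniformly in $k$ with the exact combinatorics $\lfloor(k-3)/2\rfloor$. The $\max A$ gain is rigidly enforced by the unitarity Ward identity and reads off directly from the spectral structure of $T_n$; intermediate gains, by contrast, are only available in short-gap configurations and demand a more delicate spectral argument plus a pairing rule that allocates each gain to two adjacent slots (whence the floor by two). Secondary difficulties include keeping track of the polylog factor $\log^C\ell$ produced by the momentum integrals over the diffusive region $|\Re p|\lesssim\bana_n$, and absorbing the $2^{k-1}$ combinatorial expansion, the constants from the distance comparison, and the $O(k)$ choice of discrete $T$-intervals into a universal factor $C^k$.
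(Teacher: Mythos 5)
Your high-level mechanism at the slot $\tau_k=\max A$ (unitarity Ward identity plus the diffusive spectral structure of $\hat T_n(p)^{m}$) is the same one the paper uses, but two points in your execution constitute genuine gaps. First, the gain at $\max A$ is not a uniform prefactor $\ell^{-1}$: $m_+=\max I_{A'}-\max A$ ranges over $0,\dots,\ell^2-1$ as $\tau_k$ varies inside $I_{\tau'_k}$, and when $\tau_k$ sits near the right end of its interval there is no Gaussian concentration at scale $\ell^{-1}$ at all. The correct output of the Ward identity is the kernel bound of Lemma \ref{lem: contraction from unitarity}, a factor $(1+m_+)^{-1/2}\log(1+m_+)$ that must then be \emph{summed} over the position of $\max A$; it is this summation, $\sum_{m_+<\ell^2}(1+m_+)^{-1/2}\sim\ell$, that converts the naive entropy $\ell^2$ into $\ell$, and in the adjacent-times configurations it has to be balanced against the $\al$-decay of $G^c_{n,A}$ rather than multiplied by it (your ``numerically $\ell^{-(1+2\al)}$'' conflates the Ward gain with the change-of-variables gain).

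Second, and more seriously, your source of the factor $\ell^{-2\al\lfloor(k-3)/2\rfloor}$ does not exist: in the non-equilibrium setting there is no Ward identity, and hence no ``spectral mechanism / local cancellation'', available at \emph{intermediate} slots — unitarity acts only at $\tau=\max A$, and the reversibility identity requires $\be_1=\be_2$ and acts only at the $\tau=0$ end. Consequently your claimed ``rescaling saving $\ell^{-2\al(k-1)}$'' is false as stated (it would give $\ell^{1-6\al}$ for $k=3$, stronger than \eqref{eq: bound on lin more then two}): when consecutive macroscopic times are adjacent, the microscopic gap can be of order one and no factor is gained from that gap. The paper obtains the floor-type exponent purely deterministically, via the change-of-variables inequality \eqref{eq: inequality change of variables} which yields $\ell^{-2\al}$ only from every \emph{second} gap, and the residual worst case — all times adjacent, in particular $|A'|=3$ with $\tau'_1,\tau'_2,\tau'_3$ consecutive — is handled by a dedicated dyadic/multiscale decomposition of the double sum balancing $(1+m_+)^{-1/2}$ against $(1+x)^{-\al}(2\ell^2-x-y)^{-\al}$ and the normalization of $h$, producing $\ell^{1-4\al}$; the case $k>3$ then reduces to this plus one $\ell^{-2\al}$ per two of the earlier sums. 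Your proposal supplies neither a valid replacement for the intermediate gains nor the multiscale analysis of the all-adjacent configuration, so as written the bound \eqref{eq: bound on lin more then two} is not established.
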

Note that the bound in the case $\str A' \str=2$ is only useful when $\al >1/2$, since for $\al \leq 1/2$ the power of $\ell$ on the RHS becomes nonnegative. 
The next two propositions are only meaningful in the equilibrium case $\be_1=\be_2=\be$.  Proposition \ref{prop: linear rg initial} deals with correlation functions $G_{n,A}^c$ with $A \ni 0$. 

\begin{proposition} [Initial time linear RG]\label{prop: linear rg initial}
Assume $\be_1=\be_2=\be$, then 
\begin{align} \label{eq: bound on lin initial time}
&  \sum_{A' \subset \bbN_0: \str A' \str=k,   \min A' =0} \dist(A')^{\al}\norm G^{c'}_{A',\lin} \norm_{\tengam}   \leq   C^{k} \ep_n^{k} \ep^{}_{\ini,n}      \ell^{-2\al \lfloor \frac{k-1}{2}\rfloor} (\log^C\ell), \qquad k>2     \\[2mm]
&  \sum_{A' \subset \bbN_0 : \str A' \str=2,   \min A' =0}  \dist(A')^{\al}  \norm G^{c'}_{A',\lin}\norm_{\tengam}    \leq   C \ep_n^{2} \ep^{}_{\ini,n} \ell^{-\min(1,2\al)} (\log^C\ell)
\end{align}
\end{proposition}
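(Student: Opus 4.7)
The argument will parallel that of Proposition \ref{prop: linear rg noneq}, with the essential new element being that the equilibrium Ward identity \eqref{eq: ward identity with L} replaces the unitarity identity as the source of contraction at the initial end $\tau=0$. Because $\min A'=0$ and $I_0=\{0\}$, the condition $A\to A'$ with $|A|=|A'|$ forces $0\in A$; every contributing $A$ is itself a boundary set, and the induction hypothesis \eqref{eq: induction assumption initial} applies directly and supplies the $\ep_{\ini,n}$ factor on the right-hand side.

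I will start from \eqref{eq: def linear rg flow}, apply Lemma \ref{lem: main property of norms} together with \eqref{eq: scaling property of norms}, and bound the $T$-factors via Lemma \ref{lem: powers of t}. Summing over $A'$ of fixed size $k$ with $\min A'=0$, the sum over $A$ ranges over configurations $\tau_j\in I_{\tau'_j}$ for $j\ge 2$, giving $\ell^{2(k-1)}$ choices; the weight $\dist(A)^{\al}\sim \ell^{2(k-1)\al}\dist(A')^{\al}$ contained in \eqref{eq: induction assumption initial} compensates exactly this entropy. The resulting crude estimate is of order $C^k\ep_{\ini,n}\ep_n^k$, without any power of $\ell$. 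The gain claimed in \eqref{eq: bound on lin initial time} must therefore come from finer considerations.

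The contraction is produced by two Ward mechanisms acting at the two ends of $A$. At the maximal end $\tau=\max A$, the unitarity identity \eqref{eq: unitarity ward in terms of kernels} projects the adjacent $T$-block to $(1-R_n)T^{|J|}$, contributing $\largegapn_n^{|J|}$ from \eqref{eq: smallgap}. At the initial end $\tau=0$, the equilibrium identity \eqref{eq: ward identity with L} rewrites the action of the boundary cumulant against $\nu^{\be}$ in terms of the $L_{\tau}$-cumulants of \eqref{Ldefi}, which either carry one extra vertex (supplying an additional power of $\ep_n$) or have a time-shifted support (supplying a $(\dist)^{-\al}$ factor from the extra time separation). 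Combining these with the diffusive averaging of $T$ over intervals of length $\ell^2$ (Lemma \ref{lem: powers of t}) produces the $\ell^{-\min(1,2\al)}$ factor in the $k=2$ case: for $\al\leq 1/2$ the $\ell^{-2\al}$ arises from the shift-induced time separation, while for $\al>1/2$ the gain saturates at $\ell^{-1}$ coming from the diffusive $T$-convolution inside $I_{\tau'}$. For $k>2$ the same mechanism iterates along the chain in consecutive pairs, and the floor $\lfloor (k-1)/2\rfloor$ in the exponent of \eqref{eq: bound on lin initial time} counts the maximal number of pairs that can be simultaneously improved.

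The main obstacle will be the clean implementation of \eqref{eq: ward identity with L} inside $\caT_{A'}$ while preserving the norm $\norm\cdot\norm_{\tengam}$: the identity replaces a boundary cumulant by objects with either an extra vertex or a shifted support, and one must verify that the induction hypothesis \eqref{eq: induction assumption initial} still applies to each such object (in particular that the $\min A=0$ condition is preserved or can be restored after reindexing) and that the $\dist$-factor behaves correctly under the time shift. A further subtlety is that the extra $T$-block brought in by the second term on the RHS of \eqref{Ldefi} must be absorbed against the neighbouring $T^{|J|}$-blocks without upsetting the power counting; this is where the $\ep_{\ini,n}$ factor enters multiplicatively rather than turning into $\ep_n\cdot\ep_{\ini,n}$. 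The logarithmic corrections $(\log\ell)^C$ arise from the usual bookkeeping when summing $(1+t)^{-\al}$-type weights against $I_{\tau'}$ of length $\ell^2$, in particular at the threshold values of $\al$.
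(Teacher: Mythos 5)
There is a genuine gap: the mechanism you propose at the initial end is not the one that works here, and as described it would not go through. The equilibrium Ward identity \eqref{eq: ward identity with L} is a statement about the \emph{bulk} two-point function $G^{c}_{\{1,\tau\}}$ tested against $\nu^{\be}$, expressed via the boundary objects in \eqref{Ldefi}; it is exploited (through Lemma \ref{lem: contraction from reversibility}) only to gain on bulk cumulants in Proposition \ref{prop: linear rg eq}. It does not provide a contraction for the boundary cumulants themselves, and there is no structural room to use it at $\tau=0$ in the present proposition: since $I_0=\{0\}$ is not rescaled, there is no $T^{m_-}$-block to the left of the initial vertex inside $\caT_{A'}$, hence no $m_-^{-1/2}$-type averaging to convert such an identity into smallness. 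Your worries about implementing \eqref{eq: ward identity with L} inside $\caT_{A'}$ and absorbing the extra $T$-block from \eqref{Ldefi} are symptoms of this mismatch. Your entropy bookkeeping is also off: the $\ell^{2(k-1)}$ choices of $\tau_j\in I_{\tau'_j}$ are not compensated by $\dist(A)^{\al}\sim\ell^{2(k-1)\al}\dist(A')^{\al}$ (that would only balance at $\al=1$); in the paper's scheme the entropy is absorbed by the summability \eqref{eq: normalization of h} of the normalized weight $h(A)$ extracted from the induction hypothesis, and only the sum over $\min A$ can cost an $\ell^{2}$.

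The actual proof is much simpler and uses only the unitarity Ward identity at $\max A$ (already built into \eqref{eq: sum linear rg to perform}). The whole point of $\min A'=0$ is that $\tau_1=0$ is \emph{pinned}: the macroscopic time $0$ is not rescaled, so the entropy sum over $\tau_1$ that forces the weaker exponent $\lfloor(k-3)/2\rfloor$ and the $\ell^{1-4\al}$ in Proposition \ref{prop: linear rg noneq} is absent. For $k>2$ one applies the change-of-variables bound \eqref{eq: inequality change of variables} in full, giving $\ell^{-2\al\lfloor\frac{k-1}{2}\rfloor}$, and closes the sum using summability of $h$ with $\tau_1=0$ fixed; the factor $\ep_{\ini,n}$ comes, as you correctly noted, from \eqref{eq: induction assumption initial} since $0\in A$ is forced. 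For $k=2$ one splits the $\tau_2$-sum at $\ell^2/2$: for $\tau_2\geq\ell^2/2$ the change of variables yields $\ell^{-2\al}$, while for $\tau_2<\ell^2/2$ the unitarity factor $(1+\max I_{\tau'_2}-\tau_2)^{-1/2}\leq C\ell^{-1}$ does the job, giving $\ell^{-\min(1,2\al)}$ up to logarithms. So the parts of your heuristic attributing $\ell^{-1}$ to the diffusive averaging and $\ell^{-2\al}$ to time separation are in the right spirit, but the separation comes from the rescaling of bulk times relative to the fixed time $0$, not from any shift induced by \eqref{Ldefi}, and no second Ward identity is needed or usable here.
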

The improved bound in Proposition \ref{prop: linear rg initial} (compared to Proposition \ref{prop: linear rg noneq}) is due to the fact that the macroscopic time $0$ does not get rescaled, and hence there is essentially no entropy factor from sets $A$ containing $0$.  

Finally, we state a bound that applies to bulk sets $A$, but that is better than Proposition \ref{prop: linear rg noneq} since it allows to treat the $\str A' \str=2 $ correlation function for $\al < 1/2$.

\begin{proposition} [Equilibrium linear RG]\label{prop: linear rg eq} 
Assume  $\be_1=\be_2=\be$ and that $ 0 < \al  \leq 1/2$. Then 
\baq  \label{eq: bound on lin equilibrium}
&& \sum_{A' \subset \bbN: \str A' \str =2,   \min A' =1} \dist(A')^{\al}\norm G^{c'}_{A',\lin}\norm_{\tengam}  \leq   C \ep_n^{2}   \ell^{-2\al}    (\log^C\ell)
\eaq
\end{proposition}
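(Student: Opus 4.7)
The plan is to run essentially the same scheme that yields Proposition \ref{prop: linear rg noneq} for $\str A'\str=2$, but to squeeze an additional factor of $\ell^{-1}$ out of the Gibbsian Ward identity \eqref{eq: ward identity with L}. The non-equilibrium bound for $\str A'\str=2$ is $C\ep_n^2\,\ell^{1-2\alpha}(\log^C\ell)$, and we need to improve it by one power of $\ell$ to reach $C\ep_n^2\,\ell^{-2\alpha}(\log^C\ell)$. This extra power is precisely the kind of bonus the Gibbsian Ward identity supplies, exactly paralleling how the unitarity Ward identity \eqref{eq: unitarity ward in terms of kernels} supplies the first $\ell^{-1}$ in the non-equilibrium estimate.

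Concretely, for $A'=\{1,\tau'\}$ with $\tau'\geq 2$ and $A=\{\tau_1,\tau_2\}$, $\tau_1\in I_1$, $\tau_2\in I_{\tau'}$, Fourier-transform in the particle positions the contribution
\beq
\bsS_\ell\bigl[(\hat T^{\ell^2-\tau_1}(p_1)\otimes\hat T^{\tau'\ell^2-\tau_2}(p_{\tau'}))\,\hat G^c_A(p_1,p_{\tau'})\,(\hat T^{\tau_1-1}(p_1)\otimes\hat T^{\tau_2-(\tau'-1)\ell^2-1}(p_{\tau'}))\bigr]
\eeq
and spectrally decompose each outer $\hat T^m$ block via Proposition \ref{ass:  properties of T} as $R_n(p)e^{mf_n(p)}+((1-R_n(p))\hat T_n(p))^m$. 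The complementary parts are controlled by $\largegapn_n^m$ from the spectral gap \eqref{eq: smallgap}, while the boundary ranges where $m=O(1)$ are handled by the direct bound $\|\hat T_n\|_\banone\leq C$ and contribute at worst the factor $\log^C\ell$ via dyadic summation. For the leading piece applied to the earliest input of $G^c_A$'s $\tau_1$-slot, split
\beq
R_n(p_1/\ell)=P^\beta+[R_n(0)-P^\beta]+[R_n(p_1/\ell)-R_n(0)],\qquad P^\beta=|\mu^\beta\rangle\langle 1_{\caS_0}|.
\eeq
The middle term is $O(\sqrt{\ep_0})$ by \eqref{eq: projector r explicit bound} and is absorbed; the last term is $O(p_1/\ell)$ by analyticity on the strip $\str\Im p\str\leq\gamma_0$ (Cauchy estimate), yielding an $\ell^{-1}$ factor after $p_1$-integration, exactly as in the non-equilibrium proof. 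The $P^\beta$ piece, acting with $|\mu^\beta\rangle\langle 1_{\caS_0}|$ on the earliest input, is precisely the $\int\d z_a\,\nu^\beta(z_a)(\cdot)$ projection on the LHS of the Gibbsian Ward identity \eqref{eq: ward identity with L}.

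Applying \eqref{eq: ward identity with L} then replaces this term with the action of $\hat L_{\tau_2-\tau_1}$ on $\nu^\reff$, where $L_\tau$ is the combination \eqref{Ldefi} of $G^c_{\{0,1,\tau\}}$, $T\cdot G^c_{\{0,\tau\}}$, and $G^c_{\{0,\tau-1\}}$ at scale $n$. Every term in $L_\tau$ carries $0\in A$, so the inductive bound \eqref{eq: induction assumption initial} applies, and the associated $\ep^{}_{\ini,n}$ factor is more than enough to compensate the loss from shifting the earliest time from $\tau_1\in I_1$ to $0$ (since one no longer sums over $\tau_1$). The bound then closes by combining the unitarity Ward identity gain on the latest-output side of the $\tau_2$-slot (one factor $\ell^{-1}$, exactly as in Proposition \ref{prop: linear rg noneq}), the distance-rescaling gain $\ell^{-2\alpha}$ coming from $\dist(A)^\alpha\sim\ell^{2\alpha}\dist(A')^\alpha$, and the new Gibbsian gain (the second factor $\ell^{-1}$), balanced against the $\ell^{2}$ entropy of the two position summations, using the $\dist$-weighted summability in \eqref{eq: induction assumption bulk} and \eqref{eq: induction assumption initial}. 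The main technical obstacle is managing the interaction between the two Ward identities and the four spectral decompositions cleanly while absorbing all short-$m$ boundary ranges into the $\log^C\ell$ factor; in particular, the Gibbsian gain crucially relies on the asymptotic identity $R_n(0)\approx P^\beta$ encoded in \eqref{eq: projector r explicit bound}, which is why this improvement is available only in the equilibrium case $\be_1=\be_2$.
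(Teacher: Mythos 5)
Your high-level strategy is the same as the paper's: gain one extra inverse power of $\ell$ beyond the non-equilibrium $\str A'\str=2$ bound by using the Gibbsian Ward identity \eqref{eq: ward identity with L} on the early side of the $\tau_1$-slot, with errors controlled by $\norm R_n(0)-P^{\be}\normba\leq C\sqrt{\ep_0}$ and by the $0\in A$ correlation functions entering $L_\tau$ (which carry $\ep_{\ini,n}$). This is exactly the content of Lemma \ref{lem: contraction from reversibility} and Section \ref{sec: smoothing}. But two steps of your execution have genuine gaps. First, the Ward identity is an identity for the \emph{plain} position-integrated pairing $\int \d z_a\,G^{c}_{\{1,\tau\}}(\cdot)\,\nu^{\be}(z_a)$, i.e.\ for the zero-momentum mode of the incoming leg. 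Extracting $P^{\be}$ from $R_n(p_1/\ell)$ at a general slot momentum $p_1$ and declaring that this piece ``is precisely the LHS of the Ward identity'' is only true at $p_1=0$; at $p_1\neq 0$ the object is a phase-weighted pairing which the identity does not control. The paper resolves exactly this mismatch by working in position space and anchoring at the external point (the $\sup_{x''}$ in the definition of $W_{m,\ga}$ in Section \ref{sec: proof of lemma contraction from reversibility}): the composition integral over the intermediate position then produces the plain $\int \d x'\,K(x'',x')P^{\be}$ to which the identity applies, and the anchoring error is what yields the $m_-^{-1/2}\log m_-$ gain. In your Fourier scheme the analogous add-and-subtract of the $p_1=0$ value (with an $O(\str p_1\str)$ remainder weighted by the exponential-decay norms of $G^c_A$) is missing; relatedly, your displayed fibered formula with one conserved momentum per slot is not well-formed, since $G^c_A$ transfers momentum between the two slots and is diagonal only in the total momentum.

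Second, the closing summation. The rescaling gain $\dist(A)^{\al}\sim\ell^{2\al}\dist(A')^{\al}$ that you invoke simply does not exist for the adjacent case $A'=\{1,2\}$ (take $\tau_1=\ell^2$, $\tau_2=\ell^2+1$), and the ``short-$m$ boundary ranges'' cannot be absorbed into a $\log^C\ell$ factor: precisely when both $m_+$ and $m_-$ are $O(1)$ the two smoothing factors give nothing, and the needed $\ell^{-2\al}$ must come from the decay $(1+\tau_2-\tau_1)^{-\al}$ of $G^c_A$ itself (there $\tau_2-\tau_1\sim\ell^2$). Summing all regimes consistently -- two factors $(1+x)^{-1/2}(1+y)^{-1/2}$ against $h$ and the $-\al$ decay on the complementary variable -- is exactly the dyadic decomposition the paper performs in Section \ref{sec: the case of three times} and reuses verbatim in Section 7.5; without it your gains-versus-entropy bookkeeping does not establish the $\ell^{-2\al}$ bound for the adjacent term, which is part of the sum over $A'$ with $\min A'=1$. (A minor further point: after applying the Ward identity, the $L_\tau$ contributions are still summed over $\tau_1\in I_1$, producing an $\ell^2$ which is beaten only by the smallness of $\ep_{\ini,n}$ relative to $\ell$; your parenthetical claim that one no longer sums over $\tau_1$ is inaccurate, although the conclusion survives.)
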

The improved bound in Proposition \ref{prop: linear rg eq} is due to the fact that we can use two Ward identities instead of one.

If one neglects the contribution of the nonlinear part in the recursion relation \eqref{eq: from n to npluseen}, then the above bounds establish the induction hypothesis Proposition \ref{prop: overview b behavior}.  Indeed, for $ 1/2 <\al <1 $ (the case $\be_1 \neq \be_2$),   Proposition \ref{prop: linear rg noneq} implies the bound \eqref{eq: induction assumption bulk} of Proposition \ref{prop: overview b behavior} provided that 
\baq
 1-2 \al   & < &   -2 \tilde \al  \\
 (1-4 \al) -2 \al \lfloor    (k-3)/2   \rfloor  &< &  -k \tilde\al, \qquad  k >2 
\eaq
and this is indeed satisfied for our choice $\tilde \al = \al/2-1/4$.    For $ 1/4 <\al  \leq 1/2$ (the case $\be_1 = \be_2$), we need (combining Propositions \ref{prop: linear rg noneq} and \ref{prop: linear rg eq})  
 \baq
  -2 \al   & < &  -2 \tilde \al  \\
 (1-4 \al) -2 \al \lfloor    (k-3)/2   \rfloor  &< &  -k \tilde\al, \qquad  k >2 
\eaq
to satisfy the bound \eqref{eq: induction assumption bulk}
and (by Proposition \ref{prop: linear rg initial})
  \baq
  -2 \al   & < &  -2 \tilde \al -\tilde\al_{\ini}  \\
 -2 \al \lfloor    (k-1)/2   \rfloor  &< &  -k \tilde\al  -\tilde\al_{\ini}  , \qquad  k >2 
\eaq
to satisfy the bound \eqref{eq: induction assumption initial}. One inspects that  $\tilde \al= \al/2-1/8$ and $\tilde\al_{\ini} =1/4$ do the job.  

\subsection{Contraction from the Ward Identities} \label{sec: smoothing}
Consider  the connected correlation function  $G^{c}_{A} \in \scrR_A$ (we have dropped the subscript $n$ again). As explained in Section \ref{sec: ward identity from unitarity}, it satisfies the Ward identity
\beq \label{eq: ward identity intro}
 \mathop{\int}\limits \d x'_\tau   ({P} \otimes 1  \ldots  \otimes 1)   G^{c}_{A}(x'_A, x_A )=0, \qquad  \tau =\max A
\eeq
where $P$ is any projector in $\banone$ of the form 
$
P = \str \mu \rangle \langle   1_{ \caS_{0}}  \str 
$ with $\sum_{s \in \caS_{0}}\mu(s)=1$, as introduced in Section \ref{sec: ward identity from unitarity}.  
We will use this identity to derive a bound on the operator 
\beq   \label{eq: powers of t following k}
  \left( T^m\otimes 1 \otimes \ldots \otimes 1 \right)  G^{c}_{A}
\eeq
where $T^m$ acts on the leg of the tensor product indexed by $\tau =\max A$. The relevance of \eqref{eq: powers of t following k} is that it comes up after writing out the RHS of \eqref{eq: def linear rg flow} explicitly, i.e.\ performing the contraction $\caT_{A'}$.  However, for the purpose of the present section, one can just accept \eqref{eq: powers of t following k} as the basic object of study. 
The reason why one can get a bound on \eqref{eq: powers of t following k} is that, formally speaking,  $T^m$ has a right
eigenvector with eigenvalue $1$ given by $ \indicator_{\caS_0}= \indicator_{[s \in \caS_{0}]}$ (in particular, it is independent of $x$) and it contracts  vectors orthogonal to  $\indicator_{\caS_0}$ to size $\sim m^{-1/2}$. 
The condition \eqref{eq: ward identity intro} ensures that $G^{c}_{n,A}$ as a function of $x'_\tau$ is orthogonal to $\indicator_{\caS_0}$  and hence \eqref{eq: powers of t following k} should decay $ \sim m^{-1/2}$.   This intuition is captured by
\begin{lemma} \label{lem: contraction from unitarity}
Let $K \in \scrR$ be a kernel satisfying the 'Ward Identity'
\beq   \label{eq: ward in lemma}
  \int_{}  \d x'  \,   P K(x',x) =0, \qquad   
\eeq
 Then, there is an operator  $V_{m,\ga} \in \scrR $ satisfying
  \beq  \label{eq: ward ready to use}
  \norm  \bsS_\ell V_{m,\ga_0/2} \norm_{10\ga_0}  \leq     \frac{ C \log(1+m) }{\sqrt{m}}, \qquad   \text{for $1 \leq m \leq \ell^2$.}
\eeq
and such that
 \beq \label{eq: inequality in ward lemma}
  \str T^m K \str   \leq   \str V_{m,\ga} \str\,  \str K \e^{\ga \str x'-x\str}   \str,
 \eeq
 where we recall that $\str L\str$ is the operator whose kernel  is the absolute value of the kernel of $L$, and the inequality in the last formula is between kernels. 
 \end{lemma}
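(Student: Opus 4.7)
The strategy is to exploit the Ward identity to kill the ``invariant subspace'' contribution of $T^m$, leaving only ``fluctuating'' contributions that exhibit $m^{-1/2}$ diffusive smoothing. I would work in Fourier space. Using translation invariance of $T$ and analyticity in the strip $|\Im p|\leq \ga_0$ (Proposition \ref{ass:  properties of T} 1)), write $T^m(u) = (2\pi)^{-d}\int_{\bbT_n} dp\, e^{-ipu}\hat T(p)^m$. I split the momentum integral into the diffusive regime $|\Re p| < \bana_n$, where by Proposition \ref{ass:  properties of T} 2) the spectral decomposition $\hat T(p)^m = e^{mf(p)} R(p) + (1-R(p))\hat T(p)^m$ applies with the complementary part bounded by $\largegap_n^m$, and the complement $|\Re p| \geq \bana_n$ where $\|\hat T(p)^m\|_\banone \leq \largegap_n^m$ directly. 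Both regions contribute only exponentially small (in $m$) remainders to $V_{m,\ga}$.

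The dominant piece is $e^{mf(p)} R(p)$, and here the Ward identity enters. Writing
\[
e^{mf(p)} R(p) \hat K(p,x) = e^{mf(p)} [R(p) - R(0)] \hat K(p,x) + e^{mf(p)} R(0) [\hat K(p,x) - \hat K(0,x)],
\]
where the second summand uses $R(0)\hat K(0,x) = 0$ (which is precisely the Ward hypothesis, as $R(0) = |\mu_{T_n}\rangle\langle \indicator_{\caS_0}|$ by \eqref{eq: projector r explicit} and $\langle\indicator_{\caS_0}|\hat K(0,x)=0$ by \eqref{eq: unitarity ward in terms of kernels}), each bracketed factor is analytic and vanishes to first order at $p=0$. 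Inverting Fourier, the first summand becomes the translation invariant kernel $H_m(y'-z) := (2\pi)^{-d}\int dp\, e^{-ip(y'-z)} e^{mf(p)}[R(p)-R(0)]$, while the second, using $\hat K(p,x)-\hat K(0,x) = \int dz(e^{ipz}-1)K(z,x)$, yields the non-translation-invariant contribution $[G_m(y'-z) - G_m(y')]R(0)$, where $G_m(u) := (2\pi)^{-d}\int dp\, e^{-ipu} e^{mf(p)}$ is the ``heat kernel'' of $T$. I would then define $V_{m,\ga}$ as the pointwise sum of the absolute values of these three kernels (principal, subtraction, remainder), taking account of internal structure by bounding $\|R(0)\|_\banone \leq C$ from \eqref{eq: hyp bound on r}.

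For the norm bound $\|\bsS_\ell V_{m,\ga_0/2}\|_{10\ga_0} = \|V_{m,\ga_0/2}\|_{10\ga_0/\ell}$, I would incorporate the exponential weight via contour shift $p\to p - i\ga e_{y'-z}$, which is legitimate by analyticity in $|\Im p|\leq \ga_0$. Using Proposition \ref{ass:  properties of T} 2), $|e^{mf(p+i\xi)}| \leq e^{-cm(\Re p)^2 + Cm|\xi|^2}$, and in the relevant range $\ga \sim \ga_0/\ell$ with $m\leq \ell^2$ the exponent $Cm\ga^2$ is $O(1)$, so the contour shift costs only a bounded factor. For $H_m$, combining $|R(p)-R(0)|_\banone \leq C|p|$ (analyticity plus \eqref{eq: hyp bound on r}) with the Gaussian bound yields $|H_m(u)|_\banone \leq C m^{-(d+1)/2} e^{-c|u|^2/m}$; integrating against the exponential weight gives $\int du\,\|H_m(u)\|_\banone e^{\ga|u|} \leq C m^{-1/2}$.

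The main obstacle is the non-translation-invariant piece $[G_m(y'-z) - G_m(y')]R(0)$: the naive triangle inequality diverges because $G_m(y')$ carries no $z$-decay. I would split the $z$-integral at scale $\sqrt m$, using the mean-value estimate $|G_m(y'-z)-G_m(y')| \leq |z|\sup_{0\leq t \leq 1}|\nabla G_m(y'-tz)| \leq C|z|\,m^{-(d+1)/2}$ on the diffusive ball (where $|y'-tz|\lesssim\sqrt m$), and otherwise controlling each of $G_m(y'-z)$ (by super-exponential Gaussian decay) and $G_m(y')$ (which is $\leq C m^{-d/2}$ but must be paired against the volume factor from the weighted integral, itself bounded since $|y'|\lesssim\sqrt m$ in the relevant range). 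Summing over dyadic scales $|y'|, \sqrt m\leq 2^k \leq \ell$ is what produces the logarithmic factor $\log(1+m)$ in the claimed bound. Combining all three contributions then yields $\|V_{m,\ga_0/2}\|_{10\ga_0/\ell}\leq C\log(1+m)/\sqrt m$ as required.
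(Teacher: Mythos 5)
Your reduction to the diffusive block $\e^{mf(p)}R(p)$, the treatment of the gap region, and your bound on the translation-invariant piece $H_m(u)=(2\pi)^{-d}\int \d p\, \e^{-\i p u}\e^{mf(p)}[R(p)-R(0)]$ are all sound and parallel to the paper's estimates. The genuine gap sits exactly at what you call the main obstacle. By invoking the Ward identity in the form $R(0)\hat K(0,x)=0$ you anchor the subtraction at the origin, so the second piece is $\int \d z\,[G_m(y'-z)-G_m(y')]R(0)K(z,x)$, and the operator $V_{m,\ga}$ you define must dominate the kernel $\str G_m(y'-z)-G_m(y')\str$ pointwise in $(y',z)$. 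That kernel has \emph{infinite} $\ga$-norm, not merely a norm that is delicate to estimate: the norm \eqref{eq: llnorm} requires both $\sup_{y'}\int \d z$ and $\sup_{z}\int \d y'$ of the kernel multiplied by $\e^{\ga\str y'-z\str}$, and since $G_m(y')$ is constant in $z$, $\int \d z\,\str G_m(y')\str \e^{\ga\str y'-z\str}=\infty$ whenever $G_m(y')\neq 0$ — the $z$-integral runs over the whole infinite lattice, and nothing restricts it to $\str y'\str\lesssim\sqrt m$ or $\str z\str\lesssim\sqrt m$ as your dyadic argument assumes. The other half of the norm fails too: for $\str z\str\gg\sqrt m$ the kernel carries an $O(1)$-mass bump near $y'\approx 0$, at distance $\str z\str$ from $z$, incompatible with the weight and with smallness $m^{-1/2}$. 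So the divergence is real, not an artifact of the triangle inequality, and no refinement of the mean-value/dyadic estimate can produce \eqref{eq: ward ready to use} for a $V$ dominating this kernel. A telltale sign is that your derivation of \eqref{eq: inequality in ward lemma} never uses the weight $\e^{\ga\str x'-x\str}$ on $K$, whereas that weight is precisely the device that makes the correct construction work.

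The repair — and the paper's route — is to anchor the subtraction at the \emph{input} variable of $K$ rather than at the origin: since $R(0)\hat K(0,x)=0$, you may equally subtract $\e^{\i p x}\hat K(0,x)$, which in position space means subtracting $T^m(y'-x)R(0)$, i.e.\ the propagator recentred at $x$. The paper does this directly in position space: from $\int \d z\, R(0)K(z,x)=0$ one gets $T^mK(y',x)=\int \d z\,[T^m(y'-z)-T^m(y'-x)R(0)]K(z,x)$, and the $x$-dependence of the bracket is disposed of by extracting $\e^{-\ga\str z-x\str}$ (borrowed from the weight on $K$) and taking $\sup_x$, which yields the translation-invariant kernel \eqref{def: operator v}. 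Its norm is then estimated by exactly the kind of Fourier computation you set up: contour shift by $20\ga_0/\sqrt m$, the bound $\norm R(\bar p)-R(0)\normba\leq C(\str p\str+\ga_0/\sqrt m+ \indicator_{\str\Re p\str>c\ga_0})$ for your $H_m$-type term, and the elementary bound $\str 1-\e^{-\i\bar p(x-x_0)}\str\leq C(\str p\str+\ga_0/\sqrt m)\log m$ on the region $\ga_0\str z-x\str\lesssim\log m$, the far region being killed by the extracted exponential together with Lemma \ref{lem: powers of t}. This splitting in $\str z-x\str$, not a dyadic sum over $\str y'\str$, is where the factor $\log(1+m)$ comes from.
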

The proof of this lemma is postponed to Section \ref{sec: proof of lem: contraction from unitarity}. 

Now it is straightforward to bound \eqref{eq: powers of t following k} (after rescaling), indeed, let $\ga= 10 \ga_0$, then
\baq  
\norm \bsS_\ell \left(T^m\otimes 1 \otimes \ldots \otimes 1 \right)  G^{c}_{A}  \norm_{10 \ga_0}    &= &    \norm \, \str  \left( T^m\otimes 1 \otimes \ldots \otimes 1 \right)  G^{c}_{A}  \str \,  \norm_{10 \ga_0/\ell}  \\
 & \leq &    \left(\norm  V_{m,\ga_0/2} \otimes 1 \otimes \ldots \otimes 1 \right)  (G^{c}_{A} \e^{\frac{\ga_0}{2} \str x'_{\max A} - x_{\max A} \str}  )  \str \,  \norm_{10 \ga_0/\ell}  \\
  & \leq &   \frac{ C \log^{C} (1+m) }{\sqrt{m}}  \norm G^{c}_{A}   \norm_{\frac{10 \ga_0}{\ell}+ \frac{\ga_0}{2}} 
\eaq
The first inequality uses \eqref{eq: inequality in ward lemma} with the coordinates $z_{A \setminus \max A}, z'_{A \setminus \max A}$ kept fixed.  The second inequality uses \eqref{eq: ward ready to use} and the definition of the norm $\norm \cdot \norm_{\ga}$. 

Let us now try an analogous trick     based on the Ward identity \eqref{eq: ward identity with L} which we rewrite here with the help of the projectors $P^{\be}:= \str \mu^{\be} \rangle \langle 1_{\caS_0}\str$ and $ P^{\reff}= \str \mu^{\reff} \rangle \langle 1_{\caS_0}\str$;
\beq  \label{eq: ward identity with L repeat}
\int \d x_a   \, G^{c}_{\{1, \tau\}}  ( { x_a}', x_b'; {x}_a,    x_b) (1 \otimes P^{\be})  =      \int \d x_a L_{ \tau} ( { x_a}', x_b'; {x}_a,    x_b)  (1 \otimes P^{\reff}   )
\eeq
It will be used to get smallness for the right action by $T^m$ i.e. for $ G^c_{ \{ 1,\tau \}}  ( 1 \otimes T^{m} ) $ by a
similar reasoning to the one given above for the Ward identity from unitarity. The operator $T$ formally has a unique left eigenvector with eigenvalue $1$, given by $\mu_{T_n}$ as defined in \ref{eq: projector r explicit}, 
and $T^m$   contracts  vectors orthogonal to $\mu_{T_n}$ to size $\sim {m}^{-1/2}$.  Hence, if \eqref{eq: ward identity with L repeat} would hold with the RHS replaced by  $0$ and  $P^{\be}$ replaced by $R(0)$, then we could repeat the reasoning of the unitarity Ward identity.  However, the projection $P^{\be}=|\mu^\be\rangle
\langle 1_{ \caS_{0}}|$ is $\sqrt{\ep_0}$-close to $R(0)= |\mu_{T_n}\rangle\langle 1_{ \caS_{0}}|$, see \eqref{eq: projector r explicit bound} in the induction hypothesis. Actually $\mu_{T_n} $ converges to $\mu^\be$ as $n \to \infty$, but this is not even needed here. The presence of the RHS in \eqref{eq: ward identity with L repeat} introduces an additional error term which is small because $L_{ \tau}$ consists of correlation functions that  include $\tau=0$ and therefore contract faster.

\begin{lemma}   \label{lem: contraction from reversibility}
Let $K, \tilde K \in \scrR$ be kernels satisfying the 'Ward Identity'
\beq
\int \d{x}  \, K(x', x) P^{\be}  =  \int \d{x} \,   \widetilde K(x',x )  P^{\reff}
\eeq
 Then, there are operators    $W_{m, \ga}, \widetilde W_{m, \ga} \in \scrR$ satisfying 
 \baq
\norm \bsS_\ell W_{m, \ga_0/2}  \norm_{\tengam}  &\leq&   C m^{-1/2} \log m + C \norm  R(0)-  P^{\be} \normba, \\[2mm]
  \norm \bsS_\ell  \widetilde W_{m, \ga_0/2} \norm_{\tengam}   &\leq&  C
\eaq
 for $1 \leq m \leq \ell^2$, 
 and 
\begin{align}
 &  \str  K T^m\str  \leq    \str K \e^{\ga \str x-x'\str}  \str    W_{m, \ga}   +    \str \widetilde K  \e^{\ga \str x-x'\str} \str  \widetilde W_{m, \ga}   \label{eq: bound gibbs smoothing on correlations}
\end{align}
\end{lemma}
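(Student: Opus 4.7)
The proof mirrors that of Lemma \ref{lem: contraction from unitarity}, but now treats the right action of $T^m$ on $K$ and invokes the Gibbsian Ward identity in place of the unitarity one. The plan is to isolate the zero-momentum eigenmode of $T^m$, use the Ward hypothesis to exchange its action on $K$ for an action on $\tilde K$ (which produces the $\tilde W_{m,\ga}$ contribution), and control all remaining pieces by $m^{-1/2}$-type smallness (via analyticity and dispersion) or by $\norm R(0) - P^\be\normba$, whose smallness is guaranteed by Proposition \ref{ass:  properties of T}.

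The first step is the spectral decomposition $\hat T(p)^m = R(p) \e^{mf_n(p)} \chi_n(p) + \hat T^m_{\mathrm{gap}}(p)$, where $\chi_n$ is the indicator of $|\Re p| \leq \bana_n$. The gap part has $\banone$-norm at most $\largegap_n^m$ by \eqref{eq: smallgap}--\eqref{eq: largegap}; in position space it produces a fast-decaying contribution absorbed into $W_{m,\ga}$ as in the proof of Lemma \ref{lem: contraction from unitarity}. For the eigenmode part, I would decompose $R(p) = P^\be + (R(0) - P^\be) + (R(p) - R(0))$: the $p$-independent correction $(R(0) - P^\be)$ contributes the $\norm R(0) - P^\be\normba$ term to the bound on $W_{m,\ga}$, while $(R(p) - R(0)) = \caO(|p|)$ by analyticity of $R(\cdot)$, so after multiplication by $\e^{mf_n(p)} \approx \e^{-mD_n p^2}$ and inverse Fourier its contribution acquires a factor $m^{-1/2}$ (by a standard Gaussian-type estimate based on \eqref{eq: fclosetoparabola}) and is again absorbed into $W_{m,\ga}$.

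The essential new step handles the $P^\be$ piece. Setting
$g_m(u) := (2\pi)^{-d}\int \d p\, \e^{-ipu}\e^{mf_n(p)}\chi_n(p)$
and $\Delta(x',y) := K(x',y)P^\be - \tilde K(x',y)P^{\reff}$, the $P^\be$ contribution to $(KT^m)(x',x)$ equals
\[
\int \d y\, \tilde K(x',y) P^{\reff} g_m(y-x) + \int \d y\, \Delta(x',y) g_m(y-x),
\]
and the first summand is a convolution $\tilde K \ast (P^{\reff} g_m)$. I take $\tilde W_{m,\ga}(y,x) := |P^{\reff}| g_m(y-x)$, which is translation-invariant; its scaled norm $\norm \bsS_\ell \tilde W_{m,\ga}\norm_{10\ga_0}$ is bounded uniformly in $m \leq \ell^2$ because $g_m$ inherits the Gaussian localization and strip analyticity of $T^m$ proved in Lemma \ref{lem: powers of t}. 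For the correction term I exploit $\int \d y\, \Delta(x',y) = 0$ (the Ward hypothesis) to subtract $g_m(x'-x)$; the remainder is bounded via $|g_m(y-x) - g_m(x'-x)| \leq C m^{-1/2}|y-x'|$ times a translation-invariant diffusive envelope, and the factor $|y-x'|$ is absorbed into $\e^{\ga|y-x'|}$ at the cost of slightly lowering $\ga$. The outcome splits into a $|K|$-type piece fed into $W_{m,\ga}$ and a $|\tilde K|$-type piece fed into $\tilde W_{m,\ga}$, each carrying an $m^{-1/2}$ gain.

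The principal technical obstacle is to recast the correction, which a priori depends on all three variables $x', y, x$, as a pointwise bound of the required form $\int \d y\, |K(x',y)|\e^{\ga|y-x'|} W_{m,\ga}(y,x)$ with $W_{m,\ga}$ a kernel in $(y,x)$ alone. The resolution is to dominate the non-translation-invariant pieces (such as the $g_m(x'-x)$ term generated by the subtraction above) by a translation-invariant diffusive envelope, using the exponential weight on $K$ or $\tilde K$ to enforce $y \approx x'$; this step costs a logarithmic factor in $m$, accounting for the $\log m$ in the bound on $\norm \bsS_\ell W_{m,\ga_0/2}\norm_{10\ga_0}$.
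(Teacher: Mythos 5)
Your proposal is correct and essentially reproduces the paper's proof: the Ward identity is used to trade the $P^{\be}$-contraction of $K$ for the $P^{\reff}$-contraction of $\widetilde K$, the $m^{-1/2}\log m$ gain comes from the same momentum-space analysis as in Lemma \ref{lem: contraction from unitarity}, the replacement $P^{\be}\to R(0)$ yields the $\norm R(0)-P^{\be}\normba$ error, and the sup/envelope device with the exponential weight in $\str y-x'\str$ converts the three-variable correction into a kernel in $(y,x)$ alone. The only difference is bookkeeping: you decompose $T^m$ spectrally before invoking the Ward identity (subtracting $P^{\be}g_m(x'-x)$ from the leading piece only), whereas the paper subtracts $P^{\be}T^m(x'',x)$ at the level of the full kernel and defers the spectral analysis to the norm estimate of $W_{m,\ga}$; the underlying estimates are identical.
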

\noindent We will use the Lemma to derive a bound on the correlation function
$
( T^{m_+} \otimes 1 )  G^c_{ \{ 1,\tau \}}  ( 1 \otimes T^{m_-} )   
$.
Its kernel is bounded
\baq  
  && \str  \left(T^{m_+}\otimes 1 \right)  G^{c}_{\{1, \tau \}}   \left(1 \otimes T^{m_-}  \right)    \str 
   \leq    V_{m_+, \ga}   \str  \e^{\ga \str x'_\tau-x_\tau \str }  G^{c}_{\{1, \tau \}}   \left(1 \otimes T^{m_-}  \right)  \str \,     \\[2mm]
 &&\leq     V_{m_+, \ga}  \left(   \str  \e^{\ga (\str x'_\tau-x_\tau \str+\str x'_1-x_1 \str)  }  G^{c}_{\{1, \tau \}} \str W_{m_-, \ga}   +  \str  \e^{\ga \str x'_1-x_1 \str  }  L_\tau \str \widetilde W_{m_-, \ga}      \right)  
 \eaq
 where we used first \eqref{eq: inequality in ward lemma} and then \eqref{eq: bound gibbs smoothing on correlations}. 
Collecting the bounds on $V, W, \tilde W$, we get
\baq  
&& \norm\bsS_\ell  \left(T^{m_+}\otimes 1 \right)  G^{c}_{\{1, \tau \}}  \left(1 \otimes T^{m_-}  \right)   \norm_{10 \ga_0} \nonumber \\
& \leq &  C \frac{\log m_+}{\sqrt{m_+}}   \norm   \str  \e^{\frac{\ga_0}{2 }\str x'_\tau-x_\tau \str }  G^{c}_{\{1, \tau \}}   \left(1 \otimes T^{m_-}  \right)  \norm_{10 \ga_0/\ell} \nonumber \\
& \leq &  C \frac{\log m_+}{\sqrt{m_+}}  \left( \norm  G^{c}_{\{1, \tau \}} \norm_{\frac{10 \ga_0}{\ell}+ \frac{\ga_0}{2}} ( \frac{\log m_-}{\sqrt{m_-}} +  \norm  R(0)-  P^{\be} \normba  ) +   \norm L  \norm_{\frac{10 \ga_0}{\ell}}    \right)  \label{eq: double smoothing}
\eaq
The operator norm $ \norm L_\tau \norm_{\ga} $ can be bounded by norms of correlation functions: 
\beq
 \norm L_\tau \norm_{\ga}  \leq   \norm G^{c}_{\{0,1, \tau\}} \norm_{\ga}  +  \norm  G^{c}_{\{0, \tau\}} \norm_{\ga}  +    \norm G^{c}_{\{0, \tau-1 \}}  \norm_{\ga}   \label{eq: bound on L in correlation functions}
\eeq
by \eqref{eq: ward identity with L} and Lemma \ref{lem: main property of norms} (1).  
The upshot of  the calculation in  \eqref{eq: double smoothing} is that all terms between the large brackets on the last line are smaller than $ \norm  G^{c}_{\{1, \tau \}} \norm_{\frac{10 \ga_0}{\ell}}$ (which would be the resulting bound if we had used only the unitarity Ward identity). As anticipated before Lemma \ref{lem: contraction from reversibility}, the smallness comes either from the factor $ \frac{\log m_-}{\sqrt{m_-}}$, the difference $\norm  R(0)-  P^{\be} \normba$  and the correlation functions contributing to $L_\tau$ that are small because they involve the initial time $0$ and therefore contract faster, see Proposition \ref{prop: linear rg initial}.

\subsubsection{Proof of Lemma \ref{lem: contraction from unitarity} } \label{sec: proof of lem: contraction from unitarity}

We start from \eqref{eq: ward in lemma}, i.e.\ 
$\int  \d x \,  P K(x,x_0)=0$ for any $ x_0$. This implies trivially that
\beq
T^m K(x',x_0)=
\int \d x  \,  \e^{-\ga \str x-x_0 \str} \left( T^m(x',x) -  T^m(x', x_0)P \right)   K(x,x_0)   \e^{\ga \str x-x_0 \str}  \label{eq: ward as equality}
\eeq
 We now choose the projector $P$ to equal $R(0)$ (the spectral projector of $\hat T(p=0)$) and we define
  $V_{m, \ga}\in \scrR$ 
\beq
V_{m, \ga} (x',x)  := \sup_{x_0}  \left\str  \e^{-\ga \str x-x_0 \str} \left( T^m(x- x') -    T^m(x_0-x')  R(0)  \right)  \right\str  \label{def: operator v}
\eeq
where  the absolute value $\str \cdot \str$ is applied to a kernel in $s,s'$, that is, for an operator $D$ in $\scrG$ we set $\str D \str$ to be the operator with kernel $\str D \str(s',s) :=\str D (s',s) \str  $. 
Note that $V_{m, \ga}$ is translation invariant by the translation invariance of $T^m$. 
From \eqref{eq: ward as equality} and \eqref{def: operator v}, we get the inequality \eqref{eq: inequality in ward lemma} of Lemma \ref{lem: contraction from unitarity} 

It remains to establish the bound on $V_{m, \ga} $, i.e.\ \eqref{eq: ward ready to use}. It suffices to consider
$\sqrt{m}$ large enough only, we take $\sqrt{m}\geq 80$.
Since the operator $V_{m, \ga}$ is translation invariant, we can use Lemma \ref{lem: banone equal to weird} and hence 
it suffices to show that 
\beq \label{eq: sup bound on v to obtain}
s_m:=\sup_{x}  \e^{\frac{\twentygam}{\sqrt{m}} \str x \str } \norm  V_{m, \ga_0/2}(0,x)  \norm_{\banone}  \leq  C m^{-1/2(d+1)} \log m.
\eeq
Indeed, we then get 
 \beq
\norm V_{m, \ga_0/2} \norm_{10\ga_0/\ell}  \leq s_m  \int   \d x \,     \e^{ (\frac{10\ga_0}{\ell}   -\frac{\twentygam}{\sqrt{m}})  \str x \str } 
\leq C m^{-1/2} \log m
\eeq
which implies \eqref{eq: ward ready to use} by scaling.

To prove \eqref{eq: sup bound on v to obtain}
 we split $V_{m, \ga} (x',x)=V'_{m, \ga} (x',x)+V''_{m, \ga} (x',x)$ where the terms are defined 
by restricting the supremum in \eqref{def: operator v}  to  $\frac{\ga_0}{2} \str x-x_0 \str \leq  \log m $ and $\frac{\ga_0}{2} \str x-x_0 \str >  \log m $ respectively.  
We bound 
\baq
\e^{\frac{20\ga_0}{\sqrt{m}}  \str x\str}  \str V''_{m, \ga_0/2} (0,x) \str &  \leq & \e^{\frac{20\ga_0}{\sqrt{m}}  \str x\str}   \sup_{x_0: \ga_0 \str x-x_0 \str >  \log m}  \e^{-\frac{\ga_0}{2} \str x-x_0 \str} \left\str T^m(x) -    T^m(x_0)  R(0)   \right\str    \nonumber \\[1mm]
& \leq &    \frac{1}{m} \e^{\frac{20\ga_0}{\sqrt{m}}  \str x \str}  \str T^m(x)  \str  + \sup_{x_0}\frac{1}{\sqrt{m}}   \e^{-\frac{\ga_0}{4} \str x-x_0 \str}  \e^{\frac{20\ga_0}{\sqrt{m}}  \str x_0 \str} \left\str  T^m(x_0)  R(0)   \right\str  \nonumber
\eaq
where we used the triangle inequality $\str x\str \leq \str x_0\str  + \str x-x_0\str  $ and $\sqrt{m}\geq 80$ in the last step.  The desired bound \eqref{eq: sup bound on v to obtain} (without the log)
now follows from \eqref{eq: linfinity bound for m repeated t}.

Next, we consider $V'_{m, \ga_0/2}$.  The argument for showing \eqref{eq: sup bound on v to obtain} for  $V'_{m, \ga_0/2}$ is   analogous to the proof of Lemma \ref{lem: powers of t} and we will use some notation from that proof. 
We start from the momentum space representation
\beq
T^m(x',x) -  T^m(x',x_0)R(0) =   (1/2\pi)^d \int_{\bbT_n} \d p   \,    \e^{\i p (x-x')}  \left[  \hat T^m(p) -  \hat T^m(p)  R(0)  \e^{- \i p (x-x_0)}    \right]
\eeq
and proceed as in \eqref{smallpbound} and  \eqref{largepbound} to conclude that up to terms exponentially
small in $m$ (and $n$) it suffices to bound
\beq
\left\norm\int_{\bbT_{n}} \d p   \,  e^{mf_n(\bar p)}  \e^{\i p x} (R (\bar p)-R (0)e^{-\i\bar p(x-x_0)})\chi_n(p)\right\normba
  \eeq
where $\bar p=p+ \frac{20\i\ga_0}{\sqrt{m}}e_x$. This is bounded by
\beq \label{eq: bounds ward leading}
C \int_{\bbT_{n}} \d p   \,  e^{-\frac{D_n}{2}m  p^2} (\norm R (\bar p)-R (0)\normba +|1-e^{-\i\tilde p(x-x_0)}|)\chi_n(p).
  \eeq
using the bound \eqref{upper and lower bound for ef}.
Since $$|1-e^{-\i\tilde p(x-x_0)}|\leq C(|p|+\ga_0/\sqrt{m})\log m$$ 
the second term on the RHS of  \eqref{eq: bounds ward leading} contributes $\caO(m^{-(d+1)/2}\log m)$. For the first term use analyticity of $R (p)$
in a ball of radius $\ga_0$ at the origin to get 
$$\norm R (\bar p)-R (0)\normba\leq C( |p|+\ga_0/\sqrt{m}+1_{|\Re p|>c\ga_0})$$
and hence the first term on the RHS of \eqref{eq: bounds ward leading}  contributes $\caO(m^{-(d+1)/2})$, as well.

\subsubsection{Proof of Lemma \ref{lem: contraction from reversibility}} \label{sec: proof of lemma contraction from reversibility}

We start from the Ward Identity 
\beq
\int \d x' K(x'', x') P^{\be}  =  \int \d x'  \widetilde K(x'',x' )  P^{\reff} 
\eeq
Then 
\baq
 && \int \d x' K(x'',x' ) T^m(x', x)  \\
 &=&  \int \d x' K(x'',x' ) \e^{\ga \str x''-x'\str} \left( T^m(x', x) - P^{\be}  T^m(x'',x)     \right) \e^{-\ga \str x''-x'\str} \\
 &+&    \int \d x' \widetilde K(x'',x' )  \e^{\ga \str x''-x'\str}   P^{\reff}  T^m(x'',x) \e^{-\ga \str x''-x'\str}
\eaq
We now define the operators $W_{m, \ga}, \widetilde W_{m, \ga}$ by specifying their reduced kernels
\baq
 W_{m, \ga}(x',x) &:= & \sup_{x''}   \left\str T^m(x',x) - P^{\be}  T^m(x'',x)     \right\str \e^{-\ga \str x''-x'\str}    \label{def: v for reversibility} \\
  \widetilde W_{m, \ga}(x',x) &:= & \sup_{x''}   \left\str  P^{\reff}  T^m(x'',x)  \right\str \e^{-\ga \str x''-x'\str}  
\eaq
(the absolute values on the RHS are again meant as absolute values of  kernels on $\bbA_n \times \bbA_n$, see the remark below \eqref{def: operator v})
Note that both operators are translation invariant. 
By  completely analogous reasoning as the one leading to \eqref{eq: inequality in ward lemma}, we get the inequality \eqref{eq: bound gibbs smoothing on correlations}.

Now to the bounds on $W_{m, \ga}, \widetilde W_{m, \ga}$. 
If $P^{\be}$ is replaced by $R(0)$ in the definition of $W_{m, \ga}(x',x) $, we get the first term of the bound by an analogous proof as that of Lemma \ref{lem: contraction from unitarity}. The error term due to $R(0)-  P^{\be}$  is estimated by  $ C\norm  R(0)-  P^{\be} \normba$ where the constant $C$ originates from bounding $\bsS_{\ell}T^m$ with the help of Lemma \ref{lem: powers of t}.  Similarly, the bound on $\widetilde W_{m, \ga_0/2}$ is immediate from Lemma \ref{lem: powers of t}.

\subsection{Preliminaries for the induction step $G^c_{A} \to G^{c'}_{A'} $} 
\label{sec: preliminaries linear rg}
In this section, we gather some tools for the proofs of Propositions  \ref{prop: linear rg noneq}, \ref{prop: linear rg initial}, \ref{prop: linear rg eq}.   
We state the main bound that we will use to control the sum over correlation functions at  the lower scale. We abbreviate $k:= \str A \str = \str A' \str$  and $m_+ := \max I_{A'}- \max A $;
\baq
\norm  G^{c'}_{A', \lin}    \norm_{10 \ga_0}   & \leq &  \sum_{A \to A', \str A \str =k} \left\norm \bsS_{\ell} \caT_{A'} \left[    G^{c}_{A}  \mathop{\otimes}\limits_{\tau \in I_{A'} \setminus A} T({\tau})  \right]  \right\norm_{10 \ga_0}    \nonumber  \\
&\leq&   \sum_{A \to A', \str A \str =k}  \norm (T^{m_+} \otimes 1 \otimes \ldots \otimes 1) G^{c}_{A} \norm_{10 \ga_0 /\ell}  \prod_{j =1}^{k}   \norm  T^{\tau_{j}-\tau_{j-1} -1} \norm_{10 \ga_0/\ell}, \qquad    \nonumber  \\
&\leq& 
 \sum_{A \to A', \str A \str =k} C^{k}  \frac{\log (1+m_+)}{\sqrt{1+m_+}}  
 \,  \norm   G^{c}_{A}   \norm_{\ga_0}      \label{eq: basic bound linear rg}
\eaq
On the second line, we set the dummy $\tau_0=0$ or, if $\tau_1=0$, then $\tau_0=-1$.
The bound is obtained by proceeding as in the proof of Lemma \ref{lem: bound on e},
 bounding powers $T^{m}$ by Lemma \ref{lem: powers of t}, except the power $T^{m_+}$, which is bounded by  using Lemma \ref{lem: contraction from unitarity}, as explained in the beginning of Section \ref{sec: smoothing}.

Next,  we introduce some useful notation. We let $h(A)$ stand for a function of the ordered times $\tau_1, \ldots,  \tau_k$ of $A$ with the property that, for $\min A >0$, 
 \baq
 &&  h(A) =  h(A+\tau)   \qquad   \textrm{(time-translation invariance)}   \label{eq: translation invariance of h}
 \eaq
and satisfying the normalization conditions, uniformly in $\tau_k,\tau_1$, respectively,  
\beq
\mathop{\sum}\limits_{0<\tau_1 < \tau_2 < \ldots < \tau_k: \tau_k \, \textrm{fixed}} h(A)  \leq 1, \qquad  \textrm{and} \qquad \mathop{\sum}\limits_{\tau_1 < \tau_2 < \ldots < \tau_k: \tau_1 \, \textrm{fixed}} h(A)  \leq 1  \label{eq: normalization of h}
\eeq
Note that the first condition follows from the second by time-translation invariance.
In particular, we will use that the function of $k-1$ variables obtained by performing the sum $\sum_{\tau_1: \tau_1< \tau_2} h(A)$ or  $\sum_{\tau_k: \tau_k> \tau_{k-1}} h(A)$ satisfies the same conditions, and hence we will also call it $h(\cdot)$. 
 This is how $h(\cdot)$ will enter: we write
\beq      \label{eq: how h emerges}
  \norm  G^{c'}_{A', \lin}    \norm_{10 \ga_0}   \leq  \sum_{A \to A'}  \ep^{\str A \str}   \frac{\log (1+m_+)}{\sqrt{1+m_+}}     \dist(A)^{-\al}  h(A).      
\eeq   
by using \eqref{eq: basic bound linear rg}. Now our task is to estimate the RHS, which is done by elementary analysis in the next sections (note indeed that all operators have disappeared from the RHS). Below, we consistently use the notation $\tau_1, \tau_2, \ldots$ for the ordered elements of $A$ and $\tau'_1, \tau'_2, \ldots$ for those of $A'$. The variables $\tau_1, \tau_2, \ldots$ range over the sets $I_{\tau'_1}, I_{\tau'_2}, \ldots$

\subsubsection{Change of variables}\label{sec: change of variables}

We need to sum \eqref{eq: how h emerges}.
Hence we will have to convert the factor  $\dist(A')^{\al}$ into $\dist(A)^{\al}$. In doing this, we will gain small factors of $\ell^{-2\alpha}$.  Indeed, we find the inequality
\beq
\dist(\tau'_1, \tau'_2 )^{\al}  \leq \ell^{-2\al}   \dist(A)^{\al}, \qquad   A \to \{\tau'_1, \tau'_2 \}, \,  \tau'_2- \tau'_1 >1
\eeq
That is, if the macroscopic times are not neighbors, then we gain a small factor.  However, even if the macroscopic times are neighbors,  then we still get a small factor from every second difference of macroscopic times, i.e.\ 
\beq
\dist(A')^{\al} \leq C^{\str A' \str} \left(\ell^{-2 \al}\right)^{\lfloor \frac{\str A' \str-1}{2} \rfloor}  \dist(A)^{\al}, \qquad \textrm{for any} \, \, A \to A' \, \, \textrm{with} \, \,  \str A' \str =\str A \str \geq 3
\label{eq: inequality change of variables}
\eeq 
Indeed, for all $1<j<|A'|$, either $|\tau_j-\tau_{j+1}|\geq \frac{\ell}{2}|\tau'_j-\tau'_{j+1}|$
or $|\tau_j-\tau_{j-1}|\geq \frac{\ell}{2}|\tau'_j-\tau'_{j-1}|$.   
We will refer to these bounds as 'change of variables'.

\subsection{Proof of Proposition \ref{prop: linear rg noneq}}\label{sec: proof of prop linear rg noneq}

To prove Proposition \ref{prop: linear rg noneq}, we need to control the sum
\beq
 \sum_{A': \str A' \str=k,   \max A'   \mathrm{fixed}  }   \dist(A')^{\al} \sum_{A \to A'}   \frac{1}{\sqrt{1+\max I_{\tau'_{k}} -\tau_k    }}  
     \dist(A)^{-\al} h(A)    \label{eq: sum linear rg to perform}
\eeq
which emerges\footnote{In fact, Proposition \ref{prop: linear rg noneq} demands that we keep $\min A'$ fixed. However, by time-translation invariance, the claim with $\max A'$ fixed is equivalent} from the LHS of (\ref{eq: bound on lin more then two},\ref{eq: bound on lin two}) by  \eqref{eq: how h emerges}. 
We will do this by using the change of variables and the integrability of $h(\cdot)$, both introduced above.   Note that we dropped the factor $\log^C (1+ m_+)$ since it can always be estimated by $C' \log^C \ell$ which suffices for our purposes.  We also did not write $\ep_n^{\str A \str}$ since this factor just goes through all estimates.

\subsubsection{The case $\str A' \str= 2$}
Let us assume first that $ \tau'_1 < \tau'_2 -1$, then \eqref{eq: sum linear rg to perform} reduces to

\baq
&&  \sum_{\tau'_1: \tau'_1 < \tau'_2 -1} (1+\tau'_2- \tau'_1)^{\al} \sum_{\tau_1 \in I_{\tau'_1}, \tau_2 \in I_{\tau'_2} }               h(\tau_1, \tau_2)    (1+ \tau_2-\tau_1)^{-\al}     (\max I_{\tau'_2} -\tau_2+1)^{-1/2}  \\[2mm]
&\leq& C\ell^{-2\al} \sum_{\tau_1 < \tau_2-\ell^2}  \sum_{\tau_2 \in I_{\tau'_2} }               h(\tau_1, \tau_2)      (\max I_{\tau'_2} -\tau_2+1)^{-1/2}  \\[2mm]
&\leq& C \ell^{-2\al} \sum_{\tau_2 \in I_{\tau'_2} }     (\max I_{\tau'_2} -\tau_2+1)^{-1/2}   = C\ell^{-2\al}  \sum_{x=1}^{\ell^2} x^{-1/2} \leq  C  \ell^{1-2\al}    
\eaq
To get the first inequality, we used the change of variables formula \eqref{eq: inequality change of variables}, the second inequality follows from the properties of $h(\cdot)$, i.e., from \eqref{eq: normalization of h}.

Next, let us treat the case $\tau'_2=\tau'_1+1$, then \eqref{eq: sum linear rg to perform} gives (we can bound $\dist(A')^{\al}$ by a constant here)
\baq
&& C  \sum_{\tau_1 \in I_{\tau'_1}, \tau_2 \in I_{\tau'_2} }               h(\tau_1, \tau_2)    (1+ \tau_2-\tau_1)^{-\al}     (\max I_{\tau'_2} -\tau_2+1)^{-1/2}  \\[2mm]
&\leq& C    \sum_{\tau_2} (1+\tau_2-\min I_{\tau'_2})^{-\al}  (\max I_{\tau'_2} -\tau_2)^{-1/2}= 
  \sum_{x=1}^{\ell^2} x^{-\al}  (1+\ell^2-x)^{-1/2}
     \leq  C       \ell^{1-2\al}    
\eaq
In the  first inequality, we used  $(1+ \tau_2-\tau_1)^{-\al} \leq (1+ \tau_2-\min I_{\tau'_2})^{-\al}  $, and we bounded the sum over $\tau_1$ by \eqref{eq: normalization of h}. The last inequality follows easily after splitting the sum in $ 1< x \leq \ell^2/2 $ and $ \ell^2/2 < x\leq \ell^2$. 
Putting the two contributions together,  we get 
$C \ell^{1-2\al}    $ as a bound for \eqref{eq: sum linear rg to perform} in the case $\str A \str=\str A' \str= 2$. \\

\subsubsection{The case $\str A' \str= 3$} \label{sec: the case of three times}
We start with the subcase  $\tau'_1=\tau'_2-1=\tau'_3-2$ which turns out to be the most tricky one.  
First, we dominate
\baq
&& \sum_{\tau_{1,2,3} \in I_{\tau'_{1,2,3} } }  \dist(\tau_1, \tau_2, \tau_3)^{-\al}  (1+\max I_{\tau'_3}- \tau_3)^{-1/2}  h(\tau_1,\tau_2, \tau_3)   \\
&\leq & \sum_{\tau_{2,3} \in I_{\tau'_{2,3} } }   \dist(\tau_2, \tau_3)^{-\al}(1+  \tau_2 -\min I_{\tau'_2} )^{-\al}   (1+\max I_{\tau'_3}- \tau_3)^{-1/2}  h(\tau_2,\tau_3) \eaq
by $(1+\tau_2-\tau_1)^{-\al} \leq  (1+\tau_2-\min I_{\tau'_2})^{-\al}$ and the properties of $h$, i.e.\ \eqref{eq: normalization of h}.
Then, we change variables 
\baq
x=    \tau_2 -\min I_{\tau'_2}, \qquad    y =  \max I_{\tau'_3}- \tau_3 , \qquad   0 \leq x,y < \ell^2  
\eaq
such that we have to bound
\beq 
  \sum_{x,y=0}^{\ell^2-1}  F(x,y), \qquad    F(x,y) =   h(x, 2 \ell^2-y-1)      ( x +1)^{-\al}  (y+1)^{-1/2}   (2 \ell^2- x-y-1)^{-\al}
\eeq
where we used the invariance of $h(\cdot, \cdot)$ under joint translations of its arguments.

To perform these sums, we define the sets
\baq
\caC^{a}_n   &:= & \{ (x,y) \big\str \, (1- \delta_{n,0})  2^{n}  \leq  x \leq 2^{n+1} ,   \quad     (1- \delta_{n,0})  2^n  \leq  y  \leq 2/3\ell^2               \}    \\[2mm]
\caC^{b}_n   &:= & \{ (x,y) \big\str \,(1- \delta_{n,0})   2^{n}   \leq  y \leq 2^{n+1} ,   \quad  (1- \delta_{n,0})  2^n  \leq  x  \leq 2/3 \ell^2             \} \\[2mm]
\caC^{c}   &:= & \{  (x,y) \big\str \, 2   \ell^2-x-y  \leq 2/3 \ell^2             \} 
\eaq
And we will split 
\beq
\mathop{\sum}\limits_{x,y=0}^{\ell^2-1}    F(x,y)  \leq \sum_{n=0}^{n^*} \mathop{\sum}\limits_{\caC^{a}_n} F(x,y) +  \sum_{n=0}^{n^*} \mathop{\sum}\limits_{\caC^{b}_n} F(x,y)  +  \mathop{\sum}\limits_{\caC^{c} } F(x,y)    \label{eq: split it a strips}
\eeq
for $n=0, \ldots  n^*$ with $n^*$ the smallest natural number such that $2^{n^*}  \geq \ell^2$. 
To perform the sum in $\caC^{a}_n $,  we bound    
\beq
  \sup_{\caC^{a}_n}\left( (2 \ell^2- x-y-1)^{-\al} ( x +1)^{-\al}  (y+1)^{-1/2} \right) = C \ell^{-2\al} (2^n)^{-1/2-\al},
\eeq 
 we use properties \eqref{eq: normalization of h} to perform the $y$-sum and we bound the $x$-sum by $2^n$, the 'width' of its  domain.  This yields
\beq
  \sum_{\caC^{a}_n } F(x,y)  \leq C \ell^{-2\al} (2^{n} )^{1-1/2-\al}
\eeq
The sum in $\caC^{b}_n $ is done analogously, except that now the $x$-sum is controlled by \eqref{eq: normalization of h} and the $y$-sum by its width $2^n$.
On  $ \caC^{c} $, we can bound $( x +1)^{-\al}  (y+1)^{-1/2}  \leq \ell^{-1-2\al}$, and then the sum is done straightforwardly as
\baq
 \sum_{\caC^{c} }F(x,y)   &\leq&    \ell^{-1-2\al}   \sum_{x,y=0}^{\ell^2-1}   h(x, 2 \ell^2-y)     (2 \ell^2- x-y-1)^{-\al}  \\[2mm]
&\leq&   \ell^{-1-2\al} \sum_{x=0}^{\ell^2-1}    ( \ell^2- x)^{-\al}  \leq  \ell^{1-4\al}
\eaq
The sum over $n = 1, \ldots, n^* $ yields 
\baq
  \sum_{x,y=0}^{\ell^2-1}  F(x,y)  &\leq &  \ell^{1-4\al} +  \ell^{-2\al}  \sum_{n \geq 1}^{n^*}     (2^{n} )^{1-1/2-\al}  \\
 &  \leq &  C   \ell^{-2\al}      (2^{1/2-\al})^{n^*}   \leq  C \ell^{1-4 \al}
\eaq

Let us now look at the case where one pair of  $\tau'_1, \tau_2, \tau'_3$, is consecutive. If $ 1+ \tau'_1 < \tau'_2= \tau'_3-1$, then we get a factor $\ell^{-2\al}$ from $(1+ \tau_2-\tau_1)^{\al}$ by change of variables, we integrate $\tau_1$ by using the normalization of $h(\cdot)$ and we can treat the remaining times $\tau'_2, \tau'_3$ as outlined in the case $\str A' \str =2 $, gaining an extra factor $\ell^{1-2 \al}$. 
The other possible case with one consecutive pair, i.e.\ $ 1+ \tau'_1 = \tau'_2 <  \tau'_3-1$, can be related to the previous case by symmetry considerations, using the translation invariance of $h(\cdot)$.
Finally, the case where no pair is consecutive is of course analogous to the corresponding case with $ \str A' \str=2$: One gets $\ell^{-4 \al}$ from  change of variables, and the remaining sum over $\tau_{1,2,3}$ yields $\ell$ by using the normalization of $h(\cdot)$ and the factor  $ (1+ \max I_{\tau'_3}-\tau_3)^{-1/2}$.\\

Hence in all subcases with $\str A'\str=3$, we obtain a factor $\ell^{1-4 \al}$.

\subsubsection{The case $\str A' \str=k >3$}
We perform the sum over $\tau_1, \tau_2, \ldots, \tau_{k-3}$.  As argued in \eqref{eq: inequality change of variables},  we get at least one small factor $\ell^{-2\al}$ from every second sum.  This yields the bound  $ \ell^{-2\al \lfloor \frac{\str A' \str -3}{2}\rfloor}$. The last three sums are then bounded by $C \ell^{1-4\al} $ by repeating the analysis of the case $\str A' \str=3$.
 
Combining all cases, we get the claim of Proposition \ref{prop: linear rg noneq}

\subsection{Initial time linear RG: Proof of Proposition \ref{prop: linear rg initial}}
We start as in Section \ref{sec: proof of prop linear rg noneq} from the bound \eqref{eq: sum linear rg to perform}.
Now  we perform the sum over times $\tau_1, \ldots, \tau_k$ starting from $\tau_k$, keeping in mind  that $\tau_1=0$.

In the case $k= |A'| >2$, we get a factor $\ell^{-2\alpha \lfloor \frac{k-1}{2}\rfloor}$ by the change of variables, and the sum is performed by using the summability of the function $h(\tau_1, \ldots, \tau_{k})$ with $\tau_1=0$ fixed. In contrast to the proofs above, this suffices since $\tau_1$ is indeed fixed here: $\tau_1=0$.
For the case $k =|A'| =2$, we  perform the sum over $\tau_2$ by splitting it in the regions $\tau_2 \geq \ell^2/2$ and $\tau_2 < \ell^2/2$. For $\tau_2 \geq \ell^2/2$, we can use the change of variables, and hence we get 
\baq
 \textrm{\eqref{eq: sum linear rg to perform}}  & \leq &  C    \sum_{ \tau_2  \geq \ell^2/2}  \ell^{-2\al}  h(0, \tau_2)  +  C  \sum_{1 \leq \tau_2 <\ell^2/2}     (\ell^2-\tau_2)^{-1/2}   (1+\tau_2)^{-\al}   h(0, \tau_2) \nonumber  \\
 & \leq &  C   \ell^{-2\al}   +  C \ell^{-1} 
\eaq
This finishes the proof. 

\subsection{Equilibrium linear RG: Proof of Proposition \ref{prop: linear rg eq} }

We have to perform the sum
\beq
 \sum_{\tau'_1  : \tau'_1 < \tau_2'} \dist (\tau'_1, \tau_2')^{\al}  \norm G^{c'}_{\{ \tau'_1, \tau'_2\}, \lin } \norm_{\tengam}    \label{eq: intergoal equilibrium}         \eeq
 Let us  abbreviate 
\beq
m_+ =    \max I_{\tau'_2} - \tau_2, \qquad     m_- =   \tau_1 -  \min I_{\tau'_1} 
\eeq
Then we obtain 
 \baq
  \norm G^{c'}_{\{ \tau'_2, \tau'_2\}, \lin } \norm_{\tengam}   & \leq &   \sum_{ \{ \tau_1, \tau_2\} \to \{ \tau_1', \tau'_2\} }   C  \frac{\log (1+m_+) }{(1+m_+)^{1/2} }  \Big(  \frac{\log (1+m_-) }{(1+m_-)^{1/2} }    \norm G^c_{\{ \tau_1, \tau_2\} } \norm_{\ga_0}   \nonumber \\[3mm]
 &    &   \qquad  \qquad  +       \,    \norm G^c_{ \{0, 1, \tau_2-\tau_1+1 \}}  \norm_{\ga_0}     +   \norm G^c_{ \{ 0,  \tau_2-\tau_1+1 \} }  \norm_{\ga_0}  + \norm G^c_{ \{ 0,  \tau_2-\tau_1 \} }  \norm_{\ga_0}    \Big)   \label{eq: reversibility in three terms bound}
\eaq
 where we used the time-translation invariance property $G^c_{ \{\tau_1, \tau_2 \}}= G^c_{ \{1, \tau_2-\tau_1+1 \}}$, the bounds \eqref{eq: double smoothing} and   \eqref{eq: bound on L in correlation functions},  and 
the bound \eqref{eq: projector r explicit bound} for
$\norm P^{\be}-R(0) \normba$. 

 Next, we perform the sum over $\tau_1'$ of the RHS in \eqref{eq: reversibility in three terms bound}.  This RHS is  split in four terms. The $\tau'_1$-sum of the last three terms  is bounded brutally by $C\ell^2  \ep^{2}_n \ep_{\ini,n} $ by using the integrability of the function $h(\cdot)$ to perform the sum over $\tau_2-\tau_1$ and estimating  the sum over $\tau_1$ by $\ell^2$.  By the smallness of $\ep_{\ini,n}$,  this bound is sufficient for  Proposition \ref{prop: linear rg eq}. 
Next, we focus  on the ($\tau'_1$-sum of the) first term in \eqref{eq: reversibility in three terms bound}. It is of the form
\baq
 \sum_{\tau'_1  : \tau'_1 < \tau_2'}  \dist(\tau'_1, \tau'_2)^{\al} \sum_{\{ \tau_1, \tau_2\} \to \{ \tau_1', \tau'_2\}}   C (1+m_+)^{-1/2}      (1+m_-)^{-1/2}  \dist(\tau_1, \tau_2 )^{-\al} h(\tau_1,\tau_2)   \label{eq: correlation to be estimated for eq}
\eaq
To evaluate this sum, let us first consider the case $\tau'_1 < \tau_2'-1$. We set 
\beq
z :=  \tau_2 -\tau_1, \qquad   x=    \max I_{\tau_2'}- \tau_2
\eeq
and we estimate \eqref{eq: correlation to be estimated for eq} restricted to $\tau'_1 < \tau_2'-1$ by
\baq
 C \ell^{-2\al} \sum_{z =\ell^2}^{\infty}  h(\tau_1, \tau_1+ z)    \sum_{x =0}^{\ell^2-1}     (1+ x)^{-1/2}     (1+ (x+z)\mathrm{mod}\ \ell^2)^{-1/2}  
&\leq&  C \ell^{-2\al} \log \ell   
\eaq
We used the change of variables to get the factor $\ell^{-2\al} $ and the Cauchy-Schwarz inequality to estimate the $x$-sum. 
Restricting  \eqref{eq: correlation to be estimated for eq}  to $\tau'_1 = \tau_2'-1$ and setting $y := \tau_1 - \min I_{\tau_1'}$, we get 
\beq
 C   \sum_{x,y =0}^{\ell^2-1}       h(1, \ell^2-x-y)      (1+ x)^{-1/2}     (1+y)^{-1/2}  ( \ell^2-x-y-1)^{-2\al}
\eeq
This sum is  analogous to the one treated in Section \ref{sec: the case of three times}, the only difference being that one exponent is $1/2$ instead of $\al$. The multiscale treatment can be copied without changes. The result is that the sum is bounded by $C \ell^{-2\al}$ and this yields Proposition \ref{prop: linear rg eq}.

\section{Flow of $G^c_A$: nonlinear part} \label{sec: nonlinear rg flow}

In this section, no information on $T_n$ is needed, except for the bound on $T_n^m$ from Lemma \ref{lem: powers of t}.  Starting from the induction hypothesis on the cumulants at scale $n$, we deduce an estimate on the nonlinear part of the contribution to scale $n+1$.    We do not distinguish between bulk and boundary terms, and hence we have $A \subset \bbN_0$ throughout. We drop the scale subscript  $n$ and we mark operators on scale $n+1$ by a prime, as explained at the beginning of Section \ref{sec: linear rg flow}.

 As was explained in Section \ref{sec: linear rg flow}   as well, the nonlinear contribution to the RG flow is defined by excluding from the sum \eqref{eq: from n to npluseen}
the contributions of $\caA=\{ A \}$ (a single set) with $\str A \str = \str A' \str$. Hence we still need to study the remaining
terms
\beq
G^{c'}_{A,\nlin} :=  \bsS_{\ell} 
\mathop{\sum}\limits_{\scriptsize{  \left. \begin{array}{c}\caA \in \poly(I_{A'}),  \caG_{A'}(\caA)\,  \textrm{connected} \\  \sum_{A \in \caA}\str A \str > \str A' \str   \end{array} \right. } }    \caT_{A'} \left[  \mathop{\otimes}\limits_{A \in \caA}  G^{c}_{A}  \otimes  \mathop{\otimes}\limits_{\tau \notin \supp \caA}  T({\tau})  \right] \label{eq: nonli}
\eeq
that is,  either $\caA$ has more than one element, or if it consists of one element $A$, then $\str A \str > \str A' \str$, and this is combined in the condition  $\sum_{A \in \caA}\str A \str > \str A' \str$. 
Our result is
\begin{proposition}\label{prop: nonlinear rg}
Fix  an exponent $\hat\al $  with $\hat \al < \al$ and a constant $c_v>0$. If $c_v$ is chosen small enough, then \beq
\sup_{ \tau'} \sum_{A' \subset \bbN_0: A' \ni \tau' } (\hat\ep')^{-\str A' \str}   \dist(A')^{\al} \norm  G^{c'}_{A',\nlin} \norm_{\tengam}    \leq  1, \qquad  \text{with}\,\, \hat\ep'=  c^{-1}_v \ell^{-\hat\al} \ep,
\eeq
\end{proposition}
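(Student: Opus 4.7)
The plan is to reduce the claim to a sum over collections $\caA$ using the norm bounds already established, and then exploit the nonlinearity condition $\sum_A|A|>|A'|$ together with the induction hypothesis.

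First, I will estimate $\norm G^{c'}_{A',\nlin}\norm_{10\gamma_0}$ term by term in \eqref{eq: nonli} by Lemma \ref{lem: main property of norms}: after the rescaling identity $\norm\bsS_\ell K\norm_{10\gamma_0}=\norm K\norm_{10\gamma_0/\ell}$, each term is bounded by the product of the norms of its factors, namely $\norm T^{|J|}\norm_{10\gamma_0/\ell}$ for the intermediate runs $J$ of $T$'s and $\norm G^c_A\norm_{10\gamma_0/\ell}$ for the correlation factors. For $\ell\geq 10$ the argument $10\gamma_0/\ell$ is below $\gamma_0$, so the latter are dominated by $\norm G^c_A\norm_{\gamma_0}$. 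Lemma \ref{lem: powers of t} yields $\norm T^{|J|}\norm_{10\gamma_0/\ell}\leq C$, and the number of intermediate intervals is at most $|\supp\caA|+1$, producing
\[
\norm G^{c'}_{A',\nlin}\norm_{10\gamma_0}\leq \sum_{\caA}C^{|\supp\caA|}\prod_{A\in\caA}\norm G^c_A\norm_{\gamma_0}.
\]

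Next I will interchange the outer sums. The data $(A',\caA)$ with $A'\ni\tau'$ and $\caA\in\poly(I_{A'})$ is parametrised by a single collection $\caA$ of disjoint nonempty subsets of $\bbN_0$ satisfying (i) $\tau'\in A'(\caA):=\{\sigma:A\cap I_\sigma\neq\emptyset \text{ for some }A\in\caA\}$, (ii) $\caG_{A'(\caA)}(\caA)$ is connected, and (iii) $\sum_A|A|>|A'(\caA)|$. To handle the weight $\dist(A'(\caA))^\alpha$ I will use a change of variables in the spirit of Section \ref{sec: change of variables}: in a fixed spanning tree of $\caG_{A'(\caA)}(\caA)$, every tree edge joining macrocells $\sigma_a,\sigma_b$ with $|\sigma_a-\sigma_b|\geq 2$ is witnessed by some $A\in\caA$ whose corresponding two elements differ by at least $\ell^2(|\sigma_a-\sigma_b|-1)$. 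Collecting the resulting factors along the tree gives $\dist(A'(\caA))^\alpha\leq C^{|A'(\caA)|}\prod_A\dist(A)^\alpha$, with an extra $\ell^{-2\alpha}$ gain per non-adjacent tree edge.

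I then apply the induction hypothesis \eqref{eq: induction assumption bulk} (and \eqref{eq: induction assumption initial} when a set meets $0$) by pinning one time per set $A$ to a location dictated by the tree structure; summing each $A$'s remaining freedom yields $\ep_n^{|A|}$ (with an additional $\ep_{\ini,n}$ if $0\in A$, which only helps). Substituting $\hat\ep'=c_v^{-1}\ell^{-\hat\alpha}\ep_n$ and gathering terms, the contribution of each $\caA$ to the target sum takes the schematic form $C^{|A'|}(c_v\ell^{\hat\alpha})^{|A'|}\ep_n^{\sum_A|A|-|A'|}$. The surplus $\ep_n$-power supplied by $\sum|A|>|A'|$, together with $\hat\alpha<\alpha$, provides small-parameter decay, and choosing $c_v$ small enough that $Cc_v\ell^{\hat\alpha}<1$ makes the resulting geometric series in $|A'|$ converge with sum $\leq 1$. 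The main obstacle I anticipate is the combinatorial bookkeeping when a single $A\in\caA$ contributes several tree edges spanning many macrocells: the weight $\dist(A)^\alpha$ must be allocated across all these edges while still absorbing the $\ell^2$-entropy of placing the child sets. I plan to handle this by extracting the $\ell^{-2\alpha}$-gain only from the largest inter-macrocell gap within such an $A$, bounding the remaining contributions trivially and absorbing them into $C^{|A'|}$, which suffices because the surplus $\ep_n^{\sum|A|-|A'|}$ always provides at least one spare small factor.
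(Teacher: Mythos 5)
Your first step (the a priori bound via Lemma \ref{lem: main property of norms} and Lemma \ref{lem: powers of t}) matches the paper's Lemma \ref{lem: relation between scales cumulants}, but the way you then propose to close the sum has a genuine gap, concentrated in the single-polymer terms: $\caA=\{A\}$ with $\str A \str=\str A'\str+1$ and $A'$ a long block of (mostly) consecutive macrocells, each cell containing exactly one microtime except one cell containing two. For such terms there is no ``non-adjacent tree edge'' to extract $\ell^{-2\al}$ from, $\dist(A')^{\al}$ and $\dist(A)^{\al}$ are both merely $C^{\str A'\str}$, and the surplus of small factors is exactly one: the weight you must control is
\begin{equation*}
(\hat\ep')^{-\str A'\str}\,\ep_n^{\str A\str}\,C^{\str A'\str}
=\bigl(C\,c_v\,\ell^{\hat\al}\bigr)^{\str A'\str}\,\ep_n ,
\end{equation*}
which grows geometrically in $\str A'\str$ for any $\ell$-independent $c_v$, while the number of admissible $A'\ni\tau'$ of size $k$ (and of microtime placements inside them) also grows. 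Your fallback --- ``extract the $\ell^{-2\al}$ gain only from the largest gap and absorb the rest into $C^{\str A'\str}$, the one spare $\ep_n$ suffices'' --- therefore fails: a single global $\ep_n$ cannot beat a per-element factor larger than $1$. Your alternative escape, demanding $C c_v\ell^{\hat\al}<1$, forces $c_v\lesssim\ell^{-\hat\al}$ and hence $\hat\ep'\gtrsim\ep_n$; this contradicts the paper's convention that $c_v$ is fixed before $\ell$ is taken large, and it destroys the contraction $\hat\ep'\leq\ep_{n+1}=\ell^{-\tilde\al}\ep_n$ that Proposition \ref{prop: overview b behavior} needs (one needs $c_v^{-1}\ell^{-\hat\al}\leq\ell^{-\tilde\al}$ with $\tilde\al<\hat\al$, impossible if $c_v\leq C^{-1}\ell^{-\hat\al}$).

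The missing idea is the per-macrocell counting of the paper's Lemma \ref{lem: all a that fit into s}: whenever an interior macrocell of $S(A)$ contains exactly one microtime, one of the two adjacent microtime gaps is at least of order $\ell^2$, so with $p$ such cells one harvests $\lfloor (p+1)/2\rfloor$ factors $\ell^{-2\al}$, and $p\geq 2\str S(A)\str-\str A\str-2$. This converts the dangerous per-element factor $c_v\ell^{\hat\al}$ into (essentially) $c_v\ell^{\hat\al-\al}$ per element of $A'$, summable precisely because $\hat\al<\al$, while each multiply-occupied cell is paid for by a surplus factor $\ep_n\ell^{\al}$ and the $\ell^{2}$ placement entropy per polymer is absorbed by the surplus powers of $\hat\ep'$ guaranteed by connectedness ($\sum_A\str A\str-\str A'\str\geq\str\caA\str-1$), with the final sum over connected collections handled by a Kotecky--Preiss cluster bound (Proposition \ref{app: prop: cluster expansion}) where $c_v$ only needs to beat $\ell$-independent constants. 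Without some substitute delivering a gain $\ell^{-2\al}$ for (roughly) every second element of $A'$, your series over $\str A'\str$ does not converge with the required uniformity in $\ell$.
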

We remind the  conventions introduced at the beginning of Section \ref{sec: induction hypotheses}; $\ell^{-1}$ and $\ep_0$ should be chosen sufficiently small  compared to  constants like $c_v$.
Proposition \ref{prop: nonlinear rg}  establishes (the nonlinear parts of) Proposition \ref{prop: overview b behavior}. Indeed, for $\al>1/2$, it suffices to fix $\tilde \al <\hat \al < \al$ whereas for the case $\al\leq 1/2$, we observe that we can choose $\hat \al$ such that additionally $  k \tilde\al + \tilde \al_{\ini} < k \tilde \hat \al $ holds for $k \geq 2$.

To derive Proposition \ref{prop: nonlinear rg}, one can ignore the Ward Identities completely since they can at best give a factor $\ell^2$ in each term of the sum, and such factors are irrelevant in view of the fact that we have extra factors of $\ep$ due to the condition $ \sum_{A \in \caA}\str A \str > \str A' \str $ (cfr.\ the discussion at the beginning of Section \ref{sec: linear rg flow}).  

Consequently,  we will use the following   basic  bound for the
cumulants. In fact, a special case of this bound appeared already in the proof of Lemma \ref{lem: bound on e}.

\begin{lemma}[A priori recursion relation] \label{lem: relation between scales cumulants}
\beq\label{A priori recursion}
\norm G^{c'}_{A'}   \norm_{\tengam}  \leq   \sum_{ \caA \in \poly(I_{A'}),\caG_{A'}(\caA) \,  \text{connected}}      \e^{C \str \supp \caA \str }   \prod_{A \in \caA}   \norm G^c_{A}  \norm_{\gamzero}   
\eeq

\end{lemma}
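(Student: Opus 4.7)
\medskip

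\noindent\textbf{Proof proposal for Lemma \ref{lem: relation between scales cumulants}.}
The plan is to start from the basic recursion identity \eqref{eq: from n to npluseen} expressing $G^{c'}_{A'}$ as a sum, over connected collections $\caA \in \poly(I_{A'})$, of terms of the form
\[
\bsS_\ell \caT_{A'}\!\left[\, \mathop{\otimes}\limits_{A \in \caA} G^{c}_{A} \; \otimes \mathop{\otimes}\limits_{\tau \notin \supp \caA} T(\tau) \right].
\]
Apply the triangle inequality in the norm $\norm\cdot\norm_{\tengam}$ and then use the two basic structural properties of the norm from Lemma \ref{lem: main property of norms}. First, the scaling identity \eqref{eq: scaling property of norms} converts $\norm \bsS_\ell(\cdot)\norm_{\tengam}$ into $\norm\cdot\norm_{\tengam/\ell}$. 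Second, the contraction bound \eqref{Tbound} factorizes the contraction $\caT_{A'}$ of a tensor product into the product of the individual $\norm\cdot\norm_{\tengam/\ell}$-norms of the factors.

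After these two steps, each term is bounded by a product
\[
\prod_{A\in\caA}\norm G^c_A\norm_{\tengam/\ell}\;\prod_{J}\norm T^{|J|}\norm_{\tengam/\ell},
\]
where $J$ runs over the maximal discrete intervals of consecutive times in $I_{A'}\setminus\supp\caA$ (the $T$'s in \eqref{eq: from n to npluseen} group naturally into such blocks under $\caT_{A'}$). Each block has $|J|\leq \ell^2$, so by the induction hypothesis for $T_n$ and Lemma \ref{lem: powers of t} (specifically \eqref{eq: lone bound for repeated t}, which bounds $\|\bsS_\ell T^m\|_{\tengam}\leq C$ uniformly for $1\leq m\leq\ell^2$), each factor $\norm T^{|J|}\norm_{\tengam/\ell}$ is bounded by an absolute constant $C$.

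The number of maximal intervals $J$ is at most $|\supp\caA|+1\leq 2|\supp\caA|$ (each block is sandwiched between elements of $\supp\caA$ or lies at the boundary of $I_{A'}$), which produces the desired factor $\e^{C|\supp\caA|}$. Finally, to convert $\norm G^c_A\norm_{\tengam/\ell}$ to $\norm G^c_A\norm_{\gamzero}$ on the right-hand side, use the monotonicity statement of Lemma \ref{lem: main property of norms} (2): taking $\ell\geq 10$ (which is within our standing hypothesis that $\ell$ is large) gives $\tengam/\ell=10\gamzero/\ell\leq \gamzero$, hence $\norm\cdot\norm_{\tengam/\ell}\leq\norm\cdot\norm_{\gamzero}$. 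Combining these pieces yields exactly the claimed inequality \eqref{A priori recursion}. There is no real obstacle here—the lemma is a bookkeeping exercise that combines the tensor-contraction structure of $\caT_{A'}$, the scaling behavior of the norm, and the uniform a priori bound on powers of $T_n$; the only mild point is keeping track of the number of $T$-blocks to extract the exponential $\e^{C|\supp\caA|}$.
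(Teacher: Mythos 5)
Your proposal is correct and follows essentially the same route as the paper: apply Lemma \ref{lem: main property of norms} (scaling plus factorization under $\caT_{A'}$) to the recursion \eqref{eq: from n to npluseen}, bound each block $\norm T^{|J|}\norm_{\tengam/\ell}\leq C$ via Lemma \ref{lem: powers of t} since $|J|\leq\ell^2$, and count at most $2|\supp\caA|$ intervals to produce the factor $\e^{C|\supp\caA|}$. The only difference is that you make explicit the monotonicity step $\norm\cdot\norm_{\tengam/\ell}\leq\norm\cdot\norm_{\gamzero}$ for $\ell\geq 10$, which the paper uses implicitly.
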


\begin{proof} 
We take $\ell > 10$ and  apply Lemma \ref{lem: main property of norms} to eq.\ (\ref{eq: from n to npluseen}):
\beq
 \left\norm \caT_{A'} \bsS_{\ell} \left[  \mathop{\otimes}\limits_{A \in \caA} G^{c}_{A}  \mathop{\otimes}\limits_{\tau \notin \supp \caA}  T({\tau})  \right] \right\norm_{10\ga_0}  \leq  \prod_{A \in \caA}  \norm G^{c}_{A} \norm_{\ga_0}    \prod_{ J  }    \norm T^{\str J \str} \norm_{\tengam/\ell}
\eeq
where the product  $\prod_{ J  }$ runs over all discrete intervals $J$ in the sets $I_{A'} \setminus \supp\caA$, cfr.\ the proof of Lemma \ref{lem: bound on e}.
  By invoking Lemma \ref{lem: powers of t}, we bound 
\beq
\norm T^{\str J \str} \norm_{\tengam/\ell}  \leq  C,     \qquad  \text{since}\,\,  \str J \str \leq \ell^2
\eeq
The number of discrete intervals $J$ is at most $2\str\supp \caA \str  $ and this yields the claim.
\end{proof}

%

\subsection{Sum over connected coverings}

Our strategy for bounding the sum over  polymers  
$\caA$ in eq.\ \eqref{eq: nonli} consists of the following splitting. For each $A \in \caA$, we let $S(A) \subset A'$ be the macroscopic domain of $A$, that is 
\beq
S(A) = \{  \tau' \in A',  A \cap I_{\tau'}  \neq \emptyset \}
\eeq
Note that any collection $\caA$ of sets $A$ induces a collection $\caS= \caS(\caA)$ with elements $S(A)$. We call a collection of sets \emph{connected} whenever it can not be split into two collections whose members are mutually disjoint. For any $\caA \in \poly(I_{A'})$, the connectedness of the graph $\caG_{A'}(\caA)$ implies that $\caS=\caS(\caA)$ is connected and that $\supp \caS=A'$. We call the set of connected collections $\frC$.   

With this terminology, \eqref{A priori recursion} can be written as
\baq
\norm G^{c'}_{A'}   \norm_{\tengam} &\leq&    \sum_{ \caS \in \frC, \supp \caS= A'}    \sum_{ \caA \in \poly(I_{A'}):  S(\caA)=\caS}    \prod_{A \in \caA}    \norm  G^{c}_{A}  \norm_{\gamzero}  \e^{C \str A \str}    \\[3mm]
&\leq &   \sum_{ \caS \in \frC, \supp \caS= A'}   \prod_{S \in \caS}  F_1\left( \sum_{A \to S}   \norm G^{c}_{A}  \norm_{\gamzero} \e^{C \str A \str}  \right)
\eaq
where the function $F_1(x): = \sum_{p=1}^{\infty} x^p=x/(1-x)$ appears because there can be more than one set $A \in \caA$ such that $A \to S$. 

Let us now determine how this bound can be modified if we restrict the sum in \eqref{A priori recursion} to those entering in the nonlinear RG, i.e. in (\ref{eq: nonli}) .   If $\str \caS\str >2$, then the condition $\sum_{A \in \caA}\str A \str > \str A' \str $ is always verified. If $\str\caS\str=1$, i.e. there is one $S=A'$, then there are either at least two $A \in \caA$ such that $A \to A'$, or there is only one but it satisfies $\str A \str > \str A'\str$.  Hence we get
\baq
 \norm G^{c'}_{A',\nlin}  \norm_{\tengam} &\leq&    \sum_{ \caS \in \frC, \supp \caS= A'   \str \caS \str >1}   \prod_{S \in \caS}  F_1\left( \sum_{A \to S}   \norm G^{c}_{A}  \norm_{\gamzero} \e^{C \str A \str}  \right)    \\[3mm]
&+&    \sum_{ A \to A',  \str A\str > \str A' \str }     \norm  G^{c}_{A}  \norm_{\gamzero}  \e^{C \str A \str} +    F_2\left( \sum_{A \to A'}   \norm G^{c}_{A}  \norm_{\gamzero} \e^{C \str A \str}  \right)
 \label{eq: three terms in nonlinear rg}
\eaq
 where $ F_2(x) = \sum_{p=2}^{\infty} x^p  $.  
 
 The following lemma shows how to control sums over $A \to S$ that will appear in the evaluation of the above expression

\begin{lemma}  \label{lem: all a that fit into s}

\ben
\item
\beq   \label{eq: easy sum}
\sum_{A \to S}    \e^{C \str A \str } \dist(A)^{\al}   \norm G^c_{A}  \norm_{\gamzero}    \leq (C \ep)^{\str S \str}
\eeq
\item
Abbreviate
\beq
  v_n(S)  := ( \ell^{ -\hat \al}\ep)^{-\str S \str}    \dist(S)^{\al}    \sum_{A \to S}   \norm G^{c}_{A}   \norm_{\gamzero}  \e^{C \str A \str}  \label{def: vns}.
  \eeq
  Then 
\beq
\sup_{\tau'} \sum_{S \ni \tau'}    v_n(S)
   \leq   C  \ell^{2+2 \al}.   \label{eq: nonlinear lemma linear part}
   \eeq
\item
\beq
\sup_{\tau'}  \sum_{S \ni \tau'}   ( \ell^{ -\hat \al}\ep)^{-\str S \str}  \dist(S)^{\al}    \sum_{A \to S, \str A \str > \str S \str}   \norm G^{c}_{A}   \norm_{\gamzero}  \e^{C \str A \str}
   \leq  C    \ell^{2+3 \al}  \ep.   \label{eq: nonlinear lemma nonlinear part}
\eeq
\een
\end{lemma}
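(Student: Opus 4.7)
The plan is to derive all three estimates from the induction hypothesis (Proposition \ref{prop: overview b behavior}) by carefully bookkeeping the constraints imposed by $A\to S$ and by exploiting the smallness conventions $\ell^{2}\epsilon_n\ll 1$ and $\hat\al<\al$.

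For Part 1, fix $S=\{s_1<\dots<s_q\}$. Decompose the sum by $k=|A|\geq q$: since $A\to S$ forces $\min A\in I_{s_1}$, time-translation invariance of $G^c_A$ and $\dist(A)$ (for bulk $A$) together with \eqref{eq: induction assumption bulk} gives, for each fixed $t_0\in I_{s_1}$, $\sum_{|A|=k,\,\min A=t_0}\dist(A)^{\al}\|G^c_A\|_{\gamzero}\leq \epsilon_n^k$. Summing over the at most $\ell^2$ admissible starting points yields $\sum_{A\to S,\,|A|=k}\dist(A)^{\al}\|G^c_A\|_{\gamzero}\leq \ell^2\epsilon_n^k$, after which the geometric series $\sum_{k\geq q}e^{Ck}\ell^2\epsilon_n^k$ collapses to $C\ell^2(C\epsilon_n)^q$. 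Absorbing the prefactor $\ell^2$ into the smallness of $\epsilon_n$ (since $\ell^2\epsilon_n\ll 1$) produces the bound $(C\epsilon_n)^{|S|}$. For boundary cumulants (if $0\in A$) one repeats with \eqref{eq: induction assumption initial} to pick up the extra $\epsilon_{\ini,n}$.

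For Part 2, the key observation is the elementary geometric inequality $\dist(A)\geq \delta(S)$ for every $A\to S$, where $\delta(S):=\prod_{j=1}^{|S|-1}\delta_j$ with $\delta_j=2$ when $s_{j+1}-s_j=1$ and $\delta_j\asymp \ell^2(s_{j+1}-s_j-1)$ otherwise. Pulling $\dist(A)^{-\al}$ out of the sum and applying Part 1 gives
\[
\sum_{A\to S}\|G^c_A\|_{\gamzero}e^{C|A|}\leq \delta(S)^{-\al}(C\epsilon_n)^{|S|},
\]
hence $v_n(S)\leq (C\ell^{\hat\al})^{|S|}(\dist(S)/\delta(S))^{\al}$. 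Per-factor inspection shows $\dist(S)_j/\delta_j\leq 1$ for adjacent pairs and $\leq C\ell^{-2}$ for non-adjacent pairs, so $(\dist(S)/\delta(S))^{\al}\leq C^{|S|}\ell^{-2\al n_{\mathrm{na}}(S)}$ with $n_{\mathrm{na}}(S)$ the number of non-adjacent gaps in $S$. Summing $\sum_{S\ni \tau'}$ is then organized by $|S|=q$ and by the gap pattern: contiguous intervals give $q$ choices and factor $(C\ell^{\hat\al})^q$ with geometric decay since $\hat\al<\al$ and $\ell$ is large; non-trivial gap patterns pay $\ell^{-2\al}$ per gap, and the sum over gap lengths converges using the residual decay from $\delta(S)^{-\al}$ (the explicit $\ell^2(s_{j+1}-s_j-1)$ factor). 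Collecting all cases the leading contribution (from $|S|=1$ and $|S|=2$ contiguous) yields the announced $C\ell^{2+2\al}$.

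Part 3 is identical in structure to Part 2, except the restriction $|A|>|S|$ truncates the geometric series $\sum_k(e^C\epsilon_n)^k$ by dropping its leading $k=|S|$ term, producing an extra factor $\epsilon_n$ throughout and therefore the improved bound $C\ell^{2+3\al}\epsilon_n$. The main obstacle is the Part 2 bookkeeping: one must keep track of adjacent versus non-adjacent pairs in $S$, check that the $\ell^{-2\al}$ gain per non-adjacent pair more than compensates the entropy of placing gaps of variable lengths, and verify that $\hat\al<\al$ delivers geometric decay in $|S|$ uniformly in the gap structure -- this is where the condition $c_v$ small (so that $(\ell^{-\hat\al}\epsilon_n)^{-|S|}$ does not overwhelm the decay) is used.
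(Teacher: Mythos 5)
Your Part 1 is essentially the paper's own argument (decompose over $|A|$ and over $\min A\in I_{\min S}$, use time-translation invariance and the induction hypothesis, pay one factor $\ell^{2}$), and it is fine. The gap is in Part 2, and Part 3 inherits it. Once you apply Part 1 as a black box for each fixed $S$, the only $S$-dependence left in your bound is the ratio $\bigl(\dist(S)/\delta(S)\bigr)^{\alpha}$, and this ratio does \emph{not} decay in the gap lengths of $S$: for a gap $g=s_{j+1}-s_j\geq 2$ the corresponding factor is $\bigl((1+g)/(\ell^{2}(g-1))\bigr)^{\alpha}$, which tends to the constant $\ell^{-2\alpha}$ as $g\to\infty$. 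So every non-adjacent gap contributes a factor of order $\ell^{-2\alpha}$ \emph{uniformly in its length}, and summing your per-$S$ bound over $S\ni\tau'$ already diverges at $|S|=2$, since there are infinitely many admissible gap lengths. Your appeal to ``residual decay from $\delta(S)^{-\alpha}$'' is exactly what is not available: the factor $\dist(S)^{\alpha}$ built into $v_n(S)$ cancels that decay. The root cause is that Part 1, used per fixed $S$, discards the decay of the cumulants in the \emph{microscopic} time separations, which is the only source of summability over the gap structure of $S$; the bound $\ell^{2}(C\epsilon)^{|S|}$ of Part 1 is essentially insensitive to how far apart the blocks of $S$ are.

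The paper never decouples the two sums: since each $A$ contributes to exactly one $S$, namely $S=S(A)$, one has $\sum_{S\ni\tau'}\sum_{A\to S}=\sum_{A:\,A\cap I_{\tau'}\neq\emptyset}$, and the entire sum, including the entropy over the gaps of $S$, is then controlled by the summability of $h(A)$ coming from the induction hypothesis. The ratio $\dist(S(A))^{\alpha}\dist(A)^{-\alpha}$ is used only to extract the $\ell$-power counting, via the count of interior $\tau'\in S(A)$ whose block contains a single element of $A$ (at least $\lfloor (p+1)/2\rfloor$ factors $\ell^{-2\alpha}$ with $p\geq 2|S(A)|-|A|-2$); summing over $k_A\geq k_S$ with the small parameters $\epsilon\ell^{\alpha}$ and $\ell^{\hat\alpha-\alpha}$ then gives $C\ell^{2+2\alpha}$, and in Part 3 the restriction $|A|>|S|$ leaves at least one factor $\epsilon\ell^{\alpha}$ (note: not merely $\epsilon$), giving $C\ell^{2+3\alpha}\epsilon$. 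To salvage your route you would need a strengthened Part 1 whose right-hand side decays summably in the gap lengths of $S$, which the induction hypothesis (only $\dist(A)^{\alpha}$-weighted summability) does not yield in a factorized, per-gap form; otherwise you must keep the $A$- and $S$-sums coupled as the paper does.
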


\begin{proof}
To get \eqref{eq: easy sum}, we recall the function $h(\cdot)$ introduced in \eqref{eq: how h emerges}, and we bound the LHS by
\beq   \sum_{k \geq \str S \str}   (C \ep)^{k} \sum_{\tau_1 \in I_{\tau'_1}}   \sum_{A: \str A \str=k, \min A=\tau_1}   h(A)  \leq   \ell^2(C \ep)^{\str S \str}, \qquad \tau_1' = \min S  
\eeq
using the summability of $h$, i.e.\ \eqref{eq: normalization of h} and bounding the sum over $\tau_1=\min A$ by $\ell^2$.  

Now to \eqref{eq: nonlinear lemma linear part};
we rewrite
\beq
 \sum_{S \ni \tau'}    v_n(S)   \leq     \sum_{A:  A \cap I_{\tau'} \neq \emptyset}    \left(    \ell^{ -\hat \al}\ep  \right)^{-\str S(A) \str}      \ep^{  \str A \str }   \mathrm{dist}(S(A))^{\al} ( \mathrm{dist}(A))^{-\al}  h(A)  
 \eeq
Assume there is a $\tau' \in S(A), \tau' \neq \min S(A), \max S(A)$, such that the set $ A \cap I_{\tau'}$ contains only one element.  Then we get a factor $\ell^{-2\al}$ from change of variables, and if there are $p$  such $\tau'$, then we get at least $\lfloor (p+1)/2 \rfloor$ of these factors.  From elementary considerations, 
\beq
  2  (\lfloor (p+1)/2 \rfloor )  \geq  p \geq  \str S(A) \str -  (\str A \str - \str S(A) \str)  -2
\eeq
Hence, we bound
\baq
 \sum_{S \ni \tau'}  v_n(S)  & \leq &    \sum_{A:  A \cap I_{\tau'} \neq \emptyset}         \left(   \ell^{ -\hat \al}\ep  \right)^{-\str S(A) \str}      \ep^{  \str A \str  }  \ell^{- \al (  2\str S(A) \str - \str A \str -2 )  }  h(A)   \nonumber \\[2mm]
 & \leq &           \ell^{2\al}       \sum_{k_A \geq k_S \geq 2}    ( \ep \ell^{\al})^{k_A -k_S}   \ell^{(\hat \al-\al) k_S}           \sum_{A:  A \cap I_{\tau'} \neq \emptyset, \str A \str =k_A}    h(A)  \nonumber   \\[2mm]
& \leq &                \ell^{2\al+2}       \sum_{k_A \geq k_S \geq 2}    k_A   ( \ep \ell^{\al})^{k_A -k_S}     \ell^{(\hat \al-\al) k_S}     \leq      C   \ell^{2+2\al}      \label{eq: estimate of vns sum}
\eaq
The second inequality follows by setting $k_S=\str S(A) \str, k_A = \str A \str$. To get the third inequality, we bound  the sum over    $ h(A) $ with $A \ni \tau$ by $\str A \str$ times the sum over $h(A)$ with $\min A $ fixed, and then we proceed as in the proof of, \eqref{eq: easy sum}. This yields \eqref{eq: nonlinear lemma linear part}. To obtain (\ref{eq: nonlinear lemma nonlinear part}), one repeats the calculation with the only difference that the sum over $k_A , k_S $ in \eqref{eq: estimate of vns sum} is restricted to $k_A -1 \geq  k_S \geq 2$. 

 \end{proof}

\subsection{Proof of Proposition \ref{prop: nonlinear rg}}

We perform the sum 
\beq
 \sum_{A' \ni \tau'}  ({\hat \ep}')^{-\str A' \str}  \dist(A')^{\al}   \norm G^{c'}_{A',\nlin}  \norm_{\tengam}  \leq   \mathbf{I} + \mathbf{II}  +  \mathbf{III} 
\eeq
where the three terms on the RHS refer to the three terms in \eqref{eq: three terms in nonlinear rg}.    The second term can be bounded immediately with the help of \eqref{eq: nonlinear lemma nonlinear part} (with $S = A'$), yielding
\beq
 \mathbf{II}   \leq     C   \ell^{2+3\al}  \ep
\eeq
We estimate the term  $  \mathbf{III}$, with $x :=  \sum_{A \to A'}   \norm G^{c}_{A}  \norm_{\gamzero} \e^{C \str A \str}   \leq (C \ep)^{\str A' \str} $ by \eqref{eq: easy sum} in Lemma \ref{lem: all a that fit into s}. In particular,  we have $x < 1-c$, and hence 
$F_2(x) \leq   \ep^2 C  x $ with $F_2$ as in \eqref{eq: three terms in nonlinear rg}, since $\str A ' \str \geq 2$. Therefore,  
\beq
  \mathbf{III}   \leq   C \ep^2   \sum_{A' \ni \tau'}  ({\hat \ep}')^{-\str A' \str}  \dist(A')^{\al}    \sum_{A \to A'}   \norm G^{c}_{A}  \norm_{\gamzero} \e^{C \str A \str}  \leq   \ep^2 \ell^{2 +2 \al}
\eeq
where we used \eqref{eq: nonlinear lemma linear part} in Lemma \ref{lem: all a that fit into s} (recognizing the definition of $v_n(S)$ with $S=A'$).
Now we turn to
\baq
 \mathbf{I}  \leq \sum_{A' \ni \tau'}    ({\hat \ep}')^{-\str A' \str}  \dist(A')^{\al}  \sum_{ \caS \in \frC, \supp \caS=A', \str \caS \str >1}   \prod_{S \in \caS}   \sum_{A \to S}   \norm G^{c}_{A}  \norm_{\gamzero} \e^{C \str A \str}     \label{eq: nonlinear rg}
\eaq
where we dropped the function $F_1(\cdot)$ at the cost of increasing the constant $C$, using, as in the treatment of term   $\mathbf{III}$  above, that its argument is smaller than $1-c$. 
First, for any  $\caS \in \frC$ with $ \supp \caS= A'$,   Lemma \ref{lem: distance factors} implies 
\beq
\dist(A')^{\al}  \leq  \prod_{S \in \caS}   \dist(S)^{\al} 
\eeq
such that \eqref{eq: nonlinear rg} is dominated by
\beq  \label{eq: nonlin rg basic quantity}
   \sum_{ \caS \in \frC, \str \caS \str >1,  \supp \caS \ni \tau'}  ({\hat \ep}')^{- \str\supp \caS \str }   \prod_{S \in \caS}    \dist(S)^{\al}   \left(  \sum_{A \to S}   \norm G^{c}_{A}   \norm_{\gamzero} \e^{C \str A \str}   \right)
\eeq

By substituting \eqref{def: vns} for each $S \in \caS$ in \eqref{eq: nonlin rg basic quantity}, and setting $N(\caS)= \sum_{S \in \caS} \str S \str$, 
\baq
 \textrm{\eqref{eq: nonlinear rg}}& \leq &   \sum_{ \caS \in \frC,  \str \caS \str >1,  \supp \caS \ni \tau'}  \underbrace{({\hat \ep}')^{N(\caS)- \str\supp \caS \str }     (\ell)^{(2+2\al) \str \caS \str }  }_{r(\caS)}    \prod_{S \in \caS}  c^{\str S \str}_v  \ell^{-(2+2\al)} v_n(S)
\eaq

By the connectedness of $\caS$, we have $ N(\caS)  - \str \supp \caS \str  \geq \str \caS \str -1$, and combined with $\str \caS \str \geq 2$, this yields $r(\caS) \leq 1$. 
 Hence we are left with the task of estimating (the constraint $\str \caS \str >1$ can now be dropped)
\beq  \label{eq: final sum over ic}
 \sum_{ \caS \in \frC,    \supp \caS \ni \tau'}   \prod_{S \in \caS}   c^{\str S \str}_v  \ell^{-(2+2\al)} v_n(S).
\eeq
The tool to perform this sum is the bound  $\sum_{S \ni \tau'}  \ell^{-(2+2\al)} v_n(S) \leq C$ from Lemma \ref{lem: all a that fit into s}.  Indeed,  sums  over collections of  connected sets, as in \eqref{eq: final sum over ic}, can be handled conveniently with cluster expansions, with the bound from Lemma \ref{lem: all a that fit into s} playing the role of the Kotecky-Preiss criterion.  We state the relevant cluster expansion result in Proposition \ref{app: prop: cluster expansion} in Appendix \ref{app: combinatorics}. A basic corollary, eq.\  \eqref{eq: useful cluster expansion}, implies  that \eqref{eq: final sum over ic} is bounded by a constant that can be made smaller than $1$ by choosing $c_v$ small enough. 

Collecting the bounds on the three terms $\mathbf{I}, \mathbf{II}, \mathbf{III}$, Proposition \ref{prop: nonlinear rg} follows




\section{Estimates on the first scale: Excitations} \label{sec: estimates on the first scale excitations}

In this section, we prove bounds on the correlation functions $G^c_A$, i.e.\ the claims of induction hypothesis \ref{prop: overview b behavior}  for $n=0$.  We treat the bulk correlation functions ($0 \notin A$) in Section \ref{sec: bounds on bulk correlation functions} and the boundary correlation functions ($0 \in A$) in Section \ref{sec: bounds on boundary correlation functions}.
We  derive an explicit representation for the operators $G^c_A$ that is based on the Dyson expansion. This representation will also be useful to analyze $T$ for $n=0$ in Section \ref{sec: estimates on the first scale reduced evolution}.   Since we start so to say from scratch, the first part of our discussion is in finite volume $\La$ and we perform the thermodynamic limit in Section  \ref{sec: bounds on the dyson expansion}. 

\subsection{Dyson Expansion} \label{sec: expansions}

We recall the reduced  dynamics of the system at microscopic time $t$
\baq
Z_t  \rho_\sys & = &   \Tr_{\res} \left[    \e^{-\i t H }    (\rho_\sys \otimes \initialresfinite)   \e^{\i  t  H }    \right] 
\eaq
and the dynamics on the $n=0$ scale, $T_{n=0}$ (introduced in Section \ref{sec: general}), it is related to $Z_t$ through
$T_0= Z_{\la^{-2}\frt_0}$. 
In the next sections, we will develop an expansion for $Z_t$ and we will relate this expansion to the correlation functions $G_A^c$.

\subsubsection{Derivation of the expansion}  \label{sec: operator product expansion}

We introduce the  time-evolved interaction Hamiltonian
\baq\label{intHam}
H_{I}(s) & := &  \e^{\i s  H_\res}   \sum_{q \in \La^*, i=1,2} \left(  \phi(q)  (W \otimes \e^{\i q X} \otimes  a^*_{i, q})  + h.c. \right)   \e^{-\i s H_\res}  
\eaq
and define the Liouvillians $L_I(s)= \adjoint(H_I(s))$,  $L_\sys = \adjoint(H_\sys)$ and 
$L_E = \adjoint(H_E)$. Let us also use the shorthand 
$U_s = \e^{-\i s L_\sys}$. Then the Duhamel formula 
$$
\e^{\i tL_E}\e^{-\i tL}=U_t+\int_0^t \d s\, U_{t-s}(-i\la L_I(s)) \e^{\i sL_E}\e^{-\i  sL}
$$
yields upon iteration the  Dyson series
for the reduced dynamics $Z_{t}$:
\baq \label{eq: first duhamel series}
Z_{t} \rho_\sys=     \sum_{m \geq 0} (-\la^2)^m \mathop{\int}\limits_{0< t_1 < \ldots < t_{2m} <t} \d t_1 \ldots \d t_{2m} \,   \Tr_{\res}\Big[U_{t-t_{2m}}  L_I(t_{2m})  
  \dots U_{t_2-t_1} L_I({t_1}) U_{ t_1} (\rho_\sys\otimes
 \initialresfinite ) \Big] 
\eaq
where the invariance $ \Tr_{\res}(\e^{-\i tL_E}A)= \Tr_{\res}A$ was used. For $m=0$, the RHS is understood as $U_t \rho_\sys$.

Next, we  write the integrand  using the formalism developed in Section \ref{sec: correlation functions}:
\baq
 &&  \Tr_{\res}\Big[U_{t-t_{2m}}  L_I(t_{2m})    \dots U_{t_2-t_1} L_I({t_1}) U_{ t_1} (\rho_\sys\otimes
 \initialresfinite ) \Big]   \\[2mm]
  &=&  \caT \  \bbE\left[  U_{t-t_{2m}}  \odot L_I(t_{2m}) \odot \ldots  \odot  U_{t_2- t_1}  \odot L_I({t_1}) \odot U_{ t_1}\right] \rho_\sys.
\eaq
We recall that the expectation $\bbE$ acts on an element of $\scrR^{\otimes^n}\otimes \scrR_E$
and the contraction $\caT:\scrR^{\otimes^n}\to\scrR$. As in Section \ref{sec: correlation functions}, it will be convenient 
to label the spaces $\scrR$ in $\scrR^{\otimes^n}$ by the times that occur in the 
corresponding operators. Given a  $2m$-tuple  of times $\{t_1,\dots,t_{2m}\}\equiv\ut \subset [0,t]$ 
let $t_0=0, t_{2m+1}=t$ and
denote the family of intervals $J_i \equiv [t_i,t_{i+1}], i=0, 2m$ by $\caJ(\ut)$. 
We will then index the space $\scrR$ where $L_I({t_i})$ lies by $\scrR_{t_i}$ and 
the space $\scrR$ where $U_{t_{i+1}-t_{i}}$ lies by $\scrR_{J_i}$. Thus
\beq 
\bbE\left[  U_{t-t_{2m}}  \odot L_I(t_{2m}) \odot \ldots  \odot  U_{t_2- t_1}  \odot L_I({t_1}) \odot U_{ t_1}\right] \in \mathop{\otimes}\limits_{i=1}^{2m} \scrR_{t_i}   \mathop{\otimes}\limits_{J \in \caJ(\ut)}  \scrR_J
\label{bigspace}
\eeq 
and as before the operator $\caT$  contracts the operators in the obvious chronological order, i.e.\ such that those in $\scrR_{[0,t_1]}$ are on the right, then those in $\scrR_{t_1}$, then those in $\scrR_{[t_1,t_2]}$, etc.

Let $\{u,v \}$ be a pair of (distinct) times with the convention that $u  < v$. Then we define 
\beq\label{KUV}
K_{u,v}   =   -\la^2  \bbE\left[ L_I({v}))  \odot  L_I({u}))   \right], \qquad    K_{u,v}  \in  \scrR \otimes \scrR
 \eeq
 and we view this operator as an element in $\scrR_v \otimes \scrR_u$.  We also  abbreviate $U_{s'-s}$ by $U_{J}$ with $J=[s,s']$ and we view  $U_{J}$ as an element in $\scrR_J$.
 Since $L_I$ is linear in the creation and annihilation operators, the Wick theorem  yields
\baq
\bbE\left[  U_{t-t_{2m}}  \odot L_I(t_{2m}) \odot \ldots  \odot  U_{t_2- t_1}  \odot L_I({t_1}) \odot U_{ t_1}\right] =   \sum_{\pi \in \textrm{Pairings}(\ut)}    \mathop{\otimes}\limits_{\{u,v \} \in \pi}  K_{u,v}  \mathop{\otimes}\limits_{J \in \caJ(\ut)} U_{J}     \label{eq: first condensed pair sum}
\eaq
where the sum on the RHS runs over pairings $\pi$, i.e.\ partitions of the times $t_1, \ldots, t_{2m}$ in $m$ pairs $(u,v)$ with the convention that $u < v$.

By plugging \eqref{eq: first condensed pair sum} into \eqref{eq: first duhamel series}, we obtain
\baq \label{eq: second duhamel series}
Z_{t} &=&     \sum_{m \geq 0}  \mathop{\int}\limits_{0< t_1 < \ldots < t_{2m} <t} d t_1 \ldots \d t_{2m} \, \,   \sum_{\pi \in \textrm{Pairings}(\ut)}   \caT\left[  \otimes_{\{u,v \} \in \pi}  K_{u,v} \otimes_{J \in \caJ(\ut)} U_{J}    \right] 
\eaq
As before, we equip products of $\scrR$, e.g.\ as in \eqref{bigspace}, with the norm $\norm \cdot \normw$.

\subsubsection{A formalism for the combinatorics} \label{sec: combinatorics}

The integral over ordered $\ut$, together with the sum over pairings, $\pi$,  on the set of times, is represented as a combined integral and sum over ordered pairs $ (u_i,v_i)$ with $u_i,v_i \in \bbR^+$ and $i=1, \ldots, m$,  such that 
\beq
u_i < v_i, \qquad    u_1< \ldots < u_m
\eeq
This is done as follows. For any pair $(r,s) \in \pi$, we let $u_i=t_r, v_i =t_s$ where the index $i=1,\ldots,m$ is chosen such that the $u_i$ are ordered $u_1 < u_2 \ldots < u_m$. 
We represent one pair $ (u_i,v_i)$ by the symbol $w_i$ and the $m$-tuple of them by $\uw$. 
We call $\Om_J$ the set of  $\uw$ such that  $u_i,v_i \in J$  (for arbitrary $m$), and we use the shorthand
\beq
\mathop{\int}\limits_{\Om_J} \d \underline{w}  : = \mathop{ \sum}\limits_{m \geq 0} \,   \,\mathop{ \int}\limits_{J^m} \d \underline{u} \mathop{ \int}\limits_{J^m} \d \underline{v}    \, \,  \indicator_{[u_i <v_i]} \indicator_{[u_1 < \ldots < u_m]}
\eeq
where the RHS is set to $1$ for $m=0$, corresponding to  $\uw = \emptyset$ in the LHS. 
In what follows, we will often consider the ordered times $\ut, \underline{u}, \uv$ to be implicitly defined by $\uw$. For example, we will   write $\caJ(\uw)$ instead of 
 $\caJ(\ut)$.  The Dyson expansion in terms of the sets of pairs $\uw$ reads
\baq  \label{eq: z as integral over pairs}
Z_t  
&=&   \mathop{\int}\limits_{\Omega_{[0, t]}}     \d \uw  \,    \caT\left[  \otimes_{w \in \uw} K_{w} \mathop{\otimes}\limits_{J \in \caJ(\uw)} U_{J}    \right]
\eaq

\subsubsection{Connected correlations and the Dyson series} \label{sec: treegraph}

To relate the previous sections to the setup in Section \ref{sec: general}, we need to discretize time and express the operators $G_A, G^{c}_{A}$ in terms of the Dyson series.  

 Recall that $\bbN$  is the set of macroscopic times. To each macroscopic time, we now associate a \emph{domain} of microscopic times,
\beq
\Dom  ( \tau) =  [  \la^{-2}\frt_0 (\tau-1),   \la^{-2}\frt_0   \tau ]
\eeq
  To a set $A \subset \bbN$ of macroscopic times, we then associate the  domain 
 \beq
 \Dom (A)= \bigcup_{\tau \in A} \Dom  ( \tau)
 \eeq
We take $t=\la^{-2}\frt_0 N$. Then, a set of pairs $\uw \in \Om_{[0,t]}$ determines a graph $\caG(\uw)$ on $\{1, \ldots,N \}$ by the following prescription: the vertices $\tau< \tau'$ are connected by an edge if and only if there is a pair $w=(u,v)$ in $\uw$ such that 
\beq
u \in   \Dom  ( \tau)  \qquad  \textrm{and} \qquad     v \in   \Dom  ( \tau')
\eeq 
(Note that there may be several such pairs).
We write $\supp(\caG(\uw))$ for the set of non-isolated vertices of $\caG(\uw)$, i.e.\ the vertices that have at least one connection to another vertex.   If $\uw \in  \Om_{\Dom A}$, than $\supp(\caG(\uw))$ is obviously a subset of $A$.  In that case we write $\caG_A(\uw)$ for the induced graph with vertex set $A$.
The graphs $\caG(\uw)$ in the Dyson expansion  with support $A$ give rise to the correlation function $G_A$ of Section \ref{sec: general}, and connected graphs  $\caG_A(\uw)$ give rise to the connected correlation functions. This goes as follows. 
  Recall the collection $\caJ(\uw)$ of intervals determined by the times in  $\uw$. Given
  a macroscopic time $\tau\in\bbN$ we
  define $\caJ_{\tau}(\uw)$ as the family of intervals $\{\Dom ( \tau)\cap J\ |\ J\in\caJ_{}(\uw)\}$
  and set
   \beq
\caJ_{A}(\uw)  :=\bigcup_{\tau \in A} \caJ_{\tau}(\uw)
 \eeq
 Using the group property of $U_s$ and the definition of contraction we have
 $$
 U_J=\iota(U_{J_2}\otimes U_{J_1})
 $$
 for $J={J_1}\cup {J_2}$ with $J_1,J_2$ consecutive intervals. Applying this to the intervals $J$
 in (\ref{eq: z as integral over pairs}) intersecting more than one $\Dom ( \tau)$ we get
 \baq  \label{blocking}
    \caT\left[   \mathop{\otimes}\limits_{w \in \uw} K_{w} \mathop{\otimes}\limits_{J \in \caJ_{}(\uw)} U_{J}    \right]
    = \caT\left[    \mathop{\otimes}\limits_{w \in \uw} K_{w} 
   \mathop{\otimes}\limits_{J \in \caJ_{[1,N]}(\uw)} U_{J}   \right]=: \caT\caV_{[0, N]}(\uw).
\eaq
where we abbreviated $[1,N]= \{ 1,2,\ldots, N\}$.
 The tensor product defining $\caV_{[1, N]}(\uw)$ factors across the macroscopic times, i.e:
  \baq  \label{blocking1}
\caV_{[1, N]}(\uw)=\mathop{\otimes}\limits_{\tau\notin\supp(\caG(\uw))}  \caV_{\Dom ( \tau)}(\uw)
 \mathop\otimes  \caV_{\supp(\caG(\uw))}(\uw)
\eaq
which can also be written as
  \baq  \label{blocking2}
\caV_{[1, N]}(\uw)=\mathop\otimes_{\tau\notin\supp(\caG(\uw))}  \caV_{\Dom ( \tau)}(\uw)
 \mathop\otimes_i  \caV_{\supp(\caG_i(\uw))}(\uw) 
\eaq
where $\caG_i$ are the connected components of $\caG$. 

As in Section \eqref{sec: recursion relations for correlation functions}, we may perform the time ordered contraction $\caT$ in two steps, first within
the time intervals  $\Dom ( \tau)$  and then contracting the rest:
\beq\label{eq: magic formula} \caT=\caT\mathop\otimes_{\tau=1}^N\caT_\tau
\eeq
where 
$$\caT_\tau: \mathop{\otimes}\limits_{i:t_i\in\Dom ( \tau)} \scrR_{t_i}   \mathop{\otimes}\limits_{J \in \caJ_{\tau}(\uw)}  \scrR_J \quad \to \quad \scrR_\tau.$$
The beautiful formula \eqref{eq: magic formula} is not a typo but a consequence of the fact that we defined $\caT$ both as contracting microscopic times and intervals, and macroscopic intervals. 
Writing as before $\caT_A=\mathop\otimes_{\tau\in A}\caT_\tau
$,
eq.\ (\ref{blocking1}) leads then to the expansion (\ref{eq: first expansion density matrix}) with
\baq
 && G_A    = \mathop{\int}\limits_{\scriptsize{\left.\begin{array}{c}   \Om_{ \Dom (A)}  \end{array}\right.  }}  \,   \d \uw   \,   \,  \indicator_{[ \supp(\caG(\uw)) =A  ]}    \caT_{A}  \left[ \mathop{\otimes}\limits_{w \in \uw} K_w \mathop{\otimes}\limits_{J \in \caJ_{A}(\uw) } U_{J}    \right]  \label{eq: correlation function as support of graphs}
\eaq
and
\beq
T= \mathop{\int}\limits_{\scriptsize{\left.\begin{array}{c}  \Om_{ \Dom(\tau)}   \end{array}\right.  }}  \,   \d \uw   \,   \,       \caT  \left[ \mathop{\otimes}\limits_{w \in \uw} K_w \mathop{\otimes}\limits_{J \in \caJ_{\tau}(\uw) } U_{J}    \right]   \label{eq: T in terms of caI}
\eeq
  (note that $\tau$ on the RHS is arbitrary) whereas 
 (\ref{blocking2}) 
gives
\baq
  G^{c}_{A}  
 &=&   \mathop{\int}\limits_{\scriptsize{\left.\begin{array}{c}  \Om_{  \Dom (A)}     \end{array}\right.  }}  \d \uw \, \,   \indicator_{[\caG_A(\uw) \,  \text{connected}]}  \,  \caT_{A}  \left[ \mathop{\otimes}\limits_{w \in \uw} K_w \mathop{\otimes}\limits_{J \in \caJ_{A}(\uw) } U_{J}    \right]       \label{eq: correlation function as connected graphs}
 \eaq

 \subsection{Bounds on bulk correlation functions} \label{sec: bounds on bulk correlation functions}
 
 In this Section, we state and prove Lemma \ref{prop: integrability}, which is actually 
 the induction hypothesis \ref{prop: overview b behavior} for bulk sets $A$ (not containing  $0$) and on scale $n=0$.  Boundary correlation functions ($A \ni 0$) will be treated in Section \ref{sec: bounds on boundary correlation functions}.

\subsubsection{Bounds on the Dyson expansion}   \label{sec: bounds on the dyson expansion}

As a first step, let us derive a term by term bound on the Dyson expansion and perform the thermodynamic limit.  
Let us write the operator $K_{u,v}$ in (\ref{KUV}) explicitly. Using the field operators
( \ref{Segal}) we can rewrite (\ref{intHam}) as
\beq
H_I(s)=\sum_{x}W\otimes \lone_x\otimes \Phi(x,s).
\eeq
Let us use the notation ${\adjoint}(A)=A^0-A^1$ where $A^0$ is left multiplication and  $A^1$
is right multiplication (by $A$). Then, (\ref{KUV}) 
becomes
\beq\label{kuv expression}
K_{u,v}=-\la^2\sum_{x,y}\sum_{a,b\in\{0,1\}}(-1)^{a+b}
\zeta^{ab}(x-y,v-u)(W\otimes \lone_x)^a\otimes(W\otimes \lone_y)^b
\eeq
where 
\beq\label{kuv expression1}
\zeta^{ab}(x-y,v-u)=\Tr_E\left(\Phi(x,v)^a\Phi(y,u)^b\initialresfinite\right)
\eeq
By (\ref{eq: microscopic expression correlation function}) 
\beq
\zeta^{00}(x,t)=\zeta^{10}(x,t)=\zeta^{11}(x,-t)=\zeta^{01}(x,-t)=\zeta(x,t)
\eeq
(we used translation invariance in time and  $O(d)$ invariance  in $x$).

Eq.\ \eqref{eq: assumption free correlation functions thermo} (thermodynamic limit for $\zeta$) implies the kernel of $K_{u,v}$ has a pointwise limit as $\La\nearrow \bbZ^d$.  For the remainder of the present section, we use the notation $K_{u,v}$ and $h(s)$ (introduced below) both for  $\La$ finite and $\La=\bbZ^d$, indicating differences whenever necessary.  Since the kernel of $\lone_y^a$ is diagonal in the coordinates  $x,v$ (recall that $z=(x,v,\eta, e_{\links}, e_{\rechts}) \in \bbA_0 $ and note that this $v$ has nothing to do with the time-coordinates $u,v$ used below),
we get
\beq\label{KUVbound}
\|K_{u,v}\|_\ga\leq C\la^2
\eeq
for all $\ga$. In fact, using the time decay of $\zeta$ in assumption \ref{ass: decay micro alpha}
and denoting
\beq  \label{eq: def h pairings}
 \la^2 h(v-u ):=   \left\norm  K_{u,v} \right\norm_{\twentygam }, 
\eeq
we get  $ h(s) \leq C $  for $\La$ finite and 
\beq    \label{eq: working bound on h}
 \int_{0}^{\infty}  \d s    (1+\str s \str)^{\al}  h(s)   \leq   C, \qquad \text{for} \, \La = \bbZ^d  \\    
\eeq 

\begin{lemma}\label{lem: bound on h}
The sums and integrals on the RHS of  \eqref{eq: z as integral over pairs}, \eqref{eq: correlation function as support of graphs}  \eqref{eq: T in terms of caI} and \eqref{eq: correlation function as connected graphs}  converge absolutely.   For example, the  series defining $Z_t$  is bounded  by
\baq
&&  \int_{\Omega_{[0, t]}}   \,  \d \uw   \,    \left\norm   \caT\left[  \mathop{\otimes}\limits_{w \in \uw} K_{w} \mathop{\otimes}\limits_{J \in \caJ_{}(\uw)} U_J  \right] \right\norm_{20 \ga_0}  \leq     \left\{ \begin{array}{ll} \e^{C\la^2 t^2} &  \La \,\, \text{finite} \\[2mm] \e^{C\la^2 t} &  \La=\bbZ^d   \end{array} \right. 
  \label{eq: archetypical bound pairings}
\eaq
In particular, the limits of \eqref{eq: z as integral over pairs}, \eqref{eq: correlation function as support of graphs}  \eqref{eq: T in terms of caI} and \eqref{eq: correlation function as connected graphs} as 
$\La \nearrow \bbZ^d$ exist. 
Moreover $Z_t$ is strongly continuous in $t$.
\end{lemma}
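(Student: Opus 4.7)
The plan is to bound each summand of the Dyson expansion \eqref{eq: z as integral over pairs} term by term using the submultiplicativity of $\|\cdot\|_{20\gamma_0}$ established in Lemma \ref{lem: main property of norms}, and then to sum over pairings and integrate over the time variables. Since the contraction $\caT$ is a composition of the elementary contractions $\iota_{ij}$, parts 1) and 2) of Lemma \ref{lem: main property of norms} yield, for every $\uw\in\Om_{[0,t]}$,
\begin{equation}
 \bigl\|\caT\bigl[\mathop{\otimes}\limits_{w\in\uw}K_w\mathop{\otimes}\limits_{J\in\caJ(\uw)}U_J\bigr]\bigr\|_{20\gamma_0}\ \leq\ \prod_{w\in\uw}\|K_w\|_{20\gamma_0}\,\prod_{J\in\caJ(\uw)}\|U_J\|_{20\gamma_0}.
\end{equation}
By the very definition \eqref{eq: def h pairings} one has $\|K_w\|_{20\gamma_0}=\lambda^2 h(v-u)$ for $w=(u,v)$, with $h$ uniformly bounded in finite $\Lambda$ and satisfying the infinite-volume integrability bound \eqref{eq: working bound on h}.

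Next I would control the free-evolution factors. Since $H_\sys=H_{\spin}+H_\kin$ and $H_\kin$ is a discrete Laplacian with hopping amplitude of order $\lambda^2/m_{\mathrm{p}}$ (and $H_{\spin}$ acts trivially in the spatial variables), a Dyson expansion of $e^{-\i|J|H_\kin}$ together with exponential bounds on the combinatorial weight gives $\|U_J\|_{20\gamma_0}\leq e^{C\lambda^2|J|}$, uniformly in $\Lambda$. Because the interval lengths in $\caJ(\uw)$ sum to $t$, the product telescopes to $\prod_{J\in\caJ(\uw)}\|U_J\|_{20\gamma_0}\leq e^{C\lambda^2 t}$, independent of the number of pairs in $\uw$.

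Combining these two estimates, the full sum is dominated by
\begin{equation}
 e^{C\lambda^2 t}\sum_{m\geq 0}\frac{1}{m!}\Bigl(\int_{[0,t]^2}\mathbf{1}_{u<v}\,\lambda^2 h(v-u)\,\d u\,\d v\Bigr)^{m}.
\end{equation}
In finite $\Lambda$, $h\leq C$ gives the inner integral $\leq C\lambda^2 t^2$, summing to $e^{C\lambda^2 t^2}$. In infinite volume, substituting $s=v-u$ and applying \eqref{eq: working bound on h} gives $\leq C\lambda^2 t$, summing to $e^{C\lambda^2 t}$. The identical bound applies to the integrals representing $G_A$, $G^c_A$, and $T$ in \eqref{eq: correlation function as support of graphs}--\eqref{eq: correlation function as connected graphs}, since they only restrict the integration domain or impose graph-connectedness.

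Existence of the thermodynamic limit of each term in the Dyson series then follows from pointwise convergence of the kernel of $K_w$ (guaranteed by \eqref{eq: assumption free correlation functions thermo}) together with dominated convergence using the infinite-volume majorant just constructed; the limit of the full series exists by a second dominated convergence in $\uw$. Strong continuity of $Z_t$ follows because every individual Dyson summand is norm-continuous in $t$ (the integration region and $U_J$ depend continuously on $t$), and the series converges uniformly on compact $t$-intervals thanks to the locally uniform majorant $e^{C\lambda^2 t^2}$. The one place that requires genuine care is the uniform bound $\|U_J\|_{20\gamma_0}\leq e^{C\lambda^2|J|}$: one must check that the weight $e^{20\gamma_0|x'-x|}$ built into the norm is controlled by the smallness of the hopping amplitude via a Combes--Thomas--style argument on the lattice, rather than accumulating into an $e^{C|J|}$ growth that would destroy the bound when many intervals appear.
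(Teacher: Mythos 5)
Your proposal is correct and follows essentially the same route as the paper: the product bound from Lemma \ref{lem: main property of norms}, the propagation bound $\norm U_J \norm_{20\ga_0}\leq C\e^{C\la^2|J|}$ for the weakly hopping lattice Laplacian, the $\frac{1}{m!}$ resummation over pairings giving $\e^{C\la^2 t^2}$ (finite $\La$) resp.\ $\e^{C\la^2 t}$ (infinite volume), and the thermodynamic limit by uniform summability plus pointwise convergence (the paper invokes Vitali where you invoke dominated convergence, which is the same mechanism here). The only cosmetic difference is that each $\norm U_J\norm_{20\ga_0}$ carries a constant prefactor, so the product over the $2|\uw|+1$ intervals is not literally independent of the number of pairs, but these factors are harmlessly absorbed into the $C^m/m!$ sum exactly as in the paper.
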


\begin{proof}
The following reasoning applies for finite and infinite $\La$ alike.
Using Lemma 4.3. we get 
\beq
     \left\norm   \caT\left[  \mathop{\otimes}\limits_{w \in \uw} K_{w} \mathop{\otimes}\limits_{J \in \caJ_{}(\uw)} U_J  \right] \right\norm_{\ga} 
 \leq    \prod_{w \in \uw} \left\norm  K_w \right\normw    \times \prod_{J \in \caJ_{}(\uw)}  \norm U_J   \normw    \label{eq: bound on pairs inside gluing}
\eeq
and using standard propagation bounds for the lattice Laplacian $\Delta$, we have 
\beq\label{propagation bounds}
 \norm U_J  \norm_{\twentygam}      \leq   C \e^{  \la^2 C \str J \str}
\eeq
(recall that we treat $\ga_0$ as a constant).
Hence the LHS of \eqref{eq: archetypical bound pairings} can be bounded by
\baq
 &&  \mathop{\int}\limits_{\Omega_{}}   \,  \d \uw    \prod_{J \in \caJ_{[0,t]}(\uw)}  C   \e^{ \la^2 C \str J \str }      \times  \prod_{w \in \uw} \la^2  h(v-u)   \\
 & \leq &     \e^{\la^2 C t}  \sum_{m \in \bbN}  \mathop{\int}\limits_{0< u_1 < \ldots < u_{m} <t} d \underline{u} \,   \left( \prod_{i=1}^m  C\int_{u_i}^t \d v_i h(v_i-u_i) \right)  \\
 &  \leq  &   \e^{C \la^2 t}  \sum_{m\geq 0}   \frac{(\la^{2}C t)^m}{m!}\left( \int_0^t \d s h(s)\right)^m 
 \leq         \left\{ \begin{array}{ll} \e^{C\la^2 t^2} &  \La \,\, \text{finite} \\[2mm]   \e^{C \la^2 t} &  \La=\bbZ^d   \end{array} \right.
  \label{eq: bound pairings by exponential}  \eaq
To get the second last inequality, we first performed the $v_i$-integrals, and then we estimated the $u_i$-integrals by the volume of an $m$-dimensional simplex. In the last one we used
eq.\  \eqref{eq: working bound on h} and $h(s) >C$ for finite $\La$. 
For any finite $t$, the  term-by-term convergence of the series follows by the Vitali convergence theorem, since each term converges pointwise and the series is summable uniformly in $\La$.   For the series in  (\ref{eq: correlation function as support of graphs}, \ref{eq: T in terms of caI}, \ref{eq: correlation function as connected graphs}), similar reasoning applies.   Strong continuity of $t \mapsto Z_t$ follows from  
strong continuity of $s \mapsto U_s$.
\end{proof}

The following estimate is an immediate consequence of (\ref{eq: correlation function as connected graphs}) and  bounds as in the proof of Lemma \ref{lem: bound on h}.   We write  $\uw  \mathop{\longrightarrow}\limits_{\text{min}} A$ to denote that the graph $\caG_A(\uw)$ is connected and that no pair can be dropped from $\uw$ without losing this property.
 In particular, this implies that  
  $\caG_A(\uw)$ is a spanning tree on $A$.   We say that such $\uw$ 'minimally span' $A$.

 \begin{lemma} \label{lem: a priori}
 \beq  \label{eq: minimal spanning bound}
  \norm G_A^c  \norm_{\tengam,\weird}   \leq     \e^{   C \str A  \str  } 
   \mathop{\int}\limits_{\scriptsize{\left.\begin{array}{c}  \Om_{ \Dom  (A)}    \end{array}\right.  }}
    \d \uw\,        1_{[\uw  \mathop{\longrightarrow}\limits_{\text{min}} A ] }   \prod_{w \in \uw}  \la^2 C h(v-u) 
 \eeq
   \end{lemma}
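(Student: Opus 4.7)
The plan is to start from the explicit representation \eqref{eq: correlation function as connected graphs} of $G_A^c$ and reduce the integral over connected $\uw$ to an integral over minimally spanning $\uw$, absorbing everything else into an $\e^{C|A|}$ factor.

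\textbf{Step 1 (Termwise norm bound).} Apply Lemma \ref{lem: main property of norms} to the integrand in \eqref{eq: correlation function as connected graphs}. Since $\caT_A$ is a product of contractions $\iota_{ij}$ and the relevant operators factor as a tensor product, we obtain
\[
\norm G_A^c \norm_{\tengam} \leq \int_{\Om_{\Dom(A)}} \d\uw \, \indicator_{[\caG_A(\uw) \text{ conn.}]} \prod_{w \in \uw} \norm K_w \normw \prod_{J \in \caJ_A(\uw)} \norm U_J \normw .
\]
For the pair norms use \eqref{eq: def h pairings}: $\norm K_{u,v}\normw = \la^2 h(v-u)$. For the propagators use \eqref{propagation bounds} together with the structure of $\caJ_A(\uw)$: within each $\Dom(\tau)$ the intervals $J$ partition a set of total length $\la^{-2}\frt_0$ and their number is $m_\tau+1$ where $m_\tau$ counts the pair-endpoints of $\uw$ lying in $\Dom(\tau)$. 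Since $\sum_\tau m_\tau = 2|\uw|$, this gives
\[
\prod_{J \in \caJ_A(\uw)} \norm U_J \normw \leq C^{|A|+2|\uw|}\, \e^{C \la^2 \cdot |A|\la^{-2}\frt_0} = C^{|A|}\,\e^{C\frt_0|A|}\,C^{2|\uw|} ,
\]
and the $C^{2|\uw|}$ is absorbed into the $\la^2 h$ factor by replacing $C$.

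\textbf{Step 2 (From connected to minimally spanning).} The key combinatorial fact is that if $\caG_A(\uw)$ is connected then there is at least one subset $\uw_0 \subseteq \uw$ with $\uw_0 \mathop{\longrightarrow}\limits_{\min} A$ (extract a spanning subcollection by removing redundant pairs one at a time). Hence
\[
\indicator_{[\caG_A(\uw) \text{ conn.}]} \leq \sum_{\uw_0 \subseteq \uw} \indicator_{[\uw_0 \mathop{\longrightarrow}\limits_{\min} A]}.
\]
Writing $\int\d\uw$ in its unordered form $\sum_m \frac{1}{m!}\int\!\d\underline{u}\,\d\underline{v}\,\indicator_{[u_i<v_i]}$, the decomposition $\uw = \uw_0 \sqcup \uw_1$ factorises the integral, yielding
\[
\int\d\uw\, \indicator_{\text{conn.}} \prod_{w\in\uw}\la^2 C h \;\leq\; \Bigl(\int \d\uw_0\, \indicator_{[\uw_0 \mathop{\longrightarrow}\limits_{\min} A]}\!\prod_{w\in\uw_0} \la^2 C h\Bigr)\cdot \Bigl(\sum_{m\geq 0}\tfrac{1}{m!}\bigl(\la^2 C\!\int_0^{|A|\la^{-2}\frt_0}\!\!\d u\!\int_u^{|A|\la^{-2}\frt_0}\!\d v\, h(v-u)\bigr)^m\Bigr).
\]
The second factor is bounded by $\e^{C|A|\frt_0}$ by exactly the estimate used in the proof of \eqref{eq: archetypical bound pairings}, which in turn relies on Assumption \ref{ass: decay micro alpha} giving $\int_0^\infty h(s)\,\d s \leq C$ in infinite volume. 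Combining with the $\e^{C|A|}$ from Step 1 produces the claimed prefactor $\e^{C|A|}$.

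\textbf{Main obstacle.} The analytic estimates (applying Lemma \ref{lem: main property of norms}, the propagator bound, and integrability of $h$) are all routine. The only place requiring care is the bookkeeping in Step 2: the inequality $\indicator_{\text{conn.}} \leq \sum_{\uw_0\subseteq\uw}\indicator_{[\uw_0 \mathop{\longrightarrow}\limits_{\min} A]}$ overcounts, but being an inequality this is harmless, and the factorisation of the integral then works cleanly once one adopts the unordered representation of $\int\d\uw$. No sharpness issue arises because the target bound itself allows for the crude $\e^{C|A|}$ prefactor.
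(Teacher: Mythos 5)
Your proposal is correct and takes essentially the same route as the paper: a termwise bound via Lemma \ref{lem: main property of norms} together with the propagation bound \eqref{propagation bounds} (absorbing the interval count into $\e^{C\str A\str}$ and a constant per pair), then reduction from connected to minimally spanning pair sets, factorization of the integral, and a bound on the leftover free integral by $\e^{C\str A \str}$ exactly as in Lemma \ref{lem: bound on h}. Your Step 2 inequality $\indicator_{[\caG_A(\uw)\,\mathrm{conn.}]}\leq \sum_{\uw_0\subseteq\uw}\indicator_{[\uw_0\rightarrow_{\min}A]}$ is precisely the overcounting/factorization step the paper performs by extracting a spanning tree and a minimal subset of pairs realizing it.
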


\begin{proof}
By  \eqref{eq: correlation function as connected graphs} and Lemma \ref{lem: bound on h}, we have
 \beq
   \norm G_A^c  \norm_{\tengam}   \leq   \mathop{\int}\limits_{\scriptsize{\left.\begin{array}{c}  \Si_{ \Dom  (A)}    \end{array}\right.  }}  \d \uw  \,     \indicator_{[\caG_A(\uw) \,  \text{connected}]}   \mathop{\prod}\limits_{J \in \caJ_{A}(\uw) }C   \e^{ \la^2 C \str J \str }      \times  \prod_{w \in \uw} \la^2  h(v-u)   \label{eq: proof of maximally 1}
 \eeq
and, since $\str \caJ_{A}(\uw)  \str \leq \str \uw \str +\str A \str$, we may dominate the integrand by  $ \e^{ C  \str A \str } \prod_{w \in \uw} \la^2 C h(v-u)  $, since  $\str \Dom(A)\str = \la^{-2} \str A \str $.
Next, we state  an appealing estimate was the main motivation for encoding the pairings $\pi$ in the pair-sets $\uw$.
For any (integrable) function $f $ on $\Omega_{\Dom (A)}$, we have 
\beq
 \mathop{\int}\limits_{\scriptsize{\left.\begin{array}{c}   \Si_{ \Dom  (A)}    \end{array}\right.  }}   \,  \d \uw    \,     \indicator_{[\caG_A(\uw) \,  \text{connected}]}  \str f(\uw) \str  \quad   \leq \quad   \mathop{\int}\limits_{\scriptsize{\left.\begin{array}{c}   \Om_{ \Dom  (A)}      \end{array}\right.  }}   \,  \d \uw'    \,    1_{[\uw  \mathop{\longrightarrow}\limits_{\text{min}} A ] }   \quad     \mathop{\int}\limits_{  \Si_{ \Dom  (A)} }   \,  \d \uw''    \,      \str f(\uw' \cup \uw'') \str
\eeq
  To realize why this holds true,  choose a spanning tree $\scrT$ for the connected graph $\caG_A(\uw)$ and then  pick a minimal subset $\uw'$ of the pairs in $  \uw$ such that $\caG_A(\uw')=\scrT$.    
  Since, in general, this can be done in a nonunique way, the integrals on the RHS contain the same $\uw$ more than once, and the inequality is strict unless $f$ is concentrated on minimally spanning $\uw$.  
 
 We apply this inequality   with $f(\uw):=  \prod_{w \in \uw} \la^2 C h(v-u)$ to \eqref{eq: proof of maximally 1} (with the integrand dominated as indicated there). The resulting integral over $\d \uw''$ can then be performed similarly to \eqref{eq: bound pairings by exponential}, yielding $e^{C \str A \str}$.    This proves the claim. 
\end{proof}

 \vskip 2mm
 
 We now derive the main result of the present section
  \begin{proposition}\label{prop: integrability}
For $\la$ small enough and with $\ep_0=  C\str\la\str^{{\alpha}} $, 
    \beq  
 \mathop{\sum}\limits_{\scriptsize{\left.\begin{array}{c}  A \subset \bbN: \min A =1   \end{array}\right.  }} \, \ep_0^{-\str A \str} 
 \dist(A)^{\al}   \norm G^c_A  \norm_{\tengam}   \leq   1    \label{eq: bound to be proven}
 \eeq
  \end{proposition}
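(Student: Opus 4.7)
The plan is to use Lemma \ref{lem: a priori} to reduce to minimally spanning pair-sets $\uw$, distribute the factor $\dist(A)^\al$ into a per-pair weight, and then estimate using the weighted integrability of $h$ from Assumption \ref{ass: decay micro alpha}.

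For a minimally spanning $\uw$, the graph $\caG_A(\uw)$ is a spanning tree on $A$ with $|A|-1$ edges (one pair per edge). Among all spanning trees on the linearly ordered set $A = \{\tau_1 < \ldots < \tau_k\}$, the line tree minimises the product of edge lengths, so $\dist(A)^\al \leq \prod_{e \in T}(1 + |\tau_e - \tau_e'|)^\al$ for any spanning tree $T$. Combining this with $1 + |\tau - \tau'| \leq 2(1 + \la^2|v-u|/\frt_0)$ for any pair $w = (u,v)$ with $u \in \Dom(\tau)$ and $v \in \Dom(\tau')$ gives the distributed bound $\dist(A)^\al \leq 4^{\al|A|}\prod_{w \in \uw}(1 + \la^2|v-u|/\frt_0)^\al$. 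Substituting into Lemma \ref{lem: a priori}, the integrand becomes $\prod_w \tilde\rho(u,v)$, where $\tilde\rho(u,v) := \la^2 C(1+\la^2|v-u|/\frt_0)^\al h(v-u)$.

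The central estimate is the per-pair bound
\[
\sup_{\tau \geq 1}\sum_{\tau' > \tau}\int_{\Dom(\tau)\times\Dom(\tau')}\tilde\rho(u,v)\,du\,dv \leq C|\la|^{2\al}.
\]
Setting $t_0 := \la^{-2}\frt_0$ and applying Fubini to the change of variables $s = v-u$, one reduces to $\int_0^\infty \tilde\rho(s)\min(s,t_0)\,ds$ and splits the integration at $s = t_0$. For $s \leq t_0$ we have $(1 + \la^2 s/\frt_0)^\al \leq 2^\al$ and $s^{1-\al} \leq t_0^{1-\al}$, so this range contributes $\la^2 t_0^{1-\al}\int_0^\infty s^\al h(s)\,ds \leq C|\la|^{2\al}$ by Assumption \ref{ass: decay micro alpha}; for $s > t_0$, one uses $(1+\la^2 s/\frt_0)^\al \leq C(\la^2 s/\frt_0)^\al$ together with the same weighted integrability, producing the same factor.

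To control the sum over $A$, I would swap summation and integration (using that $A = A(\uw)$ is uniquely determined by a minimally spanning $\uw$) and iterate the per-pair estimate by pruning tree leaves. Since $\min A = 1$ forces the pair with smallest $u$-coordinate to touch $\Dom(1)$, the pruning is naturally anchored at $\tau = 1$: each leaf removal integrates out one pair and yields a factor $C|\la|^{2\al}$, so a tree with $|A| = k$ vertices contributes at most $C_1^k\,|\la|^{2\al(k-1)}$ once the prefactor $e^{C|A|}$ from Lemma \ref{lem: a priori} is absorbed. Multiplying by $\ep_0^{-k}$ and summing over $k \geq 2$ produces a geometric series in $|\la|^\al$, bounded by $1$ for $|\la|$ small and the constant in $\ep_0 = C|\la|^\al$ chosen sufficiently large. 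The main technical difficulty is making sure the Cayley count $k^{k-2}$ of labeled spanning trees on $k$ vertices does not overwhelm the $(\la^{2\al})^{k-1}$ decay; this is handled by performing the sum over \emph{positions} of each pruned leaf as part of the per-pair estimate, which integrates over tree structures implicitly without a separate combinatorial count, while the strict time-ordering $u_1 < u_2 < \ldots$ built into $\Om_{\Dom(A)}$ prevents overcounting of structurally identical configurations.
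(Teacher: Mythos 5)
Your proposal is correct and follows essentially the same route as the paper: reduce via Lemma \ref{lem: a priori} to minimally spanning pair-sets (equivalently, spanning trees on $A$), distribute $\dist(A)^{\al}$ over the tree edges as in Lemma \ref{lem: distance factors}, prove the per-edge summability bound $C\str\la\str^{2\al}$ (your $\int_0^\infty \tilde\rho(s)\min(s,t_0)\,\d s$ computation is the same estimate as \eqref{eq: kotecky preiss at scale zero}, just organized by splitting at $s=t_0$ instead of at $\tau'=\tau+1$), and then sum over trees anchored at $\tau=1$. The only differences are cosmetic: you attach the decay weight to the microscopic separation $\str v-u\str$ rather than to the macroscopic edge length, and you perform the tree sum by explicit leaf pruning rather than by invoking the Kotecky--Preiss cluster lemma (Proposition \ref{app: prop: cluster expansion}), whose proof is that same induction.
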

 \begin{proof} 
Any $\uw$ that spans $A$ minimally  determines a spanning  tree on $A$. Hence we can reorganize the bound \eqref{eq: minimal spanning bound} by first integrating all $\underline{w}$ that determine the same tree. This amounts to integrate, for each edge of the tree, all pairs $(u,v)$ that determine this edge. Furthermore, we use that
\beq
\dist(A)^{\al} \leq \prod_{ \{ \tau,\tau' \} \in \caE(\scrT) } (1+\str \tau'-\tau \str)^{\al},  \qquad  \textrm{for   any spanning tree   $\scrT$ on $A$}   \label{eq: inequality spanning trees}
\eeq
 where $\caE(\scrT)$ is the set of edges of the spanning tree $\scrT$ (see Appendix \ref{app: combinatorics} for a simple proof). 
We arrive at the bound
 \beq  \label{eq: spanning trees bound}
   \dist(A)^{\al}  \norm G_A^c  \norm_{\tengam}   \leq     \e^{C   \str  A \str}     \,   \sum_{\textrm{span. trees} \, \scrT  \, {on} \, A} 
 \prod_{  (\tau,\tau') \in \caE(  \scrT) }      \hat e_\al(\tau,\tau')   
 \eeq
where, for $ \tau < \tau'$, 
 \beq  \label{eq: definition edge factor}
 \hat e_\al (\tau,\tau') :=   \la^2 (1+\str \tau-\tau' \str)^{\al}  \int_{  \Dom(\tau)} \d u   \int_{  \Dom(\tau')}  \d v \,    h(v-u),
  \eeq
  and  $  \hat{e}_\al(\tau',\tau) :=   \hat{e}_\al(\tau,\tau')$. 
Next, we establish
 \beq    \label{eq: kotecky preiss at scale zero}
 \sum_{\tau' \in \bbN \setminus \{ \tau \}}   \hat e_\al (\tau,\tau')  \leq  C\str\la\str^{2{\alpha}}, \eeq
by using the bound \eqref{eq: working bound on h} on $h$ from Lemma \ref{lem: bound on h};
 \baq
    \hat{e}_\al(\tau,\tau+1)    & \leq  &  \la^2  C \int_0^{2 \la^{-2} \frt_0} \d s   \, s h(s)    \leq  \la^2 C   (2 \la^{-2} \frt_0)^{1-{\al}} \leq C\str\la\str^{2{\al}}  \nonumber \\[2mm]
  \sum_{\tau' > \tau +1}   \hat{e}_\al(\tau,\tau')   & \leq  &   \la^2  C ( \frt_0 \str\la\str^{-2})^{-\al}   \mathop{\int}\limits_{  \la^{-2}(\tau-1)\frt_0 }^{  \la^{-2}\tau\frt_0 }  \d u  \mathop{\int}\limits_{ u+ \la^{-2}  \frt_0}^{\infty  } \d v (v-u)^{\al} h(v-u)    \leq     C \str\la\str^{2\al}  \nonumber
 \eaq
Starting from the inequality \eqref{eq: spanning trees bound}, we bound the sum in \eqref{eq: bound to be proven} as 
\beq
\mathop{\sum}\limits_{\scriptsize{\left.\begin{array}{c}  A \subset \bbN  \\    \min A =1   \end{array}\right.  }} \, \ep_0^{-\str A \str} 
 \dist(A)^{\al}   \norm G^c_A  \norm_{\tengam}   \leq   
 \mathop{\sum}\limits_{\scriptsize{\left.\begin{array}{c}  \textrm{trees} \, \scrT \,  \textrm{on} \, \bbN:   \\ 
  \supp(\scrT) \ni 1, \str \scrT \str \geq 2  \end{array}\right.  }}
   (C\str\la\str^{2{\al}})^{-\str \caE(\scrT) \str}
 \prod_{  (\tau,\tau') \in \caE(  \scrT) }       \hat e_\al(\tau,\tau')   \label{eq: pairings bound by trees}
\eeq
where we denote by $\supp(\scrT)$ the vertices that have at least one edge and $\caE(\scrT)$ is the set of edges. 
 This sum can be controlled relying on the bound \eqref{eq: kotecky preiss at scale zero}. This is a special case of Lemma \ref{app: prop: cluster expansion} in Appendix  \ref{app: combinatorics} where \eqref{eq: kotecky preiss at scale zero} plays the role of the Kotecky-Preiss criterion \eqref{eq: kotecky preiss}, $\ka=1$, and the edges $\{\tau,\tau' \}$ of the tree $\scrT$ play the role of the sets $S$.
 
  \end{proof}

\subsection{Bounds on boundary correlation functions} \label{sec: bounds on boundary correlation functions}
In this section, we work in the equilibrium case $\be_1=\be_2=\be$, since only in that case the boundary correlation functions are relevant. 
We recall from Section \ref{sec: correlations involving zero}:
\baq
\breve Z_t  \rho_\sys & = &   \Tr_{\res} \left[    \e^{-\i   t H }  D_{\mathrm{rd}}  (\rho_\sys \otimes \initialresfinite) D_{\mathrm{rd}}^\ast  \e^{\i  t H }  \right] 
\eaq
which differs from the reduced evolution $Z_t$ by the fact that we included the Radon-Nikodym derivative $D_{\mathrm{rd}}$. 

For a dense set of $\rho \in \scrB_1(\scrH)$, we can write a formal Duhamel expansion   
\beq
 D_{\mathrm{rd}} \rho D_{\mathrm{rd}}^\ast=  \e^{\Delta F(\beta)} \sum_{m=0}^{\infty}  \mathop{\int}\limits_{0< \beta_1 < \ldots < \beta_m < {\beta}} \d \beta_1 \ldots \d \beta_m   \,  \la^{m} \,  \tilde L_I ( \beta_m)  \ldots    \tilde L_I  (\beta_1) \rho \label{duha1}
\eeq
where the term with $m=0$ is defined to be $\lone$ and
\beq
    \tilde L_I ( \beta_i) \rho=- \frac{1}{2}[e^{\frac{\beta_i}{2}(H_\sys+H_\res) }H_I e^{-\frac{\beta_i}{2}(H_\sys+H_\res) },\rho]_+
\eeq
with $[B,A]_+=BA+AB$.
We will now combine this expansion with the Dyson expansion for the unitary evolution
to obtain an expansion for   $\breve Z_t$. For this it is convenient to set in eq. (\ref{duha1}) $\beta_i=t_i-\beta$
with $t_i\in[-\beta,0]$ and 
$$L_I(t) := \tilde L_I  (t+\beta)\ \ \textrm{for}\ \  t\in[-\beta,0].$$  We also
generalize  
$$
U_J :=U_{J\cap[0,t]}\ \ \textrm{for}\ \  J\subset[-\beta,t].$$ 
and 
\beq
K_{u,v} :=    ( \i \indicator_{[u \geq 0]}+\indicator_{[u <0]})   ( \i \indicator_{[v \geq 0]}+\indicator_{[v <0]})   \bbE[L_I(v) \odot  L_I(u)]  
\eeq
This bizarre looking formula simply takes care of the fact that the Dyson expansion for the Radon Nikodym derivative does not have factors of $\i$.   In particular, the above definitions of $U_J, K_{u,v}$ reduces to the ones given previously in Section \ref{sec: operator product expansion} in the case $J \subset \bbR_+, 0<u<v$. 
For a set of pairs $\uw \in {\Omega_{[-\be, t]}}$, we let now  $\caJ(\uw)$ be the  induced collection of intervals partitioning the interval $[-\be, t]$ instead of the interval $[0,t]$.

It is now straightforward to check the formal expansion
\baq
\breve Z_t  
&=&  \e^{\Delta F(\beta)}  \int_{\Omega_{[-\be, t]}}      \d \uw   \,  \caT\left[  \mathop{\otimes}\limits_{w \in \uw} K_{w} \mathop{\otimes}\limits_{J \in \caJ_{[-\be, t]}(\uw)} U_{J}    \right]   \label{eq: expansion for breve Z}
\eaq
The factor $ \e^{\Delta F(\beta)} $ can be determined from the relation
\beq
1 =  \Tr[\breve Z_0 \rho^{\reff}_\sys]  =    \Tr[\breve Z_0 (\indicator_{[x=0]} \nu^{\reff}) ]   \label{eq: rewriting free energy}
\eeq
where   $ \rho^{\reff}_\sys$ and $\nu^{\reff}$ were defined in Section \ref{sec: equilibrium states} (note that $\indicator_{[x=0]} \nu^{\reff}$ is a function on $\bbA_0$) and the last equality exploits the fact that $ \rho^{\reff}_\sys$ and $\nu^{\reff}$ do not depend on $x \in \bbX_0$ and also the operator $\breve Z_0$ is translation-invariant.  The advantage of the rightmost term in \eqref{eq: rewriting free energy} is that  $\breve Z_0$ acts on an operator in $\scrB_1(\scrH_\sys)$ that is strictly localized on the lattice and in particular it has a limit as $\La \nearrow \bbZ^d$.
  Let us first define
\beq
\breve h(s)  =  \sup_{-\be \leq u\leq v:  v-u =s}  \norm K_{u,v} \norm_{\twentygam}   
\eeq 
Then, Assumption \ref{ass: decay micro alpha} still implies that $\int_{\bbR_+} \d s(1+s^{\al}) \breve h(s)  < C$. To see this, let us inspect how the operator $K_{u.v}$  written out explicitly in \eqref{kuv expression}, gets modified for negative $u,v$. 
\ben
\item Depending on $a,b \in \{ 0,1\}$, there are minus signs that do not affect our bounds.
\item The correlation function $\zeta(x,t)$ is evaluated  with $t$ in the complex strip $ \bbH_\be$ instead of only on the real axis, but, by Assumption \ref{ass: decay micro alpha}, this does not spoil the decay property. This is in fact the only reason why we needed the complex strip in Assumption \ref{ass: decay micro alpha}.
\item The operators  $\lone_x^{a}$ in \eqref{kuv expression} should be replaced by $\e^{r H_\sys } \lone_x^{a}\e^{-r H_\sys },  0< r < \be/2$. By a simple propagation estimate in imaginary time, cfr.\ \eqref{propagation bounds}, the  $\norm \cdot \norm_{\twentygam}$-norm of such expressions is uniformly bounded by a constant.
\een

By the above remarks, it now follows that Lemma \ref{lem: bound on h} still holds if we replace $h \to \breve h$. In particular, we can 
 bound the expansion in \eqref{eq: expansion for breve Z} as in Lemma \ref{lem: bound on h}, we control the expansion of the rightmost term in \eqref{eq: rewriting free energy} and hence we obtain the thermodynamic limit of the number $ \e^{\Delta F(\beta)}$ and the equilibrium state $\nu^{\be}$,  and we prove that the operator $T_1(0)$ defined in Section \ref{sec: correlations involving zero} is bounded and has a thermodynamic limit, too.  Finally, from the expansion we get as well that
 \beq
\sup_{s \in \caS}  \left(\e^{\twentygam \str v \str }\str \mu^{\be}(s)-  \mu^{\reff}(s)\str\right) \leq C \la^2 \label{eq: closeness free and interacting equilibrium state}
 \eeq
 with $\mu^{\be}(s), \mu^{\reff}(s)$ defined in Section \ref{sec: equilibrium states} and we recall that $s=(v,\eta, e_{\links}, e_{\rechts})$.

Let us next generalize the expression for the correlation functions $G^c_A$. For the macroscopic time $\tau=0$, we define 
$\Dom(\tau):= [-\be, 0]$.  Then the equalities \eqref{eq: correlation function as support of graphs} and \eqref{eq: correlation function as connected graphs}  remain true without any changes and Lemma \ref{lem: a priori} remains true upon replacing $h \to \breve h$. 

Finally, we prove the bound \eqref{eq: induction assumption initial} (induction hypothesis \ref{prop: overview b behavior}) at scale $n=0$. 
 \begin{proposition}\label{prop: integrability with boundary}
Recall that $\ep_0=  C\str\la\str^{{\alpha}} $ and $\ep_{\ini,0}= C\str\la\str^{2-2{\alpha}} $. 
For $\la$  small enough; 
    \beq  
 \mathop{\sum}\limits_{\scriptsize{\left.\begin{array}{c}  A \subset \bbN  \\    \min A =0   \end{array}\right.  }} \,  \ep_0^{-\str A \str} 
 \dist(A)^{\al}   \norm G^c_A  \norm_{\twentygam}   \leq   \ep_{\ini,0}    \label{eq: bound to be proven boundary}
 \eeq
  \end{proposition}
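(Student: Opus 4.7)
The proof follows the scheme of Proposition \ref{prop: integrability}, with modifications arising from the special role of the macroscopic time $\tau = 0$. The key observation is that $\Dom(0) = [-\be, 0]$ has length $\be = O(1)$ rather than $O(\la^{-2})$, which turns the bulk edge estimate $\la^{2\al}$ into $\la^2$ for edges incident to $0$; the resulting gain $\la^{2-2\al}$ is precisely $\ep_{\ini,0}/C$.

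The discussion preceding the proposition supplies the Dyson-type expansion analog of \eqref{eq: correlation function as connected graphs} for $A \ni 0$, with $\breve h$ replacing $h$ (retaining the integrability $\int_0^\infty (1+s)^\al \breve h(s)\, ds < \infty$ by items 2--3 of that discussion) and with $e^{\Delta F(\beta)}$ a bounded prefactor controlled via \eqref{eq: rewriting free energy}. Consequently Lemma \ref{lem: a priori} applies in the boundary setting, and combined with \eqref{eq: inequality spanning trees} it yields
\beq
\dist(A)^\al \, \norm G^c_A \norm_{\twentygam} \leq e^{C|A|} \sum_{\scrT \text{ spanning tree on } A} \prod_{\{\tau,\tau'\} \in \caE(\scrT)} \hat e_\al(\tau,\tau'),
\eeq
with $\hat e_\al$ defined as in \eqref{eq: definition edge factor} using $\Dom(0) = [-\be, 0]$ and $\breve h$.

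The new Kotecky-Preiss-type bound at the root vertex $0$ is
\beq
\sum_{\tau' \geq 1} \hat e_\al(0, \tau') \leq C\la^2.
\eeq
The $\tau' = 1$ contribution is immediate since $\be = O(1)$ and $\int \breve h(s)\, ds < \infty$. For $\tau' \geq 2$, after the change of variable $s = v - u$ one has $s \geq (\tau'-1)\la^{-2}\frt_0$; writing $\breve h(s) = s^{-\al}(s^\al \breve h(s))$, bounding $s^{-\al} \leq C\la^{2\al}(\tau')^{-\al}$, and summing against $\int s^\al \breve h(s)\, ds < \infty$ produces an $O(\la^{2+2\al}) \leq O(\la^2)$ contribution.

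Finally, we apply Proposition \ref{app: prop: cluster expansion} to the sum over spanning trees containing $0$, with edge weights $w(\{\tau,\tau'\}) = (C\la^{2\al})^{-1} \hat e_\al(\tau,\tau')$. The Kotecky-Preiss hypothesis holds uniformly at all $\tau \geq 1$ by \eqref{eq: kotecky preiss at scale zero}, and at $\tau = 0$ by the estimate just established, the latter producing the small factor $C\la^{2-2\al}$ from the edge out of the root. Together with the factor $\ep_0^{-|A|}$ absorbed into $(C\la^{2\al})^{-(|A|-1)}$ as in the bulk proof, the sum over trees containing $0$ is bounded by $C\la^{2-2\al} \leq \ep_{\ini,0}$, giving the claim. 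The only mildly delicate point is the asymmetric estimate at the root, which must carefully exploit $\be = O(1)$ together with the time-decay of $\breve h$; everything else is a direct adaptation of the bulk argument.
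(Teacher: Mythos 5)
Your proposal follows essentially the same route as the paper's proof: the spanning-tree bound of Lemma \ref{lem: a priori} with $h$ replaced by $\breve h$, the improved Kotecky--Preiss estimate $\sum_{\tau'}\hat e_\al(0,\tau')\leq C\la^2$ coming from the fact that $\Dom(0)=[-\be,0]$ has length $\be=\caO(1)$ rather than $\la^{-2}\frt_0$, and the observation that any tree through $0$ contains one root edge carrying the extra factor $\la^{2-2\al}\sim\ep_{\ini,0}$, the rest being summed exactly as in Proposition \ref{prop: integrability}. The argument is correct; your explicit treatment of the $\tau'\geq 2$ root edges merely fills in a step the paper leaves implicit.
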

 \begin{proof}
 The proof mimics that of Proposition \ref{prop: integrability}.  We first derive
 \beq  \label{eq: spanning trees bound boundary}
   \dist(A)^{\al}  \norm G_A^c  \norm_{\weird}   \leq     \e^{  C   \str A \str}     \,   \sum_{\textrm{span. trees} \, \scrT  \, {on} \, A}  
 \prod_{  (\tau,\tau') \in \caE(  \scrT) }      \hat e_\al(\tau,\tau')   
 \eeq
where $ \hat{e}_{\al}(\tau,\tau')$ was defined in \eqref{eq: definition edge factor} and this definition carries over to the case where one of $\tau,\tau'$ equals $0$, provided that we replace again $h$ by $\breve h$.  However,  the bound  \eqref{eq: kotecky preiss at scale zero}  can now be improved as 
 \beq    \label{eq: kotecky preiss at scale zero boundary}
 \sum_{\tau' \in \bbN \setminus \{ \tau \}}    \hat e_\al(\tau,\tau')    \leq \left\{ \begin{array}{ll}   C\str\la\str^{2 {\al}} & \tau \neq 0 \\ [2mm] 
  C \la^2 & \tau = 0
\end{array} \right.  \qquad  
 \eeq
 the improvement for $\tau=0$ is due to the fact that the length of $\Dom(\tau)$ is $\be <C$ for $\tau=0$ instead of $\frt_0 \la^{-2}$ for $\tau \neq 0$.   In the sum over trees \eqref{eq: pairings bound by trees}, there is now always one edge of the form $\{0,\tau \}$ for which we have the bound $C\la^2 = C\ep_0^2 \ep_{\ini,0}$  instead of $C\ep_0^2$, and this accounts trivially for the improved bound on the RHS of \eqref{eq: bound to be proven boundary}. 
 \end{proof}

\section{Estimates on the first scale: reduced evolution $T$} \label{sec: estimates on the first scale reduced evolution}

The analysis of the operator $T=T_{n=0}$ proceeds in three steps. Note that $T$ depends on the coupling constant $\la$ both through  the strength of the interaction and through the definition of the time scale $\la^{-2}\frt_0$, since  $T = Z_{\la^{-2}\frt_0}$. In a first step, accomplished in Section \ref{sec: weak coupling limit}, we prove that, as $\la \to 0$, the operator $Z_{\la^{-2}\frt} $ can be replaced by the Markov semigroup $\e^{\frt  (-\i \la^{-2}L_\sys + M) }$ with $M$ a dissipative operator.  This is the 'Markov approximation' that we referred to already in Section \ref{sec: markov approximation}. 
This Markov approximation is well-known in the literature since the work of Davies \cite{davies1} and we describe it mainly for reasons of completeness. 
 
In the second step, in Section \ref{sec: analysis of the semigroup}, we argue that the Markov semigroup $t \mapsto \e^{t (-\i L_\sys + \la^2 M) }$ has good properties, corresponding more or less to the requirements that Prop   \ref{prop: overview t behavior}  imposes on $T$. The reasoning in that section is standard in the analysis of (classical) linear Boltzmann equations, involving tools as Weyl's theorem and the Perron-Frobenius theorem. 
Finally, in Section  \ref{sec: t at n equal 0}, we argue that $T$ inherits these properties from 
the Markov semigroup.   This follows immediately by perturbation theory.

\subsection{The weak-coupling limit}  \label{sec: weak coupling limit}

\subsubsection{The generator $M$}  \label{sec: generator m}

We first define the generator $M$ and we investigate its properties.  
Let  $U^{0}_{t}$  stand for the propagator $U_{t}$ with $\la=0$, that is;
\beq
U^{0}_{t}  =   \e^{-\i t  L_\spin}, \qquad    L_\spin=  \adjoint(H_\spin)
\eeq
and we will again write $U^{0}_{J} $ to denote $U^{0}_{s'-s}$ for an interval $J= [s,s']$. 
 We define 
\baq
\la^2 G^0_{v-u}  &:=  &   \,   \caT\left[ K_{u,v} \otimes  U^{0}_{[u,v]} \right] \label{Godefi}\\
\la^2  G_{v-u}   &:=  &      \caT\left[ K_{u,v} \otimes  U_{[u,v]} \right] \label{Gdefi}
\eaq
Since $U^{0}_{t} $ is a finite dimensional unitary tensored with the identity, acting only on the spin degrees of freedom, 
$\|U^{0}_{t} \|_\ga\leq C$ uniformly in $t,\ga$, and
(\ref{eq: working bound on h}) implies
\beq    \label{eq: working bound on G}
 \int_{0}^{\infty}  \d s    (1+\str s \str)^{\al}  \|G^0_s\|_{20\gamma_0}   \leq   C.
\eeq
Next, let $P_{\varepsilon}$ be the projector to the the eigenspace corresponding to
the eigenvalue $\ve\in\caE=\sigma(L_{\spin})$, cfr.\ eq. (\ref{bohr frequencies}). 
Then
the connection between the operators $G^0_s$ and the generator $M$ is via
\baq   \label{eq: first definition of m}
M  =  
 \sum_{\varepsilon \in \caE}  
 \int_0^{\infty}  \d s \,    \e^{-\i s \ve} P_{\varepsilon} 
  G^0_sP_{\varepsilon} 
\eaq
Thus we also have
\baq   \label{eq: bound of m}
\|M \|_{20\ga_0}\leq C. 
\eaq

\subsubsection{Emergence of $M$ from the Dyson expansion}

We recall the expansion (\ref{eq: z as integral over pairs}) for the reduced evolution $Z_t$:
\baq  \label{eq: z as integral over pairs repeated}
Z_t  
&=&   \int_{\Omega_{[0, t]}}      \d \uw  \,    \caT\left[  \mathop{\otimes}\limits_{w \in \uw} K_{w} \mathop{\otimes}\limits_{J \in \caJ_{}(\uw)} U_{J}    \right]
\eaq
We will show that the leading contribution to $Z_t$ in the above integral comes from
of pairs
\beq
\Omega^0_{[0, t]}   := \{  \uw \in \Omega_{[0, t]} \,  :  v_i < u_{i+1} \ \ i=1,\dots , \str \uw \str-1 \}.
\eeq
Set $\Omega^1_{[0, t]}  =\Omega_{[0, t]} \setminus \Omega^0_{[0, t]}  $ and fix
   $\ga= 20 \ga_0$  in Lemmata \ref{lem: bound on nonlead} and \ref{lemma: inverse laplace of lead}. First we have
\begin{lemma}  \label{lem: bound on nonlead}
\beq
 \int_{\Omega^1_{[0, t]}}   \,  \d \uw   \,   \left \norm     \caT\left[  \mathop{\otimes}\limits_{w \in \uw} K_{w} \mathop{\otimes}\limits_{J \in \caJ_{}(\uw)} U_{J}    \right] \right \normw   \leq   C  \e^{C\la^2t }  \str\la\str^{2\al}     \label{eq: statement of bound on nonlead}
\eeq

\end{lemma}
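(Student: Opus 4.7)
The plan is to follow the same factorization of the norm already used in Lemma \ref{lem: bound on h}, isolate the contribution of the (at least one) ``overlapping'' pair that witnesses membership in $\Omega^1_{[0,t]}$, and exploit the integrable $\alpha$-decay of $h(\cdot)$ to extract a factor $|\lambda|^{2\alpha}$, with the remaining integrations summing up to $e^{C\lambda^2 t}$ exactly as in the proof of Lemma \ref{lem: bound on h}.

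Concretely, the first step is to apply Lemma \ref{lem: main property of norms} and the bounds $\|U_J\|_\gamma\leq C e^{C\lambda^2 |J|}$ and $\|K_{u,v}\|_\gamma\leq \lambda^2 h(v-u)$ to reduce the claim to showing
\begin{equation}\label{eq: plan reduction}
  \int_{\Omega^1_{[0,t]}} d\uw \, \prod_{w\in\uw} \lambda^2 h(v-u) \;\leq\; C\, |\lambda|^{2\alpha}\, e^{C\lambda^2 t}.
\end{equation}
Next, for any $\uw\in\Omega^1_{[0,t]}$ there is at least one index $i\in\{1,\dots,|\uw|-1\}$ with $v_i>u_{i+1}$; I would simply union-bound over the choice of this index $i$ (at cost of an extra factor $m=|\uw|$ which is harmless after summation). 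Having fixed $i$, I would integrate over the $v_j$ freely for $j\neq i$, each producing a factor $\int_0^\infty h(s)\,ds\leq C$ by Assumption \ref{ass: decay micro alpha}, and integrate over $v_i\in (u_{i+1},\infty)$ using
\begin{equation*}
  \int_{u_{i+1}}^\infty h(v_i-u_i)\,dv_i \;\leq\; (1+u_{i+1}-u_i)^{-\alpha}\int_{u_{i+1}-u_i}^\infty (1+s)^\alpha h(s)\,ds \;\leq\; C\,(1+u_{i+1}-u_i)^{-\alpha},
\end{equation*}
which is where the exponent $\alpha$ enters.

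The third step is a routine bookkeeping of the remaining integrations over $0<u_1<\dots<u_m<t$. Bounding the simplex volume after singling out $s:=u_{i+1}-u_i$ gives
\begin{equation*}
\int_{0<u_1<\dots<u_m<t} du_1\cdots du_m\,(1+u_{i+1}-u_i)^{-\alpha} \;\leq\; \Bigl(\int_0^t (1+s)^{-\alpha}ds\Bigr)\cdot\frac{t^{m-1}}{(m-1)!} \;\leq\; C\,t^{1-\alpha}\,\frac{t^{m-1}}{(m-1)!},
\end{equation*}
since $\alpha<1$. Summing over $m\geq 2$ and over the $m-1$ choices of $i$ yields
\begin{equation*}
  \mathrm{LHS\,of\,\eqref{eq: plan reduction}} \;\leq\; C\,\lambda^2\,t^{1-\alpha}\,e^{C\lambda^2 t}.
\end{equation*}
The last, purely algebraic step rewrites this as $|\lambda|^{2\alpha}\cdot (\lambda^2 t)^{1-\alpha}\,e^{C\lambda^2 t}$ and absorbs the polynomial factor $(\lambda^2 t)^{1-\alpha}$ into the exponential, producing the stated $C\,|\lambda|^{2\alpha}\,e^{C\lambda^2 t}$.

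I expect no serious obstacle; the only delicate point is that the gain $|\lambda|^{2\alpha}$ must really come from a single pair (the one certified to be ``long''). Picking the overlapping pair deterministically (e.g.\ the leftmost such $i$, or simply any one via a union bound with an $m$-factor) and verifying that the truncated $v_i$-integration indeed produces $(1+u_{i+1}-u_i)^{-\alpha}$ with the constant uniform in the other variables is the crux; everything else is identical to the bookkeeping already performed in the proof of Lemma \ref{lem: bound on h}.
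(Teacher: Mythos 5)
Your proposal is correct and takes essentially the same approach as the paper: reduce via the norm factorization and the bounds on $\|U_J\|$ and $\|K_w\|$ to estimating $\int_{\Omega^1_{[0,t]}}\prod_{w}\lambda^2 h(v-u)$, then exploit the crossing constraint together with $\int_0^\infty(1+s)^{\alpha}h(s)\,\d s<\infty$ to extract $|\lambda|^{2\alpha}$, absorbing the leftover power of $\lambda^2 t$ into $\e^{C\lambda^2 t}$. The only (harmless) difference is bookkeeping: the paper keeps the first two crossing pairs and gains the length factor by integrating the intermediate $u'$ over $(u,v)$, whereas you union-bound over the crossing index $i$ and place the $\alpha$-weight directly on the gap $u_{i+1}-u_i$ before integrating it out.
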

\begin{proof}
We proceed as in the proof of Lemma  \ref{lem: a priori}, and we bound the LHS of \eqref{eq: statement of bound on nonlead} by
\beq
 \e^{\la^2 C t} \int_{\Omega^1_{[0, t]}}   \,  \d \uw   \prod_{w \in \uw}  \la^2 h(v-u)   \label{eq: first bound on nonlead}
\eeq
Every $\uw \in \Omega^1_{[0, t]}$ has to contain  at least two pairs $(u,v), (u',v')$ such that $u <u'$ but $u' < v$. Choose the first two such pairs and integrate over the coordinates of all other pairs, using that $\norm h \norm_1 = \int \d s\,  h(s) <\infty$ and proceeding as in the proof of Lemma \ref{lem: bound on h}.  This yields the bound
\beq
 (\ref{eq: first bound on nonlead})\leq \e^{ \la^2 t (C + \norm h \norm_1 )}  q(t)  
\eeq 
with
$$
q(t) =   \mathop{\int}\limits_{0 <u <u'<v<t ,  u' <v' <t  }          h(v-u) h(v'-u')  
$$
and $q(t)$ can be easily estimated by first performing the integral over $v'$, which gives a factor $\norm h \norm_1$, and then the one over $u'$, such that one gets (recall that $\alpha\leq 1$)
\baq
q(t)   &\leq&    \norm h \norm_1 \la^4   \mathop{\int}\limits_{0 <u <v<t}  \d u  \d v       h(v-u)  \str v-u \str  \\[2mm]
 & \leq &     \norm h \norm_1 \la^4   \int_0^t       \d u   \left( \max_{0< s< t-u} \str s \str^{1-{\alpha}} \right)     \int \d s \str s \str^{{\alpha}} h(s)   \\[2mm]
  & \leq &    \norm h \norm_1  \left(\int \d s \str s \str^{\al} h(s)\right)   (\la^2 t)^2  \str\la\str^{2{\alpha}} 
\eaq 
which yields the bound \eqref{eq: statement of bound on nonlead} upon invoking Assumption \ref{ass: decay micro alpha}. 
\end{proof}

Next, we organize the contributions from leading sets of pairs. We call $Z^0_{t}$ the integral over them, i.e. in eq. (\ref{eq: z as integral over pairs repeated}) we replace
$\Omega_{[0, t]}$ by $\Omega^0_{[0, t]}$. Note that these can be written explicitly as
\beq\label{ZOT}
Z^0_{t}  = \sum_{n=0}^{\infty}    \,   \la^{2n}   \int \d \uu  \d \uv      \ldots      U_{u_3-v_2}   G_{v_2-u_2} U_{u_2-v_1} G_{v_1-u_1} U_{u_1}
\eeq
where  $G_s$ and $U_s$ were defined above. 
We prove

\begin{lemma}\label{lemma: inverse laplace of lead} 
\beq\label{remainder to markov}
\norm Z_{t }^0 - \e^{ (-\i L_{\sys} +  \la^2  M)t}   \norm_{\ga}    \leq    C  \e^{\la^2 C t}    \str \la \str^{2\al} 
\eeq
\end{lemma}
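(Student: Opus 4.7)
The plan is to pass to the Laplace transform and exploit a resolvent identity. Summing the series \eqref{ZOT} as an iterated Volterra convolution, $Z^0_t$ solves
\beq
Z^0_t = U_t + \la^2 \int_0^t \!\!\int_0^{t-u} U_{t-u-s}\, G_s\, Z^0_u \, ds\, du,
\eeq
whose Laplace transform is $\hat Z^0(z) = (z + \i L_\sys - \la^2 \hat G(z))^{-1}$ with $\hat G(z) = \int_0^\infty e^{-zs} G_s\, ds$. The Markov semigroup satisfies an analogous Volterra equation with kernel $\la^2 U_\tau M$, giving $\hat W(z) = (z + \i L_\sys - \la^2 M)^{-1}$, and the resolvent identity yields
\beq
\hat Z^0(z) - \hat W(z) = \la^2 \hat W(z)\big[\hat G(z) - M\big] \hat Z^0(z).
\eeq
Subtracting the two Volterra equations in the time domain and applying a Gronwall-type argument (using the a-priori bounds $\|Z^0_t\|_\ga, \|W_t\|_\ga \leq C e^{C\la^2 t}$ which follow from \eqref{propagation bounds} and \eqref{eq: bound of m}), the claim reduces to showing
\beq
\la^2 \int_0^t \big\|E_\tau\big\|_\ga\, d\tau \leq C \str\la\str^{2\al}, \qquad E_\tau := U_\tau \Big[\int_0^\tau U_{-s} G_s\, ds - M\Big].
\eeq

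To prove this key estimate I would decompose
\beq
\int_0^\tau U_{-s} G_s\, ds - M = \underbrace{\int_0^\tau U_{-s}(G_s - G^0_s)\, ds}_{(\mathrm{I})} + \underbrace{\int_0^\tau (U_{-s} - U^0_{-s}) G^0_s\, ds}_{(\mathrm{II})} + \underbrace{\int_0^\tau U^0_{-s} G^0_s\, ds - M}_{(\mathrm{III})}.
\eeq
Because $H_\kin$ is of order $\la^2$, a standard propagation estimate gives $\|U_s - U^0_s\|_{\ga} \leq C\la^2 s$, and similarly $\|G_s - G^0_s\|_\ga \leq C\la^2 s\, h(s)$ by inserting this into the definitions \eqref{Godefi}--\eqref{Gdefi}. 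Hence the integrands in (I) and (II) are bounded by $C\la^2 s\, h(s)$. Since Assumption \ref{ass: decay micro alpha} only gives $h(s) \leq C(1+s)^{-1-\al}$, the $s$-integral diverges; we split at the cutoff $s_\star := \la^{-2}$, bounding the interior by $\la^2 \int_0^{s_\star} s\, h(s)\, ds \lesssim \la^{2\al}$ and the tail of the full $\int U_{-s} G_s\, ds$ by $\int_{s_\star}^\infty h(s)\, ds \lesssim \la^{2\al}$ directly.

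Term (III) is the genuinely quantum ``rotating-wave'' error. Using $U^0_{-s} P_\ve = e^{\i s\ve} P_\ve$ from the spectral decomposition of $L_\spin$, write $U^0_{-s} G^0_s = \sum_{\ve,\ve'} e^{\i s \ve} P_{\ve} G^0_s P_{\ve'}$; Assumption \ref{ass: fermi golden rule}(i) ensures the Bohr frequencies $\ve \in \caE$ are distinct, so non-secular components ($\ve \neq \ve'$ or phase mismatch with \eqref{eq: first definition of m}) integrate into bounded operators with tails of size $\int_\tau^\infty h(s)\, ds \leq C\tau^{-\al}$, while the secular diagonal part produces precisely $M$ in the $\tau \to \infty$ limit, up to the same tail. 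Consequently $\|E_\tau\|_\ga \leq C(1+\tau)^{-\al}$, so $\la^2 \int_0^t \|E_\tau\|_\ga\, d\tau \leq C\la^2 t^{1-\al}$; at the natural scale $t \sim \la^{-2}$ this is $O(\la^{2\al})$, and the $e^{C\la^2 t}$ prefactor in \eqref{remainder to markov} absorbs the Gronwall iteration.

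The principal obstacle is term (III): the rotating-wave cancellation must be implemented uniformly in the norm $\|\cdot\|_\ga$ (which controls both spin and spatial exponential decay), and requires that the replacement of $U_s$ by $U^0_s$ inside the oscillating integrals introduces only errors compatible with the $\la^{2\al}$ budget --- this is exactly where the non-degeneracy of $\ad(H_\spin)$ and the smoothness of the form factors $\phi_j$ that guarantee the decay of $\zeta_\ve(x)$ at infinity (as in Assumption \ref{ass: fermi golden rule}(ii)) enter in an essential way.
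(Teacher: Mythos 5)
Your reduction via the two Volterra equations and a Gronwall argument is structurally sound, and your treatment of terms (I) and (II) can be made to work (modulo bookkeeping of the $\e^{C\la^2 s}$ factors hidden in $\norm U_{-s}\norm_\ga$ and $\norm G_s-G^0_s\norm_\ga$). The genuine gap is term (III), i.e.\ the claimed bound $\norm E_\tau\norm_\ga\le C(1+\tau)^{-\al}$. As $\tau\to\infty$ the integral $\int_0^\tau U^0_{-s}G^0_s\,\d s$ converges to the \emph{full} operator $\sum_{\ve,\ve'}P_\ve\,\hat G^0_{-\i\ve}\,P_{\ve'}$, not to its secular part: the off-diagonal blocks ($\ve\neq\ve'$) are order-one bounded operators which neither decay in $\tau$ nor are cancelled by $M$, which consists only of diagonal blocks by \eqref{eq: first definition of m}. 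Hence $\int_0^\tau U_{-s}G_s\,\d s-M$ tends to a fixed nonzero operator, so $\norm E_\tau\norm_\ga\geq c\,\e^{-C\la^2\tau}$ for large $\tau$, and your key estimate $\la^2\int_0^t\norm E_\tau\norm_\ga\,\d\tau\le C\str\la\str^{2\al}$ is false: the left-hand side is of order $\min(\la^2t,1)$, already $O(1)$ at the relevant times $t\sim\la^{-2}$. Saying that the non-secular components ``integrate into bounded operators'' is exactly the problem — bounded is not small here, because these contributions enter the driving term multiplied only by $\la^2$ and then integrated over a time of length $t$ (and relabelling phase-mismatched diagonal terms as ``non-secular'' does not help, for the same reason).

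What actually renders the non-secular terms harmless — and why the lemma is nevertheless true — is an oscillation in the \emph{outer} time variable, which your estimate discards: their contribution to $Z^0_t$ appears as $\int_0^t\d r\,U_{t-r}\,(P_\ve\cdots P_{\ve'})\,U_r(\cdots)$, and conjugation by the free evolution produces a phase $\e^{\i r(\ve-\ve')}$ whose $r$-integral is $O(1)$ rather than $O(t)$, making these terms $O(\la^2)$ (after resummation, $O(\la^2\str\log\str\la\str\str)$). Once you take norms pointwise in $\tau$ inside the Gronwall inequality, this mechanism is lost and no rearrangement of that inequality can recover it. This is precisely what the paper's proof retains: it stays in the Laplace domain, uses the resolvent identity, splits $\hat G_z-M=D_1^\ve+D_2^\ve+D_3^\ve$ on a contour piece near each singularity $-\i\ve$, and controls the non-secular part $D_3^\ve$ through $\norm\hat Z^0_zP_{\ve_1}\norm_\ga\le C/\str z+\i\ve_1\str$, so that off-diagonal blocks meet at most one singular resolvent factor and the contour integral yields $C\e^{C\la^2t}\la^2\str\log\str\la\str\str$, consistent with the $\str\la\str^{2\al}$ budget. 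A time-domain repair would have to exploit the oscillatory $r$-integral (integration by parts in the insertion time, or interaction-picture averaging) at \emph{every} order of the Volterra iteration before taking norms — essentially reproducing the resolvent argument.
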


\begin{proof}
Using integrability of $G_t$ and the bound (\ref{propagation bounds}) together with
 the product structure in eq. (\ref{ZOT}), we get that the  Laplace transform 
\beq
\hat Z_{z }^0:=    \int_0^\infty \d t \, \e^{-t z} Z_{t}^0 
\eeq
is analytic in the half plane $\Re z>C\la^2$ and given explicitly by
\beq
\hat Z_{z }^0 =  \left(  z+ \i L_{\spin}+\i\la^2  m_{\mathrm{p}}^{-1}{\adjoint}(\Delta)-\la^2\hat G_z  \right)^{-1}
\eeq
where $\hat G_z=  \int_0^\infty \d t \e^{-t z} G_t$ and we recalled that $L_\sys= L_{\spin} +\la^2  m_{\mathrm{p}}^{-1}{\adjoint}(\Delta)$ . Using (\ref{eq: working bound on h}) we get  $\hat G_z$ is analytic in  $\Re z>C|\la|^2$  too
and bounded there by
\beq
 \norm \hat G_z   \norm_\ga \leq   
 C.
 \eeq
 The same bound holds for ${\adjoint}(\Delta)$.  Recall that the spectrum of $L_{\spin}$ is real and given by the set $\caE$ (\ref{bohr frequencies}). Hence
 \beq
\|\hat Z_{z }^0\|_\ga\leq  \frac{1}{\mathrm{dist}(z, \i\caE)}  \label{eq: a priori1}
\eeq
if   $\Re z>C|\la|^2$. By the resolvent identity
\beq
\hat Z^0_z  -    (z+ \i L_{\spin})^{-1} =\la^2 
\hat Z^0_z(\i  m_{\mathrm{p}}^{-1}{\adjoint}(\Delta)-\hat G_z ) (z + \i L_{\spin})^{-1}  \label{eq: bound a}
\eeq
which implies, using (\ref{eq: a priori1}) and $\Re z>C|\la|^2$,
 \beq
\|\hat Z_{z }^0P_\ve\|_\ga\leq C \frac{1}{|z + \i\ve|}  \label{eq: a priori2}
\eeq
where we recall that $P_\ve$ is the spectral projector to eigenvalue $\ve$ eigenspace of
$L_{\spin}$.

Inverse Laplace transform gives
\beq\label{inverse laplace}
Z_{t }^0 - \e^{ (-\i L_{\sys} +  \la^2  M)t}  =\int_\caC  \d z \, \e^{tz} (\hat Z^0_z  -    (z+ \i L_{\sys}-\la^2 M)^{-1} )
\eeq
where $\caC=\{z:\Re z=A|\la|^2\}$ and $A$ is taken large enough, $A>C$. By the resolvent identity we have
 \beq
\hat Z^0_z  -    (z + \i L_{\sys}-\la^2 M)^{-1} =\la^2 
\hat Z^0_z(\hat G_z-M) (z + \i L_{\sys}- \la^2 M)^{-1} 
\eeq
and from (\ref{eq: first definition of m}) 
 $$
 M=\sum_\ve e^{\i \ve}P_\ve\hat G_{0 }^0P_\ve .
 $$ 
The integrand in (\ref{inverse laplace}) is large when $z$ is close to $-\i\ve$, $\ve\in\caE$. Thus we localize the
contour $\caC = \cup_{\ve \in \caE} \caC_\ve$ such that on $\caC_\ve$, $-\i\ve$ is the nearest element of $-\i\caE$. 
For $z\in \caC_\ve$ we write
\beq\label{splitting of G-M}
\hat G_z -M=(\hat G_z - \hat G^{0}_z)+ (\hat G^{0}_z - \hat G^{0}_{\i\ve})+(\hat G_{\i\ve}^0-M):=
D_1^\ve+D_2^\ve+D_3^\ve.
\eeq
and set
 \beq
\la^2\hat Z^0_zD_i^\ve (z+ \i L_{\sys}- \la^2 M)^{-1} :=I_i^\ve(z)
\eeq
Then, by \eqref{eq: bound a},
 \beq
 Z_{t }^0 - \e^{( -\i L_{\sys} +  \la^2  M)t }  =\sum_\ve\int_{\caC_\ve}   \d z \, \e^{tz}  (I^\ve_1(z)+I^\ve_2(z)+I^\ve_3(z)).
\eeq
Recalling (\ref{Godefi}) and  (\ref{Gdefi}) and using
$
\|U_t-U^0_t\|_\ga\leq e^{C|\la|^2t}-1
$ together with 
 $
   \Re z  =A|\la|^2$, $A$ large enough, we bound
\beq
 \norm D_1^\ve \norm_\ga
 \leq  \int_0^\infty \d t \,  h(t)   | \e^{-C|\la|^2t}-1 |\leq   \sup_{t \geq 0}  \left(\frac{ | \e^{-C|\la|^2t}-1 |}{ (1+t)^\al} \right) \, \int_0^\infty \d t  \, h(t) (1+t)^\al
 \leq C|\la|^{2\al}.     \label{eq: bound with sup t}
 \eeq
where we used (\ref{eq: working bound on h}). 

To bound  $\norm D_2^\ve \norm_\ga$ we note that 
by (\ref{eq: working bound on G}) $\hat G^0_z$ is analytic in $\Re z>0$ and by the same bound as in \eqref{eq: bound with sup t}, we get for $z\in \caC_\ve$
 \beq  
  \norm D_2^\ve  \norm_\ga 
  \leq    C \int_0^\infty \d t h(t) \left\str \e^{-z t}-e^{\i\ve} \right\str 
 \leq C\min\{1,|z +\i\ve|^{2\al}\}
\eeq
Since (\ref{eq: a priori1})  also holds for the resolvent of  ${ -\i L_{\sys} +  \la^2  M} $
\beq
\|\int_{\caC_\ve}   \d z \, \e^{tz}  (I^\ve _1(z)+I^\ve _2(z)) \|_\ga
\leq C|\la|^2 \int_{\caC_\ve}  \d z  \, \e^{z t }   \frac{1}{[z+\i\ve|^2}(|\la|^{2\al}+\min\{1,|z+\i\ve|^{2\al}\} )) \leq C \e^{C\la^2  t}  | \la|^{2\al} .
    \label{eq:  contour integrals 1}
\eeq
Finally, to estimate $I^\ve _3$, from (\ref{eq: first definition of m}) we have $M=\sum_{\ve'}
P_{\ve'} \hat G_{\i{\ve'}}^0P_{\ve'}$ and thus
$$
D^\ve _3=\hat G_{\i\ve}^0-M=\sum_{(\ve_1,\ve_2)\neq (\ve,\ve)}P_{\ve_1}\hat G_{\i\ve}^0P_{\ve_2}
-\sum_{\ve'\neq\ve}P_{\ve'} \hat G_{\i{\ve'}}^0P_{\ve'}
$$
 Using  (\ref{eq: a priori2}) and the analogous estimate for  $P_{\ve}(z+ \i L_{\sys}- \la^2 M)^{-1} $ we
 get 
 $$
 \|I_3^\ve(z)\|_\ga\leq C\la^2\sum_{(\ve_1,\ve_2)\neq (\ve,\ve)}(|z+\i\ve_1||z+\i\ve_2|)^{-1}
 $$
 and so
 \beq
\|\int_{\caC_\ve} e^{tz} I_3(z)dz\|_\ga
\leq C|\la|^2\sum_{(\ve_1,\ve_2)\neq (\ve,\ve)} \int_{\caC_\ve}  \d z  \, \e^{z t } (|z+\i\ve_1||z+\i\ve_2|)^{-1}
\leq C \e^{C\la^2  t}  | \la|^{2} |\log|\la||
    \label{eq: contour integrals 2}
\eeq
 The bounds  (\ref{eq: contour integrals 1}) and (\ref{eq: contour integrals 2}) yield the claim.

\end{proof}

For completeness, we note that Lemmata \ref{lem: bound on nonlead} and \ref{lemma: inverse laplace of lead}  lead in a straightforward way to 
\begin{proposition}\label{prop: weak coupling}
For any $\frt <\infty$, and $\la$ small enough, 
\beq
\sup_{t < \la^{-2} \frt} \norm Z_{t}  -  \e^{  (-\i L_\sys+ \la^2  M)t }  \norm_{20 \ga_0}     \leq C \e^{C\frt} \str\la\str^{2\al}
\eeq
\end{proposition}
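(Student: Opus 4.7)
The plan is to obtain Proposition~\ref{prop: weak coupling} as a direct consequence of Lemmata~\ref{lem: bound on nonlead} and~\ref{lemma: inverse laplace of lead} by the triangle inequality, after using the time restriction $t<\la^{-2}\frt$ to convert the factors $\e^{C\la^{2}t}$ appearing in those lemmata into an $\la$-independent constant $\e^{C\frt}$.

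Concretely, first I would split the Dyson expansion \eqref{eq: z as integral over pairs repeated} for $Z_t$ into a contribution from leading pair configurations $\underline w\in\Omega^0_{[0,t]}$ and a remainder from $\underline w\in\Omega^1_{[0,t]}=\Omega_{[0,t]}\setminus\Omega^0_{[0,t]}$:
\beq
Z_t = Z_t^0 + \Big(Z_t-Z_t^0\Big),
\eeq
where $Z_t^0$ is given explicitly by \eqref{ZOT}. Lemma~\ref{lem: bound on nonlead} directly bounds the second piece by $C\e^{C\la^2 t}|\la|^{2\al}$ in the $\|\cdot\|_{20\ga_0}$-norm. Lemma~\ref{lemma: inverse laplace of lead} controls the difference between $Z_t^0$ and the Markov semigroup $\e^{(-\i L_\sys+\la^2 M)t}$ by the same expression $C\e^{C\la^2 t}|\la|^{2\al}$ in the same norm (the lemma is stated for $\|\cdot\|_\ga$ with $\ga=20\ga_0$ fixed in that section).

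Adding the two estimates by the triangle inequality yields
\beq
\norm Z_t - \e^{(-\i L_\sys+\la^2 M)t} \norm_{20\ga_0} \leq C\e^{C\la^2 t}|\la|^{2\al}.
\eeq
Finally, for $t<\la^{-2}\frt$ one has $\la^2 t<\frt$ and hence $\e^{C\la^2 t}\leq \e^{C\frt}$, so taking the supremum over $t$ in this range gives the stated bound.

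There is essentially no obstacle left at this stage since all the work has already been done in the two lemmata: Lemma~\ref{lem: bound on nonlead} exploits the fact that in $\Omega^1$ at least one pair of intervals must overlap, which forces the appearance of an extra factor $\int \d s\, s\,h(s)$ converted into $|\la|^{2\al}$ via the decay Assumption~\ref{ass: decay micro alpha}; and Lemma~\ref{lemma: inverse laplace of lead} performs an inverse Laplace contour argument in which the splitting \eqref{splitting of G-M} together with Assumption~\ref{ass: decay micro alpha} produces the same $|\la|^{2\al}$ rate. Consequently, the only point worth stressing in writing the proof is the uniformity of the constants in $t$ on the interval $[0,\la^{-2}\frt]$, which is immediate from the observation $\e^{C\la^2 t}\leq \e^{C\frt}$.
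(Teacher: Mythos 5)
Your proposal is correct and coincides with the paper's own argument: the paper simply notes that Lemmata \ref{lem: bound on nonlead} and \ref{lemma: inverse laplace of lead} (both stated with $\ga=20\ga_0$) ``lead in a straightforward way'' to Proposition \ref{prop: weak coupling}, which is exactly your triangle-inequality combination of the two bounds followed by $\e^{C\la^2 t}\leq \e^{C\frt}$ for $t<\la^{-2}\frt$.
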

This is known as Davies' weak coupling limit and it was first proven in \cite{davies1}\footnote{To be precise, \cite{davies1} deals only with confined systems so that Propostion \ref{prop: weak coupling} is not covered. However, the arguments are essentially identical in both cases.}

\subsection{Analysis of the semigroup $\e^{t Q}$} \label{sec: analysis of the semigroup}

We will now analyze the Markov semigroup    $ \e^{ t (-\i L_\sys+ \la^2  M) } $ that was derived in the previous section.  
The splitting of the generator into $-\i L_\sys$ and $\la^2  M$ is logical from the point of view of the derivation from the microscopic dynamics, since $M$ represents the contribution due to the interaction with the environment, but it is not optimal for the analysis of the semigroup, instead, we write
 \beq\label{introduce Q}
-\i L_\sys+ \la^2  M =  -\i L_\spin + \la^2 Q, \qquad   \textrm{with} \,  Q     :=   -\i \adjoint(\frac{1}{m_{p}} \Delta)  + M 
 \eeq
 such that on the RHS, the first term is of $\caO(1)$ and the second is proportional to $\la^2$.  Moreover, both terms commute and consequently it suffices to investigate the semigroup $t \mapsto \e^{t \la^2 Q}$, or simply $t \mapsto \e^{t  Q}$.

\subsubsection{Fourier transform revisited} \label{eq: other fiber decomposition}

Note that up to this point, we have viewed $\rho$ as a function $\rho(x,s)$ on $\bbA_0$  and $M$ and $Q$ as kernels 
$K(x',s';x,s)$ on $\bbA_0 \times \bbA_0$. In particular, translation invariant kernels were studied in terms
of the Fourier transform $\hat K(p;s,s')$ in the variable $x'-x$. 
For the present discussion, it is easier to use a slightly different, in fact more natural, Fourier representation. Namely we replace the previously used position variable  $\lfloor \frac{x_\links + x_\rechts}{2} \rfloor $ by $\frac{x_\links + x_\rechts}{2}$. 
Recall from Section \ref{Diffusion} that  $\rho$ can be also be viewed as a kernel $\rho(x_\links, x_\rechts) \in \scrB(\scrS)$, with $x_\links,x_\rechts \in \bbZ^d$. Similarly, any
$K$ in $\scrR$ can be viewed as a kernel 
\beq
K(x'_\links, x'_\rechts, x_\links, x_\rechts )    \in   \scrB( \scrB(\scrS)).
\eeq
Define for $\rho\in \scrB_2(\scrH_\sys)$
\beq
\tilde\rho (p,k)  =    \sum_{x_\links, x_\rechts }   \e^{\i p \frac{x_\links + x_\rechts}{2}}  \e^{\i k  (x_\links - x_\rechts) }   \rho(x_\links, x_\rechts)
\eeq
which is in $L^2(\tor \times\tor , \scrB_2(\scrS))$.
Denote by $\tilde\rho (p):=\tilde\rho (p,\cdot) $. 

In Section \ref{sec: fourier transform}, we defined the Fourier transform $\hat \rho(p):=\hat \rho(p,\cdot)$ wrt.\ to the variable $x=\lfloor \frac{x_\links + x_\rechts}{2} \rfloor $. The two transforms are closely related, namely a short calculation
gives
$$ \tilde\rho (p) =  \caI_p  \hat\rho(p) $$
where  $ \caI_p:  l^2(\caS) \to L^2(\tor, \scrB_2(\scrS))$ is the unitary map
\beq\label{Ipdefi}
( \caI_pf)(k)   =      \sum_{v, \eta} f ( v, \eta)        \e^{\i p \frac{\eta}{2}}  \e^{\i k(2 v +\eta) }.
\eeq
where on the RHS, $f ( v, \eta)   \in \scrB_2(\scrS)$, and we recall that $v,\eta$ are the (position-like) coordinates that, together with $e_{\links}, e_{\rechts} \in \sigma(H_{\spin})$, constitute  the variable $s \in \caS$.

For a translation invariant kernel $K$ with $\|K\|_\ga <\infty$ for some $\ga>0$, we have as before
\beq
(\widetilde{K\rho} )(p)  = \tilde K(p)\tilde\rho(p).
\eeq
where $\tilde K(p)\in\scrB(L^2(\tor, \scrB_2(\scrS)))$ (cfr.\ the Remark in Section \ref{sec: definition of norm}).
Since $Q$ is  translation-invariant 
and  $[Q, L_\spin]=0$, as we see from (\ref{eq: first definition of m}),   we can decompose further
\beq\label{epsilondeco}
\tilde Q(p) = \oplus_{\ve \in \caE} \tilde Q_{\ve}(p)
\eeq
 where $\tilde Q_{\ve}(p) := P_{\ve}\tilde Q(p) P_{\ve}$ with $P_\ve$ the spectral projections of $L_{\spin}$.
By Assumption \ref{ass: fermi golden rule},  the spaces $P_\ve(\scrB(\scrS)), \ve \neq 0$ are one-dimensional and  and hence we can identify
$$
\widetilde Q_{\ve}(p) \in \scrB(L^2(\tor)), \ \ \ve\neq  0
$$ 
The space $P_{0}(\scrB(\scrS))$ has $\dim \scrS$-dimensions and there is a natural basis labelled by $e \in \sp(H_{\spin})$, such that  $P_{0}(L^2(\tor, \scrB_2(\scrS)))$ is identified with  $L^2(\caF)$, with $\caF = \sp(H_{\spin}) \times \tor$ and the measure is counting $\times$ Lesbegue,  and so we identify
$$
\widetilde Q_{0}(p) \in \scrB(L^2(\caF)).
$$

\subsubsection{Lindblad representation for $Q$} \label{eq: alternative constuction of m}

We can write a more  explicit expression for $M$ of 
(\ref{eq: first definition of m}) using (\ref{Godefi}) and (\ref{kuv expression}):
\beq
(M \rho )(x,y)  =      \sum_{\ve} \left(  \zeta_\ve(x-y)  W_{\ep}^*    \rho(x,y) W_{\ep} 
- \int_{0}^{\infty} \d s \zeta(0, s)   \e^{-\i s \ve} W_{\ep}W_{\ep}^*     \rho(x,y) 
-  \int_{-\infty}^{0}  \d s \zeta(0, s)   \e^{-\i s \ve}     \rho (x,y) W_{\ep}W_{\ep}^* \right).
\label{eq: three terms in m}
\eeq 
In (\ref{eq: three terms in m}) we have denoted $W_\ve:=P_\ve (W)$,  $ \zeta_\ve$ is the Fourier transform of $\zeta$ defined in  (\ref{ftxi}) and we use the notation $\rho(x,y)\in\scrB(\scrS)$ with $x,y \in \bbZ^d$ (see Section \ref{sec: initial state}).
To derive (\ref{eq: three terms in m}) from (\ref{kuv expression}) and  (\ref{Godefi}) , one needs to calculate
$$
P_\ve\caT((W\otimes \lone_x)^a\otimes U^0_t \otimes(W\otimes \lone_y)^b)P_\ve=
P_\ve W^aU^0_tW^bP_\ve
\otimes \lone_x^a\lone_y^b.
$$
To proceed write $P_\ve$ in terms of the projectors
 $\pi_e\in\scrB(\scrS)$  to the basis vector labelled by $e \in \sp(H_{\spin})$.
For $0\neq\ve=e_1-e_2$ we have $P_\ve(\rho)=\pi_{e_1}\rho\pi_{e_2}$ and for $\ve=0$ we have
$P_0(\rho)=\sum_e\pi_{e}\rho\pi_{e}$. Some algebra then gives the representation (\ref{eq: three terms in m}). 

Denote the operator acting on $\rho$ in the first term of (\ref{eq: three terms in m}) by $\Phi$:
 \beq
(\Phi\rho)(x,y)   : =  \sum_{\ve}   \zeta_{\ep}(x-y)      W_{\ep}^*    \rho(x,y)    W_{\ep} . 
\eeq
Since $ \zeta_{\ep}(\cdot)$ is bounded,  one easily checks that $\Phi$  is bounded on $\scrB_p(\scrH_\sys), 1\leq p \leq \infty$ and completely positive, and that there is a bounded and completely positive map $\Phi^{\star}$ on $\scrB(\scrH_\sys)$ such that $\Phi$ acting on $\scrB_1$ is the adjoint of $\Phi^{\star}$. The pairing between $\scrB_1$ and $\scrB_\infty$ is given by the trace, i.e.\  $\Tr \rho \Phi^\star (O)  = \Tr O \Phi (\rho) $ (whereas `${}^{*}$' in $W_\ve^*$  denotes the Hermitian adjoint in $\scrB(\scrS)$). Concretely, 
\beq\label{phistar1}
 \Phi^\star (O)(x,y)=\sum_{\ve}   \zeta_{\ep}(y-x)      W_{\ep} O(x,y)     W_{\ep}^*
 \eeq

 In the second term of \eqref{eq: three terms in m}, using $\overline{ \zeta(0, s)}= \zeta(0, -s)$ we have $$ \int_{0}^{\infty}\d s \e^{-\i s \ve}  \zeta(0, s)= (1/2)
 \zeta_{\ep}(0) + \i t_{\ep}(0) $$ 
 with $t_{\ep}(0) $ real, and in the third term one has $ \int_{-\infty}^{0}\d s \,  \e^{-\i s \ve}  \zeta(0, s)= (1/2)  \zeta_{\ep}(0) - \i t_{\ep}(0) $.  Thus altogether we may write (\ref{eq: three terms in m}) as
 \beq
M \rho   =    \Phi(\rho)-  \frac{1}{2}(   \Phi^\star(\lone) \rho +  \rho    \Phi^\star(\lone))+\i [H_{\lamb},\rho]
\label{eq: M in terms of Phi}
\eeq 
and the "Lamb shift" to the Hamiltonian by
\beq
H_{\lamb} :=   \sum_{\ve}  t_{\ep}(0)     W_{\ep} W^*_{\ep}.
\eeq
   In terms of the Fourier transform variables of Section \ref{eq: other fiber decomposition} we have
\beq   \label{eq: explicit widetildephi}
(\widetilde{\Phi\rho})(p,k) =   \sum_{\ep}  \int \hat \zeta_\ep(\d q)       W_{\ep}      \tilde{\rho}(p,k-q)
  W^*_{\ep}   \eeq
  where the positive measure $\hat\zeta_\ep(\d q)$ is the Fourier transform of the function $  \zeta_{\ep}(x)$, introduced in \eqref{xihat}.
  Note that $\widetilde{\Phi}(p)$ is independent of $p$ and hence we will write 
  $$\widetilde{\Phi}=\widetilde{\Phi} (p).$$     
Let us next decompose as in (\ref{epsilondeco}). We get
$$
\widetilde{\Phi}_{\ve } =0\ \ {\mathrm{ if}} \ \ \ve\neq 0.$$ 
Indeed, writing $\ve=e_1-e_2$, we have $P_\ve(W_{\ve'}^*  P_\ve (\rho)   W_{\ve'} )=\pi_{e_1}W_{\ve'}^* \pi_{e_1}\rho\pi_{e_2}W_{\ve'}
\pi_{e_2}$ which obviously vanishes unless $\ve'= 0$. However, for  $\ve'=0$, 
$\zeta_{\ve'}=0$ by Assumption \ref{ass: fermi golden rule}. 

Consider then $\widetilde{\Phi}_{0 }$. As explained in the previous section it can be viewed as acting on functions
$\varphi(e,k)$ on $\sigma(H_\spin)\times\bbT^d$. Recalling the definition
(\ref{rates}) of the jump rates $j$, eq. (\ref{eq: explicit widetildephi}) becomes
\beq
(\widetilde{\Phi}_{0} \varphi) (e',k') =    \sum_{e\in\sigma(H_\spin)} \int  \,    j(e',  \d k; e,0 )   \varphi  (e,k'-k)
\eeq
and since $\hat\zeta_\ve$ is a finite positive measure,  $\widetilde{\Phi}_{0} $ defines a bounded
positivity preserving operator on  $L^1(\caF)$,  $L^{\infty}(\caF)$, and,  by Riesz-Thorin interpolation, also on $L^q(\caF),  1< q <\infty$.

Denote the escape rates of the Markov process by
\beq
w(e) :=    \sum_{e'}  \int j( e',\d k'; e,0)
 \eeq 
We can now give explicit expressions for the operators in the decomposition (\ref{epsilondeco}):
 \begin{proposition}
 \label{prop: properties of Q}
 (a) For $\ve=0$
\beq
 \widetilde  Q_{0} (p)=\widetilde\Phi_0+q_0(p)
\eeq
 where $\widetilde\Phi_0$  is a compact operator on  $L^1(\caF )$.   It is also positivity improving, i.e.\, if $\varphi \geq 0$, then  $  \e^{t \widetilde{\Phi}_{0}} \varphi >0$ (i.e.\ the function is strictly postive a.e.)  for any $t>0$. $q_0(p)$ is a multiplication
 operator by the function
   \beq\label{qopek}
 q_0(p)(e,k)=-w(e)+ \i E_{\kin}(p,k) \eeq
 where $ E_{\kin}(p,k)=\frac{2}{{m_{\mathrm{p}}}} \sum_{j=1}^d (\cos(\frac{p_j}{2}+ k_j)-\cos(\frac{p_j}{2}-k_j) )$ and 
  \beq\label{minwe}
 \min w:=\min_e w(e)>0.\eeq

 \noindent (b) For $\ve=e-e'\neq 0$, $Q_{\ve} (p)$ is a multiplication operator  by the function
   \beq
   \label{qvepek}
 q_\ve(p)(e,k)= -1/2(w(e) + w(e'))+ \i (E_{\kin}(p,k) +H_{\mathrm{Lamb}}(e)- H_{\mathrm{Lamb}}(e'))
 \eeq
 with $H_{\mathrm{Lamb}}(e) = \Tr_\scrS( \pi_e H_{\mathrm{Lamb}} \pi_e) $. In particular, the term $\i (\ldots)$ is purely imaginary.
  \end{proposition}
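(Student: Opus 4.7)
The plan is to reduce both (a) and (b) to a direct computation in the $(p,k)$-representation of Section~\ref{eq: other fiber decomposition}, combined with a soft compactness/irreducibility argument for the jump part. Starting from the Lindblad form \eqref{eq: M in terms of Phi} of $M$ I would write
\[
Q = -\i\,\adjoint(\tfrac{1}{m_{\mathrm{p}}}\Delta) + \Phi - \tfrac{1}{2}\{\Phi^{*}(\lone),\,\cdot\,\} + \i[H_{\mathrm{Lamb}},\,\cdot\,].
\]
A short calculation with the discrete Laplacian applied directly to the definition of $\tilde\rho(p,k)$ shows that $-\i\,\adjoint(\Delta/m_{\mathrm{p}})$ is translation invariant, commutes with the $H_\spin$-grading, and in the $(p,k)$-representation reduces to scalar multiplication by $\i E_{\kin}(p,k)$; it therefore contributes the same $\i E_{\kin}(p,k)$ term in every $\ve$-block of \eqref{epsilondeco}.

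The key algebraic observation is that $W_\ve=\sum_{a-b=\ve}W_{a,b}|\psi_a\rangle\langle\psi_b|$ yields
\[
W_\ve W_\ve^{*} = \sum_{b:\,\ve+b\in\sp(H_\spin)}|W_{\ve+b,b}|^{2}\,|\psi_{\ve+b}\rangle\langle\psi_{\ve+b}|,
\]
which is diagonal in the eigenbasis of $H_\spin$. Consequently $\Phi^{*}(\lone)$ has kernel $\delta_{x,y}\,w$ with $w:=\sum_\ve \zeta_\ve(0)\,W_\ve W_\ve^{*}$ a diagonal spin operator whose eigenvalues are exactly the total jump rates $w(e)=\sum_{e'}\int j(e',\d k';e,0)$. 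The Lamb-shift operator $H_{\mathrm{Lamb}}=\sum_\ve t_\ve(0)W_\ve W_\ve^{*}$ is diagonal for the same reason. Hence $-\tfrac12\{\Phi^{*}(\lone),\,\cdot\,\}$ and $\i[H_{\mathrm{Lamb}},\,\cdot\,]$ become, in the $(e,e')$ labels, multiplication by $-\tfrac12(w(e)+w(e'))$ and $\i(H_{\mathrm{Lamb}}(e)-H_{\mathrm{Lamb}}(e'))$ respectively. Finally, the identity $P_\ve \Phi P_{\ve'}=0$ unless $\ve=\ve'=0$, already noted just before \eqref{eq: explicit widetildephi} via the fact that $W_{\ve'}(\cdot)W_{\ve'}^{*}$ preserves the Bohr frequency only when $\ve'=0$, says that $\Phi$ contributes only to the $\ve=0$ block, as the integral operator $\widetilde\Phi_0$ of \eqref{eq: explicit widetildephi}. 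Summing the four pieces block by block gives exactly \eqref{qopek} and \eqref{qvepek}.

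It remains to prove the three soft claims in~(a). Since $|\sp(H_\spin)|<\infty$, $\widetilde\Phi_0$ is a finite matrix of convolution operators on $L^1(\tor)$ against the positive measures $\hat\zeta_\ve(\d q)$; under the regularity in Assumption~\ref{ass: fermi golden rule}(ii) (spelled out in the examples following it) each $\hat\zeta_\ve$ is carried on the codimension-one submanifold $\caM_{\ve,j}=\{\omega_j(q)=|\ve|\}$ with a smooth density inherited from the smoothness of $\phi_j$ and the nonvanishing of $\nabla\omega_j$, and the corresponding averaging operator is a standard compact (in fact smoothing) operator on $L^1(\tor)$. Positivity of the integral kernel of $\widetilde\Phi_0$ is immediate from \eqref{eq: explicit widetildephi} and the positivity of $\hat\zeta_\ve(\d q)$, so $e^{t\widetilde\Phi_0}$ preserves nonnegative functions; strict positivity for $t>0$ on any $0\neq\varphi\geq 0$ is then equivalent to the assertion that no proper Borel set is invariant under the semigroup, which is precisely Assumption~\ref{ass: fermi golden rule}(iii). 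The same irreducibility forces $\min_e w(e)>0$: if $w(e_0)=0$ for some $e_0$ then $\{e_0\}\times\tor$ would be a proper, positive-measure invariant Borel subset of $\caF$, a contradiction.

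The main obstacle, and the only step requiring care beyond algebra, is the compactness claim, since convolution on $L^1(\tor)$ with an arbitrary finite measure need not be compact. The proof therefore genuinely uses that each $\hat\zeta_\ve$ is carried on a smooth hypersurface with a smooth integrable density, so that the resulting averaging operator gains Sobolev regularity normal to the hypersurface and is compact on $L^1(\tor)$. All remaining ingredients (the Fourier diagonalization, the explicit identification of $w$ and $H_{\mathrm{Lamb}}$, and the two irreducibility arguments) are routine bookkeeping.
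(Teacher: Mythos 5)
Your algebraic identification of the multiplication parts is correct and is essentially the computation the paper leaves implicit: the Laplacian term contributes $\i E_{\kin}(p,k)$ in every $\ve$-block, $W_\ve W_\ve^*$ is diagonal in the $H_\spin$-basis so that $\Phi^\star(\lone)$ (via \eqref{phistar1}) and $H_{\lamb}$ act as the multiplication operators $-\tfrac12(w(e)+w(e'))$ and $\i(H_{\lamb}(e)-H_{\lamb}(e'))$, and $\widetilde\Phi_\ve=0$ for $\ve\neq0$ as established just before \eqref{eq: explicit widetildephi}. Your derivations of the positivity-improving property and of $\min_e w(e)>0$ from the irreducibility in Assumption \ref{ass: fermi golden rule}(iii) are the same as the paper's.

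The genuine gap is the compactness step, which you yourself single out as the only delicate point and then justify by an argument that fails. Convolution on $L^1(\tor)$ by a measure that is singular with respect to Lebesgue measure -- in particular by a measure carried on a smooth hypersurface, however smooth its surface density -- is never compact: if $\varphi\mapsto\hat\zeta_\ve*\varphi$ were compact on $L^1(\tor)$, applying it to an approximate identity $(\phi_n)$, which is bounded in $L^1$, and extracting a norm-convergent subsequence would identify the weak-$*$ limit $\hat\zeta_\ve$ with an $L^1$ function, i.e.\ compactness of a convolution operator on $L^1$ forces the convolving measure to be absolutely continuous. Equivalently, translates of a singular measure do not converge in total variation, so the Kolmogorov--Riesz equicontinuity criterion cannot be met; "Sobolev smoothing normal to the hypersurface" is an $L^2$-type gain and does not produce $L^1$-compactness. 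Since in the examples $\hat\zeta_\ve$ is exactly a surface-type measure (see \eqref{xihat}), the mechanism you propose would prove a false statement.

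This is also not the paper's route: the paper invokes no geometric structure of $\hat\zeta_\ve$ (Assumption \ref{ass: fermi golden rule} only provides a finite positive measure), but derives compactness from the decay of $x\mapsto\zeta_\ve(x)$ at infinity, i.e.\ from the vanishing of the Fourier coefficients of $\hat\zeta_\ve$; that is the standard multiplier argument, which yields compactness of the convolution on $L^2(\tor)$ and requires none of the extra hypotheses (smooth hypersurface, smooth density) that you import. If one insists on the literal $L^1(\caF)$ statement, a different argument is needed (or one observes that what is used downstream, in Proposition \ref{prop: main spectral properties of Qvep}, is only a Weyl-type stability of the essential spectrum of $q_0(p)$ under adding $\widetilde\Phi_0$); in any case your surface-smoothing justification cannot serve as that repair, so as written the proposal does not establish part (a).
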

 \begin{proof}
 Compactness of  $\widetilde{\Phi}_{0}$  follows from  the fact the function  $ x \to   \zeta_{\ep}(x) $ decays at infinity for any $\ve \in \caE$  by Assumption \ref{ass: fermi golden rule}.   The positivity improving property of $  \e^{t \widetilde{\Phi}_{0}}$ follows directly from the irreducibility assumption in Assumption \ref{ass: fermi golden rule}.     For the same
 reason the escape rates are strictly positive, i.e.\ (\ref {minwe}).
 
 In (\ref{qopek}) the first term on the RHS comes from the $\Phi^\star(\lone)$ terms in (\ref{eq: M in terms of Phi})
 and the second one from the Laplacian term in (\ref{introduce Q}). In (\ref{qvepek}) the last term
 comes from the last term in (\ref{eq: M in terms of Phi}). This term is zero for $\ve=0$.
 
\end{proof}

\subsubsection{Perron-Frobenius theorem for maps on $\scrB(\scrH_\sys)$} \label{sec: noncommutative perronfrobenius}

\begin{definition}
A bounded positive map $\phi$ on $\scrB(\scrH_\sys)$ is ergodic if for any $A \geq 0$, $A\neq 0$, $\e^{t \phi}A >0$ for any $t>0$.  
\end{definition}
\begin{proposition}[Schrader \cite{schrader}]\label{prop: perron frobenius schrader}
Assume that a positive map $\phi$ on $\scrB(\scrH_\sys)$ is ergodic and that its spectral radius $r(\phi)$ is an eigenvalue with corresponding eigenvector $S$. Then $r(\phi)$ is a simple eigenvalue and $S$ can be chosen positive definite: $S >0$.
\end{proposition}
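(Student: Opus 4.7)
The plan is to prove a non-commutative Perron--Frobenius theorem by adapting the Krein--Rutman argument to the (finite-dimensional) ordered Banach space $(\scrB(\scrH_\sys),\scrB(\scrH_\sys)_+)$, with ergodicity playing the role of primitivity. I would organize the proof into three steps.

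First, I would reduce the claim to producing a positive-semidefinite eigenvector. Since $\phi$ is positive it preserves the $*$-operation, so $r(\phi)\in\bbR$; splitting the eigenvector $S$ into Hermitian and anti-Hermitian parts we may assume $S=S^{*}$. The cone $\scrB_+$ of positive operators is closed, pointed, and has non-empty interior, so the classical Krein--Rutman theorem applied to the restriction of $\phi$ to Hermitian operators produces an eigenvector $S_0\geq 0$, $S_0\neq 0$, with $\phi(S_0)=r(\phi)S_0$. As an alternative, one can construct $S_0$ directly via a Brouwer fixed-point argument on the compact convex set $\{A\in\scrB_+:\tr A=1\}$ applied to $A\mapsto \phi(A)/\tr\phi(A)$, where ergodicity keeps the denominator strictly positive.

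Second, I would promote $S_0\geq 0$ to $S_0>0$ by using ergodicity in a very short step: from $\phi S_0=r(\phi)S_0$ we have $e^{t\phi}S_0=e^{r(\phi)t}S_0$, while ergodicity forces $e^{t\phi}S_0>0$ for every $t>0$. Therefore $S_0>0$, which establishes the claim about the eigenvector (replacing the original $S$ by $S_0$).

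Third, I would deduce simplicity in two sub-steps. For geometric simplicity, suppose $S_1,S_2$ are two linearly independent Hermitian eigenvectors at $r(\phi)$; by Steps~1--2 both may be taken strictly positive. Set $c_{*}:=\sup\{c>0\,:\,S_1-cS_2\geq 0\}$, which is finite and positive. Then $S_1-c_{*}S_2$ is an eigenvector at $r(\phi)$ that is positive semidefinite but has non-trivial kernel, contradicting Step~2 unless $S_1=c_{*}S_2$. For algebraic simplicity I would apply Steps~1--2 to the trace-dual map $\phi^{*}$ defined by $\tr(\phi^{*}(L)T)=\tr(L\phi(T))$; it is again positive and, by a short duality argument using $\tr((e^{t\phi^{*}}L)T)=\tr(L\,e^{t\phi}T)$, also ergodic. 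Hence there exists a strictly positive left eigenvector $L>0$ at $r(\phi)$, and any putative generalized eigenvector $T$ satisfying $(\phi-r(\phi))T=S_0$ would yield
\[
\tr(LS_0)=\tr\bigl(L(\phi-r(\phi))T\bigr)=\tr\bigl((\phi^{*}-r(\phi))L\cdot T\bigr)=0,
\]
contradicting $\tr(LS_0)>0$ which follows from $L,S_0>0$.

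The main obstacle is Step~1: the Krein--Rutman route is the slickest but relies on a fair amount of functional-analytic machinery, whereas the Brouwer-based construction is elementary but needs careful handling at the boundary of the cone, precisely where ergodicity must be invoked a second time to keep the normalizing denominator bounded away from zero. Once Step~2 is available, Step~3 is essentially algebraic.
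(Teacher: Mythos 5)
The paper itself gives no proof of this proposition --- it is quoted from Schrader \cite{schrader} and then applied to $\e^{tQ^\star}$ --- so your proposal has to stand on its own, and as written it has a genuine gap: you treat $\scrB(\scrH_\sys)$ as a finite-dimensional ordered space, but $\scrH_\sys=\scrS\otimes l^2(\bbZ^d)$ is infinite-dimensional in the setting where the proposition is used (the relevant eigenvector there is $\lone$, which is not even trace class). Consequently your Step~1 does not go through: the classical Krein--Rutman theorem requires compactness of $\phi$, which $\e^{tQ^\star}$ does not have, and the Brouwer/Schauder argument on $\{A\geq 0:\Tr A=1\}$ fails both because that set is not compact in infinite dimensions and because it lives in $\scrB_1$, not in $\scrB(\scrH_\sys)$ where the eigenvector is sought. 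Note also that Step~1 never uses the hypothesis that $r(\phi)$ \emph{is} an eigenvalue --- that hypothesis is in the statement precisely so that no compactness is needed; the real task is to upgrade the given Hermitian eigenvector $S$ to a positive one, and the obvious shortcut ($\pm S\leq |S|$, hence $\phi(|S|)\geq r|S|$) fails because $A\geq \pm B$ does not imply $A\geq |B|$ in the operator order. This is exactly the non-commutative difficulty that Schrader's (and Evans--H{\o}egh-Krohn's) arguments are designed to handle, e.g.\ via support projections and irreducibility.

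Step~2 is correct and is the right way to use the paper's definition of ergodicity. Step~3, however, inherits the dimensionality problem: the claim that two independent Hermitian eigenvectors ``may be taken strictly positive'' is not delivered by Steps~1--2 (you should compare an arbitrary Hermitian eigenvector against the one strictly positive $S_0$); the set $\{c:S_1-cS_2\geq 0\}$ being bounded and closed only tells you that $S_1-c_*S_2$ is positive and \emph{not invertible}, which in infinite dimensions does not mean it has a kernel and does not contradict the (weak, a.e./injectivity-type) notion of $S>0$ used in the paper; and for algebraic simplicity the existence of a strictly positive eigenvector of the trace-dual at $r(\phi)$ is another unproved existence claim (the point spectrum of $\phi$ does not automatically pass to its (pre)adjoint, and your dual map acts on $\scrB_1$, so Steps~1--2 cannot simply be ``applied'' to it). In short: the skeleton is the standard Perron--Frobenius strategy and would be fine if $\scrH_\sys$ were finite-dimensional, but in the generality in which the proposition is stated and used, the existence of a positive eigenvector and both simplicity arguments need the genuinely non-commutative, compactness-free input that the cited reference provides.
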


We will apply the above theorem to the map $\e^{t Q^{\star}}$, with $t>0$ arbitrary.   First we establish

\begin{lemma} The map $\Phi^{\star}$ is ergodic on $\scrB(\scrH_\sys)$.
\end{lemma}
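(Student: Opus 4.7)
My plan is to put $\Phi^*$ into Kraus form and then reduce the ergodicity statement to the irreducibility of the classical Markov process guaranteed by Assumption \ref{ass: fermi golden rule}(iii). Using the Fourier identity $\zeta_\ve(y-x) = \int_{\tor}\e^{-\i q\cdot(y-x)}\hat\zeta_\ve(\d q)$ from \eqref{xihat}, I will rewrite
\[
\Phi^*(A) \;=\; \sum_{\ve\in\caE}\int_{\tor}\hat\zeta_\ve(\d q)\; V_{\ve,q}\,A\,V_{\ve,q}^*, \qquad V_{\ve,q} := W_\ve\otimes \e^{\i q\cdot X}.
\]
Because the $\hat\zeta_\ve$ are positive finite Borel measures (Assumption \ref{ass: fermi golden rule}(ii)), this is a bona fide continuous-index Kraus representation, and iterating yields $(\Phi^*)^n A = \int \prod_i \hat\zeta_{\ve_i}(\d q_i)\, V_{(n)} A V_{(n)}^*$ with $V_{(n)} := V_{\ve_n,q_n}\cdots V_{\ve_1,q_1}$ and the convenient factorization $V_{(n)}^* = (W_{\ve_1}^*\cdots W_{\ve_n}^*)\otimes \e^{-\i (q_1+\ldots+q_n)\cdot X}$.

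Next I will exploit the $L_{\spin}$-block structure. A direct computation using $[H_{\spin},W_\ve]=\ve W_\ve$ and $W_\ve^*=W_{-\ve}$ shows $[L_{\spin},\Phi^*]=0$ and moreover $\Phi^*$ annihilates every Bohr sector $P_\ve\scrB(\scrH_\sys)$ with $\ve\neq 0$; this is the operator-level version of the identity $\widetilde\Phi_\ve=0$ already recorded after Proposition \ref{prop: properties of Q}. For $A\geq 0$, $A\neq 0$, the block-diagonal part $A_0:=P_0 A = \sum_{e\in\sp(H_{\spin})}\pi_e A\pi_e$ is still a nonzero positive operator, since otherwise every diagonal matrix element of $A$ would vanish and force $A=0$. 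Hence $(\Phi^*)^n A = (\Phi^*)^n A_0$ for $n\geq 1$, reducing the claim to strict positivity of $\e^{t\Phi^*}A_0$ inside $P_0\scrB(\scrH_\sys)$.

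For this reduction I will invoke the classical input. Under the fiber decomposition of Section \ref{eq: other fiber decomposition}, the restriction $\Phi^*\big|_{P_0\scrB}$ is unitarily equivalent to $\widetilde\Phi_0$ acting on $L^1(\caF)$, and Proposition \ref{prop: properties of Q}(a) combined with Assumption \ref{ass: fermi golden rule}(iii) gives that $\e^{t\widetilde\Phi_0}$ is positivity improving, i.e.\ $\e^{t\widetilde\Phi_0}f>0$ a.e.\ whenever $f\geq 0$, $f\neq 0$. To promote this almost-everywhere statement to strict positivity of $\e^{t\Phi^*}A$ as a bounded operator on $\scrH_\sys$, I will argue by contradiction: a unit vector $\psi\in\ker(\e^{t\Phi^*}A)$ would, by nonnegativity of each term in the Kraus expansion, force $V_{(n)}^*\psi\in\ker A$ for all $n$ and for $\prod_i\hat\zeta_{\ve_i}(\d q_i)$-almost every $(\vec\ve,\vec q)$. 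Since the adjoint Kraus action on $\psi$ realises precisely the dual spin-momentum jumps $(e,k)\mapsto(e',k-q)$ of the Markov process on $\caF$ with rates \eqref{rates}, irreducibility forbids the backward orbit from remaining in any closed proper subspace of $\scrH_\sys$.

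The main obstacle I foresee is exactly this last bridging step: converting the measure-theoretic irreducibility on $\caF$ into Hilbert-space totality of the Kraus orbits $\{V_{(n)}^*\psi\}_{n,\vec\ve,\vec q}$. The orbital part of $V_{(n)}^*\psi$ shifts the Fourier momentum of $\psi$ by $Q=q_1+\ldots+q_n$, so the set of reachable momenta after $n$ steps is the $n$-fold convolution of $\mathrm{supp}\,\hat\zeta_\ve$, which need not cover $\tor$ in a single step but must after sufficiently many iterations by irreducibility; simultaneously the spin factor $\psi_s\mapsto W_{\ve_1}^*\cdots W_{\ve_n}^*\psi_s$ explores all of $\sp(H_{\spin})$ for the same reason. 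Combining both directions of exploration rules out a nontrivial closed Kraus-invariant subspace and completes the ergodicity argument.
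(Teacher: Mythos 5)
Your preparatory steps are all consistent with the paper: the Kraus form of $\Phi^{\star}$, the fact that it commutes with $L_{\spin}$ and annihilates the $\ve\neq 0$ Bohr sectors (this is the identity $\widetilde\Phi_\ve=0$ recorded after Proposition \ref{prop: properties of Q}), and the observation that $P_0A=\sum_e\pi_e A\pi_e\neq 0$ for $A\geq 0$, $A\neq 0$. But the entire content of the lemma sits in what you call the ``bridging step'', and there you have a genuine gap: you assert, but do not prove, that irreducibility of the classical jump process on $\caF$ forbids the adjoint Kraus orbit $\{V_{(n)}^{*}\psi\}$ from staying inside $\ker A$. The paper does not attempt anything of this kind; its proof is a two-line reduction: by duality it suffices to show $\e^{t\Phi}\rho$ is positivity improving for $0\leq\rho\in\scrB_1(\scrH_\sys)$, and by translation invariance and the fiber decomposition this is the positivity-improving property of $\e^{t\widetilde\Phi_0}$ on $L^1(\caF)$, which was already established in Proposition \ref{prop: properties of Q} from Assumption \ref{ass: fermi golden rule}(iii).

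Moreover, the statement your bridge would need is not just unproven, it fails. $\Phi^{\star}$ acts diagonally in the position representation: it multiplies the kernel $O(x,y)$ pointwise by $\sum_\ve \zeta_\ve(y-x)\,W_\ve\,\cdot\,W^{*}_\ve$ and never moves $(x,y)$; equivalently, the Kraus factors $\e^{\i qX}$ act as pure phases on position-localized vectors. Hence for $\psi=\chi\otimes\delta_{x_0}$ every orbit vector $V_{(n)}^{*}\psi$ remains in $\smallspace\otimes\bbC\,\delta_{x_0}$, and choosing $A=\lone_{\smallspace}\otimes\,|\delta_{x_1}\rangle\langle\delta_{x_1}|$ with $x_1\neq x_0$ gives $\langle\psi,\e^{t\Phi^{\star}}A\,\psi\rangle=0$ for all $t$: no amount of irreducibility on $\caF$ (which concerns densities in the variables $(e,k)$, with $k$ conjugate to the relative coordinate $x_\links-x_\rechts$, not the physical position) can produce the vector-level strict positivity on $\scrH_\sys$ that your contradiction argument requires, because $\Phi^{\star}$ contains no mechanism for spatial spreading at all. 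That spreading enters only in the next lemma, for $\e^{tQ^{\star}}$, through the Laplacian inside $B$ in the maps $V_t$. To repair your proof you should drop the Kraus-orbit totality claim and argue as the paper does: dualize to $\e^{t\Phi}$, pass to the translation-invariant $\ve=0$ fiber, and invoke the positivity-improving property of $\e^{t\widetilde\Phi_0}$ on $L^1(\caF)$.
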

\begin{proof}
By duality, it suffices to prove that $\e^{t \Phi}(\rho) >0$ for any nonnegative $\rho \in \scrB_1(\scrH_\sys)$.  Employing the fiber decomposition, we see that this is equivalent to $\e^{t \widetilde \Phi_{\ve=0}} \varphi >0$ for any nonnegative function $\varphi$ on $\caF$. This was proven in Proposition \ref{prop: properties of Q}. 
\end{proof}

\begin{lemma} The map $\e^{t Q^{\star}}$ is ergodic on $\scrB(\scrH_\sys)$.
\end{lemma}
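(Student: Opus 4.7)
The plan is to reduce the ergodicity of $\e^{tQ^{\star}}$ to the already-established ergodicity of $\e^{t\Phi^{\star}}$ by exploiting the Lindblad structure of $Q^{\star}$. Using the expression $M\rho = \Phi(\rho) - \tfrac12\{\Phi^{\star}(\lone),\rho\} + \i[H_{\mathrm{Lamb}},\rho]$ together with the kinetic term from $Q$, the dual $Q^{\star}$ decomposes as
\begin{equation*}
Q^{\star}A = \Phi^{\star}(A) + GA + AG^{\ast}, \qquad G := \i\bigl(\tfrac{1}{m_{\mathrm p}}\Delta + H_{\mathrm{Lamb}}\bigr) - \tfrac12 \Phi^{\star}(\lone),
\end{equation*}
where $G\in \scrB(\scrH_\sys)$ is bounded. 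Hence $Q_0^{\star}:A\mapsto GA+AG^{\ast}$ generates a \emph{group} $\e^{tQ_0^{\star}}A = V_t A V_t^{\ast}$ with $V_t := \e^{tG}$, which has bounded inverse $V_{-t}$. This conjugation is positivity-preserving and, because $V_t$ is invertible, it actually preserves strict positivity in the sense that $V_tAV_t^{\ast}>0$ whenever $A>0$.

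The first step is to set up the Dyson expansion around $Q_0^{\star}$: iterating the Duhamel identity $\e^{tQ^{\star}}A = \e^{tQ_0^{\star}}A + \int_0^t \e^{(t-s)Q_0^{\star}}\Phi^{\star}(\e^{sQ^{\star}}A)\,ds$ gives
\begin{equation*}
\e^{tQ^{\star}}A = \sum_{n\ge 0}\int_{0<s_1<\cdots<s_n<t} V_{t-s_n}\Phi^{\star}\bigl(V_{s_n-s_{n-1}}\Phi^{\star}(\cdots V_{s_1}AV_{s_1}^{\ast}\cdots)V_{s_n-s_{n-1}}^{\ast}\bigr)V_{t-s_n}^{\ast}\,ds,
\end{equation*}
each term being $\ge 0$ since both $\Phi^{\star}$ and the $V_s\cdot V_s^{\ast}$ conjugations are positivity-preserving. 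It is convenient to rewrite this in the interaction picture as $\e^{tQ^{\star}}A = \e^{tQ_0^{\star}}\mathcal T\exp\!\bigl(\int_0^t\tilde\Phi^{\star}_s\,ds\bigr)(A)$, where $\tilde\Phi^{\star}_s := \e^{-sQ_0^{\star}}\Phi^{\star}\e^{sQ_0^{\star}}$ has the explicit form $\tilde\Phi^{\star}_s(X) = V_{-s}\Phi^{\star}(V_sXV_s^{\ast})V_{-s}^{\ast}$, so it is a positive map of the same Kraus-like structure as $\Phi^{\star}$ up to conjugation by invertible operators.

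The heart of the proof is then to show strict positivity of the bracketed sum. Since each $V_s$ is invertible and the Kraus-type structure of $\Phi^{\star}$ (i.e.\ the presence of all the $W_{\ve}$-channels and the translation-invariant measures $\zeta_\ve(\cdot)$, cfr.\ Proposition \ref{prop: properties of Q}) is preserved under conjugation by $V_s$, a truncation of the Dyson series to a fixed interval of $s$-variables can be bounded below by a time-smeared version of $\e^{\tau\Phi^{\star}}A$ (for some $\tau=\tau(t)>0$), sandwiched by fixed bounded-from-below conjugations. Applying the preceding lemma, this lower bound is strictly positive, so $\mathcal T\exp\!\bigl(\int_0^t\tilde\Phi^{\star}_s\,ds\bigr)(A)>0$. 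Finally, conjugating by the invertible $\e^{tQ_0^{\star}}=V_t\cdot V_t^{\ast}$ preserves strict positivity, yielding $\e^{tQ^{\star}}A>0$.

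The main obstacle will be the last step: transferring the known positivity-improving property of $\e^{\tau\Phi^{\star}}$ to the Dyson series, because the $\Phi^{\star}$'s in the iterated integrals are separated by conjugations by the non-commuting $V_s$'s rather than being cleanly composed. The right formalisation is either (i) a Trotter product formula $\e^{tQ^{\star}}=\lim_{n}(\e^{(t/n)Q_0^{\star}}\e^{(t/n)\Phi^{\star}})^n$, where each factor is positivity-preserving and invertibility of $V_{t/n}$ together with ergodicity of $\e^{(t/n)\Phi^{\star}}$ propagates strict positivity through the product, or (ii) a direct comparison of the $n$-th Dyson term with $\tfrac{1}{n!}(t\Phi^{\star})^nA$ using uniform operator bounds on $V_s, V_s^{-1}$ over $s\in[0,t]$, and then invoking ergodicity of $\e^{t\Phi^{\star}}$ on a term-by-term basis.
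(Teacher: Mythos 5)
Your setup coincides with the paper's: the same splitting $Q^{\star}(O)=\Phi^{\star}(O)+BO+OB^{*}$ with $B$ bounded, the same Duhamel/Dyson expansion whose terms are completely positive, and the same plan of importing the ergodicity of $\Phi^{\star}$ and the invertibility of $\e^{tB}$. The gap is precisely the step you yourself flag as the ``main obstacle'', and neither of your proposed closures works as stated. (i) In the Trotter route each approximant $(\e^{(t/n)Q_0^{\star}}\e^{(t/n)\Phi^{\star}})^{n}A$ is indeed strictly positive, but strict positivity in the sense $\langle\psi,X\psi\rangle>0$ for all $\psi\neq0$ is not preserved under norm limits (take $X_n=A+\tfrac{1}{n}\lone$ with $A\geq0$ having a kernel), so nothing follows for the limit $\e^{tQ^{\star}}A$ without a quantitative lower bound uniform in $n$, which you do not provide. (ii) The ``direct comparison'' of the $n$-th Dyson term with $\tfrac{1}{n!}(t\Phi^{\star})^{n}A$ via uniform bounds on $V_s,V_s^{-1}$ presupposes an operator inequality of the type $VXV^{*}\geq c\,X$ for $X\geq0$ with $c$ depending only on $\|V\|,\|V^{-1}\|$; this is false already for $X$ a rank-one projection and $V$ a unitary rotating its range, so the interleaved conjugations cannot be removed in the operator order. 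The same objection undermines the ``bounded below by a time-smeared $\e^{\tau\Phi^{\star}}A$ sandwiched by conjugations'' claim in your third paragraph, which is asserted but never derived.

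The correct completion is simpler than either route and is what the paper does: since every term $\caQ_t(t_1,\dots,t_m)=V_{t-t_m}\Phi^{\star}V_{t_m-t_{m-1}}\cdots\Phi^{\star}V_{t_1}$ in the expansion is completely positive, you may discard all terms but one and restrict the time integration to any subset of the simplex. Choose $m_0$ minimal with $(\Phi^{\star})^{m_0}O>0$ (finite by the ergodicity of $\Phi^{\star}$ established in the preceding lemma), note that at $t_1=\cdots=t_{m_0}=0$ the integrand equals $V_t\big((\Phi^{\star})^{m_0}O\big)$, which is strictly positive because $\e^{tB}$ is invertible, and use continuity of $(t_1,\dots,t_{m_0})\mapsto\caQ_t(t_1,\dots,t_{m_0})(O)$, tested against each fixed vector $\psi$, to keep $\langle\psi,\caQ_t(\underline{t})(O)\psi\rangle>0$ on a sub-simplex of positive measure. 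This uses only ingredients you already have, and avoids both the limit of approximants and any operator-order comparison with powers of $\Phi^{\star}$; as written, your proposal does not reach the conclusion.
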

\begin{proof}
 First, we note that the generator $Q^{\star} $ has the form 
 \beq
Q^{\star}(O)= \Phi^{\star} (O) + B O + O B^*, \qquad   B=  \i (\frac{1}{{m_{\mathrm{p}}}}\Delta + H_{lamb}) - \frac{1}{2}   \Phi^\star(\lone)
 \eeq
Since $V_t:  O \to \e^{t B} O\e^{t B^*} $ is a completely positive map, for any operator $B$, all integrands in the Duhamel expansion
\beq
\e^{ t Q^{\star} }   = 1 +  \sum_{m =1}^{\infty}        \mathop{\int}\limits_{0 < t_1 < \ldots < t_m <t}  \d t_m  \ldots \d t_1     \caQ_t(t_1, \ldots, t_m)
\eeq with $  \caQ_t(t_1, \ldots, t_m) :=  V_{t-t_m} \Phi^{\star} \ldots   \Phi^{\star}   V_{t_2-t_1}  \Phi^{\star} V_{t_1} $
are completely positive maps, as well. 

Therefore,  for any positive $O \in \scrB(\scrH_\sys)$, it suffices to find an $m_0$ and a subset of positive measure of the $m_0$-dimensional simplex such that   
$\caQ_t(t_1, \ldots, t_{m_0})(S)>0$.
Let $m_0$ be smallest natural number $m$ such that  $( \Phi^\star)^m O>0$. The ergodicity of $\Phi^\star$ ensures that $m_0< \infty$. 
Then by continuity, there is an neighborhood of $t_1=t_2 = \ldots = t_{m_0}=0$ in the simplex where the integrand is bounded away from $0$. 
\end{proof}
We conclude that
\begin{lemma} \label{lem: eigen of qstar}The operator  $Q^{\star}$ has only one eigenvalue, namely $0$, whose real part is nonnegative. This eigenvalue is simple and its eigenvector can be chosen to be $\lone$.
\end{lemma}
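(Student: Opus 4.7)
The plan is to deduce everything from Schrader's Perron--Frobenius theorem (Proposition \ref{prop: perron frobenius schrader}) applied to the positive map $\phi_t := e^{tQ^{\star}}$ for any fixed $t>0$ (which is ergodic by the preceding lemma), together with a short spectral-mapping argument to exclude other peripheral eigenvalues.

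First I would verify that $0$ is an eigenvalue with eigenvector $\lone$. The Lindblad-type expression (\ref{eq: M in terms of Phi}) makes $M$ manifestly trace preserving: by the definition of $\Phi^{\star}$ we have $\Tr(\Phi\rho)=\Tr(\Phi^{\star}(\lone)\rho)$, which exactly cancels the anticommutator term, while the Lamb-shift commutator and the kinetic commutator $\adjoint(\Delta)$ vanish under the trace. Hence $\Tr(Q\rho)=0$ for every $\rho\in\scrB_1(\scrH_\sys)$, i.e.\ $Q^{\star}\lone=0$.

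Next, since $e^{tQ}$ is completely positive (Lindblad) and trace preserving, it is a contraction on $\scrB_1(\scrH_\sys)$; dually, $\phi_t$ is a unital, positive contraction on $\scrB(\scrH_\sys)$, so its spectral radius is at most $1$, and because $\lone$ is a fixed vector of $\phi_t$ the spectral radius equals $1$. Applying Proposition \ref{prop: perron frobenius schrader} to the ergodic map $\phi_t$ (any $t>0$) then yields that $1$ is a simple eigenvalue of $\phi_t$, with eigenspace spanned by a strictly positive element; since $\lone>0$ is such an element, $\ker(\phi_t-1)=\bbC\lone$, and equivalently $\ker Q^{\star}=\bbC\lone$.

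Finally I would rule out any other eigenvalue with nonnegative real part. Let $Q^{\star}A=\mu A$ with $A\neq 0$ and $\Re\mu\geq 0$. Then $\phi_t A=e^{t\mu}A$, and $\|\phi_t\|\leq 1$ forces $e^{t\Re\mu}\leq 1$ for every $t>0$, so $\Re\mu=0$. If in addition $\mu\neq 0$, set $t_0=2\pi/|\Im\mu|$; then $e^{t_0\mu}=1$, so $A$ lies in the eigenspace of $\phi_{t_0}$ for eigenvalue $1$, which by the previous paragraph is $\bbC\lone$. Writing $A=c\lone$ gives $\mu A=Q^{\star}(c\lone)=0$, hence $c=0$, contradicting $A\neq 0$. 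Therefore $\mu=0$, completing the proof. The only step requiring any care is the applicability of Schrader's theorem, specifically the requirement that the spectral radius be attained as an eigenvalue; this is guaranteed here by directly exhibiting $\lone$ as a fixed vector of $\phi_t$, after which all remaining steps are routine.
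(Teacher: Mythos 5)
Your proof is correct and follows essentially the same route as the paper: exhibit $\lone$ as a fixed vector via trace preservation of the semigroup generated by $Q$, use the contraction property to exclude eigenvalues with positive real part, and for a purely imaginary eigenvalue exponentiate, choose $t$ with $\e^{t\mu}=1$, and invoke Schrader's Perron--Frobenius result (Proposition \ref{prop: perron frobenius schrader}) for the ergodic map $\e^{tQ^{\star}}$ to force the eigenvector into $\bbC\lone$. Your write-up is merely more explicit than the paper's (spelling out the cancellation in the Lindblad form and the final contradiction $\mu c\lone=0$), but no new idea is involved.
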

\begin{proof}
The fact that $\lone$ is an eigenvector with eigenvalue $0$ since $Q^{\star}$ generates a unity-preserving semigroup. For the same reason, there are no eigenvalues (in fact, no spectrum) with strictly positive real part. Assume that $Q^{\star} S=\la S$ for some $\la, \Re \la=0$ and $S \in  \scrB(\scrH_\sys)$. Since $Q^{\star}$ is bounded, we have $\e^{tQ^{\star} } S= \e^{t\la}S$ and we can choose $t$ such that $\e^{t\la}=1$.  Proposition \ref{prop: perron frobenius schrader} implies now that $S=\lone$.  
\end{proof}

Consider  $\widetilde{Q}(p) $  as bounded operators acting on $L^1(\scrB_1(\scrS))$.  It is straightforward to check that there are  bounded operators $\widetilde{Q}^\star(p)$ 
acting on $L^{\infty}(\scrB(\scrS))$ such that  $\widetilde{Q}(p) $  is the adjoint of $\widetilde{Q}^\star(p)$.   The above result has implications for the spectrum of $\widetilde{Q}^\star(p)$ which we describe below. The internal space $\scrS$ does not play any role in this argument and therefore we pretend it is 1-dimensional so that $\scrH_\sys \sim L^2(\tor)$ and $\widetilde{Q}^\star(p) $ acts on function on $\tor$ (until the end of the proof of the following lemma). The generalization to finite-dimensionsal $\scrS$ is then obvious. 
Note first that for  $\rho \in \scrB_1(\scrH_\sys)$, the fiber $\widetilde \rho(p)$ can be uniquely defined for any $p$, such that 
\beq \label{eq: fiber bone}
\Tr [\e^{\i p X} \rho_p] = \langle 1, \widetilde\rho(p) \rangle
\eeq
where $\langle \cdot, \cdot \rangle$ is the pairing between $L^{\infty}$ and $L^{1}$ and $X$ is the operator acting by multiplication with $x \in \bbZ^d$ on $l^2(\bbZ^d)$.  This follows from the singular value decomposition. 
\begin{lemma}  
For $p \neq 0$,  $\widetilde{Q}^\star(p)$ does not have any eigenvalue with nonnegative real part.  For $p=0$, the only eigenvalue with nonnegative real part is $0$ and  it is simple.
\end{lemma}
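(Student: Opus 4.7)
The plan is to exploit the fiber decomposition \eqref{epsilondeco} of $\widetilde Q(p)$, which dualizes fiberwise to give $\widetilde Q^\star(p)=\oplus_{\ve\in\caE}\widetilde Q_\ve^\star(p)$, and to analyze each fiber separately. For $\ve\neq 0$, Proposition \ref{prop: properties of Q}(b) tells us that $\widetilde Q_\ve^\star(p)$ is multiplication by $\overline{q_\ve(p)}$, whose real part equals $-\tfrac{1}{2}(w(e)+w(e'))\leq -\min w<0$ by \eqref{minwe}; hence these fibers carry no eigenvalue with nonnegative real part and may be set aside. The real work is in the $\ve=0$ fiber. I would set $A:=\widetilde Q_0(0)=\widetilde\Phi_0-w$: by Proposition \ref{prop: properties of Q}(a) this is the generator of a positivity-preserving contraction semigroup on $L^1(\caF)$ (the Markov semigroup of the jump process with rates $j$), and combining the compactness of $\widetilde\Phi_0$ with Assumption \ref{ass: fermi golden rule}(iii) in the spirit of Proposition \ref{prop: perron frobenius schrader} yields a strictly positive invariant density $\mu_Q>0$ with $A\mu_Q=0$ (this is the density already introduced in Section \ref{sec: markov approximation}).

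Writing $\widetilde Q_0^\star(p)=A^\star-\i E_\kin(p,\cdot)$ and using the elementary bound $|\widetilde\Phi_0^\star\psi|\leq \widetilde\Phi_0^\star|\psi|=w|\psi|+A^\star|\psi|$, which follows from the positivity of $j$ and the identity $\widetilde\Phi_0^\star\lone=w$, I would rewrite the eigenvalue equation $\widetilde Q_0^\star(p)\psi=\lambda\psi$ as
\begin{equation*}
\widetilde\Phi_0^\star\psi=\bigl(w+\Re\lambda+\i(E_\kin(p,\cdot)+\Im\lambda)\bigr)\psi,
\end{equation*}
take absolute values, and, assuming $\Re\lambda\geq 0$, obtain the pointwise estimate
\begin{equation*}
A^\star|\psi|\;\geq\;\Bigl(\sqrt{(w+\Re\lambda)^2+(E_\kin(p,\cdot)+\Im\lambda)^2}-w\Bigr)|\psi|\;\geq\;0.
\end{equation*}
Pairing with the invariant density then gives $\langle\mu_Q,A^\star|\psi|\rangle=\langle A\mu_Q,|\psi|\rangle=0$, and since $\mu_Q>0$ both inequalities above must be equalities almost everywhere.

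The equality of the outer $\sqrt{\phantom{x}}$ with $w$ forces $\Re\lambda=0$ and $E_\kin(p,k)=-\Im\lambda$ wherever $|\psi|>0$, while $A^\star|\psi|=0$ expresses harmonicity of $|\psi|$ for the jump process and hence constancy of $|\psi|$ by Assumption \ref{ass: fermi golden rule}(iii); so either $\psi\equiv 0$ or $|\psi|>0$ everywhere. In the latter case $E_\kin(p,\cdot)\equiv-\Im\lambda$ on all of $\caF$, but from \eqref{qopek} one computes $E_\kin(p,k)=-\tfrac{4}{m_{\mathrm p}}\sum_j\sin(p_j/2)\sin k_j$, which is nonconstant in $k$ for every $p\neq 0$; this rules out the case $p\neq 0$. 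For $p=0$ one is forced to $\lambda=0$, and the simplicity of this eigenvalue then follows from the saturation of the triangle inequality $|\widetilde\Phi_0^\star\psi|=\widetilde\Phi_0^\star|\psi|$: combined with $|\psi|$ constant, it makes the phase of $\psi$ constant along every admissible transition, and Assumption \ref{ass: fermi golden rule}(iii) promotes this to global constancy, so that $\ker\widetilde Q_0^\star(0)=\bbC\lone$. The main subtlety I anticipate is this rigidity step—converting the integrated identity into a pointwise equation for $E_\kin$ and the phase of $\psi$—and it relies crucially on the strict positivity of $\mu_Q$, so the proof rests essentially on the Perron--Frobenius input provided by Assumption \ref{ass: fermi golden rule}(iii).
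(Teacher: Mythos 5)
Your proposal is correct in outline but takes a genuinely different route from the paper's. The paper does not argue fiberwise: given $\widetilde Q^\star(p)\varphi=\la_p\varphi$ it forms the bounded operator $S_{\varphi,p}=O_\varphi \e^{\i pX}$, checks $Q^\star S_{\varphi,p}=\la_p S_{\varphi,p}$, and then invokes the noncommutative Perron--Frobenius input already established for $Q^\star$ on $\scrB(\scrH_\sys)$ (Lemma \ref{lem: eigen of qstar}, via Proposition \ref{prop: perron frobenius schrader}), which forces $S_{\varphi,p}=C\lone$, hence $p=0$ and $\varphi=C1$. You instead decompose over the Bohr frequencies, dispose of $\ve\neq0$ by the strictly negative real part of $q_\ve(p)$, and run a commutative rigidity argument in the $\ve=0$ fiber: take absolute values in the eigenvalue equation, pair with the strictly positive invariant density $\mu_Q$ of $A=\widetilde Q_0(0)=\widetilde\Phi_0-w$, and use that $E_{\kin}(p,k)=-\frac{4}{m_{\mathrm{p}}}\sum_j\sin(p_j/2)\sin k_j$ is not a.e.\ constant in $k$ for $p\neq0$. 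This is precisely the ``commutative Perron--Frobenius'' alternative the paper itself mentions in the remark after the lemma, but there only for $p=0$; your observation that the same mechanism excludes all $p\neq0$ is a clean substitute for the lifting trick. What your route buys is independence from the ergodicity analysis of $\e^{tQ^\star}$ on $\scrB(\scrH_\sys)$; what it costs is that you must supply the $\ve=0$ Perron--Frobenius facts ($\mu_Q>0$ a.e., uniqueness of nonnegative bounded harmonic functions) yourself, and you must not borrow them from Proposition \ref{prop: main spectral properties of Qvep}, whose proof uses the present lemma; they do follow independently from compactness of $\widetilde\Phi_0$, stochasticity of $\e^{tA}$ and Assumption \ref{ass: fermi golden rule}(iii), but this should be said.

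Two steps need more care. First, ``harmonicity plus Assumption \ref{ass: fermi golden rule}(iii) implies $|\psi|$ constant'' is asserted, not proved. For excluding $p\neq0$ and pinning $\la=0$ at $p=0$ you only need the dichotomy ($\psi\equiv0$ or $|\psi|>0$ a.e.), and that does follow cleanly: $A^\star|\psi|=0$, i.e.\ $\widetilde\Phi_0^\star|\psi|=w|\psi|$, shows the zero set of $|\psi|$ is forward invariant for the jump process, hence null or co-null by irreducibility. Genuine constancy (or at least one-dimensionality of the cone of nonnegative bounded harmonic functions), which your simplicity argument does use, requires an extra ingredient, e.g.\ finite multiplicity of the eigenvalue $0$ (from compactness of $\widetilde\Phi_0$) combined with the single-signedness of real harmonic elements, or a direct pairing argument with $\mu_Q$. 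Second, you establish only geometric simplicity, $\ker\widetilde Q_0^\star(0)=\bbC\lone$; algebraic simplicity should be added, though in your framework it is a one-liner: if $\widetilde Q_0^\star(0)\chi=\lone$ then $1=\langle \lone,\mu_Q\rangle=\langle \widetilde Q_0^\star(0)\chi,\mu_Q\rangle=\langle\chi, A\mu_Q\rangle=0$, a contradiction (the paper is equally brief on this point). With these repairs your argument is complete.
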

\begin{proof} 
Assume $\widetilde{Q}^\star(p)\varphi=  \la_p \varphi$ for some $\varphi \in L^\infty$. We construct
 a related eigenvector of  $Q^\star$ with eigenvalue $\la_p$. Indeed, let
\beq
S_{\varphi,p}   = O_{\varphi} \e^{\i p X}  \in \scrB(L^2)
\eeq
where $O_\phi \in \scrB(L^2)$ is  the operator that acts by multiplication with the function $\phi(k), k \in \tor$.
We have $\widetilde{O\rho}(p)= \phi \widetilde{\rho}(p)$, i.e.\ the pointwise product of the functions, and hence, by \eqref{eq: fiber bone},
\beq
\Tr [ \e^{\i p X} O_\phi \rho] = \langle {\phi}, \widetilde \rho(p) \rangle
\eeq
 Hence, for all $\rho\in\scrB_1$
\beq
\Tr [ S_{\varphi,p}  Q \rho] =     \langle    \varphi,   \widetilde{Q}(p) \widetilde{\rho}(p) \rangle  = \langle  \widetilde Q^{\star}(p)  \varphi,  \widetilde{\rho}(p) \rangle   =   \la_p \langle   \varphi,  \widetilde{\rho}(p) \rangle    = \la_p  \Tr [ S_{\varphi,p}  \rho]    
\eeq
and it follows that  $Q^\star S_{\varphi,p}   =   \la_p  S_{\varphi,p} $. If $\Re \la_p \geq 0$, then Lemma \ref{lem: eigen of qstar} implies that $S_{\varphi,p} =C \lone$, which can only be true if $p=0$ and $\varphi= C 1$.  Hence the geometric multiplicity of the eigenvalue $0$ is $1$. In an analogous way, one argues that the algebraic multiplicity is $1$ so the eigenvalue at $0$ is simple. 

\end{proof}

Of course, the conclusions about the spectrum of $\widetilde{Q}^\star(0)$ could also  have been obtained via a `commutative' Perron-Frobenius type argument, by first restricting to  $\widetilde{Q}_0^\star(0)$ and using that this operator generates a Markov process with the irreducibility property of Assumption \ref{ass: fermi golden rule}.

\subsubsection{Spectral properties of $\widetilde Q_\ve(p)$} \label{sec: spectral properties of fibered operators}

We can now state the main result on spectral properties of  $\widetilde Q_\ve(p)$:
 \begin{proposition} \label{prop: main spectral properties of Qvep}
 There are strictly positive constants $\ga_Q,\bana_Q,a_Q,b_Q$ s.t. the following holds.
 \begin{itemize}
\item[a)]
  $\widetilde{Q}_{\ve}(p)$ extend to  an analytic family of bounded operators on  $L^1(\caF)$ for $\ve=0$
  and on $L^1(\bbT^d)$ for $\ve\neq 0$, in the region $|\Im p|<\ga_Q$.
\item[b)]
 Take $p$ with $|\Re p|\leq\bana_Q$ and $|\Im p|<\ga_Q$. Then
 the spectrum of $\widetilde{Q}_{0}(p)$ lies in the half plane $\Re z<-a_Q$ except for a simple isolated eigenvalue at $f_Q(p)$
 with $\Re f_Q(p)>-a_Q/2$ and $p \mapsto f_Q(p)$ analytic.  The spectrum of $\widetilde{Q}_{\ve}(p)$ lies in the half plane $\Re z<-a_Q$.
\item[c)] 
  Take $p$ with $|\Re p|\geq\bana_Q$ and $|\Im p|<\ga_Q$.
 Then
  $\sigma(\widetilde{Q}_{\ve}(p))\subset\{z:\Re z\leq -b_Q\}$ for all $\ve$.  
\item[d)] The  claims  a),b),c) remain true for the operators
   $
 \e^{\ka \partial_{k}} \widetilde Q_{\ve} (p)  \e^{-\ka  \partial_{k}}   $ 
 for $\ka\in\bbC$, $|\ka|\leq \ka_0$ for some $\ka_0>0.$ 
  \item[e)] The analytic function  $f_Q(p)$ satisfies
  \beq\label{fQasy}
|f_Q(p)+D_Qp^2|\leq C|p|^4 
\eeq
  for a strictly positive 'diffusion constant' $D_Q>0$. The corresponding left eigenvector and right eigenvectors
  can be chosen to be analytic in $p$ such that, at $p=0$, the former equals $1_{\caF}$ and the latter $\mu_Q$, where the function $\mu_Q(k,e)$ is independent of $k$, normalized to $\int \d k \sum_e \mu_Q(k,e)=1$ and such that  $C> \mu_Q(k,e) \geq c>0$. 
   \end{itemize}
  \end{proposition}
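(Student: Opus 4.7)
Part (a) is immediate from Proposition \ref{prop: properties of Q}: $\widetilde\Phi_0$ is $p$-independent and bounded on $L^1$ (with norm at most $\max_e w(e)$), while the multiplier $q_\ve(p)(e,k)$ is a trigonometric polynomial in $p$, hence defines an entire analytic family of uniformly bounded multiplication operators in any strip $|\Im p|\leq\ga_Q$. I will organise the remainder around a uniform control of the essential spectrum. For $\ve\neq 0$, $\widetilde Q_\ve(p)$ is pure multiplication, and the essential range of $q_\ve(p)$ has real part at most $-\tfrac12\min w+C|\Im p|$; since $\min w>0$ by \eqref{minwe}, choosing $\ga_Q$ small forces $\sigma(\widetilde Q_\ve(p))\subset\{\Re z\leq -a_Q\}$, settling (b) and (c) in that case. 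For $\ve=0$, compactness of $\widetilde\Phi_0$ plus Weyl's theorem gives the same bound on the essential spectrum, leaving only eigenvalues of finite multiplicity above $-a_Q$ to be located.

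For the small $|\Re p|$ regime in (b), note that $\widetilde Q_0(0)=\widetilde\Phi_0-w$ is the $L^1$-generator of the irreducible Markov process from Assumption \ref{ass: fermi golden rule}. The ansatz that the invariant density depends only on $e$ reduces $\widetilde Q_0(0)\mu=0$ to the invariance equation $\sum_e A(e',e)\mu(e)=w(e')\mu(e')$ for the embedded spin chain with rates $A(e',e)=\int j(e',dk;e,0)$; by irreducibility (inherited from the full process by integrating out $k$) and finiteness of $\sigma(H_\spin)$, there is a unique probability measure $\mu_Q(e)$ with $c\leq\mu_Q(e)\leq C$. The non-commutative Perron--Frobenius result of Section \ref{sec: noncommutative perronfrobenius} then shows $0$ is a simple eigenvalue of $\widetilde Q_0(0)$ with left eigenvector $1_{\caF}$ and right eigenvector $\mu_Q$, the remaining spectrum lying in $\{\Re z<0\}$; a spectral gap $a_Q>0$ exists since only finitely many eigenvalues of finite multiplicity can sit above the essential threshold. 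Standard analytic (Kato--Rellich) perturbation theory then propagates the gap and the isolated simple eigenvalue to the disk $|p|\leq\bana_Q$, producing the analytic $f_Q(p)$ and the analytic left/right eigenvectors. For the remaining part (c), the Perron--Frobenius lemma of Section \ref{sec: noncommutative perronfrobenius} forbids any eigenvalue of $\widetilde Q_0(p)$ with $\Re z\geq 0$ when $p\neq 0$; combined with upper semicontinuity of $p\mapsto\sup\{\Re z:z\in\sigma(\widetilde Q_0(p))\}$ on the compact set $K=\{\bana_Q\leq|\Re p|\leq\pi,\,|\Im p|\leq\ga_Q\}$, this forces $\sup_K<0$, and I take $-b_Q$ equal to that supremum.

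Part (d) is an extension of the same arguments: convolutions commute with translations so $e^{\ka\partial_k}\widetilde\Phi_0 e^{-\ka\partial_k}=\widetilde\Phi_0$, while $e^{\ka\partial_k}q_\ve(p)e^{-\ka\partial_k}$ is multiplication by $q_\ve(p)(e,k+\ka)$, still analytic in $\ka$ and satisfying the same essential-spectrum bound for $|\ka|\leq\ka_0$ small. For part (e), the $O_{\bbZ^d}$-covariance from Assumption \ref{ass: symmetries} transported to the fiber of Section \ref{eq: other fiber decomposition} gives $f_Q(p)=f_Q(Op)$, killing the linear term in the Taylor expansion and forcing the Hessian to be a scalar, so $f_Q(p)=-D_Q|p|^2+O(|p|^4)$. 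Second-order perturbation theory with $\partial_{p_j}\widetilde Q_0(0)=-iv_j$, $v_j(k)=2m_p^{-1}\sin k_j$ as in \eqref{group velocity} (first-order correction vanishes since $\mu_Q$ is $k$-independent and $v$ is odd in $k$), together with the reduced resolvent $R_0=\int_0^\infty e^{t\widetilde Q_0(0)}\,dt$ on the complement of $\mathrm{span}(\mu_Q)$, yields the Green--Kubo representation $D_Q\,\delta_{ij}=\tfrac12\int_0^\infty\bbE_{\mu_Q}[v_i(K_0)v_j(K_t)]\,dt$ with $(K_t)$ the stationary Markov process, consistent with Section \ref{sec: markov approximation}. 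Strict positivity $D_Q>0$ follows from the asymptotic variance of $\int_0^t v(K_s)\,ds$ under the irreducible dynamics: it vanishes only if $v$ is a generator-coboundary, which is incompatible with $v(k)=2m_p^{-1}\sin k$ being nonzero on a set of full measure. The chief technical obstacle is securing the uniform spectral gap at $p=0$, which relies on non-accumulation of eigenvalues of $\widetilde Q_0(0)$ near the imaginary axis -- a consequence of compactness of $\widetilde\Phi_0$ plus the ergodicity lemma, rather than a direct computation.
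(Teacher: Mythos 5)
Your treatment of (a)--(d) and of the structural parts of (e) follows essentially the same route as the paper: analyticity of the multipliers, the negative real part of $q_\ve(p)$ for $\ve\neq 0$, compactness of $\widetilde\Phi_0$ plus Weyl's theorem for $\ve=0$, the duality with the non-commutative Perron--Frobenius statement for $Q^\star$ to get simplicity at $p=0$ and absence of non-decaying modes for real $p\neq0$, compactness of the momentum torus for (c), the observation that $e^{\ka\partial_k}$ commutes with $\widetilde\Phi_0$ for (d), and lattice symmetry plus second-order perturbation theory (with the first-order term killed by $k$-independence of $\mu_Q$) for the expansion of $f_Q$. These steps are fine, up to harmless imprecision (e.g.\ the factor in your Green--Kubo formula, and the need to shrink $\ga_Q$ when you invoke upper semicontinuity on the complex strip).

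The genuine gap is the strict positivity $D_Q>0$, which is the one delicate point of (e), especially when $\be_1\neq\be_2$ so that the process generated by $\widetilde Q_0(0)$ is \emph{not} reversible. Your criterion ``the asymptotic variance vanishes only if $v$ is a generator-coboundary, which is incompatible with $v\neq0$ a.e.'' is backwards: since $\widetilde Q_0(0)$ has a spectral gap and $\int v_i\,\d\mu_Q=0$, the velocity \emph{is} a generator-coboundary, $v_i=\widetilde Q_0(0)\,g$ with $g=\widetilde Q_0(0)^{-1}v_i$, so read literally your criterion would give $D_Q=0$ rather than $D_Q>0$. For a pure jump process the correct dichotomy is: the variance vanishes iff the carr\'e du champ of the transfer function $g$ vanishes $\mu_Q$-a.e., i.e.\ $g(e',k')=g(e,k)$ for $j$-almost every admissible jump; only then does irreducibility (Assumption \ref{ass: fermi golden rule}) force $g$ to be constant, whence $v_i=\widetilde Q_0(0)g=0$, a contradiction -- and this chain of implications is missing from your argument. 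In addition, identifying the Hessian coefficient $D_Q$ with a nonnegative quadratic form in the non-reversible case requires taking the symmetric part of the generator correctly, e.g.\ via the identity $\Re\langle v_i,(-\widetilde Q_0(0))^{-1}v_i\,\mu_Q\rangle=\langle(-\Re K)u,u\rangle_{\mu_Q}$ with $u=K^{-1}v_i$; the paper does exactly this by passing to $L^2(\caF,\mu_Q)$, proving sectoriality of $K=\mu_Q^{-1}\widetilde Q_0(0)(\mu_Q\,\cdot)$ and using its spectral gap to exclude $D_Q=0$. Either the carr\'e-du-champ/irreducibility argument or the sectoriality argument must be supplied; as written, your positivity step does not stand.
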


\begin{proof}
a) follows since by Proposition \ref{prop: properties of Q} the only $p$-dependence
of  $\widetilde{Q}_{\ve}(p)$ is in the multiplication operators which are analytic (in fact,  the restriction on $\Im p$ is not even needed here). 
For $\ve\neq 0$, the spectral claims b),c) follow from Proposition \ref{prop: properties of Q}  since the real part
of the multiplication operator is negative. 

We focus now on $\ve=0$. 
 Since $\widetilde \Phi_0$ is compact and the spectrum of $q_0$ lies in
a half space $\{z:\Re z\leq -b\}$  with $b>0$ the spectrum of $\widetilde{Q}_{0}(p)$ in  $\{z:\Re z\geq -b/2\}$ consists of
a finite number of
eigenvalues of finite multiplicity. Therefore, we can use duality between $\widetilde{Q}^\star(p)$ and $\widetilde{Q}(p)$ to conclude that the latter has a unique and simple eigenvalue with nonnegative real part for $p=0$, namely $0$, and none for $p \neq 0$. These properties carry over to the spectrum of $\widetilde{Q}_0(p)$.  As we already know that $\widetilde\Phi_0$ generates a positivity-improving semigroup, we immediately deduce that the eigenvector of $\widetilde{Q}_0(0)$ at $0$ is a strictly positive function  $\mu_Q(k,e)$.
  From the fact that the jump rate depends on $k,k'$ only through $k'-k$, it follows that $\mu_Q(k+q,e)$ is also invariant i.e.\ that $\mu_Q$ is constant in $k$. Therefore, the strict positivity and the fact that the number of $e$'s is finite, imply that $\mu_Q$ is bounded below and above. 
By analyticity  for $\ga_Q,\bana_Q$ small enough the simple eigenvalue
persists for small $\str p \str$and lies in  $\Re z>-a_Q/2$ whereas the rest of spectrum is in $\Re z<-a_Q$ for some $a_Q>0$.  We have now proven all claims of b).  To end the proof of c), it suffices to realize that $p$-space is compact and the eigenvalues depend analytically on $p$.

To get d), note that the operator $\partial_k$ commutes with $\widetilde\Phi_0$ so that
\beq\label{complex transl}
\widetilde Q_{\ve}^\kappa (p):=\e^{\ka \partial_{k}} \widetilde Q_{\ve} (p)  \e^{-\ka  \partial_{k}} =\widetilde Q_{\ve} (p) 
+ \i( E_{\kin}(p,k-\ka) - E_{\kin}(p,k))
\eeq
 Hence by perturbation theory a),b),c) remain valid if $|\ka|$ is small enough.

Finally, we turn to e).
By perturbation theory of isolated eigenvalues, we get (recall that $\langle \cdot, \cdot \rangle$ is the pairing between $L^{1}$ and $L^{\infty}$.)
\beq
f_Q(p) = \langle 1, (Q_0(p)-Q_0(0)) \mu_Q \rangle + O(\str p \str^2)
\eeq
The first term on the RHS vanishes by the explicit expression (\ref{qopek}) and the fact that $\mu_Q$ is constant in $k$.  More generally,  the absence of odd powers in the Taylor expansion at $p=0$, follows from the invariance of the model under lattice symmetries, cfr.\ the reasoning in the proof of Lemma \ref{lem: flow of parameters for t}. 
  To determine the second order contribution to   $f_Q(p) $ we invoke second order spectral perturbation theory, yielding 
\beq
f_Q(p) =- \sum_{i,j=1}^d p_ip_j  \langle 1, v_j {{\widetilde Q}_0(0)}^{-1}   v_i \mu_Q \rangle + O(\str p\str^4),
\eeq
where $v_i = - \i \partial_{p_i} E_{\kin}(p,\cdot)$ is the "group velocity" (\ref{group velocity}). The  inverse of ${\widetilde Q}_0(0)$ in this formula is well-defined because $R_Q(0) v_i \mu_Q =0$ where
\beq\label{spectral proj}
R_Q(p) = |\mu_Q(p)\rangle \langle 1|
\eeq
 is the spectral projection to $\mu_Q$.
 
The non-negativity of $D_Q$ is deduced from the fact that the operators $\widetilde Q_p$ (for $\Im p=0$) generate contractive semigroups. 
To establish the strict positivity of $D_Q$, we employ a standard construction:
 First, consider the space $L^2(\caF, \mu_Q)$, defined by the scalar product  $\langle\psi, \psi'\rangle_{\mu_Q}:=  \langle  \mu_Q \psi, \psi'  \rangle$ where $\psi \mu_Q$ is the pointwise product of two functions,  and let the operator $K$ be defined as the closure in $L^2(\caF, \mu_Q)$ of the operator
 \beq
 K \psi =       \mu^{-1}_Q Q_0(0) (\mu_Q \psi),  \qquad  \psi \in L^1(\caF)
 \eeq
 where the multiplication operator $\mu^{-1}_Q$ is well-defined because $\mu_Q$ is bounded below.  By Proposition \ref{prop: properties of Q} and the remarks preceding it, we have that $K$ is bounded.  Furthermore, $K$ is sectorial, that is, there is a constant $C_K>0 $ such that
 \beq
 \str \langle \psi,  (\Im K)  \psi \rangle_{\mu_Q}  \str \leq  - C_K \langle \psi,  (\Re K)  \psi \rangle_{\mu_Q} 
 \eeq
 where  $ 2\Re K =  K + K^* , 2\Im K = K-K^* $ and the adjoint is taken in the Hilbert space $L^2(\caF, \mu_Q)$. 
  Both the left and right eigenvectors of $K$ are $1$ (constant function on $\caF$) and the diffusion constant can be represented as 
 \beq
  D_Q \delta_{i,j} = \langle v_i,    K^{-1} v_j \rangle_{\mu_Q}
 \eeq
 Moreover, $K$ inherits the unicity of the eigenvalue at $0$ and  the spectral gap from $\widetilde Q_0(0)$. It follows that $K^{-1} v_j \in L^2(\caF,\mu_Q)$. Using the sectoriality we can now deduce that 
  $  D_Q \neq 0$.  In the case where $\be_1=\be_2$, the operator $K$ is self-adjoint and the above reasoning can be slightly simplified.  
  Since the claimed properties of $\mu_Q$ were already established above, we have now proven all claims in e). 
\end{proof}

\subsection{Proof of Proposition \ref{ass:  properties of T} for $n=0$} \label{sec: t at n equal 0}

Recall that $T_0= Z_{\la^{-2}\frt_0}$.
From Proposition \ref{prop: weak coupling}, the relation $ -\i L_\sys+ \la^2M=-\i L_{\spin}+ \la^2 Q$ and the fact that  $L_{\spin}$ and $ Q$ commute,  we infer
\beq
\norm T_0  -  \e^{-\i \la^{-2} \frt_0 L_{\spin}} \e^{\frt_0 Q} \norm_{20 \ga_0} \leq C\str \la\str^{2\al} . \label{eq: error term weak coupling t}
\eeq
Then we set
\beq
 T_Q:= \e^{-\i \la^{-2} \frt_0 L_{\spin}} \e^{\frt_0 Q}  =   \sum_{\ve} \e^{-\i \frt_0  \la^{-2}\ve} P_{\ve}    \e^{\frt_0 Q}P_{\ve} .
\eeq
We need to translate the spectral information of  Proposition
\ref{prop: main spectral properties of Qvep} to the $\gamma$-norm.  Recall that $ \e^{\frt_0 \hat Q(p)}=
\caI^{-1}_p   \e^{\frt_0  \widetilde Q(p)}    \caI^{}_p$ and that from (\ref{Ipdefi}) we have
$$
e^{\ka\partial_k} \caI^{}_p= \caI^{}_p \e^{\i \ka(2v+\eta)}.
$$
Thus
\beq
\e^{\frt  \hat Q(p)}(v',\eta';v,\eta)=e^{\i \ka(2(v-v')+\eta-\eta')}(\caI^{-1}_p   \e^{\frt  \widetilde Q^\ka(p)}    \caI^{}_p)
(v',\eta';v,\eta)
\eeq
where we use the notation (\ref{complex transl}) and both sides are operators on $\scrB_2(\smallspace)$. Since
\beq
(\caI^{-1}_p   \e^{\frt_0  \widetilde Q^\ka(p)}    \caI^{}_p)
(v',\eta';v,\eta)=\langle\ \varphi', \e^{\frt_0  \widetilde Q^\ka(p)} \varphi \rangle
\eeq
where $\varphi(k)= \e^{\i p \frac{\eta}{2}}  \e^{\i k(2 v +\eta) }$ (and $\varphi'$ similarly) we conclude
\beq
\norm P_{\ve}    \e^{\frt_0 \hat Q(p)}P_{\ve} \norm_{\scrG}  \leq  C\sup_{|\ka| \leq \ga_0} \norm  \e^{\frt_0  \widetilde Q_\ve^\ka(p)}\norm 
\eeq
where the norm on the RHS is the operator norm on $\scrB(L^1(\tor)),  \scrB(L^1(\caF)) $, depending on $\ve$.   By similar reasoning, we get a bound on the $\norm\cdot \norm_\ga$ of the projection $R_Q(p)$. 

We now finally choose $\ga_0 := \min (\ka_0, \ga_Q)$ where the latter constants were introduced in Proposition \ref{prop: main spectral properties of Qvep}. That proposition can then be recast as follows. 

 \noindent (1) Let $p$ with $|\Re p|\leq\bana_Q$ and $|\Im p|<\ga_0$. Then
 \beq
 T_Q(p)=e^{\frt_0 f_Q(p)}R_Q(p)+(1-R_Q(p))T_Q(p) .
\eeq
with
\beq\label{isolated}
\norm(1-R_Q(p))T_Q(p)  \norm_{\scrG}  \leq  Ce^{-\frac{3a_Q}{4}\frt_0},
\eeq
and $f_Q(p)$ and $R_Q(p)$ are as in  Proposition
\ref{prop: main spectral properties of Qvep} with
\beq\label{isolated1}
|e^{\frt f_Q(p)}|   \geq  ce^{-\frac{a_Q}{2}\frt_0}.
\eeq
 \noindent (2) In the region  $|\Re p|\geq\bana_Q$ and $|\Im p|<\ga_0$ 
\beq\label{largeQbound}
\norm T_Q(p)  \norm_{\scrG}  \leq  Ce^{-\frac{b_Q}{2}\frt_0}.
\eeq
We can now apply spectral perturbation  theory, see e.g.\ Lemma \ref{lem: perturbation banach}. 
First, for $\frt_0$ sufficiently large (compared to the constants in the above statements), the eigenvalue $e^{\frt_0 f_Q(p)}$ is isolated, by \eqref{isolated} and  \eqref{isolated1}.
Thus,  taking then  $\la$ small enough,  $\str\la\str\leq \la(\frt_0)$ the following holds.

\noindent (a) The isolated eigenvalue persists and
the new eigenvalue $e^{ f_0(p)}$ has the same properties: ${ f_0(0)}=0$ by unitarity,
$\nabla{ f_0(0)}=0$  and $
|f_0(p)+D_0p^2|\leq C|p|^4 
$ by lattice symmetries, with $D_0\to \frt_0 D_Q$ as $\la\to 0$.

\noindent (b) The bounds \eqref{isolated} and \eqref{largeQbound}
hold for  $\e^{ f_0(p)}$ and $T_0$ (and different $C$ and $c$). Now take $\tau_0$  large enough and then  Proposition \ref{ass:  properties of T}  (except for \eqref{eq: projector r explicit bound} and \eqref{eq: analytic bounds t in prop} for $\hat T_n(0,p)$) holds for $n=0$ and $\frt_0\in [\tau_0,2\tau_0]$.

To get the claims for $R_0(0)$, i.e.\ \eqref{eq: projector r explicit bound}, note that \eqref{eq: closeness free and interacting equilibrium state} tells us that $\mu^{\be}$  is $\la^2$-close to the $\la=0$ system equilibrium state, i.e.\ the Gibbs state $\sim \e^{-\be H_{\spin}}$. The
latter is identical to the state $\mu_Q(0)$  in case $\be_1=\be_2=\be$, and hence one has $\norm P^{\be}- R_Q(0) \normba =\caO(\la^2)$.   Since $\norm R_Q(0) - R_{0}(0)\norm = \caO(\str \la\str^{2\al} )$,  \eqref{eq: projector r explicit bound}
holds for $n=0$.  Finally, the bound \eqref{eq: analytic bounds t in prop} for $\hat T_0(0,p)$ was proven in Section \ref{sec: bounds on boundary correlation functions}.

\appendix
\renewcommand{\theequation}{\Alph{section}\arabic{equation}}
\setcounter{equation}{0}

\section{Appendix: Spectral perturbation theory}\label{app: spectral perturbation theory}

In the lemma below, we gather some  spectral perturbation theory that is used in the proof of Lemma \ref{lem: flow of parameters for t}.
We consider operators on a  Banach space $\scrA$. Denote by $\norm \cdot \norm$ the usual operator norm on $\scrB(\scrA)$ and let $\norm \cdot \normdia$ by a norm on a dense subset of $\scrB(\scrA)$ such that 
\[ \norm  A  \norm  \leq   \norm  A  \normdia   \]
\[ \norm  A B  \normdia   \leq   \norm  A  \normdia   \norm  B  \normdia \]


The following lemma  could just as well (and more naturally) be stated for $\norm \cdot \norm$ as for $\norm \cdot \normdia$. The use of $\norm \cdot \normdia$ is dictated by our application, where $\scrA= l^{\infty}(\caS)$ and $\norm \cdot \normdia= \norm \cdot \norm_\ga$.  In fact, in Lemma \ref{lem: flow of parameters for t}, we did not mention the space $\scrA= l^{\infty}(\caS)$ at all, but note that without such an underlying space, it is not a-priori clear what one means by expressions like 'a simple eigenvalue'. 

\begin{lemma} \label{lem: perturbation banach}
Let the operator $A_0 \in \scrB(\scrA)$, with $\norm A_0 \normdia <\infty$, have a simple eigenvalue $a_0$ with corresponding one-dimensional spectral projection $P_0$ such that 
\beq
 \norm A_0 - a_0 P_0 \normdia < \str a_0 \str, \qquad  \norm P_0 \normdia< \infty    \label{eq: isolated and bound}
\eeq
In particular, $a_0$ is an isolated point in the spectrum of $A_0$.
We study $A =A_0 + A_{1}$ for some perturbation $A_{1} \in \scrB(\scrA)$. Set
\beq
b(r):=  \frac{\norm P_0 \normdia}{ r} +    \frac{1}{ \str a_0 \str - r -  \norm A_0 - a_0 P_0  \normdia}, \qquad \textrm{for}\,\,  r <  \str a_0 \str -   \norm A_0 - a_0 P_0 \normdia
\eeq
If, for some $r>0$, we have 
\beq
\norm A_{1}\normdia b(r)  <1 
\eeq
then the eigenvalue persists as an isolated point in the spectrum when the perturbation is added (we call this eigenvalue $a$) and 
\beq
\str a-a_0 \str \leq r 
\eeq
The spectral projection $P$ corresponding to $a$ satisfies
\beq
      \norm P-P_0 \normdia \leq  2 \pi r       b(r)        \frac{ \norm    A_{1} \normdia b(r)  }{1 - \norm    A_{1}  \normdia b(r)  }  
\eeq
\end{lemma}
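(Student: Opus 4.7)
The approach is the standard Kato-Riesz spectral projection argument via contour integration, carried out entirely in the submultiplicative norm $\|\cdot\|_\diamond$.

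The first step is to establish the resolvent bound $\|(z-A_0)^{-1}\|_\diamond \leq b(r)$ on the contour $\Gamma_r := \{z : |z-a_0|=r\}$. Because $a_0$ is a simple eigenvalue with one-dimensional spectral projection $P_0$, we have $A_0 P_0 = P_0 A_0 = a_0 P_0$, so $B := A_0 - a_0 P_0$ satisfies $B P_0 = P_0 B = 0$. This makes $A_0$ block-diagonal with respect to the decomposition $\scrA = P_0 \scrA \oplus (1-P_0)\scrA$ and yields
\begin{equation*}
(z-A_0)^{-1} = \frac{P_0}{z-a_0} + (z-B)^{-1}(1-P_0).
\end{equation*}
For $z \in \Gamma_r$ one has $|z| \geq |a_0| - r$; the hypothesis $r < |a_0| - \|B\|_\diamond$ then lets us expand $(z-B)^{-1}$ as a Neumann series in $B/z$, controlled in $\|\cdot\|_\diamond$ by $1/(|a_0|-r-\|B\|_\diamond)$. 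Combining the two pieces gives the bound $b(r)$.

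The second step is to perturb. Writing $z-A = (z-A_0)(1 - (z-A_0)^{-1}A_1)$, the hypothesis $\|A_1\|_\diamond b(r) < 1$ together with submultiplicativity of $\|\cdot\|_\diamond$ ensures the inner factor is Neumann-invertible, so that for $z \in \Gamma_r$
\begin{equation*}
\|(z-A)^{-1}\|_\diamond \leq \frac{b(r)}{1-\|A_1\|_\diamond b(r)},
\end{equation*}
and in particular $\Gamma_r \cap \sigma(A) = \emptyset$. The Riesz projections
\begin{equation*}
P := -\frac{1}{2\pi i}\oint_{\Gamma_r}(z-A)^{-1}\,dz, \qquad P_0 = -\frac{1}{2\pi i}\oint_{\Gamma_r}(z-A_0)^{-1}\,dz,
\end{equation*}
are then well-defined in $\|\cdot\|_\diamond$, commute with $A$ and $A_0$ respectively, and by general Riesz theory carry the spectrum of the respective operator inside $\Gamma_r$; hence any eigenvalue $a$ of $A$ inside the disk satisfies $|a-a_0| \leq r$.

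For the bound on $\|P-P_0\|_\diamond$, I would insert the second resolvent identity $(z-A)^{-1}-(z-A_0)^{-1} = (z-A)^{-1} A_1 (z-A_0)^{-1}$ into the difference of the contour integrals, then estimate using submultiplicativity, the two resolvent bounds above, and the contour length $2\pi r$:
\begin{equation*}
\|P-P_0\|_\diamond \leq 2\pi r \cdot \frac{b(r)}{1-\|A_1\|_\diamond b(r)} \cdot \|A_1\|_\diamond \cdot b(r),
\end{equation*}
which is the claimed estimate.

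There is no genuine obstacle here; the only point requiring care is that every product, Neumann series, and contour integral be controlled in the stronger norm $\|\cdot\|_\diamond$ rather than the ambient operator norm. The submultiplicativity of $\|\cdot\|_\diamond$ and the finiteness of $\|P_0\|_\diamond$ are precisely the properties that make this go through step by step.
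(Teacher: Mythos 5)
Your route is the same as the paper's: split the unperturbed resolvent as $(z-A_0)^{-1}=P_0(z-a_0)^{-1}+(z-(A_0-a_0P_0))^{-1}(1-P_0)$ to get the bound $b(r)$ on the circle $\str z-a_0\str=r$, invert $z-A$ by a Neumann series in $A_1(z-A_0)^{-1}$ under the hypothesis $\norm A_1\normdia b(r)<1$, and estimate $P-P_0$ by the contour integral of the resolvent difference. All of that matches the paper's proof, and the final estimate comes out the same way.

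There is, however, one concrete step you have not supplied: the \emph{existence} (and simplicity) of the perturbed eigenvalue, i.e.\ that the Riesz projection $P$ is nonzero and of rank one. Your argument shows that $\partial D_r$ lies in the resolvent set of $A$ and that any spectrum of $A$ inside the disc is within distance $r$ of $a_0$, but "general Riesz theory" does not by itself exclude $P=0$ (no spectrum inside at all), and your bound on $\norm P-P_0\normdia$ is not guaranteed to be $<1$, so you cannot conclude $\mathrm{rank}\,P=\mathrm{rank}\,P_0$ from it. This is exactly why the paper introduces the interpolation $A_0+kA_1$, $k\in[0,1]$: since $\norm kA_1\normdia b(r)\leq\norm A_1\normdia b(r)<1$ uniformly in $k$, the resolvent bound on $\partial D_r$ holds for the whole family, the associated Riesz projections depend continuously on $k$, and their integer-valued rank is therefore constant, equal to $\mathrm{rank}\,P_0=1$; this is what legitimises calling $a$ "the eigenvalue" inside $D_r$. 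Adding that one homotopy (or any equivalent rank-constancy argument) closes the gap; the rest of your proposal goes through as written.
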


\begin{proof}
The condition \eqref{eq: isolated and bound} implies that the eigenvalue $a_0$ is an isolated point in the spectrum.  Let $D_r = \{ z \in \bbC, \str z-a_0 \str \leq r\}$  and choose $r$  such that  $A_0$ has no other spectrum than $a_0$ on $D_r$.   To prove that the eigenvalue persists in the interior of $D_r$, it suffices to show that  $\sup_{z \in  \partial D_r}\norm (z-(A_0 + k A_1))^{-1} \normdia$ remains finite for $k \in [0,1]$.  Without loss, we can assume $k=1$ in the proof below.
Consider the Neumann series
\beq
(z- A)^{-1} =  (z- A_0)^{-1}     \sum_{n \geq 0}      \left( A_{1}  (z- A_0)^{-1} \right)^n
\eeq
To establish that the resolvent remains finite, we show a stronger property (because of the different norms), namely
\beq   \label{eq: resolvent on circle}
 \sup_{z  \in \partial D_r}  \sum_{n \geq 0}   ( \norm  A_{1} \normdia \,       \norm   (z- A_0)^{-1}  \normdia)^n  < \infty  
\eeq
To get this, we bound
\beq
  \sup_{z  \in \partial D_r}     \norm   (z- A_0)^{-1}  \normdia  \leq   \sup_{z  \in \partial D_r}  \frac{\norm P_0 \normdia}{ \str z-a_0 \str} +   \sup_{z  \in \partial D_r}  \norm (z - (A_0 -a_0 P_0))^{-1} \normdia   \leq    b(r)
\eeq
where the first inequality  follows because $(z- A_0)^{-1}= P_0(z- a_0)^{-1} + (1-P_0)(z- A_0)^{-1}$ and the
last inequality follows from the Neumann series. 
Therefore, \eqref{eq: resolvent on circle} is finite if  $    \norm  A_{1} \normdia  b(r) < 1$ and we conclude that the eigenvalue $a$ lies in the interior of $D_r$.
The bound for the projection follows as (the integral over $\partial D_r$ is performed clockwise)
\baq
P- P_0  &=&  \frac{1}{2\pi \i } \int_{\partial D_r} \d z    \, \left( (z- A)^{-1} -(z- A_0)^{-1}  \right)  \\ 
 &=&   \frac{1}{2\pi \i } \int_{ \partial D_r} \d z    \,    (z- A_0)^{-1}    \sum_{n \geq 1}      (A_{1}  (z- A_0)^{-1})^n \\ 
\norm  P- P_0 \normdia & <&  \frac{1}{2\pi  }\int_{ \partial D_r} \d z   \,      \norm (z- A_0)^{-1}  \norm        \frac{ \norm    A_{1} \normdia \norm  (z- A_0)^{-1}  \normdia }{1 - \norm    A_{1}  \normdia \norm  (z- A_0)^{-1}  \normdia  }     \\ 
 &< &   r       b(r)        \frac{ \norm    A_{1} \normdia b(r)  }{1 - \norm    A_{1}  \normdia b(r)  }  
\eaq

\end{proof}

Let us denote by $c,C$ constants that depend only on $\norm P_0\normdia$.
If we assume that 
\beq \label{eq: arbirary bound spectral}
  \norm A_{1} \normdia \leq  c( \str a_0 \str -   \norm A_0 - a_0 P_0  \normdia),
\eeq
then we can choose $r = C \norm A_{1} \normdia  $ to estimate the spectral shift.    To bound the difference of projections, we choose $r = c ( \str a_0 \str -   \norm A_0 - a_0 P_0  \normdia)$, such that $r b(r) < C$, and we obtain
\beq
\norm P- P_0 \normdia  \leq  C  \frac{\norm A_{1} \normdia }{\str a_0 \str -   \norm A_0 - a_0 P_0  \normdia}
\eeq

\section{Appendix: Combinatorics}\label{app: combinatorics}

We collect some easy combinatorial lemma's.
\begin{lemma} \label{lem: distance factors}
Let $\scrT$ be a spanning tree on the finite set $A \subset \bbN_0$, and define 
\beq
\dist(\scrT)  : =    \prod_{\{ \tau,\tau'\} \in \scrE(\scrT)}    (1+\str \tau'-\tau \str)
\eeq
where $\scrE(\scrT)$ is the set of edges of  $\scrT$. 
Then
\beq
\dist(A)   \leq    \dist(\scrT)
\eeq
with $\dist(A)$ defined as in Section \ref{sec: convergence to fixed point}.
\end{lemma}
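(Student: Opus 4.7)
The plan is to prove this by induction on $m := |A|$, by peeling off a leaf of the spanning tree $\scrT$ at each step. The base cases $m=1$ (both products empty, equal to $1$) and $m=2$ (the tree has the unique edge $\{\tau_1,\tau_2\}$ of length $\tau_2-\tau_1$) are immediate.

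For the inductive step with $m\geq 3$, I pick any leaf $\tau_\ell$ of $\scrT$, let $e=\{\tau_\ell,\tau_k\}$ be its unique incident edge, and set $A' = A \setminus \{\tau_\ell\}$, $\scrT' = \scrT \setminus \{e\}$. Then $\scrT'$ is a spanning tree on $A'$, and by the inductive hypothesis $\dist(A')\leq \dist(\scrT')$. The problem thus reduces to proving the ratio inequality
\[
\frac{\dist(A)}{\dist(A')} \;\leq\; \frac{\dist(\scrT)}{\dist(\scrT')} \;=\; 1+|\tau_\ell-\tau_k|.
\]
If $\ell\in\{1,m\}$, then removing $\tau_\ell$ merely drops the single consecutive gap to its neighbor, so the LHS equals $1+|\tau_\ell-\tau_{\ell\pm 1}|\leq 1+|\tau_\ell-\tau_k|$ since $\tau_k$ lies on the correct side.

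The remaining case is $1<\ell<m$, where the two gaps $(\tau_{\ell-1},\tau_\ell)$ and $(\tau_\ell,\tau_{\ell+1})$ are merged into one. Setting $a=\tau_\ell-\tau_{\ell-1}$, $b=\tau_{\ell+1}-\tau_\ell$, the ratio on the LHS equals
\[
\frac{(1+a)(1+b)}{1+a+b} \;=\; 1 + \frac{ab}{1+a+b}.
\]
Assume first $k<\ell$, so $|\tau_\ell-\tau_k|\geq a$. Then $\frac{ab}{1+a+b}\leq a$ (clearing denominators gives $b\leq 1+a+b$), hence the ratio is $\leq 1+a\leq 1+|\tau_\ell-\tau_k|$. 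The case $k>\ell$ is symmetric, giving the bound $1+b\leq 1+|\tau_\ell-\tau_k|$. This closes the induction. There is no real obstacle here; the only nontrivial algebraic check is the elementary bound $ab\leq a(1+a+b)$, which merely uses $a,b\geq 0$.
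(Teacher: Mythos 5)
Your proof is correct and complete: peeling off a leaf $\tau_\ell$ with incident edge $\{\tau_\ell,\tau_k\}$ reduces the claim to the ratio inequality, and your case analysis (with the elementary estimate $\frac{(1+a)(1+b)}{1+a+b}=1+\frac{ab}{1+a+b}\leq 1+\min(a,b)$, plus the observation that $|\tau_\ell-\tau_k|$ dominates the gap on the side where $\tau_k$ lies) closes the induction. Note that the paper does not prove this lemma itself but defers to the cited reference for a ``5-line proof''; your leaf-removal induction is essentially that standard short argument, and it cleanly avoids the bookkeeping one would need in the alternative route of assigning to each consecutive gap a distinct tree edge crossing it.
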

See e.g.\ \cite{deroeckkupiainen} for the 5-line proof.  The following result lies at the heart of cluster expansions. 
For finite subsets $S \subset \bbN_0$, write $S \sim S'$ whenever $S \cap S' \neq \emptyset$. Let $\caS$ be a collection of finite subsets $\caS=\{ S_1, \ldots, S_m\}$. 
Note that the connectedness of the graph with vertices $S_j$ and edges $\{S_i,S_j\}$ whenever $S_j \sim S_j$ is equivalent to the connectedness of $\caS$ defined in Section \ref{sec: nonlinear rg flow} and, as in that section, we write $\frC$ for the set of connected collections. 
\begin{proposition}\label{app: prop: cluster expansion}
Let $w(\cdot)$ be a  function on finite subsets $S \subset \bbN_0$.  Assume that for some $\ka>0$,
\beq\label{eq: kotecky preiss}
\sum_{S:  S \sim S' }  \e^{\ka \str S \str}    \str w(S)\str  \leq    \ka \str S' \str.
\eeq
Then, with $w(\caS) =  \prod_{S \in \caS} w(S)$, 
\beq
 \mathop{\sum}_{\caS}  \indicator_{[  \{S'\} \cup \caS \in \frC]}     \str w(\caS)\str  \leq    \e^{\ka \str S' \str}
\eeq
\end{proposition}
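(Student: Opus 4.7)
\emph{Plan.} I will prove the proposition by the classical Kotecky--Preiss cluster-expansion bound. Abbreviating
\[
\Phi(S'):=\sum_{\caS}\indicator_{[\{S'\}\cup\caS\in\frC]}\str w(\caS)\str,
\]
the empty collection contributes the summand $1$ and the goal is $\Phi(S')\leq\e^{\ka\str S'\str}$.

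The proof proceeds in three steps. First, I convert ``connected'' to ``at least one spanning tree'' via a tree--graph identity (Penrose, or equivalently the Brydges--Kennedy--Dimock forest--interpolation formula), bounding
\[
\indicator_{[\{S'\}\cup\caS\in\frC]}\leq\sum_{T}\prod_{\{A,B\}\in T}\indicator_{[A\sim B]},
\]
where $T$ runs over rooted labeled spanning trees on $\{S'\}\cup\caS$ rooted at $S'$. Second, for each fixed rooted tree shape on $\{0,1,\ldots,m\}$ with root $0\leftrightarrow S'$ and child counts $(c_0,c_1,\ldots,c_m)$, I strip leaves bottom-up: at a leaf $i$ with parent $p(i)$, the hypothesis \eqref{eq: kotecky preiss} yields
\[
\sum_{S_i\,:\,S_i\sim S_{p(i)}}\str w(S_i)\str\leq\sum_{S_i\,:\,S_i\sim S_{p(i)}}\e^{\ka\str S_i\str}\str w(S_i)\str\leq\ka\str S_{p(i)}\str.
\]
When an internal vertex $j$ becomes a leaf after its $c_j$ children have been summed out, the accumulated weight at $j$ is $\str w(S_j)\str(\ka\str S_j\str)^{c_j}$; the elementary estimate $(\ka\str S_j\str)^{c_j}\leq c_j!\,\e^{\ka\str S_j\str}$ restores the exponential factor, and reapplying \eqref{eq: kotecky preiss} produces $c_j!\,\ka\str S_{p(j)}\str$. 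Iterating up to the root yields, for each fixed rooted tree $T$, a bound of $(\ka\str S'\str)^{c_0}\prod_{j\geq 1}c_j!$. Third, I sum over rooted labeled trees with given child-degree sequence, incorporate the $1/m!$ that converts ordered tuples to unordered collections, and sum over $m$: a Pr\"ufer-type identity collapses the resulting combinatorial sum to $(\ka\str S'\str)^m/m!$, so that
\[
\Phi(S')\leq\sum_{m\geq 0}\frac{(\ka\str S'\str)^m}{m!}=\e^{\ka\str S'\str}.
\]

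\emph{Main obstacle.} The delicate point is the tree--graph identity in the first step. The naive bound $\indicator_{[\text{connected}]}\leq\#\{\text{spanning trees of }G\}$ is wasteful: already at $\str\caS\str=2$, enumerating the three rooted spanning trees on $\{0,1,2\}$ together with the leaf-stripping estimate of Step 2 gives a per-$m$ contribution that exceeds the target Taylor coefficient $(\ka\str S'\str)^2/2$ by an extra linear term $\ka\str S'\str$. The Penrose canonical--tree partition (pairing each connected graph with a unique spanning tree via, e.g., a lexicographic depth-first rule) is precisely what is needed to trim this overcounting so that the combinatorics in Step 3 closes exactly to the exponential; equivalently, the BKD forest formula provides an integral identity that can be estimated directly.
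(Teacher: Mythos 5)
Your Steps 1 and 2 are fine as inequalities, but the proof breaks down exactly at the point where you claim the combinatorics ``collapses''. With your leaf-stripping estimate, a fixed rooted labeled tree on $\{0,1,\ldots,m\}$ with child counts $(c_0,\ldots,c_m)$ contributes $(\ka\str S'\str)^{c_0}\prod_{j\geq1}c_j!$, and the number of such trees is $(m-1)!/\bigl((c_0-1)!\prod_{j\geq1}c_j!\bigr)$, so the factorials cancel against the tree count and, after the global $1/m!$, the order-$m$ contribution is $\sum_{c_0\geq1}\tfrac{(m-1)!}{m!\,(c_0-1)!}(\ka\str S'\str)^{c_0}\binom{2m-c_0-1}{m-1}$. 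This is not $(\ka\str S'\str)^m/m!$: for $m=2$ it is $(\ka\str S'\str)^2/2+\ka\str S'\str$ (as you note), and for general $m$ the $c_0=1$ term alone is of order $4^{m}\ka\str S'\str/m^{3/2}$, so the series over $m$ diverges — no Pr\"ufer-type identity can rescue a divergent sum. Moreover, since the $m=0,1$ terms already use up the budget $1+\ka\str S'\str$ of $\e^{\ka\str S'\str}$, there is essentially no slack left to absorb any overflow at $m\geq2$. The appeal to the Penrose canonical-tree partition is only a gesture: the Penrose restriction depends on the intersection graph of the whole configuration, hence couples the sums over the $S_i$, and you give no argument that it removes the $\prod_j c_j!$ overcount that your crude bound $(\ka x)^{c}\leq c!\,\e^{\ka x}$ introduced. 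So the central claim of Step 3 is unproven, and the quantitative evidence above indicates the scheme as organized cannot give the stated constant.

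The standard proof (the one in the reference \cite{ueltschi} that the paper cites) avoids the factorials altogether by carrying the exponential weight through an induction instead of discarding it at the leaves and recovering it via $c_j!$. Prove by induction on $N$ that $\Phi_N(S'):=\sum_{\str\caS\str\leq N}\indicator_{[\{S'\}\cup\caS\in\frC]}\str w(\caS)\str\leq\e^{\ka\str S'\str}$: given a connected $\{S'\}\cup\caS$, let $S_1,\ldots,S_k$ be the unordered distinct elements of $\caS$ meeting $S'$ and assign every remaining element to one of these branches so that each branch together with its $S_i$ is connected (overcounting only helps the upper bound); this yields $\Phi_N(S')\leq\sum_{k\geq0}\tfrac{1}{k!}\bigl(\sum_{S\sim S'}\str w(S)\str\,\Phi_{N-1}(S)\bigr)^{k}$, and the inductive hypothesis together with \eqref{eq: kotecky preiss} bounds the inner sum by $\ka\str S'\str$, giving $\e^{\ka\str S'\str}$. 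The local $1/k!$ at every node is precisely what your single global $1/m!$ fails to supply, and no factorial ever appears because the factor $\e^{\ka\str S_i\str}$ needed to invoke \eqref{eq: kotecky preiss} comes from the inductive hypothesis, not from $(\ka\str S_j\str)^{c_j}\leq c_j!\,\e^{\ka\str S_j\str}$. I recommend replacing your tree-summation step by this induction.
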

We refer to  \cite{ueltschi} for the proof.  The condition \eqref{eq: kotecky preiss} is often called the Kotecky-Preiss condition. 
An immediate consequence is 
\beq
\sum_{\caS \in \frC:  \supp\caS \ni \tau}   \str w(\caS)\str   \leq \sum_{S': S' \ni \tau}  \str w(S') \str     \mathop{\sum}_{\caS}  \indicator_{[  \{S'\} \cup \caS \in \frC]}    \str w(\caS)\str   \leq      \sum_{S': S' \ni \tau}  \str w(S') \str   \e^{\ka \str S' \str}  \leq \ka
\label{eq: useful cluster expansion}
\eeq

\bibliographystyle{plain}
\bibliography{mylibrary11}

 \end{document}